\newtheorem{assumption}{Assumption}
\newtheorem{theorem}{Theorem}[section]
\newtheorem{lemma}[theorem]{Lemma}
\newtheorem{definition}[theorem]{Definition}
\newtheorem{corollary}[theorem]{Corollary}
\newtheorem{observation}[theorem]{Observation}
\newtheorem{proposition}[theorem]{Proposition}
\newtheorem{fact}[theorem]{Fact}
\newtheorem*{theorem*}{Theorem}
\crefname{section}{Section}{Sections}
\crefname{theorem}{Theorem}{Theorems}
\crefname{assumption}{Assumption}{Assumptions}
\crefname{lemma}{Lemma}{Lemmas}
\crefname{definition}{Definition}{Definitions}
\crefname{conjecture}{Conjecture}{Conjectures}
\crefname{corollary}{Corollary}{Corollaries}
\crefname{construction}{Construction}{Constructions}
\crefname{claim}{Claim}{Claims}
\crefname{observation}{Observation}{Observations}
\crefname{proposition}{Proposition}{Propositions}
\crefname{fact}{Fact}{Facts}
\crefname{question}{Question}{Questions}
\crefname{problem}{Problem}{Problems}
\crefname{remark}{Remark}{Remarks}
\crefname{example}{Example}{Examples}
\crefname{appendix}{Appendix}{Appendices}
\crefname{figure}{Figure}{Figures}
\crefname{equation}{Equation}{Equations}
\newcommand{\yesnum}{\addtocounter{equation}{1}\tag{\theequation}}
\newcommand{\tagnum}[1]{\addtocounter{equation}{1}{\tag{#1)~~(\theequation}}}
\newcommand{\customlabel}[2]{%
   \protected@write \@auxout {}{\string \newlabel {#1}{{#2}{\thepage}{#2}{#1}{}} }%
   \hypertarget{#1}{}
}
\newcommand{\white}[1]{\textcolor{white}{#1}}
\colorlet{RED}{black}
\colorlet{BLUE}{black}
\newcommand{\N}{\mathbb{N}}
\newcommand{\R}{\mathbb{R}}
\newcommand{\Z}{\mathbb{Z}}
\newcommand{\cD}{\mathcal{D}}
\newcommand{\evE}{\ensuremath{\mathscr{E}}}
\newcommand{\evF}{\mathscr{F}}
\newcommand{\evG}{\mathscr{G}}
\newcommand{\evH}{\mathscr{H}}
\newcommand{\evJ}{\mathscr{J}}
\newcommand{\cU}{\mathcal{U}}
\newcommand{\wt}{\widetilde}
\newcommand{\sfrac}[2]{#1/#2}
\newcommand{\st}{\mathrm{s.t.}}
\newcommand{\supp}{\mathrm{supp}}
\newcommand{\eps}{\varepsilon}
\renewcommand{\epsilon}{\varepsilon}
\newcommand{\argmax}{\operatornamewithlimits{argmax}}
\newcommand{\Ex}{\operatornamewithlimits{\mathbb{E}}}
\def\abs#1{\left| #1 \right|}
\def\sabs#1{| #1 |}
\newcommand{\given}{\;\middle|\;}
\newcommand{\sinparen}[1]{(#1)}
\newcommand{\sinsquare}[1]{[#1]}
\newcommand{\sinangle}[1]{\langle#1\rangle}
\newcommand{\inparen}[1]{\left(#1\right)}
\newcommand{\inbrace}[1]{\left\{#1\right\}}
\newcommand{\insquare}[1]{\left[#1\right]}
\newcommand{\inangle}[1]{\left\langle#1\right\rangle}
\newcommand{\floor}[1]{\left\lfloor#1\right\rfloor}
\newcommand{\norm}[1]{\ensuremath{\left\lVert #1 \right\rVert}}
\newcommand{\snorm}[1]{\ensuremath{\lVert #1 \rVert}}
\newcommand{\zo}{\{0,1\}}
\newcommand{\negsp}{\hspace{-0.5mm}}
\newcommand{\sexp}[1]{{\hbox{\tiny$($}}#1{\hbox{\tiny$)$}}}
\newcommand{\wh}[1]{\widehat{#1}}
\newcommand{\hw}{\widehat{w}}
\newcommand{\hx}{\widehat{x}}
\newcommand{\Eqref}[1]{Equation~\eqref{#1}}
\newcommand{\sx}{\ensuremath{x^\star}}
\newcommand{\tx}{\ensuremath{\wt{x}}}
\newcommand{\ratio}{\ensuremath{\mathscr{R}}}
\newcommand{\unif}{\cU}
\newcommand{\bb}{\ensuremath{\beta_0}}
\newcommand{\Stackrel}[2]{\stackrel{\mathmakebox[\widthof{\ensuremath{#2}}]{#1}}{#2}}
\newcommand{\prog}[1]{Program~\eqref{#1}}
\newif\ifconf
\newcommand{\leftmarginINTERNAL}{18pt}
\renewcommand{\leftmarginINTERNAL}{18pt}
\renewcommand{\leftmarginINTERNAL}{\leftmargin}
\title{\bf Selection in the Presence of Implicit Bias:\\ The Advantage of Intersectional Constraints}
\author{Anay Mehrotra \\ Yale University \and Bary S. R. Pradelski \\ CNRS \and Nisheeth K. Vishnoi \\ Yale University}
\begin{document}

\maketitle
\begin{abstract}
    In selection processes such as hiring, promotion, and college admissions, implicit bias toward socially-salient attributes such as race, gender, or sexual orientation of candidates is known to produce persistent inequality and reduce aggregate utility for the decision maker. Interventions such as the Rooney Rule and its generalizations, which require the decision maker to select at least a specified number of individuals from each affected group, have been proposed to mitigate the adverse effects of implicit bias in selection. Recent works have established that such lower-bound constraints can be very effective in improving aggregate utility in the case when each individual belongs to at most one affected group. However, in several settings, individuals may belong to multiple  affected groups and, consequently, face more extreme implicit bias due to this {\em intersectionality}. We consider independently drawn utilities and show that, in the intersectional case, the aforementioned non-intersectional constraints can only recover part of the total utility achievable in the absence of implicit bias. On the other hand, we show that if one includes appropriate lower-bound constraints on the intersections, almost all the utility achievable in the absence of implicit bias can be recovered. Thus, intersectional constraints can offer a significant advantage over a reductionist dimension-by-dimension non-intersectional approach to reducing inequality.
\end{abstract}

\newpage
  \setcounter{tocdepth}{2}
  \tableofcontents
  \addtocontents{toc}{\protect\setcounter{tocdepth}{2}}

\newpage

\section{Introduction}\label{sec:intro}

   \emph{Implicit bias} is the unconscious association, belief, attitude, skewed observation, {or lack of awareness} toward any socially-salient group, which may lead to systematic disadvantages for particular -- often underprivileged -- groups \cite{greenwald2006implicit,jolls2006law,kang2011implicit, Greenwald2020}.
  The negative impact of implicit bias on certain groups of the population is well documented in many societal contexts~\cite{munoz2016big,acm2017statement}, including hiring~\cite{rooth2010automatic, ziegert2005employment, corinne2012science,bendick2012developing}, university admissions~\cite{capers2017implicit, posselt2016inside}, and healthcare~\cite{chapman2007sterotyping,green2007implicit,jimenez2010perioperative}.
Instances of implicit bias in hiring and other selection processes include
  higher salaries for men than women despite the same qualifications \cite{corinne2012science},
  biased peer-review of fellowship applications against women~\cite{wenneras2001nepotism}, and
  stricter promotion standards for women in managerial positions~\cite{lyness2006fit}.
{Affected candidates can also face biases before participating in selection processes: For instance, they can face
  implicit bias in the form of lower teacher expectations~\cite{GershensonLowExpec2016}, or harsher grading policies~\cite{staats2016understanding}, {which may further hurt their future prospects in hiring or promotion~\cite{okonofua2015two}.}
  Implicit bias not only has adverse effects on individuals, but also on decision makers who may hire/promote less qualified candidates.
  Moreover, such biases can also affect downstream algorithms and policies, {either through biased human decisions or through past-data used to inform these decisions,} giving rise to  biases that affect different groups differently  \cite{amazonRecruitingTool,term_of_stay_correlated_with_race,bleemer2021college}.

    {Policy makers, private entities, and researchers have introduced a host of measures to counter  adverse effects of implicit bias:
    affirmative action policies which increase representation of affected groups~\cite{sowell2004affirmative,obama_rr,facebook_rr,bland2017schumer,passarielloWSJ},
    structured interviews which reduce the scope for bias in evaluation criteria~\cite{BarbaraAscription2000,gawande2010checklist,wenneras2001nepotism,baldiga2014gender},
    and  anonymized evaluations that blind decision makers to the socially-salient attributes of applicants~\cite{BlindAuditions2000}.
   Significant effort has also been devoted to reduce implicit bias itself:
    training that exposes individuals to counter-stereotypical evidence opposing their implicit beliefs~\cite{zestcott2016examining,starbucks_incident2018,FitzGeraldInterventions2019},
    enhanced accountability which enables enforcement of other interventions~\cite{KruglanskiAccountabilityBias1983,lerner1999accounting,botelho2021disciplining}, and
    information campaigns that increase awareness about implicit biases~\cite{mcgregor2017race}.}

    {Of interest here are affirmative action policies that introduce lower-bound constraints for groups adversely affected by implicit bias.}
  {A popular instantiation of this strategy is the Rooney rule,} which requires the decision maker to select at least one individual from the affected group for interview.
  The hope is that during the interview, the decision maker will assess the ``true'' value of the individual {\cite{TristanQuality2017}} and this interaction will reduce their implicit bias {\cite{DasguptaOutGroupExposure2008}.}
    {In addition,} variants of the Rooney Rule have also been used for the final stage of a selection process such as that for board membership or highly-priced entry jobs to {directly} counter the effects of implicit bias \cite{Hol00,cavicchia-implicit-bias-rooney,passarielloWSJ,facebook_rr}.

    Recently, {some works have analyzed} the effectiveness of Rooney Rule type constraints for selection and ranking processes~\cite{KleinbergR18,celis2020interventions,EmelianovGGL20}.
    In particular,  \cite{KleinbergR18} study the effectiveness of the Rooney Rule for selection in the presence of implicit bias when there is a single affected group.
    Here, {there are $m$ {\em individuals},} where each {individual} $i\in \inbrace{1,2,\dots,m}$ has a {non-negative} {\em latent utility} $w_i\geq 0$, that is the {\em value} it adds to the selection, and an {\em observed utility} $\hw_i\leq w_i$, that is the decision maker's possibly biased estimate of $w_i$.
    The decision maker selects $n$ {individuals} with the maximum sum of observed utilities.
    \cite{KleinbergR18} study a  model where implicit bias acts via a multiplicative factor $0<\beta\leq 1$:  the observed utility of an individual $i$ belonging to the affected group is $\hw_i=\beta\cdot w_i$ while that of an unaffected individual $j$ is $\hw_j=w_j$.
    {They argue that this model is a reasonable approximation of the empirical findings of \cite{wenneras2001nepotism}, who find that in peer-reviewed evaluations for fellowships women's score were systematically scaled down compared to men with similar productivity.}
    \cite{KleinbergR18} study conditions on the parameters $n$, $m$, $\beta$, and the distribution of latent utility $w_i$, where the Rooney Rule increases the total latent utility of the selection.
    Under the same implicit bias model, \cite{celis2020interventions} study a generalization of the Rooney Rule, where the decision maker is constrained to select at least $L\geq 1$ individuals from the affected group.
    \cite{celis2020interventions} show that, for a single affected group, there is an $L$ for which the decision maker, constrained by the generalized Rooney Rule, achieves near-optimal latent utility.

    \emph{Intersectionality} posits that one needs to take into account the interconnected nature of the multiple socially-salient attributes, as opposed to viewing these attributes through a reductionist lens, that is, dimension-by-dimension \cite{Cre89}.
    Intersectionality can lead to the creation of overlapping and interdependent systems of discrimination or disadvantage, and there is a rich literature in social sciences and law that studies it  \cite{king1988multiple,Cre89,Col00,brewer2002complexities,collins2004black,Sen06,Pur08,elu2013earnings,williams2014double,cooper2016intersectionality,collins2016intersectionality,akerlof2017value,Car19}.
    Intersectional implicit bias also arises in selection processes.
    {For instance, \cite{wenneras2001nepotism} find significantly lower scores for women unaffiliated with the evaluation committee compared to other women and to men unaffiliated with the committee in peer-reviewed applications for a fellowship in Sweden.}
    {Thus, neither gender nor affiliation alone explain the bias faced by individuals, and to understand this bias, a combination of the two attributes must be considered.}
    Recently, intersectional bias {has also been observed in the outputs of algorithms.}
    For instance, \cite{BuolamwiniG18} audit commercial image-based gender classifiers and find intersectional bias against Black women, and \cite{TanC19} report intersectional bias in contextualized word representations.

    However, the intersectional nature of social groupings and, thus, biases has been largely overlooked when designing interventions to reduce or counter implicit bias.
    Further, data reporting, such as that by the U.S. Census Bureau, is mostly dimension-by-dimension (for example, by race or by gender) and omits intersectional data.
    {Since this data is used to inform policies, it has} inevitably led to policies that only focus on reducing inequality along one identity dimension at a time, as highlighted in a report by the European Union \cite{skjeie2015gender}.
  {In other words,} existing data reporting and, consequently, policies are {\em non-intersectional}.
  {They} specify lower-bound constraints on each affected group, but not on their intersections.
  But it is natural to expect that the individuals at the intersection of multiple affected groups face higher {and, possibly, different} implicit bias~\cite{king1988multiple,wenneras2001nepotism,BrowneIntersection2003,williams2014double}.
    {In fact, as argued by \cite{king1988multiple}, intersectional bias can be significantly higher and often compound, or multiply, the} {{biases faced by individuals in single affected groups.}
  Thus, the following question arises and is studied in this paper.}

  \medskip
    
    \begin{tcolorbox}[bottom=0.01cm,top=0.01cm,left=0.05cm,right=0.05cm]
      {\em Are non-intersectional constraints sufficient to recover the entire latent utility with intersecting affected groups or does one need to specify constraints across all intersections to achieve this?}
     \end{tcolorbox}

\subsection{Our Contributions}
  We consider the effectiveness of lower-bound constraints on selection processes in the presence of intersectional implicit bias.
 To capture the effect of intersectionality on implicit bias, we consider an extension of the aforementioned model of \cite{KleinbergR18} due to \cite{celis2020interventions}:   each individual may belong to zero, one, or more of the $p$ affected groups (such as the groups of all women or all Black people) (\cref{sec:implicit_bias}).
  For each group $\ell\in \inbrace{1,2,\dots,p}$, there is an implicit bias parameter $0<\beta_\ell\leq 1$ and
  the implicit bias experienced by an individual is the product of  parameters of each group they belong to.

  We {compare} non-intersectional and intersectional lower-bound constraints when the latent utilities are independently {and identically} distributed.
  Non-intersectional constraints specify the minimum number of individuals to be selected from each affected group (e.g., the groups of all women or all Black people).
  They do not specify the minimum number of individuals to be selected from a given intersection {(e.g., the groups of all Black women, all non-Black women, all Black non-women, or all non-Black non-women)}.
  For each of the intersectional groups,  intersectional constraints specify the minimum number of individuals to be selected from this intersection.
  To compare the relative efficacy of constraints, we consider a {\em utility ratio}, defined as
  the expected value of the ratio of the latent utility achieved under the {constraint to the latent utility absent any implicit bias (\cref{sec:utility_ratio}).}
  By definition, the utility ratio is a number between $0$ and $1$  and the goal of the policy maker is design interventions such that the corresponding utility ratio is $1$.

  {We show that under general conditions on the  distribution of latent utilities, no matter which {\em non-intersectional constraints} are deployed, the utility ratio is strictly less than $1$ (\cref{thm:ub,thm:gen_ub}).}
  In particular, our result applies to distributions such as uniform, truncated normal, and truncated power-law distributions.
 Moreover, our result gives a quantitative bound on the maximum of utility ratio: it is at most $1-\phi$, where $\phi$ is positive and depends only on the implicit bias parameters and generic parameters of the distribution family, and independent of the {number of candidates} $m$.
  Concretely,  {when the utilities are uniformly distributed on the interval $[0,1]$}, there is a family of instances such that {the maximum utility ratio achievable using only non-intersectional constraints can be as low as $8/9$} (\cref{prop:89}).
   {Further, we show that this result also holds for
    generalizations of the implicit bias model where, for instance, the implicit bias experienced by individuals in multiple groups is different than the product of the implicit bias parameters of the groups they belong to (\cref{thm:gen_ub}).}
  Thus, these results imply that, unlike the setting of a single affected group studied in \cite{KleinbergR18,celis2020interventions,EmelianovGGL20}, non-intersectional constraints may be {insufficient to completely mitigate the effects of implicit bias in the presence of intersections.}}

  {On the positive side,} we show that there are {\em intersectional} lower-bound constraints that, for any amount of implicit bias, recover utility ratio arbitrarily close to 1 (\cref{thm:fullrecovery}).
  This result extends for all continuous distributions of utility and for the generalizations of the implicit bias model considered above (\cref{coro:fullrecovery}).
  {We show that these intersectional lower-bound constraints end up being just a function of the sizes of intersections  and do not depend on the amount of implicit bias $0<\beta_1,\ldots,\beta_p\leq 1$ or the specific utility distribution.
  In fact, the constraints require at least a near-proportional number of individuals from each intersection and,
 hence, they can be employed in practice where the implicit bias parameters $\beta_1,\ldots,\beta_p$ and the distribution of utility are not known and can vary across contexts or over time.}
  Thus, a policy maker may choose {intersectional constraints in order to obtain a utility ratio arbitrarily close to 1.}

      Overall, our results imply that the advantage of intersectional constraints can be substantial and a reductionist dimension-by-dimension approach is not sufficient to mitigate the adverse effects of implicit bias.
      They provide a utilitarian reason for policy makers to choose intersectional constraints over non-intersectional constraints.

    \subsection{Related Work}

    \paragraph{Implicit bias and empirics.}
    {There are several theories about how implicit bias arises; e.g., \cite{allport1954,tversky1974judgment,haselton_buss_2009,tamar2011epistemic,greenwald1995implicit,McC1981,Pay19,agarwal2020sway}.
    Specific examples include, \cite{tversky1974judgment} who propose that humans unconsciously use heuristics to overcome their limited computing ability.
    Such heuristics can take the form of stereotypes where one divides individuals into groups and, then, extrapolates the characteristics of specific individuals from the characteristics associated with their group(s)~\cite{tamar2011epistemic,agarwal2020sway}.
    Another theory suggests that using stereotypes was evolutionarily advantageous~\cite{haselton_buss_2009,kurzban2001evolutionary}.
    One reason, according to \cite{kurzban2001evolutionary}, is that undervaluing the utility of unknown out-group individuals and, hence, avoiding contact with them reduced the risk of contracting new diseases.
    Apart from these theories, it has also been suggested that implicit bias ``is a trace of [the individual's] past experience'' \cite{greenwald1995implicit} and that prior, explicit, racism has been channeled into implicit bias \cite{McC1981}.}
  Regardless of the cause of implicit bias, studies identifying the adverse effects of implicit bias are abundant: from police shootings~\cite{sadler2012world},
  {promotion and hiring decisions~\cite{bertrand2004emily,lyness2006fit}, education~\cite{corinne2012science,van2010implicit}, to peer-review \cite{wenneras2001nepotism}.}

    \paragraph{{Models of implicit bias and decision-making in the presence of implicit bias}.}
    A growing literature is studying decision making in the presence of implicit bias, ranging from works on set selection \cite{KleinbergR18,EmelianovGGL20,Faenza2020fair},  ranking \cite{celis2020interventions}, to  classification \cite{blum2020recovering}.
    Among these, works on the set selection and ranking problems are directly related to our work.
    \cite{KleinbergR18} introduce a mathematical model of implicit bias for a single affected group and study when the Rooney Rule increases the total latent utility of the selection.
    Unlike them, we consider multiple and intersectional affected groups and also consider generalizations of the Rooney Rule.
    \cite{celis2020interventions} study the ranking problem, where the selected individuals also need to be ordered.
    Specializing their work to set selection: they consider the setting with a single affected group where the decision maker must select at least $L\geq 1$ individuals from the affected group and show that there are constraints which achieve near-optimal latent utility in expectation.
    \cite{celis2020interventions} also extended the model of implicit bias due to \cite{KleinbergR18} to multiple and intersectional groups.
    For this model, \cite{celis2020interventions} show that for any set of utilities and amount of bias, there are utility-dependent non-intersectional constraints that achieve optimal latent utility for the ranking problem.
    However, since their constraints are a function of the latent utilities, which are not observed, these constraints cannot be determined in practice.
    While we also consider the model of implicit bias \cite{celis2020interventions} introduced, the intersectional constraints we propose are different and do not depend on the, unknown, latent utilities.
    \cite{EmelianovGGL20} study selection under a different model of bias, where the decision maker's observed utility has higher than average noise for individuals in the affected group.
    They consider a family of constraints and show that, for a single affected group, these constraints increase the latent utility.
    Unlike them, we consider multiple and intersectional groups and study a different model of bias.
    Finally, unlike these prior works, we also study the maximum utility achievable by using  non-intersectional constraints.

    \paragraph{Intersectionality as a source of bias.}
  The discussion of being subjected to multiple biases has originally focused on the experience of Black women versus that of non-Black women and Black men.
  The ``double jeopardy'' {and ``multiple jeopardy'' hypotheses posit} that belonging to more than one affected group -- as is the case for {Black} women -- disproportionately increases the experienced bias  \cite{Eps73,king1988multiple}.
  Empirically, for example, \cite{elu2013earnings} show that returns to schooling in sub-Saharan Africa depend both gender and ethnicity {and \cite{wenneras2001nepotism} find that peer-reviewed scores for post-doctoral fellowships were a function of both the candidate's gender and their affiliation with reviewers.}
  {Since \cite{Eps73,king1988multiple}, several works have proposed extensions of this theory beyond two socially-salient attributes;  e.g.,  \cite{Cre89,Pur08,Col00,brewer2002complexities,cooper2016intersectionality,collins2016intersectionality}.}
    The  implicit bias models that we consider in this work can be viewed as motivated \mbox{by these works, and in particular, by the multiple jeopardy model \cite{king1988multiple}.}

    \paragraph{Intersectionality vs. non-intersectionality.}
  To the best of our knowledge, there are only few examples of mathematical studies that analyze how the belonging to intersectional groups interacts with policies.
    \cite{akerlof2017value} study how individuals suppress or foster different dimensions of their identity to increase economic reward.
    \cite{Car19} study the effect of intersectional vs. non-intersectional interventions  on the share of different groups among the selected individuals over time.
    Our focus here is  understanding the advantages of {intersectional constraints} over {dimension-by-dimension} non-intersectional constraints  -- albeit in the very different setup of selection under implicit bias.

\section{Model}\label{sec:model}
    {\em Notation.}
    For a number $n\in \N$, $[n]$ denotes the set $\{1,2,\dots,n\}$.
    $\unif$ denotes the uniform distribution over $[0, 1]$.
    {We use $w \sim \mathcal{D}$ notation to denote that $w$ is an independent sample from distribution $\mathcal{D}$.}
    For a distribution $\cD$ over $\R$, we use $\mu_\cD\colon \R\to \R_{\geq 0}$ to denote its probability density function and  $F_\cD\colon \R\to [0,1]$ to denote its cumulative distribution function.
    {We say a distribution $\cD$ over $\R$ is continuous if $\mu_\cD$ exists and is finite at all points in $\R$.}
    {The support of a continuous distribution $\cD$ over $\R$ is the set $\inbrace{x\in \R\colon \mu_\cD(x)>0}$, and is denoted by $\supp(\cD)$.}
    Given two vectors $x,y\in \R^m$, we use $\inangle{x,y}$  to denote their inner product $\sum_{i=1}^m x_i y_i$.

    \subsection{Selection Problem}
      {The task of selecting a subset of individuals from a pool of applicants or employees arises in many contexts such as hiring, college admission, and selection for fellowships or board of directors.
      In these settings, the basic mathematical problem is as follows:}
      {Given a number $m$ and for each of the $m$ individuals, or more generally {\em items}, $i\in [m]$ a non-negative {\em latent utility} $w_i\geq 0$,}
      the set selection problem asks to find a subset of $n$ items that has the maximum sum of latent utilities.
      If we represent a subset by a binary vector $x\in \zo^m$, {where $x_i=1$ indicates that $i$ is in the subset and $x_i=0$ indicates otherwise,}
      {the goal is to find $x \in \{0,1\}^m$ such that $\inangle{x,w}$ is maximized subject to $\sum_{i=1}^m x_i=n$.}

    \subsection{Affected Groups, Intersections, and a Model of Implicit Bias}\label{sec:implicit_bias}
      We consider the setting with $p$ affected groups (henceforth referred to as just groups)  $G_1,G_2,\dots,G_p\subseteq [m]$.
      {Each of the $m$ items may belong to one or more of the $p$ groups that face implicit bias, or may belong to none of these groups, i.e., in $[m]\setminus{} \inparen{G_1\cup G_2\cup\dots\cup G_p},$ and hence, not face any implicit bias.}
      These groups can intersect arbitrarily.
      We use the following notation to capture all the intersections that can arise from $p$ groups:
      For a set $S\negsp \subseteq\negsp  [p]$ of groups, let
      $\sigma \in\zo^p$ denote the corresponding indicator vector, i.e., for all $\ell\in [p]$, $\sigma_\ell=1$ if $\ell \in S$ and $\sigma_\ell=0$ otherwise.
      Let $I_\sigma\negsp \subseteq\negsp  [m]$
      denote the set of elements that belong
      to every group in $S$ and none of the groups not in $S$.
      {Formally, when $S \neq \emptyset$ and, hence, $\sigma$ is not the all $0$s vector (denoted by $0$), let}
      \begin{align*}
             I_\sigma\coloneqq { \inparen{\bigcap\nolimits_{\ell:\sigma_\ell=1}G_\ell}} \setminus \inparen{\bigcup\nolimits_{\ell:\sigma_\ell=0} G_\ell}.
     \end{align*}
      For $\sigma =0$, let {$I_0 \coloneqq [m] \setminus \inparen{G_1\cup G_2\cup\dots\cup G_p}$ denote the items in none of the $p$ groups.}
      {Where with some abuse of notation we used $I_0$ to denote $I_{00}$ for $p=2$, $I_{000}$ for $p=3$, and so on.}
      Thus, the sets $\inbrace{I_\sigma}_{\sigma\in\zo^p}$ partition the set of items $[m]$.
      {\cref{fig:groups:a} illustrates this with two groups $G_1$ and $G_2$ (i.e., $p=2$) which divide the set of items into four disjoint intersections $I_{11}$, $I_{10}$, $I_{01}$, and $I_{00}$, where $I_{11}=G_1\cap G_2$, $I_{10}=G_1\setminus{} G_2$, $I_{01}=G_2\setminus{} G_1$, and $I_{00}=[m]\setminus{}\inparen{G_1 \cup G_2}$.}

      {We focus on the extension of the implicit bias model of \cite{KleinbergR18} presented in \cite{celis2020interventions}.
      Later, in \cref{sec:results:extensions}, we also consider further generalizations of this model.}
      In this model, the decision maker does not observe the latent utilities  of the items.
      Instead, for each item $i$, they see an observed utility $\hw_i$, which is their possibly biased estimate of $w_i$.
      In particular, the decision maker might perceive that items belonging to certain groups have a lower observed utility: $\wh{w}_i<w_i$.
      For each {group} $\ell\in [p]$, there is an implicit bias parameter $0<\beta_\ell\leq 1$ that captures the relative implicit bias faced by items in $G_\ell$ compared to items not in $G_\ell$.
      The total implicit bias experienced by an item $i$ is assumed to be the product of the implicit bias parameters of all groups it belongs to: ${\prod\nolimits_{\ell\in [p]\colon G_\ell\ni i}\beta_\ell}$.
      Thus, for a given latent utility $w_i$ of item $i$, the observed utility is
       \begin{align}
        \hw_i\coloneqq \inparen{\prod\nolimits_{\ell\in [p]\colon G_\ell\ni i}\beta_\ell}\cdot w_{i}.\tagnum{Implicit bias}\customlabel{eq:mult_bias}{\theequation}
      \end{align}
      \noindent
        The property that individuals belonging to multiple groups are subject to {more acute} implicit bias is motivated by similar observations in the real world, which have been reproduced across many contexts such as hiring in industry, promotions in industry and academia, and peer-review in academia~\cite{wenneras2001nepotism,RosetteFailure2012,deo2017intersectional,derous2019gender}
      {It also aligns with \cite{king1988multiple}, which proposes a multiplicative-model where individuals face the compounded effect of the biases of groups they belong to.} %

      {As an illustration of this intersectional implicit bias model, consider two groups where the implicit bias parameter of the first group is $\beta_1$ and of the second group is $\beta_2$.
      An item which belongs to both $G_1$ and $G_2$ (i.e., in $I_{11}$) experiences an implicit bias $\beta_1\beta_2$.
      Whereas items in $G_1$ and not $G_2$ (i.e., in $I_{10}$) experience an implicit bias $\beta_1$ and items in $G_2$ but not $G_1$ (i.e., in $I_{01}$) experience an implicit bias $\beta_2$.
      Items neither $G_1$ nor $G_2$ (i.e., in $I_{00}$) do not face implicit bias.
      As a numerical example, if $\beta_1=0.9$ and $\beta_2=0.8$, then an item $i\in I_{11}$ experiences an implicit bias $0.72$, which is more acute than the implicit bias experienced by items in $I_{10}$ or in $I_{10}$, which experience biases $0.9$ and $0.8$ respectively.}

      {In \cref{sec:results}, we consider a generalization of this model, where for each intersection $I_\sigma$, there is an increasing function $b_\sigma$, and the observed utility of an item $i$ in intersection $I_\sigma$ is $\hw_i \negsp =\negsp b_\sigma(w_i)$.
      This generalization captures Equation~\eqref{eq:mult_bias} when {$b_\sigma(x)\negsp \coloneqq\negsp x \cdot\negsp  {\prod\nolimits_{\ell\in [p]\colon \sigma_\ell = 1}\beta_\ell}$ for all intersections $\sigma$ and $x\negsp\geq\negsp 0$.}}
      
      \begin{figure*}[b!]
        \centering
        \vspace{-2mm}
        \hspace{-16mm}
        \subfigure[\scriptsize Intersections ($p=2$) \label{fig:groups:a}]{
        \resizebox{136.27544pt}{75.24838999999999pt}
        {\begin{tikzpicture}
          \tikzmath{\t = 0.9;}
          \tikzmath{\s = 0.85;}
          \tikzmath{\mvx = 1.25;}
          \tikzmath{\mvy = 0.75;}
          \tikzmath{\mvxx = 0.6;}
          \tikzmath{\side = 3;}

          \node[] at (-3.3,-1) {\white{$G$}};
          \node[] at (0,-2.25) {\white{$G$}};

          \draw[color=cyan!0,fill=cyan!0] (\mvx+1,1-\mvy) rectangle ++(0.3,0.3);
          \node[] at (\mvx+1.75,1.15-\mvy) {\white{$G_1$}};

          \draw[color=black,thick,fill=none] (-\side/2-0.3,-\side/3*2-0.6) rectangle ++(\side*1.2,0.2+\side/6*5+1.05);

          \draw[rotate=50,color=cyan!60,fill=cyan,very thick,fill opacity=0.1] (-\s,0) ellipse (\t*1.618cm and \t*1cm);

          \draw[rotate=-50,color=red!60, fill=red, very thick,fill opacity=0.1] (+\s,0) ellipse (\t*1.618cm and \t*1cm);

          \node at (1.1,-1.4) {\Large $I_{01}$};
          \node at (-1.1,-1.4) {\Large $I_{10}$};
          \node at (0,0) {\Large $I_{11}$};

          \node at (14.0:1.65*\s) {\Large $I_{00}$};

        \end{tikzpicture}
        }
        }
        \hspace{-10mm}
        \subfigure[\scriptsize Implicit bias parameters ($p=2$) \label{fig:groups:b}]{
        \resizebox{136.27544pt}{75.24838999999999pt}
        {\begin{tikzpicture}
          \tikzmath{\t = 0.9;}
          \tikzmath{\s = 0.85;}
          \tikzmath{\mvx = 1.25;}
          \tikzmath{\mvy = 0.75;}
          \tikzmath{\mvxx = 0.6;}
          \tikzmath{\side = 3;}

          \node[] at (-3.3,1) {\white{$|$}};
          \node[] at (0,-2.25) {\white{$G$}};

          \draw[color=cyan!60,fill=cyan!60] (\mvx+1,1-\mvy) rectangle ++(0.3,0.3);
          \node[] at (\mvx+1.75,1.15-\mvy) {$G_1$};

          \draw[color=red!60,fill=red!60] (\mvx+1,1-0.5-\mvy) rectangle ++(0.3,0.3);
          \node[] at (\mvx+1.75,1.15-0.5-\mvy) {$G_2$};

          \draw[color=black,very thick,fill=none] (\mvx+1+0.01,1-1-\mvy) rectangle ++(0.28,0.28);
          \node[] at (\mvx+1.75,1.15-1-\mvy) {$[m]$};

          \draw[color=black,thick,fill=none] (-\side/2-0.3,-\side/3*2-0.6) rectangle ++(\side*1.2,0.2+\side/6*5+1.05);

          \draw[rotate=50,color=cyan!60,fill=cyan,very thick,fill opacity=0.1] (-\s,0) ellipse (\t*1.618cm and \t*1cm);

          \draw[rotate=-50,color=red!60, fill=red, very thick,fill opacity=0.1] (+\s,0) ellipse (\t*1.618cm and \t*1cm);

          \node at (1.1,-1.4) {\Large $\beta_2$};
          \node at (-1.1,-1.4) {\Large $\beta_1$};
          \node at (0,0) {\Large $\beta_1\beta_2$};

          \node at (14.0:1.65*\s) {\Large $1$};

        \end{tikzpicture}
        }
        }
        \hspace{-7mm}
        \subfigure[\scriptsize Intersections ($p=3$)\label{fig:groups:c}]{
        \resizebox{123.19498905pt}{74.5993689pt}
        {\begin{tikzpicture} %
          \tikzmath{\r = 1.2;}
          \tikzmath{\s = 0.75;}
          \tikzmath{\mvx = 1.25;}
          \tikzmath{\mvy = -0.5;}
          \tikzmath{\mvxx = 1;}
          \tikzmath{\side = 3.75;}
          \node[] at (-2.9,0) {\white{$G$}};
          \draw[color=cyan!0,fill=cyan!0] (\mvx+1,1-\mvy) rectangle ++(0.3,0.3);
          \node[] at (\mvx+1.75,1.15-\mvy) {\white{$G_1$}};
          \draw[color=black,thick,fill=none] (-\side/2-0.125,-\side/2+0.175) rectangle ++(\side+0.25,\side);
          \draw[rotate=0,color=black!60,fill=black,very thick,fill opacity=0.1] (0,+\s) ellipse (\r cm and \r cm);
          \draw[rotate=0,color=cyan!60,fill=cyan,very thick,fill opacity=0.1] (-\s*1.732/2,-\s/2) ellipse (\r cm and \r cm);
          \draw[rotate=0,color=red!60, fill=red, very thick,fill opacity=0.1] (+\s*1.732/2,-\s/2) ellipse (\r cm and \r cm);
          \node at (30:\r*0.71) {\Large $I_{011}$};
          \node at (150:\r*0.71) {\Large $I_{101}$};
          \node at (-90:\r*0.71) {\Large $I_{110}$};
          \node at (45:1.75*\r) {\Large $I_{000}$};
          \node at (-150:\r) {\Large $I_{100}$};
          \node at (-30:\r) {\Large $I_{010}$};
          \node at (90:\r) {\Large $I_{001}$};
          \node at (0,0) {\Large $I_{111}$};
        \end{tikzpicture}
        }
        }
        \hspace{-7mm}
        \subfigure[\scriptsize Implicit bias parameters ($p=3$)\label{fig:groups:d}]{
        \resizebox{123.19498905pt}{74.5993689pt}
        {\begin{tikzpicture} %
          \tikzmath{\r = 1.2;}
          \tikzmath{\s = 0.75;}
          \tikzmath{\mvx = 1.25;}
          \tikzmath{\mvy = -0.5;}
          \tikzmath{\mvxx = 1;}
          \tikzmath{\side = 3.75;}
          \node[] at (-2.9,-1) {\white{$|$}};
          \draw[color=cyan!60,fill=cyan!60] (\mvx+1,1-\mvy) rectangle ++(0.3,0.3);
          \node[] at (\mvx+1.75,1.15-\mvy) {$G_1$};
          \draw[color=red!60,fill=red!60] (\mvx+1,1-0.5-\mvy) rectangle ++(0.3,0.3);
          \node[] at (\mvx+1.75,1.15-0.5-\mvy) {$G_2$};
          \draw[color=black!60,fill=black!60] (\mvx+1,1-1-\mvy) rectangle ++(0.3,0.3);
          \node[] at (\mvx+1.75,1.15-1-\mvy) {$G_3$};
          \draw[color=black,very thick,fill=none] (\mvx+1,1-1.5-\mvy) rectangle ++(0.3,0.3);
          \node[] at (\mvx+1.75,1.15-1.5-\mvy) {$[m]$};
          \draw[color=black,thick,fill=none] (-\side/2-0.125,-\side/2+0.175) rectangle ++(\side+0.25,\side);
          \draw[rotate=0,color=black!60,fill=black,very thick,fill opacity=0.1] (0,+\s) ellipse (\r cm and \r cm);
          \draw[rotate=0,color=cyan!60,fill=cyan,very thick,fill opacity=0.1] (-\s*1.732/2,-\s/2) ellipse (\r cm and \r cm);
          \draw[rotate=0,color=red!60, fill=red, very thick,fill opacity=0.1] (+\s*1.732/2,-\s/2) ellipse (\r cm and \r cm);

          \node at (30:\r*0.71) {$\beta_2\beta_3$};
          \node at (150:\r*0.71) {$\beta_1\beta_3$};
          \node at (-90:\r*0.71) {$\beta_1\beta_2$};
          \node at (45:1.75*\r) {$1$};
          \node at (-150:\r) {$\beta_1$};
          \node at (-30:\r) {$\beta_2$};
          \node at (90:\r) {$\beta_3$};
          \node at (0,0) {$\beta_1\beta_2\beta_3$};
        \end{tikzpicture}
        }
        }
        \hspace{-7mm}
        \vspace{-2mm}
        \caption{\small {\em Model of affected groups and implicit bias:}
        The first two sub-figures consider two overlapping affected groups.
        \cref{fig:groups:a} presents the four intersections created by these groups and \cref{fig:groups:b} gives the implicit bias experienced by items in each intersection.
        Here, items in both $G_1$ and $G_2$, i.e., those in $I_{11}$, face the most acute implicit bias, $\beta_1\beta_2$: this is the compounded effect of the implicit bias faced by each items each group $\beta_1<1$ and $\beta_2<1$. %
        The last two sub-figures this illustration to three overlapping groups.
        }
        \label{fig:groups}
    \end{figure*}
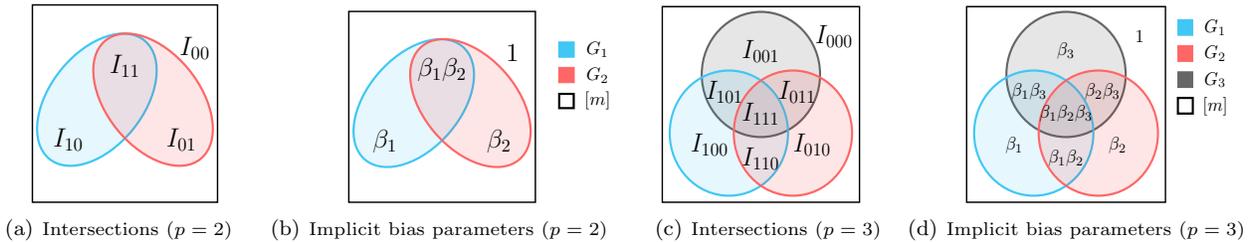

    \subsection{Selection Under Implicit Bias: Intersectional and Non-intersectional Constraints}\label{sec:selection_with_bias}
      {The decision maker would like to pick an $\sx{} \in \{0,1\}^m$ with $n$ items, i.e., $\sum_{i=1}^m \sx_i=n$, that maximizes latent utility:} %
        \begin{align*}
            \sx \coloneqq \argmax\nolimits_{x\in \zo^m} \inangle{x,w},\quad \text{s.t.,}\quad \sum_{i=1}^m x_i=n.\yesnum\label{eq:def_xstar}
        \end{align*}
      {However, due to their implicit bias, the decision maker instead maximizes the observed utility:
      They choose a selection $\hx$ with $n$ items that maximizes the observed utility,}
      \begin{align*}
            \hx\coloneqq \argmax\nolimits_{x\in \zo^m} \inangle{x,\hw},\quad \text{s.t.,}\quad \sum_{i=1}^m x_i=n.\yesnum
        \end{align*}
      Since $\hx$ maximizes a different objective than $\sx$, it could be very different from $\sx$.
      Hence, it may have a much smaller latent utility: $\inangle{\hx,w} \ll \sinangle{\sx,w}$.
      {To see this, consider two items where the latent utility of the first item is $w_1=1$ and of the second item is $w_2=0.1$.
      Suppose the decision maker has to select one item (i.e., $n=1$) and there are two groups. %
      Since $\sx$ maximizes the latent utility, it selects the first item. %
      However, if the first item is in $I_{11}$ and the second item is in $I_{00}$, then their observed utilities are $\hw_1=\beta_1\beta_2$ and $\hw_2=0.1$ respectively.
      If $\beta_1\beta_2<0.1$, then $\hx$ selects the second item.
      Hence, it has a latent utility $\sinangle{\hx,w}=0.1$, which is {significantly smaller than the latent utility of $\sx$: $\sinangle{\sx,w}=1$.}}
      Apart from this, $\hx$ also selected fewer items facing implicit bias, i.e., items $i$ for which $\hw_i < w_i$; adversely affecting such items.

    {Affirmative action policies broadly seek to reduce and counter implicit bias and other systematic biases.}
    {There are several types of affirmative action policies~\cite{sowell2004affirmative,baswana2019centralized}.}
    Here, we consider lower-bound constraints, such as the Rooney Rule, which require the decision maker to select at least a specified number of candidates from each group.
        Recent works \cite{KleinbergR18,celis2020interventions,EmelianovGGL20} on decision-making in the presence of implicit bias have shown that in the setting with a single group, such constraints can serve an additional purpose: to improve the latent utility of the subset selected by the decision maker.
      Motivated by this, we study the efficacy of these constraints to improve the latent utility in {the} setting with multiple and intersectional groups, as discussed in \cref{sec:implicit_bias}.
      Specifically, we consider two types of lower-bound constraints: non-intersectional and intersectional:
        \begin{itemize}[leftmargin=\leftmarginINTERNAL{}]
          \item {\em Non-intersectional constraints} are specified by lower bounds $L_1,L_2,\dots,L_p\in \Z_{\geq 0}$, and require that, for each $\ell\in [p]$, the decision maker include at least $L_\ell$ items from group  $G_\ell$.
          \item {\em Intersectional constraints} are specified by $2^p$ lower bounds, $L_\sigma \in \Z_{\geq 0}$ for each $\sigma\in \zo^p$, and require that, for each $\sigma\in \zo^p$, the decision maker include at least $L_\sigma$ items from intersection $I_\sigma$.
        \end{itemize}
        \noindent When the context is clear, we  use $L$ to denote the vector of lower bounds in either intersectional or non-intersectional constraints.
        Given a vector $L$ defining a lower-bound constraint, either non-intersectional or intersectional, let $C(L)\subseteq \zo^m$ be the set of all subsets {with $n$ items} that satisfy the constraints defined by $L$. %
        Given a constraint $L$, the {\em constrained} decision maker chooses the selection $\tx$ with the highest observed utility in $C(L)$:
      \begin{equation}\label{eq:css}
        \tx \coloneqq \argmax\nolimits_{x\in C(L)} \inangle{x,\hw}.
      \end{equation}

      \noindent Our goal is to understand how close to the optimal latent utility, $\sinangle{\sx,w}$, can {the decision maker (who selects) get when a policy maker (who decides the lower-bound constraints) imposes intersectional vs. non-intersectional constraints on them.}
      In particular, given groups $G_1,\ldots, G_p$ and some unknown implicit bias parameters $\beta_1,\ldots,\beta_p$, which lower bound vectors $L$ have the property that the latent utility of the selection $\tx$, $\langle \tx,w \rangle$, is close to $\sinangle{\sx,w}.$

    \subsection{Utility Ratio}\label{sec:utility_ratio}
        As in \cite{KleinbergR18,celis2020interventions,EmelianovGGL20}, we study the setting where latent utilities of all items are drawn independently from some continuous distribution $\cD$ with a non-negative support.
        {We begin with the uniform distribution and later consider other distributions in \cref{sec:results:extensions}.}
      To measure the relative efficacy of different constraints, we consider a {\em utility ratio} defined as the expected value of the ratio of $\inangle{\tx, w}$ to $\sinangle{\sx, w}$. 
      {Here, $\sx$ is the selection with the highest latent utility (as defined in Equation~\eqref{eq:def_xstar}) and $\tx$ is the selection picked by the constrained decision maker
      (as defined in Equation~\eqref{eq:css}).
      For a draw of latent utilities $w$, $\frac{ \sinangle{\tx,w} }{ \sinangle{\sx,w} }$ is the fraction of the optimal latent utility obtained by the constrained decision maker.
      Hence, the utility ratio measures the expected fraction of the optimal utility obtained by the constrained decision maker.}
      Formally, given lower bounds $L$ and implicit bias parameters $\beta$, the utility ratio is:
      \begin{align*}
        \ratio_{\cD}(L,\beta)\coloneqq \Ex\nolimits_{w\sim \cD}\insquare{ \frac{ \sinangle{\tx,w} }{ \sinangle{\sx,w} }}.
        \tagnum{Utility ratio}\customlabel{eq:def_ratio}{\theequation}
      \end{align*}
      {Where $\sx$ is a function of the utilities $w$ and $\tx$ is a function of utilities $w$, implicit bias parameters $\beta$, and lower bounds $L$.}
    It can be shown that the utility ratio is invariant to scaling {of utilities} (\cref{claim:invariance_of_r}), and its range is invariant across all distributions with the same mean and support (\cref{claim:range_of_r}).
    In particular, for the uniform distribution on $[0,C]$, for any $C>0$, the range is $\frac{1}{2}$ to 1 (i.e., $\smash{\frac{1}{2}} \leq  \ratio_\cU(L,\beta) \leq 1$).
    {{When discussing the uniform distribution on $[0,1]$ we drop the subscript of $\ratio$ and use $\ratio(L,\beta)$ to denote $\ratio_\cU(L,\beta)$.}}

\section{Results}\label{sec:results}

  \subsection{Sub-Optimal Latent Utility for Any Non-intersectional Constraints}\label{sec:upper_bound}
        {Our first result establishes} an upper bound on the maximum utility ratio (Equation~\eqref{eq:def_ratio}) that {a policy maker can secure} in the presence of implicit bias by using non-intersectional constraints (\cref{sec:selection_with_bias}).
        For simplicity, we first consider the case where utilities are distributed according to the uniform distribution on $[0,1]$ and, later, consider generalizations to other distributions in \cref{sec:results:extensions}.
        By definition, the utility ratio $\ratio(L,\beta)$ is at most 1.
        We show that for two groups (i.e., $p=2$), if the utilities are independently drawn from the uniform distribution, then no matter which non-intersectional lower bounds $L_1\geq 0$ and $L_2\geq 0$ the {policy maker chooses, $\ratio(L,\beta)$ is bounded away from 1 whenever  $\beta_1 <1$ and $\beta_2 <1$.}
    \begin{theorem}[\textbf{Non-intersectional constraints cannot recover full utility}]\label{thm:ub}
      {Suppose the latent utilities are uniformly distributed on $[0,1]$, %
      the fraction of candidates selected is between $\eta$ and $1-\eta$ for some constant $\eta>0$ (i.e., $\eta<\frac{n}{m}<1-\eta$), and the size of each intersection is greater than $\rho m$ (i.e., for all $\sigma\in \zo^2$, $\abs{I_\sigma}>\rho m$).
      For all implicit bias parameters $0<\beta_1,
      \beta_2<1$ and constants $\eta>0$ and $\rho>0$, there is a threshold $m_0\in \N$ such that if the number of candidates is more than this threshold, $m\geq m_0$, then for any non-intersectional lower {bounds $L_1,L_2\geq 0$ the utility ratio is strictly smaller than one, where the difference between $1$ and the utility ratio depends up on $\eta,\rho,\beta_1$, and $\beta_2$ as follows:}
      \begin{align*}
        \ratio(L,\beta)\leq
        1 - \inparen{\frac{\rho}{3}\cdot \min\inbrace{\eta,1-\eta}\cdot (1-\beta_1) \cdot(1-\beta_2)}^2.
        \yesnum\label{eq:upperbound_uniform_two_groups}
      \end{align*}}
    \end{theorem}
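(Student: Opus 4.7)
The plan is to replace the stochastic selection problem by a deterministic optimization over \emph{selection fractions} per intersection, exhibit an algebraic invariant that the decision maker's choice must obey under \emph{any} non-intersectional constraint, and use this invariant to show that the four fractions cannot all equal $n/m$ whenever $\beta_1,\beta_2<1$. Throughout, write $m_\sigma\coloneqq |I_\sigma|$, $\bar y\coloneqq n/m$, and $y_\sigma\coloneqq |\{i\in I_\sigma:\tx_i=1\}|/m_\sigma$, with analogous quantities $y^\star_\sigma=\bar y$ for the unbiased optimum $\sx$. Since $m_\sigma\geq \rho m$, the sum of the top $k$ of $m_\sigma$ i.i.d.\ uniforms on $[0,1]$ concentrates around $m_\sigma(y-y^2/2)$ where $y=k/m_\sigma$, so
\[
\Ex[\sinangle{\tx,w}]=n-\tfrac12\sum_\sigma m_\sigma y_\sigma^2+o(m),\qquad \Ex[\sinangle{\sx,w}]=n-\tfrac{n^2}{2m}+o(m).
\]
Both are $\Theta(m)$, so a standard second-moment argument gives $\ratio(L,\beta)=\Ex[\sinangle{\tx,w}]/\Ex[\sinangle{\sx,w}]+o(1)$ and hence
\[
1-\ratio(L,\beta)\ \geq\ \frac{1}{2n}\sum_\sigma m_\sigma(y_\sigma-\bar y)^2\,(1-o(1)).
\]

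For any $L_1,L_2\geq 0$, the LP relaxation of the biased decision maker's problem has three dual variables $\lambda$ (for the count) and $\mu_1,\mu_2\geq 0$ (for the two lower bounds), and complementary slackness forces $i\in I_\sigma$ to be selected iff $w_i>t_\sigma$, where $\beta_\sigma t_\sigma=\lambda-\mu_1\sigma_1-\mu_2\sigma_2$ with $\beta_\sigma=\prod_{\ell:\sigma_\ell=1}\beta_\ell$. Eliminating the three multipliers by solving for $\mu_1,\mu_2$ from the $t_{10},t_{01}$ equations and substituting into the $t_{11}$ equation yields the $L$-independent invariant
\[
t_{00}-\beta_1 t_{10}-\beta_2 t_{01}+\beta_1\beta_2 t_{11}=0,
\]
which, via $y_\sigma=1-t_\sigma$ and $(1-\beta_1)(1-\beta_2)=1-\beta_1-\beta_2+\beta_1\beta_2$, becomes
\[
y_{00}-\beta_1 y_{10}-\beta_2 y_{01}+\beta_1\beta_2 y_{11}=(1-\beta_1)(1-\beta_2).
\]

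Setting $\delta_\sigma\coloneqq y_\sigma-\bar y$, the count constraint $\sum_\sigma m_\sigma y_\sigma=n$ becomes $\sum_\sigma m_\sigma\delta_\sigma=0$, and the invariant becomes
\[
\delta_{00}-\beta_1\delta_{10}-\beta_2\delta_{01}+\beta_1\beta_2\delta_{11}=(1-\bar y)(1-\beta_1)(1-\beta_2),
\]
which is \emph{strictly positive} since $\bar y<1-\eta<1$ and $\beta_1,\beta_2<1$; this is precisely what rules out $y_\sigma\equiv \bar y$. Cauchy--Schwarz with weights $m_\sigma$ and coefficients $a_\sigma=(1,-\beta_1,-\beta_2,\beta_1\beta_2)$ then gives
\[
\sum_\sigma m_\sigma\delta_\sigma^2 \geq \frac{\bigl[(1-\bar y)(1-\beta_1)(1-\beta_2)\bigr]^2}{\sum_\sigma a_\sigma^2/m_\sigma} \geq \frac{\rho m}{4}(1-\bar y)^2(1-\beta_1)^2(1-\beta_2)^2,
\]
using $m_\sigma\geq \rho m$ and $\sum_\sigma a_\sigma^2\leq 4$. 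Combining with the variance-to-ratio reduction of the first paragraph, together with $\min\{\bar y,1-\bar y\}\geq\min\{\eta,1-\eta\}$ coming from $\eta<\bar y<1-\eta$, yields the claimed lower bound on $1-\ratio(L,\beta)$, with numerical constants absorbed into the stated $1/9$.

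The main obstacle is the concentration step: to pass from $\Ex[\sinangle{\tx,w}/\sinangle{\sx,w}]$ to the ratio of expectations and from the random integer fractions $y_\sigma$ (determined by the IP on the observed utilities $\hw$) to their deterministic KKT values, one needs standard tail estimates on sums of order statistics of uniforms together with an $o(m)$ integrality-gap bound on the LP relaxation of the biased decision maker's program; this is also where the threshold $m\geq m_0$ enters. Once these estimates are in hand, the remaining finite-dimensional linear algebra is straightforward, and the conceptual kernel of the argument is the $L$-independent invariant $y_{00}-\beta_1 y_{10}-\beta_2 y_{01}+\beta_1\beta_2 y_{11}=(1-\beta_1)(1-\beta_2)$, which simultaneously obstructs \emph{every} non-intersectional $(L_1,L_2)$ from matching the unbiased selection.
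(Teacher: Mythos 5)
Your overall architecture matches the paper's: reduce both selections to per-intersection counts, show they concentrate around deterministic "population'' optima, identify the unbiased optimum with proportional selection $y_\sigma\equiv n/m$, and quantify the forced deviation via the direction $(1,-1,-1,1)$. Where you genuinely diverge is in how the key quantitative step is executed. The paper works with the population objectives $f_1,f_\beta$, argues that at an interior optimum the perturbation $\wt{s}+t(1,-1,-1,1)$ stays feasible so $\inangle{\nabla f_\beta(\wt s),(1,-1,-1,1)}=0$, and then converts the gradient gap into a distance bound via Lipschitz continuity of $\nabla f_\beta$ and into a utility gap via strong concavity of $f_1$. You instead extract the same stationarity condition from LP duality (your threshold identity $t_{00}-\beta_1 t_{10}-\beta_2 t_{01}+\beta_1\beta_2 t_{11}=0$ is literally $\inangle{\nabla f_\beta,(1,-1,-1,1)}=0$, since $\partial f_\beta/\partial k_\sigma=\beta_\sigma(1-k_\sigma/\abs{I_\sigma})$), and then pass directly from the linear relation $\sum_\sigma a_\sigma\delta_\sigma=(1-\bar y)(1-\beta_1)(1-\beta_2)$ to $\sum_\sigma m_\sigma\delta_\sigma^2$ by Cauchy--Schwarz, using the exact expansion $f_1(s^\star)-f_1(\wt s)=\frac12\sum_\sigma m_\sigma\delta_\sigma^2$. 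This is cleaner and yields a slightly better constant; the complementary-slackness derivation is legitimate here because the constraint matrix (rows $\mathbf 1,\mathbf 1_{G_1},\mathbf 1_{G_2}$) is totally unimodular, so the integer program coincides with its LP relaxation and the a.s.-unique optimum is a threshold rule.

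There is one genuine gap: the translation $y_\sigma=1-t_\sigma$, and hence your invariant in terms of the $y_\sigma$'s, is valid only when every threshold lies in $[0,1]$, i.e., when no intersection is fully selected or fully rejected. If, say, $\lambda-\mu_1-\mu_2<0$ then all of $I_{11}$ is taken, $y_{11}=1$ while $1-t_{11}>1$, and the identity $y_{00}-\beta_1y_{10}-\beta_2y_{01}+\beta_1\beta_2y_{11}=(1-\beta_1)(1-\beta_2)$ can fail. The paper isolates exactly this situation as a separate case (some coordinate of $\wt s$ equal to $0$ or $\abs{I_\sigma}$) before running the gradient argument. In your framework the repair is immediate --- a saturated intersection gives $\abs{\delta_\sigma}\geq\min\{\eta,1-\eta\}$ outright, so $\sum_\sigma m_\sigma\delta_\sigma^2\geq\rho m\min\{\eta,1-\eta\}^2$, which is stronger than what Cauchy--Schwarz gives in the interior case --- but the case split must be stated, since the invariant you call the ``conceptual kernel'' does not by itself obstruct every non-intersectional $(L_1,L_2)$: it obstructs only those whose optimum is interior. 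The remaining items you defer (uniform concentration of empirical quantiles so the random $y_\sigma$ track the KKT values, and concentration of $\sinangle{\sx,w}$ to justify replacing the expected ratio by the ratio of expectations) are indeed standard and are carried out in the paper in essentially the form you describe.
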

      \noindent Thus, \cref{thm:ub} establishes that for uniformly distributed utilities, a policy maker cannot recover the full utility ratio with  non-intersectional constraints for any bias parameters $0<\beta_1,\beta_2 < 1$.
      As an example, suppose $\eta=\frac{1}{2}$ and $\rho=\frac{1}{4}$, then \cref{eq:upperbound_uniform_two_groups} says that $\ratio(L,\beta)\leq 1 - \inparen{\frac{1}{24}(1-\beta_1) \cdot(1-\beta_2)}^2$, which is strictly smaller than 1 for any $\beta_1<1$ and $\beta_2<1$ and is independent of the number of candidates $m$.
    This is in contrast to the setting with a single group, where non-intersectional constraints can recover a utility ratio arbitrarily close to 1 as the number of candidates $m$ increase~\cite{KleinbergR18,celis2020interventions,EmelianovGGL20}.

    We emphasize that the upper bound in \cref{thm:ub} does not depend on the specific groups $G_1$ and $G_2$ or the specific value of $n$:
    it only requires that the intersection sizes are not too small and $n$ is not too close to 0 or $m$.
    For instance, in several admissions and hiring contexts, one can expect $\eta$ non-vanishing as even the most selective undergraduate programs in the US select more than 5\% of the total applicants, which corresponds to $\eta\geq\frac{1}{20}$~\cite{pewCollegeRates2019}.
    Moreover, a majority of US undergraduate programs select between 0.4 and 0.8 fraction of the {total applicants, which corresponds to $\frac{2}{5}\leq \eta\leq \frac{4}{5}$~\cite{pewCollegeRates2019}.}

    \cref{thm:ub} immediately implies the same result for $p>2$, for instance, by adding empty groups or repeating the family of groups.
    In summary, \cref{thm:ub} shows that non-intersectional constraints are not sufficient to recover the entire latent utility in the presence of multiple and intersectional groups.
    The dependence on implicit bias parameters $\beta$, and bounds on intersection size $\rho$ and  selection rate $\eta$ are also natural and we discuss these below.

    {\em Dependence on $\beta$.}
            {One can verify that as $(\beta_1,\beta_2)\to (1,1)$, the upper bound in \cref{thm:ub} goes to 1.
        In particular, it becomes vacuous at $\beta_1=\beta_2=1$.
        This is expected as when $\beta_1=\beta_2=1$, no item faces implicit bias: for all items $i$, $w_i = \hw_i$.
        Further, the upper bound in \cref{thm:ub} also goes to 1, as $\beta_1\to 1$ while $\beta_2\in (0,1)$ is fixed (and vice versa).
        This holds as when $\beta_1=1$, only one group faces implicit bias.
        Thus, one can use the constraints given by \cite{celis2020interventions}, for the $p=1$ case, to recover the near-optimal utility.}

      {\em Dependence on $\rho$.}
        The upper bound also goes to 1 as $\rho\to 0$.
        In particular, it becomes vacuous when $\rho=0$.
        This is expected as when $\rho=0$, $G_1$ and $G_2$ may not intersect.
        If the intersection is empty, then non-intersectional and intersectional constraints are the same, and one can use intersectional constraints in \cref{thm:fullrecovery} to recover the near-optimal utility.

      {\em Dependence on $\eta$.}
        The upper bound in \cref{thm:ub} also goes to 1 as $\eta$ approaches either 0 or 1, while $\rho\in (0,1)$ is fixed.
        {This is because when $\eta=1$, then both $\tx$ and $\sx$ select all $m$ items and, hence, $\tx=\sx$. Therefore, when $\eta=1$, the utility ratio is 1.}
        When $\eta$ is close to 0, $n$ is significantly smaller than $m$ and, hence, the best $n$ items in each intersection have latent utility very close to 1 with high probability.
        In this case, even if a decision maker is extremely biased, say {they only select} items from one intersection (e.g., White men), they would select $n$ items whose latent utility is very close to $n$, {which is the maximum value of the latent utility when utilities are drawn from the uniform distribution.}

        The basic property used to prove \cref{thm:ub} is that: Any selection which deviates from proportional representation has a latent utility smaller than optimal.
        At a high level, \cref{thm:ub} holds because the decision maker can alter their selection by selecting fewer (respectively more) items from $I_{11}\coloneqq G_1\cap G_2$ and more (respectively fewer) items from $I_{01}$ and $I_{01}$, while keeping the number of selections from $G_1$ and from $G_2$ invariant.
        In the proof, we show that for any $L_1$ and $L_2$, the decision maker selects a significantly less-than-proportional number of candidates from at least one of the four intersections.
        The proof of \cref{thm:ub} appears in \cref{sec:proofof:thm:ub} and an overview of the proof appears in Section~\ref{sec:proofoverviewof:thm:ub}.
        
        \begin{figure*}[t!]
            \vspace{-2mm}
            \centering
            \subfigure[Maximum utility ratio attained by non-intersectional constraints as a function of $\beta$, where the implicit bias parameters are $\beta_1\coloneqq \beta$ and $\beta_2\coloneqq \beta$.
            \label{fig:2d_beta_vs_util}
            ]{
                {\includegraphics[width=0.47\linewidth, trim={-1cm 0.25cm 0.5cm 1.75cm},clip]{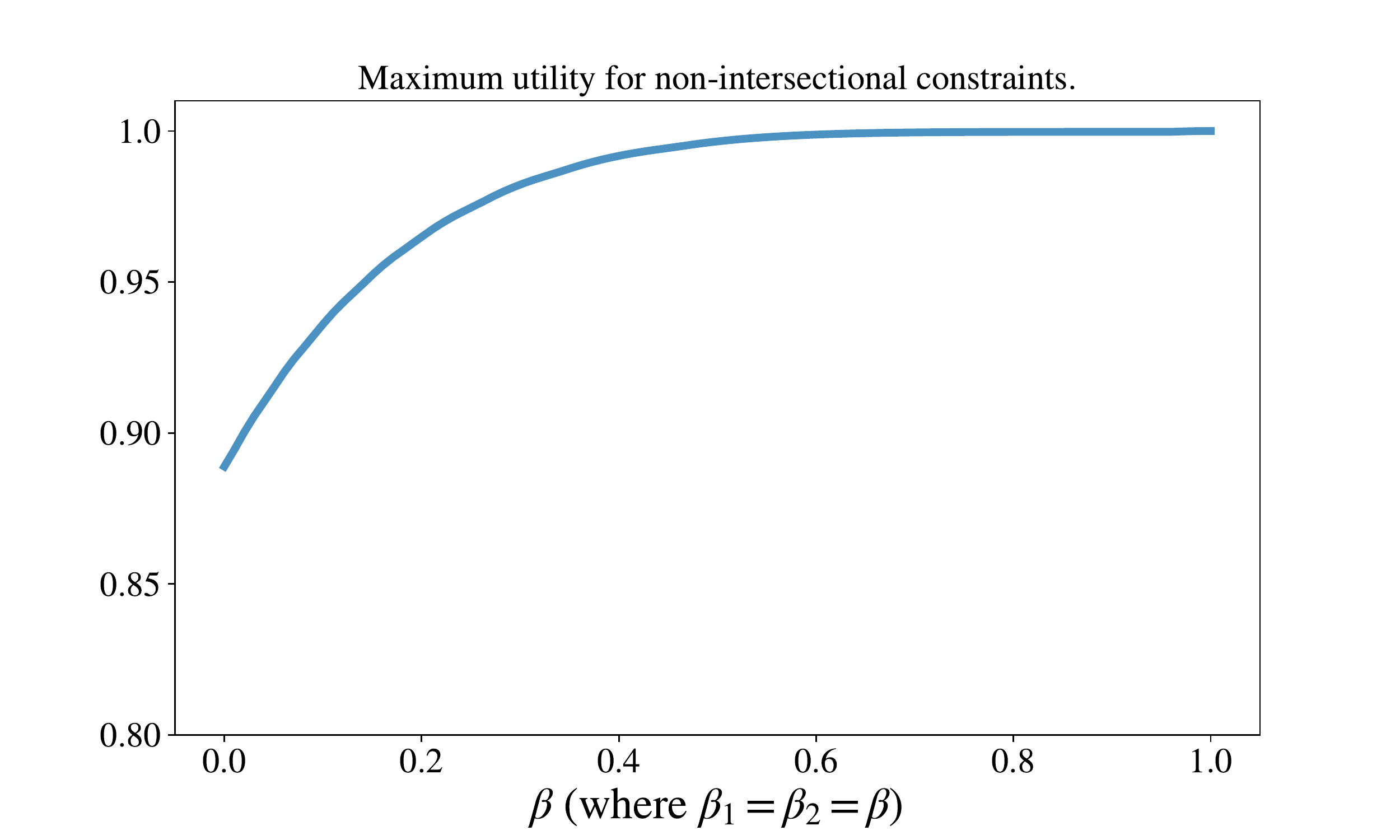}}
            }
            \hfill
            \subfigure[Maximum utility ratio attained by non-intersectional constraints as a function of implicit bias parameters $\beta_1$ and $\beta_2$.
            \label{fig:3d_beta_vs_util}
            ]{
                \hspace{3mm}
                {\includegraphics[width=0.38\linewidth, trim={5cm 2cm 5cm 3cm},clip]{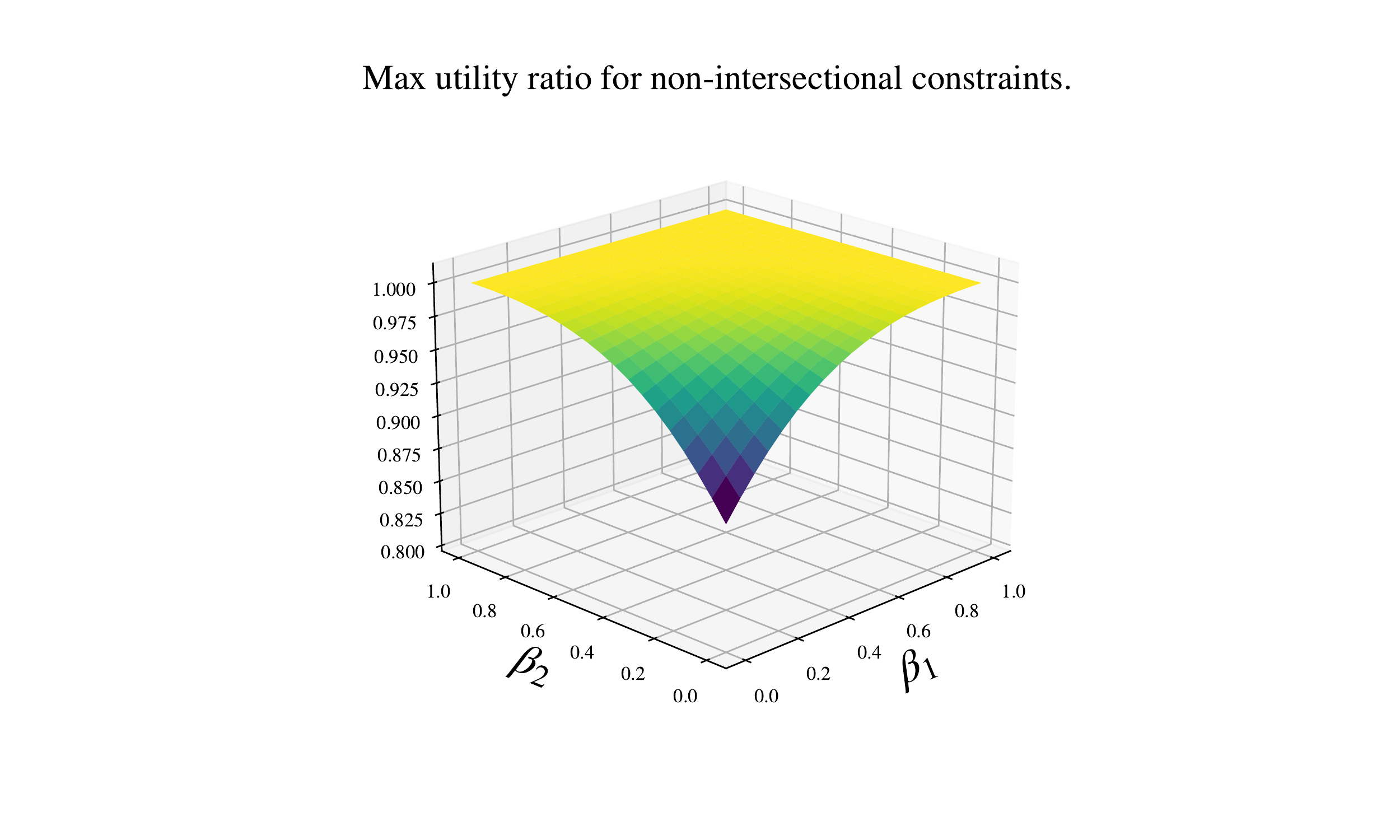}}
                \hspace{3mm}
            }
            \vspace{-4mm}
            \caption{\small 
            \cref{fig:2d_beta_vs_util,fig:3d_beta_vs_util} plot the maximum utility ratio attained by non-intersectional constraints as a function of (1) $\beta$, where $\beta_1\coloneqq \beta$ and $\beta_2\coloneqq \beta$ and (2) as a function of $\beta_1$ and $\beta_2$ respectively.
            In these plots, we fix a specific instance of the selection problem where half of the candidates are selected (i.e., $n=\frac{m}{2}$) and all intersections have size $\frac{m}{4}$ (i.e., $\forall\sigma\in \zo^2$, $\abs{I_\sigma}=\frac{m}{4}$).
            For this instance, these figures plot the value of the utility ratio in limit that $m\to\infty$.
            One can observe that the maximum utility ratio attained is strictly less than 1 for all $0<\beta_1,\beta_2<1$ and approaches 1 as $\max\inbrace{\beta_1,\beta_2}\to 1$.
            Moreover, when the bias is extreme, i.e., $\beta_1$ and $\beta_2$ are close to 0, then the maximum utility ratio, attained by non-intersectional constraints, approaches $\frac{8}{9}$ (see \cref{prop:89}).
            In contrast, \cref{thm:fullrecovery} shows that intersectional constraints achieve a utility ratio of 1, in the limit that $m\to\infty$, for all $0<\beta_1,\beta_2<1$.
            }
            \label{fig:beta_vs_utility_non-intersectional}
        \end{figure*}

  \subsection{Optimal Latent Utility With Intersectional Constraints}\label{sec:positive_result}

        Complementing \cref{thm:ub}, we show that, if {the policy maker is} allowed to place constraints on intersections, then {they} can recover all but a vanishing (with $m$) fraction of the total latent utility.
        In particular, for any {desired constant} $0<\eps<1$, number of items $m$, and groups  $G_1,\dots,G_p$, we give {lower bounds $L_\sigma\geq 0$, for each intersection $\sigma$,}
        such that for {\em any} implicit bias parameters $0<\beta_1, \ldots, \beta_p\leq 1$, constrained selection leads to a utility ratio more  than $1-\eps$.
    \begin{theorem}[\textbf{Intersectional constraints can (asymptotically) recover  full utility}]\label{thm:fullrecovery}
      Suppose that the fraction of selected candidates is greater than $\eta$ for some constant $\eta>0$ (i.e., $\frac{n}{m}>\eta\cdot m$).
      For all constants $0<\eps<1$, $\eta>0$, and number of groups $p\in \N$, there exists a threshold $m_0\in \N$,
      such that if the number of candidates is more than this threshold, $m\geq m_0$, then for any groups $G_1,\dots,G_p\subseteq [m]$,
      there exist intersectional constraints, $L_\sigma\geq 0$ for each intersection $\sigma\in \zo^p$, which for any
      implicit bias parameters $0<\beta_1,\dots,\beta_p\leq 1$ and any continuous distribution $\cD$ with non-negative support, have a utility ratio at least $1-\eps$, i.e.,
      $\ratio_\cD(L,\beta)\geq 1-\eps.$
    \end{theorem}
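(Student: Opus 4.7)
The plan exploits the key observation that within each intersection $I_\sigma$, every item is biased by the same factor $\beta_\sigma := \prod_{\ell:\sigma_\ell=1}\beta_\ell$, so the decision maker's ordering of items in $I_\sigma$ by observed utility agrees exactly with the ordering by latent utility. Consequently, if we choose constraints with $\sum_\sigma L_\sigma = n$, the constrained selection $\tx$ is forced to pick exactly $L_\sigma$ items from each $I_\sigma$, and these are the top $L_\sigma$ items of $I_\sigma$ by latent utility. In other words, $\tx$ becomes a purely stratified selection that depends only on the ranks of the latent utilities $w$---independent of $\beta$ and of the distribution $\cD$. The main task is then to pick near-proportional $L_\sigma$'s so that this stratified $\tx$ tracks the optimum $\sx$.

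I would fix a threshold $\tau = \tau(\eps,\eta,p)$ and call $I_\sigma$ \emph{small} if $\abs{I_\sigma} < \tau$ and \emph{large} otherwise. Set $L_\sigma = \abs{I_\sigma}$ for every small intersection (so $\tx$ takes every item from them), and set $L_\sigma$ close to $n\abs{I_\sigma}/m$ for each large intersection, with rounding corrections so that $\sum_\sigma L_\sigma = n$. The hypothesis $n \ge \eta m$ together with the bound $\sum_{\sigma:\text{small}}\abs{I_\sigma} \le 2^p\tau$ makes this apportionment feasible for all $m \ge m_0$. Since the latent utilities are i.i.d.\ continuous, exchangeability implies $\sx$ is uniformly distributed over $\binom{[m]}{n}$, so each $K_\sigma := \abs{\sx \cap I_\sigma}$ is hypergeometric with mean $\mu_\sigma := n\abs{I_\sigma}/m$. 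A Chernoff bound for the hypergeometric gives $\Pr\insquare{\abs{K_\sigma - \mu_\sigma} > \delta\mu_\sigma} \le 2\exp(-c\delta^2 \mu_\sigma)$ for each large $\sigma$, and a union bound over the at most $2^p$ large intersections shows that the event $E := \inbrace{\abs{K_\sigma - L_\sigma} \le 2\delta L_\sigma\ \text{for every large }\sigma}$ has probability at least $1-\eps/2$, provided $\tau$ is chosen sufficiently large in terms of $\eps,\eta,p$. Crucially, $\Pr[E]$ depends only on the ranks of $w$, hence is uniform in $\cD$ and $\beta$.

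The final step is a comparison of latent utilities on $E$ using the elementary rearrangement inequality $\sum_{k=1}^L a_k \ge (L/K)\sum_{k=1}^K a_k$, valid for $L \le K$ and any non-negative decreasing sequence $a_1 \ge a_2 \ge \cdots$. Let $T_k^\sigma$ denote the sum of the top $k$ latent utilities in $I_\sigma$. On $E$, for each large $\sigma$ one obtains $T_{L_\sigma}^\sigma \ge (1-O(\delta))\,T_{K_\sigma}^\sigma$---using the rearrangement bound when $L_\sigma \le K_\sigma$, and the elementary monotonicity $T_{L_\sigma}^\sigma \ge T_{K_\sigma}^\sigma$ when $L_\sigma > K_\sigma$; for each small $\sigma$ trivially $T_{L_\sigma}^\sigma = T_{\abs{I_\sigma}}^\sigma \ge T_{K_\sigma}^\sigma$. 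Summing and using $\sinangle{\tx,w} = \sum_\sigma T_{L_\sigma}^\sigma$ and $\sinangle{\sx,w} = \sum_\sigma T_{K_\sigma}^\sigma$ yields $\sinangle{\tx,w} \ge (1-O(\delta))\sinangle{\sx,w}$ on $E$. Since the ratio is in $[0,1]$ always, taking expectations gives $\ratio_\cD(L,\beta) \ge (1-O(\delta))(1-\eps/2) \ge 1-\eps$ once $\delta$ is taken as a small multiple of $\eps$. The main obstacle is the requirement that $m_0$ be uniform in $\cD$ and $\beta$; this is resolved here because both the event $E$ (hypergeometric concentration) and the rearrangement bound on $E$ are purely rank-based and therefore distribution-free, so no $\cD$-dependent concentration rate ever enters the argument.
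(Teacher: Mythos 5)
Your proposal is correct and follows essentially the same route as the paper's proof: the same rank-based observation that within each intersection the observed and latent orderings coincide, the same hypergeometric concentration for the number of items $\sx$ takes from each $I_\sigma$, and the same top-$L$ versus top-$K$ averaging comparison (your rearrangement inequality is exactly the paper's $\wt{N}/N$ bound). The only cosmetic difference is how small intersections are handled — you split into small/large by a threshold $\tau$, while the paper uniformly inflates every lower bound by the additive term $2^{-p}n\eps$ — but both devices serve the identical purpose and the resulting constraints are near-proportional in the same sense.
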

    \noindent Thus, \cref{thm:fullrecovery} shows that a policy maker using intersectional constraints can recover utility ratio arbitrarily close to 1 for any bias parameters $0<\beta_1,\beta_2 < 1$.
    In contrast to \cref{thm:ub}, here, the difference between the utility ratio and 1 approaches 0 as the number of candidates increase.
    {Moreover, \cref{thm:fullrecovery} holds for a larger choice of parameters than \cref{thm:ub}:
    (1) Any constant $\eta$ and sizes of the intersections ($\forall\sigma\in \zo^m,\ \abs{I_\sigma}$) that satisfy the conditions in \cref{thm:ub} also satisfy the conditions in \cref{thm:fullrecovery}, and (2) while $\cD$ is fixed to be the uniform distribution on $[0,1]$ in \cref{thm:ub}, $\cD$ can be the uniform distribution on $[0,1]$ or any other continuous distribution in \cref{thm:fullrecovery}.}
    
    We note that the intersectional constraints promised in \cref{thm:fullrecovery} do not depend on the values of the implicit bias parameters $\beta_1,\dots,\beta_p$ and the distribution of latent utility $\smash{\cD}$ and only depend on the intersection sizes, i.e., $\abs{I_\sigma}$ for all $\sigma\in \zo^p$.
    Thus, they are applicable in practice when the amount of implicit bias and utility distributions are not know.
    Moreover, independence from $\beta_1,\dots,\beta_p$ allows the same constraints to be used for different decision makers -- who may have different implicit biases --
    and independence from $\cD$ allows them to be used across selection tasks with different utility distributions.
    {Finally, the independence also allows the constraints to be stable over time, while the implicit bias~\cite{lawrence1985racial,patterns2019charlesworth} and utility distributions may change \cite{dorans2002recentering}.}
    The proof of \cref{thm:fullrecovery} appears in \cref{sec:proofof:thm:fullrecovery} and its overview appears in Section~\ref{sec:proofoverviewof:thm:fullrecovery}.

    \indent {\em Independence from $\beta$ and $\cD$.}
    The independence from $\beta_1,\dots,\beta_p$ {relies on the fact that  if the optimal constrained selection vector $\tx$ picks $k\in \N$ items from an intersection, then these are the $k$ items with the highest latent utility in the intersection, irrespective of the value of $\beta_1,\dots,\beta_p$.}
    The independence from $\cD$ relies on the fact that the sets of items selected by $\sx{}$ and $\tx$ from an intersection only depend on the order of the latent utility of items and not {the actual values of latent utilities of items.}
    Note, however, that while the lower bound on the utility ratio is independent of $\cD$, the actual utility achieved by the non-intersectional constraints depends on $\cD.$
    
    \indent {\em Computational Complexity.}
        Since each candidate belongs to exactly one intersection and there are $m$ candidates, there are at most $m$ non-empty intersections.
        Hence, even when the total number of intersections $2^p$ is larger than $m$, the decision maker has to consider at most $m$ (non-empty) intersections (as they cannot select candidates from empty intersections).
        Using this, we can show that the decision maker can find $\wt{x}$ in $O(m\log{m})$ arithmetic operations.

  \subsection{Extensions of \cref{thm:ub} and \cref{thm:fullrecovery}}\label{sec:results:extensions}
  {\em Extensions of \cref{thm:ub}.}
  A natural question raised by Theorem \ref{thm:ub} is:
  {How small can the upper bound on $\ratio(L,\beta)$ be?}
  We show that  $\ratio(L,\beta) \leq \smash{\frac{8}{9} + \frac{3}{2}}\cdot \max(\beta_1,\beta_2)$.
  While this bound becomes vacuous when either $\beta_1\geq \smash{\frac{2}{27}}$ or $\beta_2\geq \smash{\frac{2}{27}}$,
  {in the regime of large implicit bias, i.e., when $\beta_1$ and $\beta_2$ are close to $0$, this bound approaches $\frac89$.}

    \begin{proposition}\label{prop:89}
        Suppose the latent utilities are uniformly distributed on $[0,1]$, the fraction of candidates selected is $\frac{1}{2}$ (i.e., $\frac{n}{m}=\frac{1}{2}$), and $G_1$ and $G_2$ are such that all intersections have size $\frac{m}{4}$ (i.e., for all $\sigma\in \zo^2$, $\abs{I_\sigma}=\frac{m}{4}$).
        For all implicit bias parameters $0<\beta_1,\beta_2\leq 1$ and for all non-intersectional lower-bound constraint $L_1,L_2\geq 0$,
        there exists an threshold $m_0\in \Z$, such that for all $m\geq m_0$, the utility ratio is upper bounded by a quantity that approaches $\frac{8}{9}$ as $\beta_1$ and $\beta_2$ go to 0, where the specific dependence of the upper bound on $\beta_1$ and $\beta_2$ is as follows:\  \
        $\ratio(L,\beta)\leq \frac{8}{9} + \frac{3}{2}\cdot \max(\beta_1,\beta_2)$.
    \end{proposition}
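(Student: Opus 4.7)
The approach is a concentration reduction followed by a KKT analysis of the decision maker's optimization in the balanced setting $|I_\sigma|=m/4$ for every $\sigma\in\zo^2$ and $n=m/2$.

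First, by the same concentration argument underlying \cref{thm:ub}, for large $m$ the utility ratio is captured up to an $o(1)$ error by
\[
\ratio(L,\beta) = \frac{\sum_{\sigma\in\zo^2} s(r_\sigma)}{3m/8} + o(1),
\]
where $r_\sigma := |\tx\cap I_\sigma|$ is the number of items the constrained decision maker picks from $I_\sigma$, $s(r) := r(1-2r/m)$ is the expected sum of the top $r$ items from an intersection of size $m/4$ with i.i.d.\ uniform utilities, and $3m/8 = \Ex[\sinangle{\sx,w}]$ is the latent utility of the unconstrained optimum. Since observed utility within each intersection is monotone in latent utility, the decision maker picks the top $r_\sigma$ items (by $w$) from each $I_\sigma$, and $(r_\sigma)_\sigma$ solves the concave program of maximizing $\sum_\sigma \beta_\sigma s(r_\sigma)$ over $\sum_\sigma r_\sigma = m/2$, $0\le r_\sigma\le m/4$, $r_{10}+r_{11}\ge L_1$, $r_{01}+r_{11}\ge L_2$, with $\beta_{00}=1$, $\beta_{10}=\beta_1$, $\beta_{01}=\beta_2$, $\beta_{11}=\beta_1\beta_2$.

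Let $\eps := m/4 - r_{00}$. By concavity of $s$, $s(r_{10})+s(r_{01})+s(r_{11}) \le 3\,s((m/4+\eps)/3)$; combined with $s(r_{00}) = m/8 - 2\eps^2/m$ and a short simplification, this gives $\sum_\sigma s(r_\sigma) \le m/3 + (2/3)\eps$, hence
\[
\ratio(L,\beta) \le \frac{8}{9} + \frac{16\eps}{9m} + o(1).
\]
To bound $\eps$, I would invoke the KKT stationarity conditions. In the principal case where all four $r_\sigma$ are interior in $(0,m/4)$, stationarity at $r_{00}$ gives $s'(r_{00})=\mu$ (the multiplier for $\sum r_\sigma=m/2$), while for $\sigma\ne 00$ it gives $\beta_\sigma s'(r_\sigma) = \mu - \sum_{\ell:\sigma_\ell=1}\lambda_\ell$. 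Eliminating $\lambda_1,\lambda_2$ by combining the equations for $\sigma\in\{10,01,11\}$ yields
\[
\mu = \beta_1 s'(r_{10}) + \beta_2 s'(r_{01}) - \beta_1\beta_2 s'(r_{11}) \le \beta_1+\beta_2 \le 2\max(\beta_1,\beta_2),
\]
using $s'\in[0,1]$. Since $s'(r_{00}) = 4\eps/m$, this gives $\eps \le (m/2)\max(\beta_1,\beta_2)$, and hence $\ratio(L,\beta) \le 8/9 + (8/9)\max(\beta_1,\beta_2) + o(1)$, which in particular implies the claimed bound $8/9 + (3/2)\max(\beta_1,\beta_2)$.

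The main obstacle is handling the boundary and degenerate cases not covered by the principal case. (i) If $r_{00}=m/4$, then $\eps=0$ and the concavity bound alone yields $\ratio\le 8/9 + o(1)$. (ii) If some $r_\sigma\in\{0,m/4\}$ for $\sigma\ne 00$, the KKT elimination must be redone using the appropriate complementary-slackness conditions on the box constraints; in each sub-case the bound $\mu \le 2\max(\beta_1,\beta_2)$ is recovered (often with a tighter constant, e.g.\ $\mu \le \beta_1+\beta_2-\beta_1\beta_2$ when $r_{11}=0$). (iii) The case $L_1+L_2>m/2$, which structurally forces $r_{00}<m/4$, can be handled by noting that for any such $L$ there exists an $L'$ with $L'_1+L'_2\le m/2$ yielding at least as large a utility ratio, so such $L$ are suboptimal for the policy maker. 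Finally, since the claimed bound exceeds $1$ and is trivial whenever $\max(\beta_1,\beta_2)\ge 2/27$, only the small-bias regime requires care, and there the principal-case analysis applies directly.
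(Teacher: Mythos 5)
Your principal-case argument is correct and is a genuinely different (and cleaner) route than the paper's for its Case A ($L_1,L_2\le \frac m4$): the KKT elimination $\mu=\beta_1 s'(r_{10})+\beta_2 s'(r_{01})-\beta_1\beta_2 s'(r_{11})\le\beta_1+\beta_2$, combined with $\mu=s'(r_{00})=\frac{4\eps}{m}$, does give $\eps=O(m\max(\beta_1,\beta_2))$ whenever all four coordinates are interior. The genuine gap is the regime $\max(L_1,L_2)>\frac m4$ (the paper's Cases B and C), which your case list does not actually cover. Take $L_1=\frac{3m}{8}$, $L_2=0$ with $\beta_1=\beta_2=\beta$ small. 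Feasibility alone ($r_{10}+r_{11}\ge\frac{3m}{8}$, $\sum_\sigma r_\sigma=\frac m2$, $r_{01}\ge0$) forces $r_{00}\le\frac m8$, i.e.\ $\eps\ge\frac m8$, so your concavity bound gives $\ratio\le\frac89+\frac{16}{9m}\cdot\frac m8=\frac{10}{9}$, which is vacuous; and since the optimum clearly has $r_{00}\in(0,\frac m4)$ for small $\beta$, stationarity in the $00$ coordinate forces $\mu=s'(r_{00})\ge\frac12$, so the assertion in your item (ii) that every boundary sub-case still recovers $\mu\le2\max(\beta_1,\beta_2)$ is false. This example also escapes your item (iii), since $L_1+L_2=\frac{3m}{8}<\frac m2$; moreover the domination claim in (iii) is asserted rather than proved, and is delicate --- e.g.\ $L'=(\frac m4,\frac m4)$ has ratio tending to $\frac23$, strictly worse than many larger choices of $L$. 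Finally, ``small bias $\Rightarrow$ principal case'' is backwards: small $\beta$ pushes the decision maker toward the box boundaries, and large $L$ forces boundary behavior for every $\beta$.

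The underlying issue is that for $\max(L_1,L_2)>\frac m4$ the loss below $\frac89$ does not come from $r_{00}$ being near $\frac m4$ with the remaining mass balanced; $r_{00}$ is small and the remaining mass is necessarily lopsided (in the example above the optimum is essentially $(r_{00},r_{10},r_{01},r_{11})\approx(\frac m8,\frac m4,0,\frac m8)$, with latent utility about $\frac{5m}{16}$ and ratio about $\frac56$), and a symmetric Jensen step cannot see this. The paper handles it by an explicit three-way split on $(L_1,L_2)$, using the swap arguments of \cref{lem:imposs_res} to determine which constraints bind, deriving linear bounds on each $K_\sigma$, and maximizing the resulting quadratic in $L_1,L_2$ (\cref{lem:thm_2_case2,lem:thm_2_case3}). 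To repair your proof you would need a comparable analysis for $\max(L_1,L_2)>\frac m4$: the KKT framework can be pushed through, but each active-set pattern must be treated separately rather than via a single bound on $\mu$.
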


    \noindent
    We do not know if the bound in \cref{prop:89} is tight but conjecture that it cannot go below $\frac89$ for the uniform distribution.
    The proof of \cref{prop:89} appears in \cref{sec:proofof:prop:89}.

  Another question is if \cref{thm:ub} is specific to the uniform distribution of utilities or to the implicit bias model in Equation~\eqref{eq:mult_bias}.
  Other relevant distribution families include power-law and truncated normal.
  These families arise in peer-review and university admissions as distributions of  citations~\cite{clauset2009power} and test scores, e.g., SAT~\cite{dorans2002recentering}, respectively.
  As for the implicit bias model, Equation~\eqref{eq:mult_bias} assumes that implicit bias experienced by an item $i$ in two groups, say, $G_1$ and $G_2$ is exactly the product of the implicit bias parameters $\beta_1$ and $\beta_2$ of these groups, i.e., $\beta_1\beta_2$.
  But depending on the specific context, the implicit bias could be more severe, e.g., $\inparen{\beta_1\beta_2}^2$, or less severe, e.g.,  $\inparen{\beta_1\beta_2}^{\frac{1}{2}}$.

  We show that a version of \cref{thm:ub} holds for truncated power-law and normal distributions and the above implicit bias models.
  We consider a family of models where the implicit bias at each intersection $\sigma$ is given by a strictly increasing function $b_\sigma\colon \R_{\geq 0}\to \R_{\geq 0}$,
  such that, the observed utility of each item $i\in I_\sigma$ in this intersection is:
  \begin{align*}
      \hw_i\coloneqq b_{\sigma}(w_{i}).
      \yesnum\label{eq:gen_model}
  \end{align*}
  \noindent This model captures the model in \Eqref{eq:mult_bias} when $b_\sigma(x)\coloneqq x \cdot {\prod\nolimits_{\ell\in [p]\colon \sigma_\ell = 1}\beta_\ell}$.
  Further, when $p=2$, changing $b_{11}(x)\coloneqq x\cdot \inparen{\beta_1\beta_2}^2$ or $b_{11}(x)\coloneqq x\cdot \inparen{\beta_1\beta_2}^{\frac{1}{2}}$ in the previous sentence captures the models mentioned above.
  We prove a version of \cref{thm:ub} under the following assumption: %

  \begin{assumption}\label{asmp:2}
        There is are constants $c,d>0$ such that
        the probability density function of $\cD$, $\mu_\cD$, is differentiable and for all $x$ in the support of $\cD$ and takes values between $c$ and $\frac{1}{c}$ (i.e., $c \leq \mu_\cD(x)\leq \frac{1}{c}$), and
        the bias function for each intersection (i.e., $b_\sigma$ for each $\sigma\in \zo^p$) is  differentiable, takes value at most  $\frac{1}{c}$ (i.e., $b_\sigma(x)\leq \frac{1}{c}$), and has a derivative between $d$ and $\frac{1}{c}$ (i.e., $d\leq b_\sigma'(x)\leq \frac{1}{c}$) for each $x$ in the support of $\cD$.
  \end{assumption}
  \noindent
  The assumption on $b_\sigma'$ ensures that two items with similar latent utilities also have similar observed utilities:
  if $\abs{w_i-w_j}=\eps$ then $d\eps\leq \abs{\hw_i-\hw_j}\leq \frac{\eps}{c}$.
  The functions $b_\sigma$ show up in the generalization of \cref{thm:ub}, and the upper bound on $b_\sigma$ is needed to re-scale the right-hand side of \cref{eq:thm_gen_ub} to be non-negative.
  {The assumption on $\cD$ is satisfied, for instance, by the truncated normal and power-law distributions:
  This is because their probability density functions are bounded between two positive constants at any point in the support.
  As opposed to them, the {\em un-truncated} normal and power-law distributions do not satisfy \cref{asmp:2}, this is because their probability density function, $\mu$, approaches 0 for large values (i.e., $\mu(x)\to 0$ as $x\to\infty$) and, hence, does not satisfy the lower bound.}
  Because of this, for these distributions, the constant $c$ decreases as the upper-end of the truncation interval increases.
  Further, the extensions of the model in \Eqref{eq:mult_bias} discussed above
  also satisfy \cref{asmp:2}:
  In particular, if $\cD$ is the uniform distribution on $[0,1]$ then \cref{asmp:2} holds for $c=1$ and for $d=\inparen{\beta_1\beta_2}^2$ (and $d=\inparen{\beta_1\beta_2}^\frac{1}{2}$ respectively).
    \begin{theorem}\label{thm:gen_ub}
    Suppose the fraction of candidates selected is between $\eta$ and $1-\eta$ for some constant $\eta>0$ (i.e., $\eta<\frac{n}{m}<1-\eta$), and the size of each intersection is at least $\rho m$ (i.e., for all $\sigma\in \zo^2$, $\abs{I_\sigma}>\rho m$).
    Under the implicit bias model in Equation~\eqref{eq:gen_model}, for any continuous distribution $\cD$ with non-negative support and set of strictly increasing functions $b_\sigma\colon \R_{\geq 0}\to \R_{\geq 0}$, where there is one function for each intersection $\sigma\in\zo^2$, if \cref{asmp:2} holds,
    then there is a threshold $m_0\in \N$ such that when the number of candidates is more than this threshold, $m\geq m_0$,
    any non-intersectional lower-bound constraint $L_1,L_2\geq 0$
    the utility ratio   $\ratio_\cD(L,\beta)$ is at-most
    \begin{align*}
  \footnotesize    1 - \inparen{c^4 \rho\min\inbrace{\eta,1-\eta}  (b_{00}-b_{10}-b_{01}+b_{11})\circ \inparen{F_\cD^{-1}\inparen{1-\frac{n}{m}}}}^2.\footnotemark
      \yesnum\label{eq:thm_gen_ub}
    \end{align*}
    \end{theorem}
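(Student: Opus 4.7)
The plan is to generalize the proof of \cref{thm:ub} by using \cref{asmp:2} in two places: the density bound $c\leq \mu_\cD\leq 1/c$ will control the translation between counts and latent-utility thresholds via $F_\cD$, while the derivative bound $d\leq b_\sigma'\leq 1/c$ will allow a first-order Taylor expansion of each bias function around a single natural threshold. The target quantity $\Delta \coloneqq (b_{00}-b_{10}-b_{01}+b_{11})(F_\cD^{-1}(1-n/m))$ will emerge as a supermodularity gap in the KKT conditions for the biased decision-maker, much as $(1-\beta_1)(1-\beta_2)(1-n/m)$ does in the uniform/multiplicative case.

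First I would set up a continuous relaxation of the biased decision-maker's optimization. For each intersection $\sigma\in\zo^2$, let $\tilde n_\sigma \coloneqq |\tx\cap I_\sigma|$ and let $\tilde\tau_\sigma$ denote the smallest latent utility of a selected item in $I_\sigma$. Empirical-CDF concentration, justified by the density lower bound in \cref{asmp:2}, gives $\tilde\tau_\sigma \approx F_\cD^{-1}(1-\tilde n_\sigma/|I_\sigma|)$ up to vanishing error once $m\geq m_0$. Writing the KKT conditions of the relaxed problem with multipliers $\theta,\lambda_1,\lambda_2\geq 0$ for the total-count and the two non-intersectional constraints yields
\begin{align*}
    b_\sigma(\tilde\tau_\sigma) \;=\; \theta + \lambda_1\,\ind[\sigma_1=1] + \lambda_2\,\ind[\sigma_2=1]
\end{align*}
for every $\sigma$ at which $\tilde n_\sigma$ is interior. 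Multiplying by $(-1)^{\sigma_1+\sigma_2}$ and summing over $\sigma$ simultaneously cancels $\theta,\lambda_1,\lambda_2$, yielding the identity $b_{00}(\tilde\tau_{00}) - b_{10}(\tilde\tau_{10}) - b_{01}(\tilde\tau_{01}) + b_{11}(\tilde\tau_{11}) = 0$ regardless of which of the lower-bound constraints are active.

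Next I would compare with the unbiased optimum $\sx$, whose thresholds all concentrate at the natural value $\theta^\star \coloneqq F_\cD^{-1}(1-n/m)$. Writing $\tilde\tau_\sigma = \theta^\star + \delta_\sigma$ and first-order expanding each $b_\sigma$ around $\theta^\star$ gives
\begin{align*}
    \Delta \;=\; -\sum\nolimits_{\sigma} (-1)^{\sigma_1+\sigma_2}\, b_\sigma'(\xi_\sigma)\, \delta_\sigma
\end{align*}
for intermediate points $\xi_\sigma$. Since $b_\sigma'\leq 1/c$, at least one $|\delta_\sigma|$ must be $\Omega(c\,|\Delta|)$, and the total-count identity $\sum_\sigma |I_\sigma|(1-F_\cD(\tilde\tau_\sigma)) = n$ combined with the density bounds forces at least one $\delta_\sigma$ to be positive and at least one to be negative, so both an overselected and an underselected intersection of magnitude $\Omega(c|\Delta|)$ exist.

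Finally I would convert the threshold deviation into a latent-utility loss. An intersection $I_\sigma$ with $|\delta_\sigma|=\Omega(c|\Delta|)$ is misselected by at least $|I_\sigma|\cdot \mu_\cD(\cdot)\cdot |\delta_\sigma| = \Omega(c^2 \rho m\, |\Delta|)$ items (using $|I_\sigma|\geq \rho m$ and $\mu_\cD\geq c$), and each such misselection costs $\Omega(|\delta_\sigma|) = \Omega(c|\Delta|)$ in latent utility on average. Combined with the upper bound $\sinangle{\sx,w} = O(m/c)$ (since $\supp(\cD)\subseteq [0,1/c]$), this produces a loss-to-optimum ratio of order $c^4 \rho \, |\Delta|^2$, which after squaring matches \eqref{eq:thm_gen_ub}; the $\min\{\eta,1-\eta\}$ factor enters from the feasibility requirement that the misselection not exceed $|I_\sigma|$ or the complement of the selection, which is binding only when $n/m$ is bounded away from both $0$ and $1$. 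The main obstacle will be making the continuous-relaxation KKT argument rigorous at the integer scale and controlling the empirical-CDF concentration errors uniformly across all four intersection thresholds, both of which require $m_0$ to be chosen large in terms of $c,d,\rho,\eta$ and the moduli of continuity of the $b_\sigma$, so that all approximation errors remain strictly dominated by $|\Delta|$.
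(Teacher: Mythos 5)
Your proposal is correct and follows essentially the same route as the paper: your alternating-sum cancellation of the KKT multipliers is exactly the paper's condition $\inangle{\nabla f_b(\wt{s}), v}=0$ for $v=(1,-1,-1,1)$, your evaluation at the common threshold $F_\cD^{-1}(1-n/m)$ is its Equation~\eqref{eq:exp_grad:app:gen}, and your conversion of threshold deviations into utility loss mirrors its use of Lipschitz-continuity of $\nabla f_b$ plus strong concavity of $f_1$ (with the box-constraint case handled separately, as in the paper's Case~A).
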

    \footnotetext{Given functions $f_1,f_2,f_3,$ and $f_4$, and a number $x\in \R$, we use  $(f_1+f_2+f_3+f_4)\circ(x)$ to denote $f_1(x)+f_2(x)+f_3(x)+f_4(x)$.}
    \noindent
    Thus, for any bias functions and distributions for which $(b_{00}-b_{10}-b_{01}+b_{11})\circ \inparen{F_\cD^{-1}\inparen{1-\frac{n}{m}}}$ is non-zero, \cref{thm:gen_ub} shows that the utility ratio is strictly smaller than 1 for any non-intersectional lower bounds $L_1$ and $L_2$.
    Complementing this, we show that when this specific additive quantity is 0, there are non-intersectional constraints which recover near-optimal latent utility (\cref{prop:robustness_ub_complement}).
    We present the proof of \cref{thm:gen_ub} in {\cref{sec:proofof:thm:gen_ub} and discuss how it differs from the proof of \cref{thm:ub} in \cref{sec:proofoverviewof:thm:ub}.}

    {\em Extensions of \cref{thm:fullrecovery}.} The proof of \cref{thm:fullrecovery} only needs the following property of the implicit bias model:
          For two items in the same intersection, the order of their latent utilities is the same as the order of their observed utilities.
          Abstracting this leads us to the generalization of \cref{thm:fullrecovery} to the implicit bias in  Equation~\eqref{eq:gen_model}, which was discussed in \cref{sec:upper_bound}.
          This, in particular, shows that intersectional constraints can have an arbitrarily high utility ratio for the generalization of the model of implicit bias considered in \cref{thm:gen_ub}.
        \begin{corollary}\label{coro:fullrecovery}
          Suppose that the fraction of candidates selected is equal to some constant $\eta>0$ (i.e., $\frac{n}{m}=\eta$).
              Given $0<\eps<1$ and $\eta>0$, consider the threshold $m_0$ in \cref{thm:fullrecovery},
              for all $m\geq m_0$ and group structures $G_1,\dots,G_p\subseteq [m]$, the intersectional lower bound $L$ from \cref{thm:fullrecovery} are such that, for any set of strictly increasing functions $b_\sigma\colon \R_{\geq 0}\to \R_{\geq 0}$, where there is one function for each intersection $\sigma\in\zo^p$,
              they satisfy $\ratio_\cD(L,\beta)\geq 1-\eps.$
            \end{corollary}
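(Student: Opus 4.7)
The plan is to reduce the corollary to Theorem \ref{thm:fullrecovery} by isolating the single property of the multiplicative implicit bias model that its proof uses, and checking that this property also holds under the generalized model in Equation \eqref{eq:gen_model}. As flagged in the discussion ``Independence from $\beta$ and $\cD$'' following Theorem \ref{thm:fullrecovery}, the key property is that, within each intersection, the ranking of items by observed utility coincides with the ranking by latent utility; once this holds, the sets selected by $\tx$ within each intersection are top-by-latent-utility sets, independent of the exact bias parameters.

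First, I would verify this monotonicity property under Equation \eqref{eq:gen_model}. Since $b_\sigma\colon\R_{\geq 0}\to\R_{\geq 0}$ is strictly increasing, for any two items $i,j\in I_\sigma$ we have $w_i\geq w_j$ if and only if $\hw_i=b_\sigma(w_i)\geq b_\sigma(w_j)=\hw_j$. Hence, for any integer $k\geq 0$, the set of $k$ items in $I_\sigma$ maximizing observed utility is exactly the set of $k$ items in $I_\sigma$ maximizing latent utility.

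Next, I would invoke Theorem \ref{thm:fullrecovery} as a black box. The intersectional lower bounds $L_\sigma$ promised there depend only on $\abs{I_\sigma}$, $n$, $p$, and $\eps$, and not on the implicit bias model; so the same $L$ is available in the generalized setting. Given $L$, the constrained optimizer $\tx$ from Equation \eqref{eq:css} produces some allocation $(k_\sigma)_{\sigma\in\zo^p}$ with $k_\sigma\geq L_\sigma$ and $\sum_\sigma k_\sigma=n$. By the monotonicity property verified above, the $k_\sigma$ items that $\tx$ picks from each $I_\sigma$ are exactly the top-$k_\sigma$ items of $I_\sigma$ by latent utility, just as in the multiplicative model.

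The main point to check is that the lower bound $\ratio_\cD(L,\beta)\geq 1-\eps$ established in Theorem \ref{thm:fullrecovery} is uniform over all feasible allocations $(k_\sigma)_{\sigma}$, since in the generalized model the specific allocation realized by $\tx$ can depend on the functions $b_\sigma$ in a different manner than on the multiplicative parameters $\beta_\ell$. Inspecting the proof of Theorem \ref{thm:fullrecovery}, the bound is obtained intersection-by-intersection using only concentration of order statistics of i.i.d.\ draws from the continuous distribution $\cD$ and the feasibility constraints $k_\sigma\geq L_\sigma,\ \sum_\sigma k_\sigma=n$; no other property of $\beta$ is used. Hence the bound applies verbatim to every feasible $(k_\sigma)_\sigma$ and therefore to the generalized model, yielding the corollary.
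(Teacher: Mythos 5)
Your proposal is correct and matches the paper's argument: the paper proves \cref{coro:fullrecovery} by observing that the proof of \cref{thm:fullrecovery} (via \cref{obs:wtx_picks_top_latent_util}) uses only the property that, within each intersection, the order of observed utilities agrees with the order of latent utilities, which holds for any strictly increasing $b_\sigma$. Your additional check that the bound depends only on $\wt{N}_\sigma \geq L_\sigma$ and the hypergeometric concentration of $N_\sigma$ — and hence is uniform over the realized allocation — is exactly the right point and is implicit in the paper's proof.
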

    \noindent Finally, we state a desirable fairness property of the constraints in \cref{thm:fullrecovery}.
    For each intersection $\sigma\in \zo^{p}$, the constraints in \cref{thm:fullrecovery} satisfy $L_\sigma\geq (1-\eps)\cdot n\cdot  \frac{\abs{I_\sigma}}{m}$.
    Hence, for small $\eps>0$, the constrained decision maker must select at least a near-proportional number of items from each intersection.
    Thus, these constraints can be seen as a form of affirmative action.

\section{Overview of Proofs}\label{sec:overview_of_proofs}
    \subsection{Proof Overview of \cref{thm:ub}}\label{sec:proofoverviewof:thm:ub}
        In this section, we explain the key ideas in the proof of \cref{thm:ub} under some simplifying assumptions.
        Recall that utility ratio is $\ratio(L,\beta)\coloneqq \Ex_{w}\insquare{ \frac{ \sinangle{\tx,w} }{ \sinangle{\sx,w} } },$
        where $\tx$ is the selection picked by the constrained decision maker (as defined in Equation~\eqref{eq:css}) and $\sx$ is the selection with the highest latent utility (as defined in Equation~\eqref{eq:def_xstar}).
        We are given the implicit bias parameters $0<\beta_1,\beta_2\leq 1$ and the guarantee that for some $\eta>0$ and $\rho>0$,
        \begin{align*}
            \forall \sigma \in \zo^2,\quad
            \rho <\frac{\abs{I_\sigma}}{m} < 1-\rho
            \quad\text{and}\quad 
            \eta<\frac{n}{m}<1-\eta.
            \yesnum\label{asmp:1}
        \end{align*}
        For $\eta, \rho>0$, let  %
            $\phi(\beta_1,\beta_2) \coloneqq  \inparen{\frac{\rho}{3} \min\inbrace{\eta,1-\eta} (1-\beta_1) (1-\beta_2)}^2.$
        Our goal is to show that
        \begin{align*}
      \max_{L_1,L_2\in \Z_{\geq 0}}\ratio(L,\beta)\leq
      1-\phi(\beta).\yesnum
    \end{align*}
        \noindent In this overview, our first simplifying assumption is that for all $L$ and $\beta$:
        $\ratio(L,\beta)\coloneqq \Ex\nolimits_{w}\insquare{ \frac{ \sinangle{\tx,w} }{ \sinangle{\sx,w} }} = \frac{ \Ex\nolimits_{w}\insquare{\sinangle{\tx,w}} }{ \Ex\nolimits_{w}\insquare{\sinangle{\sx,w}} }.$
        This assumption is {\em not true}. In the proof, we remove it by showing that the distribution of $\sinangle{\sx,w}$ is concentrated around its mean.
        Since $\sx$ is not a function of $L$, under the above assumption, it suffices to show that
        \begin{align*}
             \max_{L_1,L_2\in \Z_{\geq 0}} \Ex\nolimits_{w}\insquare{\sinangle{\tx,w}}
             \leq
             \inparen{1-\phi(\beta)} \cdot \Ex\nolimits_{w}\insquare{\sinangle{\sx,w} }.
             \yesnum\label{eq:to_show}
        \end{align*}
\noindent
        {\bf Challenges in computing $\Ex\nolimits_{w}\insquare{\sinangle{\tx,w}}$.}
        One approach to prove \cref{eq:to_show} could be to compute $\Ex\nolimits_{w}\insquare{\sinangle{\tx,w}}$ explicitly as a function of $L$ and then verify \cref{eq:to_show}.
        This is the approach that \cite{celis2020interventions} take for a single group.
        In their case, there is a simple iterative algorithm that, given $L_1$ and observed utilities, computes $\tx$.
        The algorithm is as follows:
        Pick $L_1$ items with the highest observed utility from $G_1$ and from the remaining items pick $n-L_1$ items that have the highest observed utility.
        \cite{celis2020interventions} analyze this algorithm to compute $\Ex\nolimits_{w}\insquare{\sinangle{\tx,w}}$ as a function of $L_1$.

        This algorithm and analysis straightforwardly extends to multiple {\em non-overlapping} groups:
        For each $\ell \in [p]$, pick $L_\ell$ items with the highest observed utility from $G_\ell$, then from the remaining items pick $n-L_1-L_2-\dots-L_p$ items that have the highest observed utility.
        However, this algorithm breaks down when groups overlap.
        This is because items at intersections of multiple groups can satisfy multiple lower bounds.
        Moreover, at least in the case where the number of groups, $p$, is non-constant we do not expect there to be a simple algorithm which computes $\tx$:
        This is because the NP-complete hitting set problem reduces to checking if $\tx$ exists, i.e., if the specified lower bounds are satisfiable.

        Instead of computing $\Ex\nolimits_{w}\insquare{\sinangle{\tx,w}}$ as a function of $L$,
        we express $\Ex\nolimits_{w}\sinsquare{\sinangle{\tx,w}}$ and $\Ex\nolimits_{w}\sinsquare{\sinangle{\sx,w}}$ as solution to two optimization programs. %
        Then, we directly upper bound the ratio $\frac{\Ex\nolimits_{w}\insquare{\sinangle{\tx,w}}}{\Ex\nolimits_{w}\insquare{\sinangle{\sx,w}}}$ by analyzing the optimization programs.

        \paragraph{Step 1: Reduce computing $\sx$ and $\tx$ to a small number of variables.}
        A property of both $\sx$ and $\tx$ is that if they pick $k$ items from $I_\sigma$, then these are the $k$ items with the highest latent utility, or {\em equivalently} the highest observed utility, in $I_\sigma$ (see \cref{obs:wtx_picks_top_latent_util}).
        Hence, determining $\sx$ and $\tx$ reduces to computing, the following quantities for each $\sigma\in \zo^2$ 
        \begin{align*}
            K^\star_\sigma\coloneqq \sum_{i\in I_\sigma} \sx_i \quad \text{and}\quad \wt{K}_\sigma\coloneqq \sum_{i\in I_\sigma} \tx_i.
        \end{align*}

        \paragraph{Step 2: Express $K^\star$ and $\wt{K}$ as solutions to different optimization problems.}
        Since $\sx$ and $\tx$ are functions of randomly generated utilities, they and, hence, $K^\star$ and $\wt{K}$ are random variables.
        Under Assumption~\eqref{asmp:1}, we show that $K^\star$ and $\wt{K}$ are concentrated around the optimizers of optimization problems \eqref{prog:1} and \eqref{prog:2} respectively; where
        for \mbox{any $\gamma\in [0,1]^4$ and $k\in \R_{\geq 0}^4$}
        \begin{align*}
                f_\gamma(k)
                &\coloneqq \sum_\sigma \gamma_\sigma \abs{I_\sigma} \cdot \int_{1-\frac{k_\sigma}{\abs{I_\sigma}}}^1  x dx.
                \yesnum\label{eq:def_g}
        \end{align*}
              \begin{align*}
                    &\argmax\nolimits_k \quad f_1(k),\yesnum\label{prog:1}\\
                    &\qquad \st,  \quad\sum\nolimits_\sigma k_\sigma=n,\\
                    &\qquad\qquad\ \ \forall\sigma\in \zo^2,\quad  0\leq k_\sigma \leq \abs{I_\sigma}.
                \end{align*}
              \begin{align*}
                    &\argmax\nolimits_k \quad f_{\beta}(k),\yesnum\label{prog:2}\\
                    &\qquad \st,\quad k_{10}+k_{11}\geq L_1 \text{ and }  k_{01}+k_{11}\geq L_2,\\
                    &\qquad\qquad\ \ \sum\nolimits_\sigma k_\sigma=n,\\
                    &\qquad\qquad\ \ \forall\sigma\in \zo^2,\  0\leq\negsp{} k_\sigma \negsp{}\leq\negsp{} \abs{I_\sigma}.\negsp\negsp
                    \yesnum\label{eq:prog:2:grp_sz_constraint}
                \end{align*}
        \noindent While the integral in \Eqref{eq:def_g} can be computed exactly, we use the integral-form because it generalizes to other distributions and bias functions, where the resulting integral may not be possible to compute.
        Let $x(k)$ be the selection that picks $k_\sigma$ items with the highest latent utility from $I_\sigma$ for all $\sigma\in \zo^2$.
        Suppose $k$ is feasible for \prog{prog:1} and \prog{prog:2}.
        The constraints in \prog{prog:1} ensure $x(k)$ selects a total of $n$ candidates, and the additional constraints in \prog{prog:2} ensure $x(k)$ picks at least $L_1$ and $L_2$ candidates from $G_1$ and $G_2$.
        $f_\beta(k)$ roughly measures the expected {\em observed} utility of $x(k)$ and
        {$f_1(k)$ roughly measures the expected {\em latent} utility $x(k)$:}
        \begin{align*}
            f_\beta(k)=\Ex\insquare{\inangle{x(k),\hw}}\pm O(m^{-1})\quad \text{and}\quad  f_1(k)=\Ex\insquare{\inangle{x(k),w}}\pm O(m^{-1}).
            \yesnum\label{eq:def_f}
        \end{align*}
        \noindent $f_1$ and $f_\beta$ can be shown to be strongly concave for all $\gamma$, and hence, Programs~\eqref{prog:1} and \eqref{prog:2} have unique solutions.
        Formally, we prove the following concentration bound on $K^\star$ and $\wt{K}$.
        \begin{lemma}\label{lem:conc_of_k}
            Let $s^\star$ and $\wt{s}$ be the optimizers of Programs~\eqref{prog:1} and \eqref{prog:2} respectively.
            With probability at least $1-O\sinparen{m^{-\frac{1}{4}}}$, %
            \begin{align*}
                \forall\sigma\in \zo^2,\quad
                \sabs{K^\star_\sigma - s^\star_\sigma} \leq O\sinparen{nm^{-\frac{1}{4}}}
                \quad \text{and}\quad
                \sabs{\wt{K}_\sigma - \wt{s}_\sigma} \leq O\sinparen{nm^{-\frac{1}{4}}}.
            \end{align*}
        \end{lemma}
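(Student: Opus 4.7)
The plan is to convert the closeness of the random objective maximized by $K^\star$ (resp.\ $\tilde K$) to its deterministic counterpart $f_1$ (resp.\ $f_\beta$) into closeness of the corresponding optimizers, by exploiting strong concavity. For each $\sigma \in \{0,1\}^2$ and each integer $k \in \{0,1,\dots,|I_\sigma|\}$, let $S_\sigma(k)$ (resp.\ $\hat S_\sigma(k)$) denote the sum of the top $k$ latent (resp.\ observed) utilities in $I_\sigma$. By the top-latent-utility observation already in hand, $K^\star$ is an integer vector maximizing $\sum_\sigma S_\sigma(k_\sigma)$ subject to the feasibility constraints of~\eqref{prog:1}, and $\tilde K$ is an integer vector maximizing $\sum_\sigma \hat S_\sigma(k_\sigma)$ subject to the constraints of~\eqref{prog:2}. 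So it suffices to compare the empirical and deterministic optima.

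The first step is a uniform concentration bound. The map $(U_1,\dots,U_{|I_\sigma|}) \mapsto S_\sigma(k)$ is $1$-bounded-differences because altering one utility shifts the top-$k$ sum by at most $1$, so McDiarmid gives $\Pr(|S_\sigma(k) - \mathbb{E} S_\sigma(k)| > t) \le 2\exp(-2t^2/|I_\sigma|)$; taking $t = C\sqrt{m\log m}$ and union-bounding over the at most $4m$ pairs $(\sigma,k)$ yields, with probability at least $1 - O(m^{-1/4})$, the simultaneous bound $|S_\sigma(k) - \mathbb{E} S_\sigma(k)| \le C\sqrt{m\log m}$ for all $\sigma$ and integer $k$, and the analogous bound for $\hat S_\sigma$ (since observed utilities also lie in $[0,1]$). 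Using the closed form $\mathbb{E}[U_{(|I_\sigma|-j+1)}] = j/(|I_\sigma|+1)$ for uniform order statistics, $\mathbb{E} S_\sigma(k)$ differs from $|I_\sigma|\int_{1-k/|I_\sigma|}^{1} x\,dx$ by $O(1)$, and similarly for $\hat S_\sigma$ and the corresponding summand of $f_\beta$. Combining these two approximations, we obtain that on the high-probability event,
\begin{equation*}
\Bigl|\sum_\sigma S_\sigma(k_\sigma) - f_1(k)\Bigr| \;\le\; O(\sqrt{m\log m})
\quad\text{and}\quad
\Bigl|\sum_\sigma \hat S_\sigma(k_\sigma) - f_\beta(k)\Bigr| \;\le\; O(\sqrt{m\log m})
\end{equation*}
uniformly over all feasible integer $k$.

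The second step uses strong concavity. A direct computation gives $\nabla^2 f_\gamma(k) = -\mathrm{diag}(\gamma_\sigma/|I_\sigma|)$, so under Assumption~\eqref{asmp:1} (which gives $|I_\sigma| \le m$) and the hypothesis $\beta_1,\beta_2 > 0$, both $f_1$ and $f_\beta$ are $\lambda$-strongly concave with $\lambda = \Omega(1/m)$. Let $s^{\star,\mathrm{int}}$ be an integer vector obtained from $s^\star$ by rounding within the feasible polytope; since $f_1$ is $O(1)$-Lipschitz coordinatewise on this polytope, $|f_1(s^\star) - f_1(s^{\star,\mathrm{int}})| = O(1)$. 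Because $K^\star$ is the integer maximizer of $\sum_\sigma S_\sigma$, it beats $s^{\star,\mathrm{int}}$ on that empirical objective; combined with the uniform bound above, this yields $f_1(s^\star) - f_1(K^\star) \le O(\sqrt{m\log m})$. Strong concavity then gives $\|K^\star - s^\star\|_2^2 \le (2/\lambda)\cdot O(\sqrt{m\log m}) = O(m^{3/2}\sqrt{\log m})$, hence $\|K^\star - s^\star\|_\infty \le O(m^{3/4}\sqrt{\log m}) = O(n\,m^{-1/4})$ because $n = \Theta(m)$ under the hypothesis $\eta < n/m < 1-\eta$. The identical argument, with the additional lower-bound constraints of~\eqref{prog:2} (which are deterministic and shared by both the empirical and deterministic programs), applies to $\tilde K$ and $\tilde s$.

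The main technical obstacle is the uniform concentration step: one has to avoid paying a large union-bound cost while keeping the per-$k$ deviation small enough that, after multiplying by $1/\lambda = O(m)$, the resulting bound on $\|K^\star - s^\star\|_\infty$ is still $O(n\,m^{-1/4})$. The McDiarmid bound combined with the union over $O(m)$ integer $k$ per intersection just barely suffices, leaving a polylogarithmic slack absorbed into the $O(\cdot)$. A secondary (but routine) point is handling the gap between integer maximizers (what $K^\star,\tilde K$ are) and continuous ones (what $s^\star,\tilde s$ are), which is disposed of by the $O(1)$-Lipschitz property of $f_\gamma$ on the feasible region.
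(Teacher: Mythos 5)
Your proof is correct and follows the same overall architecture as the paper's — approximate the empirical objective by the deterministic $f_\gamma$ uniformly in $k$, compare the integer optimizer to a rounded continuous optimizer via optimality, and convert the resulting objective gap into a distance bound via strong concavity — but the concentration step is carried out by a genuinely different technique. The paper (\cref{lem:opt_solution:app}) controls individual order statistics only at a sparse grid of indices (spacing $m^{3/4}$) using Chebyshev's inequality, pays $O(nm^{-1/4})$ in the objective approximation, and consequently its appendix version of this lemma only delivers $O(nm^{-1/8})$; you instead apply McDiarmid's inequality to the top-$k$ partial sums $S_\sigma(k)$ and union-bound over all $O(m)$ integer values of $k$, obtaining a uniform $O(\sqrt{m\log m})$ deviation, which is sharper and essentially recovers the $O(nm^{-1/4})$ rate in the statement. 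Two small caveats. First, the $\sqrt{\log m}$ in your deviation bound survives as a $(\log m)^{1/4}$ factor after taking the square root in the strong-concavity step, so strictly you obtain $O\sinparen{nm^{-1/4}(\log m)^{1/4}}$ rather than $O\sinparen{nm^{-1/4}}$; this is harmless for how the lemma is used (and still improves on the paper's $m^{-1/8}$), but it should be flagged or repaired, e.g., by taking $t=m^{1/2+\epsilon}$. Second, your rounded point $s^{\star,\mathrm{int}}$ (and its analogue for \prog{prog:2}) must be feasible for the \emph{integer} program — the equality $\sum_\sigma k_\sigma=n$, the box constraints, and the lower bounds $L_1,L_2$ must all be preserved by the rounding — which needs a sentence of justification in four dimensions with integral constraint data; the paper elides the same point when it rounds $\wt{s}$ to $\wt{s}_R$.
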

        \noindent Suppose that $K^\star$ and $\wt{K}$ are equal to minimizers of Programs~\eqref{prog:1} and \eqref{prog:2} with probability 1.
        Then, from Equation~\eqref{eq:def_f},
        \begin{align*}
            \Ex\insquare{\sinangle{\sx, w}}=f_1(K^\star)\pm O(m^{-1})
             \quad \text{and}\quad
            \Ex\insquare{\inangle{\tx,w}}=f_1(\wt{K})\pm O(m^{-1}).
        \end{align*}
        \noindent Suppose these hold with equality.
        Then because the feasible region of \prog{prog:1} is a superset of the feasible region of \prog{prog:2} and $K^\star$ maximizes $f_1$ over the feasible region of \prog{prog:1}, it follows that
        $f_1(\wt{K})=\Ex\insquare{\inangle{\tx,w}}\leq \Ex\insquare{\sinangle{\sx, w}}=f_1(K^\star).$
        \noindent The question is:
        Is $f_1(\hat{K})$ significantly smaller than $f_1(K^\star)$? Does \cref{eq:to_show_2} (below) hold?
        \begin{align*}
            f_1(\wt{K})
            \leq\inparen{1-\phi(\beta)}\cdot f_1(K^\star).
            \yesnum\label{eq:to_show_2}
        \end{align*}

        \paragraph{Step 3: Prove  Equation~\eqref{eq:to_show_2} (Main argument).}
            Our first observation is that both $f_1$ is $\frac{1}{(1-\rho)m}$-strongly concave and $f_\beta$ is $\frac{1}{\rho m}$-Lipschitz continuous.
            Using the gradient test, we can analytically solve \prog{prog:1} to get $$K^\star\coloneqq \inbrace{\abs{I_\sigma}\cdot \frac{n}{m}}_{\sigma\in \zo^2}.$$
            The claim follows because if $\wt{K}$ satisfies any inequality in  Equation~\eqref{eq:prog:2:grp_sz_constraint} with equality, then $\snorm{K^\star-\wt{K}}_2$ is large.
            Namely,
            \begin{align*}
                \snorm{K^\star-\wt{K}}_2^2\geq 2nm(1-\rho)\cdot \phi(\beta).
                \yesnum\label{eq:lb_on_difference}
            \end{align*}
            Hence, by the $\frac{1}{(1-\rho)m}$-strong convexity of $f_1$ and the fact that $f(K^\star)\leq n$, Equation~\eqref{eq:to_show_2} holds.
            Otherwise if $\wt{K}$ satisfies all inequalities in  Equation~\eqref{eq:prog:2:grp_sz_constraint} with strict inequality, then because \prog{prog:2} has only three other constraints, while $\wt{K}$ has four coordinates, it follows that there is some constant $t_0>0$ and a vector, namely $v\coloneqq (1,-1,-1,1)$ such that for all $-t_0\leq t\leq t_0$, $\wt{K}+tv$ is feasible for \prog{prog:2}.
            Since $\wt{K}$ is the optimal solution of \prog{prog:2}, this implies that
            \begin{align*}
                \sinangle{\nabla f_\beta(\wt{K}), v} = 0.
                \yesnum\label{eq:zero_grad}
            \end{align*}
            Using the value of $K^\star$, we have
            \begin{align*}
                \inangle{\nabla f_\beta(K^\star), v}
                =\inparen{1-\frac{n}{m}}\cdot \inparen{1-\beta_1}\cdot \inparen{1-\beta_2}.
                \yesnum\label{eq:exp_grad}
            \end{align*}
            This is sufficient to show that, in this case, $\snorm{\nabla f_\beta(\wt{K}) - \nabla f_\beta(K^\star)}_2$ is large:
            \begin{align*}
                \snorm{\nabla f_\beta(\wt{K}) - \nabla f_\beta(K^\star)}_2
                &\geq \frac{1}{\norm{v}_2}\cdot \abs{\inangle{\nabla f_\beta(\wt{K}) - \nabla f_\beta(K^\star), v}}
                \\ 
                &  \Stackrel{\eqref{eq:zero_grad},\eqref{eq:exp_grad}}{=}\quad\  \frac{1}{2}\inparen{1-\frac{n}{m}}\cdot \inparen{1-\beta_1}\cdot \inparen{1-\beta_2}.
            \end{align*}
            Combined with the fact that $f_\beta$ is $\frac{1}{\rho m}$-Lipschitz continuous, this implies that $$\snorm{K^\star-\wt{K}}_2^2\geq \frac{\rho^2 m^2}{4}\cdot \inparen{1-\frac{n}{m}}^2\cdot \inparen{1-\beta_1}^2\cdot \inparen{1-\beta_2}^2 \geq 2mn(1-\rho)\cdot \phi(\beta).$$
            Thus, in this case also Equation~\eqref{eq:lb_on_difference} follows from  $\frac{1}{(1-\rho)m}$-strong convexity of $f_1$.

            \paragraph{Generalization to other bias functions and distributions.}
                The proof of \cref{thm:gen_ub} is analogous to that of \cref{thm:ub}.
                Let $F_\cD$ be the cumulative distribution function of $\cD$ and $\mu_\cD$ be the probability density function of $\cD$.
                The main difference is that $f_1$ and $f_\beta$ change to
                \begin{align*}
                     f_1(k)
                    & \coloneqq \sum_\sigma \abs{I_\sigma} \int_{z_\sigma(k)}^{z_\sigma(0)}  x d\mu_\cD(x),
                \quad\text{and}  \quad                   f_b(k)
                    \coloneqq \sum_\sigma \abs{I_\sigma} \int_{z_\sigma(k)}^{z_\sigma(0)}  b_\sigma(x) d\mu_\cD (x)
                    \yesnum\label{eq:def_g_general}
                \end{align*}
                where $z_\sigma(k)\coloneqq F^{-1}_\cD\inparen{1-\frac{k_\sigma}{\abs{I_\sigma}}}.$
                We choose this definition of $f$ because of a similar reason:
                Under the general bias model and distribution, $f_b(k)$, roughly, measures the expected {\em observed} utility of $x(k)$ and
                $f_1(k)$, roughly, measures the expected {\em latent} utility $x(k)$.
                Next, we prove \cref{lem:conc_of_k} for the new definition of $f$.
                The rest of the proof follows analogously once we prove that $f_1$ is   $\frac{\Omega(1)}{(1-\rho)m}$-strongly concave and $f_b$ is $\frac{O(1)}{\rho m}$-Lipschitz continuous.
                
        \subsection{Proof Overview of \cref{thm:fullrecovery}}\label{sec:proofoverviewof:thm:fullrecovery}

        In the proof we show that with high probability, for each intersection $\sigma\in \zo^p$: $\frac{\sum_{i\in I_\sigma} \tx_i w_i}{\sum_{i\in I_\sigma} \sx_i w_i} > 1-\frac{\eps}{2}.$
    This implies:
    \begin{align*}
      \frac{\sinangle{\tx,w}}{\sinangle{\sx,w}} = \frac{\sum_{\sigma}\sum_{i\in I_\sigma} \tx_i w_i}{\sum_{\sigma}\sum_{i\in I_\sigma} \sx_i w_i}
      \negsp{}\Stackrel{}{>}\negsp{} \frac{\sum_{\sigma}
      \sum_{i\in I_\sigma} \sx_i w_i}{\sum_{\sigma}\sum_{i\in I_\sigma} \sx_i w_i} \inparen{1-\frac{\eps}{2}}
      > 1-\frac{\eps}{2}. \yesnum
    \end{align*}
    \noindent The claimed result follows by taking the expectation of the above quantity.
    At a high level, our strategy is to find constraints such that $\tx$ selects a similar number of items from each intersection as $\sx{}$.
    This suffices to prove the result due to a property of the implicit bias model:
    If $\tx$ picks $K$ items from any intersection $I_\sigma$, then these are the $K$ items with the highest latent utility in $I_\sigma$ (\cref{obs:wtx_picks_top_latent_util}).

    To see why this suffices, let $v_1\leq v_2\leq \dots\leq v_{|I_\sigma|}$ be the latent utilities of items in $I_\sigma$ in non-increasing order.
    Let $N$ and $\wt{N}$ be the random variables counting the number of items $\sx$ and $\tx$ select from $I_\sigma$.
    Using the above observation, we know that $\sx{}$ and $\tx$ select the items $v_1,v_2,\dots,v_{N}$ and $v_1,v_2,\dots,v_{\wt{N}}$ respectively.
    Hence,
    \begin{align*}
      \frac{\sum_{i\in I_\sigma} \tx_i w_i}{\sum_{i\in I_\sigma} \sx_i w_i}
      &\negsp{}\geq\negsp{} \frac{\sum_{j=1}^{\wt{N}}v_j  }{ \sum_{i\in I_\sigma} \sx_i w_i } %
      \negsp{}=\negsp{} \frac{\sum_{j=1}^{\wt{N}}v_j  }{ \sum_{j=1}^{N}v_j}
      \negsp{}=\negsp{} \frac{\sum_{j=1}^{\wt{N}}v_j  }{  \sum_{j=1}^{\wt{N}}v_j + \sum_{j=\wt{N} + 1}^{N}v_j }.
    \end{align*}
    \noindent Using that for all $c\geq 0$, $\frac{x}{c+x}$ is an increasing function of $x$, we have
    \begin{align*}
      \textstyle \frac{\sum_{i\in I_\sigma} \tx_i w_i}{\sum_{i\in I_\sigma} \sx_i w_i} &\geq \frac{ \wt{N}\cdot v_{{\wt{N}}}  }
      {  \wt{N}\cdot v_{{\wt{N}}}  + \sum_{j=\wt{N} + 1}^{N}v_j }%
      \geq \frac{ \wt{N}\cdot v_{{\wt{N}}}  }{  \wt{N}\cdot v_{{\wt{N}}}  + (N-\wt{N})\cdot v_{{\wt{N}}} }
      = \frac{ \wt{N}}{N}.\yesnum
    \end{align*}
    Thus, if $\frac{ \wt{N}}{N}>1-\frac{\eps}{2}$ with high probability for all $\sigma\in \zo^p$, then for all $\sigma\in \zo^2,$ it holds that $$\frac{\sum_{i\in I_\sigma} \tx_i w_i}{\sum_{i\in I_\sigma} \sx_i w_i} > 1-\frac{\eps}{2}.$$
    Towards this, we prove that for all $\sigma\in \zo^p$, $N_\sigma$ is concentrated around $|I_\sigma|\cdot \frac{n}{m}$:
    \begin{lemma}\label{lem:overview:1}
      For any fixed $\sigma\in \zo^p$ and $\Delta\geq 2$, it holds that 
     \begin{align*}
          \text{$\Ex[N_\sigma ] = \abs{I_\sigma}\cdot \frac{n}{m}$ and $\Pr\nolimits_{w }\insquare{ N_{\sigma} > \Ex[N_\sigma] + \Delta }\leq e^{-\frac{\Delta^2}{n}}.$}
      \end{align*}
    \end{lemma}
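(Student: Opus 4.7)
The plan is to identify the distribution of $N_\sigma$ exactly and then invoke a standard tail bound. The key observation is that since the latent utilities $w_1,\dots,w_m$ are i.i.d.\ from a continuous distribution $\cD$, with probability one no two $w_i$ coincide, so the rank ordering of $(w_1,\dots,w_m)$ is a uniformly random permutation of $[m]$. Since $\sx{}$ selects the $n$ items with the largest latent utilities, the set $S^\star \coloneqq \{i\in [m]\colon \sx_i = 1\}$ is a uniformly random $n$-subset of $[m]$. This is the crucial structural fact that makes the distribution of $N_\sigma$ tractable despite the fact that the indicators $\{\sx_i\}_{i\in I_\sigma}$ are not independent.

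Given this, $N_\sigma = |S^\star \cap I_\sigma|$ is the number of elements of a uniformly random size-$n$ subset of $[m]$ that fall in the fixed set $I_\sigma$ of size $|I_\sigma|$. Hence
\begin{align*}
    N_\sigma \;\sim\; \mathrm{Hypergeometric}(m,\,|I_\sigma|,\,n),
\end{align*}
whose mean is $|I_\sigma|\cdot n/m$ by the standard formula (or directly by linearity of expectation, since $\Pr[i\in S^\star]=n/m$ for each $i\in I_\sigma$). This establishes the expectation claim.

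For the tail bound, I would express $N_\sigma = \sum_{j=1}^{n} Y_j$, where $Y_j\in \{0,1\}$ indicates whether the $j$-th item drawn uniformly without replacement from $[m]$ lies in $I_\sigma$. The $Y_j$ are bounded in $[0,1]$ and, being sampled without replacement, are negatively associated. Hoeffding's inequality for sampling without replacement (equivalently, the Serfling bound for the hypergeometric distribution) then yields
\begin{align*}
    \Pr\insquare{N_\sigma \geq \Ex[N_\sigma] + \Delta} \;\leq\; \exp\!\inparen{-\tfrac{2\Delta^2}{n}},
\end{align*}
which is even stronger than the claimed $e^{-\Delta^2/n}$. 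Relaxing the constant from $2$ to $1$ gives the lemma as stated. The assumption $\Delta\geq 2$ is harmless for this derivation and presumably included to ensure that, downstream, the bound $e^{-\Delta^2/n}$ is meaningful (non-trivial).

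The only real subtlety is the passage from ``$\sx{}$ depends on the realized utilities'' to ``$S^\star$ is uniformly distributed over $n$-subsets of $[m]$''; this is resolved cleanly by the continuity of $\cD$ (so ties occur with probability zero) together with the exchangeability of $(w_1,\dots,w_m)$. After that step, no further combinatorial work is required: the result is a direct consequence of a textbook concentration inequality.
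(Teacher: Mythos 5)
Your proof is correct and follows essentially the same route as the paper's: both identify $N_\sigma$ as a hypergeometric random variable (the paper via an explicit urn/counting argument, you via the uniformly-random-$n$-subset observation) and then apply a standard tail bound to conclude. The only difference is in the tail bound invoked — the paper uses a hypergeometric concentration inequality of the form $e^{-2(\Delta^2-1)\gamma}$ with $\gamma\geq \frac{1}{n+1}$, which is precisely where the hypothesis $\Delta\geq 2$ is needed to massage the exponent into $\frac{\Delta^2}{n}$, whereas your Hoeffding/Serfling bound for sampling without replacement yields $e^{-2\Delta^2/n}$ directly and renders that hypothesis unnecessary.
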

    \noindent Given \cref{lem:overview:1}, an obvious strategy is to set $L_\sigma=|I_\sigma|\cdot \frac{n}{m}$ for all $\sigma\in \zo^p$.
    However, this does not work because if $|I_\sigma|$ is small, then the concentration bound in \cref{lem:overview:1} is weak.
    We overcome this by setting slightly larger bounds for small intersections and slightly smaller bounds for big intersections:
    \begin{align*}
      \text{for all } \sigma \in \zo^p,\qquad L_\sigma \coloneqq \frac{\abs{I_\sigma}}{m}\cdot n\cdot (1-\eps)+2^{-p}{n\eps},\yesnum
    \end{align*}
    \noindent where if RHS is larger than $|I_\sigma|$, then we set $L_\sigma = |I_\sigma|$.
    Using \cref{lem:overview:1}, one can show that these constraints suffice.

\section{Proofs}
In this section, we present the proofs of our results.
For completeness, we restate some lemmas from \cref{sec:overview_of_proofs}.

\subsection{Preliminaries}\label{sec:prelim}

\noindent For reals $x,y,z\in \R$, let $x\in y\pm z$ denote $x\in [y-z,y+z].$
For a constant $c>0$ and parameter $x\in \R$, we use $O_c(x)$ and $\Omega_c(x)$ to denote $O(c^{-1}x)$ and $\Omega(cx)$ respectively.
For all $k\in [m]$, let $U_{(k:m)}$ be the $k$-th order statistic from $m$ independent draws from $\unif$, i.e., $U_{(k:m)}$ is the $k$-th smallest value from $m$ independent draws.

\begin{fact}[{\protect\cite[Eqs. 8.2, 8.8]{ahsanullah2013introduction}}]\label{fact:os}
  For all $k\in [m]$, it holds that
  \begin{align*}
      \text{$\Ex\insquare{U_{(k:m)}} = \frac{k}{m+1}$ and $\mathrm{Var}\insquare{U_{(k:m)}} = \frac{k(m-k+1)}{(m+1)^2(m+2)}\leq \frac{1}{m+2}.$}
  \end{align*}
\end{fact}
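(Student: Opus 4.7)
The plan is to identify $U_{(k:m)}$ as a Beta-distributed random variable and then read off its first two moments from the standard Beta integral, since this is a classical textbook computation (which is why the authors merely cite it). First, I would derive the density of $U_{(k:m)}$ by a multinomial argument: for $u \in [0,1]$ and an infinitesimal $du$, the event $\{U_{(k:m)} \in [u, u+du]\}$ requires exactly one of the $m$ independent uniform draws to land in $[u, u+du]$, exactly $k-1$ to lie below $u$, and the remaining $m-k$ to lie above $u+du$. Counting these arrangements gives
\begin{align*}
    f_{U_{(k:m)}}(u) = \frac{m!}{(k-1)!\,(m-k)!}\, u^{k-1}(1-u)^{m-k},
\end{align*}
which is the density of a $\mathrm{Beta}(k, m-k+1)$ distribution.

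Next, I would integrate $u$ and $u^2$ against this density using the Beta identity $\int_0^1 u^{\alpha-1}(1-u)^{\beta-1}\,du = (\alpha-1)!(\beta-1)!/(\alpha+\beta-1)!$, which yields $\Ex\insquare{U_{(k:m)}} = k/(m+1)$ and $\Ex\insquare{U_{(k:m)}^2} = k(k+1)/((m+1)(m+2))$. Expanding $\mathrm{Var}\insquare{U_{(k:m)}} = \Ex\insquare{U_{(k:m)}^2} - \Ex\insquare{U_{(k:m)}}^2$ over the common denominator $(m+1)^2(m+2)$ and simplifying the numerator $k(k+1)(m+1) - k^2(m+2) = k(m-k+1)$ gives the claimed equality $\mathrm{Var}\insquare{U_{(k:m)}} = k(m-k+1)/((m+1)^2(m+2))$.

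For the upper bound, I would note that the numerator $k(m-k+1)$ is a concave quadratic in $k$ whose two factors sum to $m+1$, so by AM-GM it is at most $((m+1)/2)^2 = (m+1)^2/4$. Substituting this back gives $\mathrm{Var}\insquare{U_{(k:m)}} \leq 1/(4(m+2)) \leq 1/(m+2)$, establishing the claim. There is no real obstacle here; the only point of mild care is the boundary cases $k=1$ and $k=m$, where one of $(k-1)!$ or $(m-k)!$ equals $0! = 1$, but the density formula and Beta integral remain valid in those cases unchanged.
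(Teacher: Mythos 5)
Your proof is correct. The paper does not prove this fact at all---it is cited directly from Ahsanullah et al.\ (Eqs.~8.2, 8.8)---and your derivation is exactly the standard textbook argument behind that citation: identify $U_{(k:m)}$ as $\mathrm{Beta}(k,m-k+1)$, compute the first two moments from the Beta integral, and simplify. All the algebra checks out (the numerator simplification $k(k+1)(m+1)-k^2(m+2)=k(m-k+1)$ and the AM--GM bound $k(m-k+1)\leq (m+1)^2/4$ are both right), and in fact your argument yields the slightly sharper bound $\mathrm{Var}[U_{(k:m)}]\leq \frac{1}{4(m+2)}$, of which the stated $\frac{1}{m+2}$ is an immediate consequence.
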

\noindent By using linearity of expectation and \cref{fact:os} to compute $\Ex\insquare{\sum_{i=1}^{k} U_{(m-i+1:m)}}$, we get the following fact.
\begin{fact}\label{fact:sum_n}
  For all $k\in [m]$, it holds that
  $$\Ex\insquare{\sum_{i=1}^{k} U_{(m-i+1:m)}} = k\cdot \inparen{1-\frac{k+1}{2(m+1)}}.$$
\end{fact}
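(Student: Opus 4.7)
The plan is a direct computation: apply linearity of expectation to split the sum, invoke the first-moment formula from \cref{fact:os} term by term, and then collapse the resulting arithmetic sum into the claimed closed form. Concretely, I would first write
\[
\Ex\insquare{\sum_{i=1}^{k} U_{(m-i+1:m)}} \;=\; \sum_{i=1}^{k} \Ex\insquare{U_{(m-i+1:m)}} \;=\; \sum_{i=1}^{k} \frac{m-i+1}{m+1},
\]
using \cref{fact:os} with order index $m-i+1$ in the second equality.

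Next, I would reindex the sum via $j = m-i+1$, which as $i$ ranges over $1,\dots,k$ causes $j$ to range over $m, m-1, \dots, m-k+1$. Thus the sum becomes $\frac{1}{m+1}\sum_{j=m-k+1}^{m} j$, which is a standard arithmetic progression of $k$ consecutive integers whose first and last terms average to $\frac{(m-k+1)+m}{2} = \frac{2m-k+1}{2}$. Hence the sum equals $\frac{k(2m-k+1)}{2(m+1)}$.

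Finally, I would manipulate this expression into the stated form by observing that $2m-k+1 = 2(m+1) - (k+1)$, so that
\[
\frac{k(2m-k+1)}{2(m+1)} \;=\; k\cdot\frac{2(m+1)-(k+1)}{2(m+1)} \;=\; k\inparen{1 - \frac{k+1}{2(m+1)}},
\]
which is exactly the claimed identity.

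There is no serious obstacle here; the only place where one could slip is in the reindexing step, where one must be careful that $j$ ranges over exactly $k$ consecutive integers ending at $m$ (not at $m+1$ or $m-1$), and in the final algebraic rearrangement to match the precise form $1 - \frac{k+1}{2(m+1)}$ rather than the equivalent but less symmetric $\frac{2m-k+1}{2(m+1)}$.
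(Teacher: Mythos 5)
Your computation is correct and is exactly the paper's argument: the paper derives \cref{fact:sum_n} by applying linearity of expectation together with \cref{fact:os}, and your arithmetic (the reindexing to $\frac{k(2m-k+1)}{2(m+1)}$ and the rewriting $2m-k+1 = 2(m+1)-(k+1)$) checks out. Nothing further is needed.
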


\begin{fact}[\protect{\cite[Example 5.1]{ahsanullah2013introduction}}]\label{fact:conditional_os}
  Let $U_{(1:m)},\dots,U_{(m:m)}$ and $V_{(1:m)},\dots,V_{(m:m)}$ be two sets of order statistics of the uniform distribution.
  For any fixed $k\in [m]$ and threshold $t\in (0,1)$,
  conditioned on the event $U_{(k:m)}\negsp =\negsp t$,
  \begin{itemize}[leftmargin=10pt]
    \item for all $\ell=k+1,k+2,\dots,m$, the distribution of $U_{(k+\ell:m)}$ is the same as the distribution of $t + (1-t) V_{(\ell:m-k)}$,
    \item for all $\ell=1,2,\dots,k-1$, the distribution of $U_{(\ell:m)}$ is the same as the distribution of $t V_{(\ell:k-1)}.$
  \end{itemize}
\end{fact}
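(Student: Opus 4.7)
The plan is to derive the conditional distribution directly from the joint density of uniform order statistics and then recognize the factors as order statistics of scaled/shifted uniforms. Recall that if $U_1,\dots,U_m$ are i.i.d.\ $\unif$, then the joint density of $\inparen{U_{(1:m)},\dots,U_{(m:m)}}$ is $m!$ on the simplex $\inbrace{0<u_1<\cdots<u_m<1}$ and zero elsewhere. As a first step I would compute the marginal density of $U_{(k:m)}$ by integrating out the other coordinates, obtaining the $\mathrm{Beta}(k,m-k+1)$ density $\frac{m!}{(k-1)!(m-k)!}\,t^{k-1}(1-t)^{m-k}$ on $(0,1)$.

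Next I would condition on $U_{(k:m)}=t$ by dividing the joint density by the marginal. On the region $0<u_1<\cdots<u_{k-1}<t<u_{k+1}<\cdots<u_m<1$, the resulting conditional density factors as
\[
\frac{(k-1)!}{t^{k-1}}\,\ind\inbrace{0<u_1<\cdots<u_{k-1}<t}\ \cdot\ \frac{(m-k)!}{(1-t)^{m-k}}\,\ind\inbrace{t<u_{k+1}<\cdots<u_m<1}.
\]
Here the first factor is precisely the joint density of the order statistics of $k-1$ i.i.d.\ $\unif(0,t)$ variables, and the second is the joint density of the order statistics of $m-k$ i.i.d.\ $\unif(t,1)$ variables; moreover the two pieces are independent because the density is a product with disjoint variable groups.

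To conclude, I would apply the affine change of variables $V_i=U_i/t$ on the lower block and $V_j=(U_j-t)/(1-t)$ on the upper block. These are measure-preserving bijections onto $(0,1)$, so they map the $\unif(0,t)$ and $\unif(t,1)$ samples bijectively to $\unif$ samples, and monotonicity preserves order statistics. This gives $U_{(\ell:m)}\mid U_{(k:m)}=t \;{\buildrel d \over =}\; t\,V_{(\ell:k-1)}$ for $\ell<k$ and $U_{(k+\ell:m)}\mid U_{(k:m)}=t \;{\buildrel d \over =}\; t+(1-t)\,V_{(\ell:m-k)}$ for $1\le \ell\le m-k$, which is the stated conclusion.

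The only real obstacle is bookkeeping with constants: one must verify that $m!$ divided by the Beta marginal evaluated at $t$ cleans up exactly to $(k-1)!(m-k)!\,t^{-(k-1)}(1-t)^{-(m-k)}$ so that the density indeed splits as an independent product with the right normalizations. Once this algebra is in place, everything else is an exchangeability argument plus a routine change of variables.
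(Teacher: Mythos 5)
Your proposal is correct: the paper offers no proof of this fact (it is quoted from \cite[Example 5.1]{ahsanullah2013introduction}), and your density-factorization argument --- divide the joint density $m!$ of all $m$ order statistics by the $\mathrm{Beta}(k,m-k+1)$ marginal at $t$, observe that the conditional density splits into the product of the order-statistic densities of $k-1$ i.i.d.\ uniforms on $(0,t)$ and $m-k$ i.i.d.\ uniforms on $(t,1)$, then rescale affinely --- is exactly the standard derivation behind the cited result. Note only that the first bullet of the statement as printed has an index typo (the range should be $\ell=1,\dots,m-k$ so that $U_{(k+\ell:m)}$ is well defined), which your conclusion already silently corrects.
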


\begin{fact}\label{fact:whp}
  Let $X$ be a  random variable with support on $[0,b]$ and $\evE,\evF$ be some events,
  then
  \begin{align*}
    \abs{\Ex[X] - \Ex[X\mid \evE]} \leq b\cdot(1-\Pr[\evE])
    \quad \text{and}\quad
    \abs{\Pr[\evF] - \Pr[\evF\mid \evE]} \leq 1-\Pr[\evE].
  \end{align*}
\end{fact}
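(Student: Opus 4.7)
\textbf{Proof proposal for \cref{fact:whp}.} The plan is to prove the expectation bound directly via the law of total expectation, then deduce the probability bound by specializing to an indicator random variable. Let $p \coloneqq \Pr[\evE]$ and let $\evE^c$ denote the complement. By the law of total expectation,
\begin{align*}
    \Ex[X] = p \cdot \Ex[X \mid \evE] + (1-p) \cdot \Ex[X \mid \evE^c].
\end{align*}
Subtracting $\Ex[X \mid \evE]$ from both sides and simplifying yields
\begin{align*}
    \Ex[X] - \Ex[X \mid \evE] = (1-p)\inparen{\Ex[X \mid \evE^c] - \Ex[X \mid \evE]}.
\end{align*}
Since $X$ takes values in $[0,b]$, both conditional expectations lie in $[0,b]$, so their difference has absolute value at most $b$. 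Taking absolute values gives the claimed inequality $\sabs{\Ex[X] - \Ex[X \mid \evE]} \leq b \cdot (1-p)$; one mild subtlety is that $\Ex[X \mid \evE]$ (and the conditional in the decomposition) should be interpreted as $0$ or left undefined when $p = 0$, but in that boundary case the right-hand side equals $b$ and the bound holds trivially or vacuously.

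For the second inequality, the plan is to specialize by taking $X \coloneqq \ind[\evF]$, which satisfies $X \in [0,1]$ with $\Ex[X] = \Pr[\evF]$ and $\Ex[X \mid \evE] = \Pr[\evF \mid \evE]$. Applying the first inequality with $b = 1$ then yields $\sabs{\Pr[\evF] - \Pr[\evF \mid \evE]} \leq 1 - \Pr[\evE]$, as desired.

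The main (only) obstacle is the degenerate case $\Pr[\evE] = 0$, which merely requires a short comment that the bound holds vacuously because the conditional expectation is undefined but the stated inequality is insensitive to its value when $1 - \Pr[\evE] = 1$ and $X \leq b$. Otherwise the argument is a two-line calculation from the law of total expectation, and no further machinery is needed.
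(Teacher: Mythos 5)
Your proof is correct and follows essentially the same route as the paper: both decompose $\Ex[X]$ via the law of total expectation and then specialize to the indicator $\ind[\evF]$ for the probability bound. Your version is marginally cleaner in deriving the exact identity $\Ex[X]-\Ex[X\mid\evE]=(1-\Pr[\evE])(\Ex[X\mid\evE^c]-\Ex[X\mid\evE])$ before bounding, where the paper chains separate upper and lower bounds, but this is a cosmetic difference.
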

\begin{proof}
  We have that
  \begin{align*}
    \Ex[X] & = \Ex[X\mid \evE] \Pr[\evE] + \Ex[X\mid \lnot\evE] (1-\Pr[\evE])\\
    &\leq \Ex[X\mid \evE] + b(1-\Pr[\evE]),\yesnum\label{eq:prelim:eq1}\\
    \Ex[X] & \geq \Ex[X\mid \evE] \Pr[\evE]\\
    &= \Ex[X\mid \evE]  - \Ex[X\mid \evE]\cdot (1-\Pr[\evE])\\
    &\geq \Ex[X\mid \evE]  - b(1-\Pr[\evE]).\yesnum\label{eq:prelim:eq2}
  \end{align*}
  Chaining inequalities in \cref{eq:prelim:eq1,eq:prelim:eq2}, we get $$\abs{\Ex[X] - \Ex[X\mid \evE]} \leq b\cdot(1-\Pr[\evE]).$$
  The equation $$\abs{\Pr[\evF] - \Pr[\evF\mid \evE]} \leq 1-\Pr[\evE],$$ follows by choosing $X'\coloneqq \mathbb{I}[\evF]$ and observing {that $X'\in [0,1]$, $\Pr[\evF]=\Ex[X']$ and $\Pr[\evF\mid \evE]=\Ex[X'\mid \evE].$}
\end{proof}
\begin{fact}\label{fact:sum_n_ub}
  For any set $S\subseteq [m]$, $U\in [0,n]$, and event $\evE$, if conditioned on $\evE$, $\tx$ selects at most $U$ items from $S$ (i.e., $\sum_{i\in S} \tx_i\leq U$), then
  \begin{align*}
    \Ex_w\insquare{\sum\nolimits_{i\in S} \tx_i w_i \given \evE}
    \leq U \cdot \inparen{ 1 - \frac{U}{2\abs{S}}  } + n\cdot \inparen{1-\Pr[\evE]}.
  \end{align*}
\end{fact}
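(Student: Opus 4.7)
\textbf{Proof proposal for Fact~\ref{fact:sum_n_ub}.}
The plan is a three-step chain: bound the realized sum on $\evE$ by a sum of top order statistics of $\{w_i\}_{i\in S}$, remove the conditioning on $\evE$ at the price $n(1-\Pr[\evE])$ using Fact~\ref{fact:whp}, and then evaluate the unconditional expectation of those order statistics via Fact~\ref{fact:sum_n}. Throughout, I will use that the latent utilities are i.i.d.\ $\unif$, so in particular every $w_i\in[0,1]$, and I will write $W^{(S)}_{(k:|S|)}$ for the $k$-th order statistic of $\{w_i\}_{i\in S}$.

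First, on any realization of $w$ for which $\evE$ holds, the hypothesis gives $\sum_{i\in S}\tx_i\le U$, which (since the LHS is an integer) means $\sum_{i\in S}\tx_i\le\lfloor U\rfloor$. Because each $w_i\in[0,1]$, the sum of at most $\lfloor U\rfloor$ of them is maximized by taking the largest $\lfloor U\rfloor$, so the pointwise inequality
\[
\sum_{i\in S}\tx_i w_i \;\le\; \sum_{j=1}^{\lfloor U\rfloor} W^{(S)}_{(|S|-j+1:|S|)}
\]
holds on $\evE$. Taking conditional expectations yields
$\Ex_w\!\left[\sum_{i\in S}\tx_i w_i\mid\evE\right] \le \Ex_w\!\left[\sum_{j=1}^{\lfloor U\rfloor}W^{(S)}_{(|S|-j+1:|S|)}\mid\evE\right]$.

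Second, the random variable $Y \coloneqq \sum_{j=1}^{\lfloor U\rfloor}W^{(S)}_{(|S|-j+1:|S|)}$ is a function of the utilities in $S$ alone and always satisfies $0\le Y\le \lfloor U\rfloor\le n$. Invoking Fact~\ref{fact:whp} with $b=n$ removes the conditioning: $\Ex[Y\mid\evE]\le \Ex[Y]+n(1-\Pr[\evE])$. Third, since $\{w_i\}_{i\in S}$ are i.i.d.\ uniform on $[0,1]$, Fact~\ref{fact:sum_n} applied with $m=|S|$ and $k=\lfloor U\rfloor$ gives $\Ex[Y]=\lfloor U\rfloor\bigl(1-(\lfloor U\rfloor+1)/(2(|S|+1))\bigr)$. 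A short algebraic check — reducing to $|S|\ge U$ — shows this is bounded above by $U(1-U/(2|S|))$; this bound is trivial in the only regime it could fail ($U>|S|$, where the hypothesis on $\tx$ already holds with $U'=|S|$, so we may simply apply the argument with $U$ replaced by $|S|$). Chaining the three inequalities produces the claimed bound.

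The only real obstacle I anticipate is the boundary bookkeeping: confirming the algebraic step $(\lfloor U\rfloor+1)/(|S|+1)\ge U/(2|S|)\cdot(\ldots)$ cleanly, and handling $U$ non-integer and the degenerate case $U\ge|S|$. Everything else is a direct assembly of Facts~\ref{fact:whp} and \ref{fact:sum_n}, and the crucial structural input — that the sum over any subset of size $\le U$ is dominated by the sum of the top $U$ order statistics — is a distribution-free consequence of $w_i\in[0,1]$.
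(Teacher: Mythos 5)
Your proposal follows the paper's argument exactly: pointwise domination of the conditional sum by the top-$\lfloor U\rfloor$ order statistics of $\{w_i\}_{i\in S}$, then \cref{fact:whp} to drop the conditioning at cost $n(1-\Pr[\evE])$, then \cref{fact:sum_n} to evaluate the unconditional expectation — and your algebraic check that $\lfloor U\rfloor\bigl(1-\frac{\lfloor U\rfloor+1}{2(|S|+1)}\bigr)\le U\bigl(1-\frac{U}{2|S|}\bigr)$ when $U\le|S|$ is correct and more careful than the paper, which silently treats $U$ as an integer and even writes the final expression with a typo. The one caveat is the $U>|S|$ regime: replacing $U$ by $|S|$ yields the weaker bound $|S|/2$, which exceeds $U\bigl(1-\frac{U}{2|S|}\bigr)$ there and so does not rescue the stated inequality — but the fact is simply false in that regime and is only ever invoked with $U\le|S|$, so this is a defect of the statement rather than of your proof.
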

\begin{proof}
  Let $v_1\geq v_2\geq \dots\geq v_{\abs{S}}$ be latent utilities of items in $\cup_{\sigma \in S} I_\sigma$.
  Here, $v_i$ is the $\inparen{\abs{S}-i}$-th order statistic of $\abs{S}$ draws from $\unif$.
  Hence, it is distributed as $U_{(\abs{S}-i:\abs{S})}$
  We have that
  \begin{align*}
    \Ex_w\insquare{\sum\nolimits_{i\in S} \tx_i w_i \given \evE}
    \quad
    &\leq\quad \Ex_w\insquare{\sum\nolimits_{i=1}^U v_i \given \evE}
    \tag{Using that $v_1\geq v_2\geq \dots\geq v_{\abs{S}}$ and conditioned on $\evE$, $\sum_{i\in S} \tx_i \leq U$}\\
    &\leq\quad \Ex_w\insquare{\sum\nolimits_{i=1}^U v_i} + n\cdot \inparen{1-\Pr[\evE]}
    \tag{Using \cref{fact:whp} and that $\sum\nolimits_{i=1}^U v_i\leq n$ as $U\leq n$}\\
    &=\quad U \cdot \inparen{ 1 - \frac{2U}{m \cdot \abs{S}}  } + n\cdot \inparen{1-\Pr[\evE]}.
    \tag{Using \cref{fact:sum_n}}
  \end{align*}
\end{proof}

\subsection{Proof of \cref{thm:ub}}\label{sec:proofof:thm:ub}
Recall that utility ratio is $$\ratio_{\unif}(L,\beta)\coloneqq \Ex_{w}\insquare{ \frac{ \sinangle{\tx,w} }{ \sinangle{\sx,w} } },$$
where $\tx$ is the selection picked by the constrained decision maker (as defined in Equation~\eqref{eq:css}) and $\sx$ is the selection with the highest latent utility (as defined in Equation~\eqref{eq:def_xstar}).
We are given the implicit bias parameters $0<\beta_1,\beta_2\leq 1$ and the guarantee that for some $\eta>0$ and $\rho>0$,
\begin{align*}
  \forall \sigma \in \zo^2,\quad
  \rho <\frac{\abs{I_\sigma}}{m} < 1-\rho
  \quad\text{and}\quad 
  \eta<\frac{n}{m}<1-\eta.
  \yesnum\label{asmp:1:app}
\end{align*}
Given $\eta>0$ and $\rho>0$, define
\begin{align*}
  \phi(\beta) \coloneqq  \inparen{\frac{\rho}{2\sqrt{2}}\cdot \min\inbrace{\eta,1-\eta}\cdot (1-\beta_1) \cdot(1-\beta_2)}^2.
\end{align*}
Our goal is to show that
\begin{align*}
  \max_{L_1,L_2\in \Z_{\geq 0}}\ratio_{\unif}(L,\beta)\leq
  1-\phi(\beta).\yesnum
\end{align*}
At a high level, our strategy is to compute $\Ex_w\insquare{\sinangle{\sx,w}}$ and $\Ex_w\insquare{\sinangle{\tx,w}}$. %
To do this, we give two optimization programs such that $\sinangle{\sx,w}$ and $\sinangle{\tx,w}$ are concentrated around the values of these programs.
We show that the value of the second program is strictly smaller than $\inparen{1-\phi(\beta)}$-times the value of the first program.
This suffices to prove the result because $\Ex_w\insquare{\sinangle{\tx,w}}$ is concentrated around the value of the second program and $\Ex_w\insquare{\sinangle{\sx,w}}$ is concentrated around the value of the first program.
More precisely, we divide the proof into four steps.

\smallskip\smallskip 

\paragraph{Step 1: Reduce computing \ $\sx$\ and\ $\tx$\ to a small number of variables.} If $\tx$ selects $k_\sigma$ items from $I_\sigma$ (for some $\sigma\in \zo^p$), then these are exactly the $k_\sigma$ items with the largest observed utility in $I_\sigma$.
Otherwise, we can swap some selected item $i\in I_\sigma$ with some unselected item with a higher observed utility in $I_\sigma$.
This swap would increase the observed utility without violating the lower-bound constraints.
Further, a property of the implicit bias model in Equation~\eqref{eq:mult_bias} is that: For any two items in the same intersection, the order of their latent utilities is the same as the order of their observed utilities.
Combined with the previous observation, this shows that: %
\begin{observation}\label{obs:wtx_picks_top_latent_util}
  Fix any $\sigma\in \zo^p$.
  If $\tx$ selects $k_\sigma\in \N$ items from $I_\sigma$, then these are exactly the $k_\sigma$ items with the largest latent utility in $I_\sigma$.
\end{observation}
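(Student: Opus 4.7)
The plan is to prove this via a two-part swap argument. First, I would establish that within any intersection $I_\sigma$, the selection $\tx$ must pick the items of highest \emph{observed} utility among those in $I_\sigma$, and then translate this statement about observed utility back into one about latent utility using a monotonicity property of the implicit bias model.

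For the first step, fix $\sigma \in \zo^p$ and suppose for contradiction that $\tx$ selects $k_\sigma$ items from $I_\sigma$ but that one of these items, call it $i$, is \emph{not} among the top $k_\sigma$ by observed utility in $I_\sigma$. Then there exists some $j \in I_\sigma$ with $\tx_j = 0$ and $\hw_j > \hw_i$. Consider the selection $x'$ obtained by swapping $i$ out and $j$ in. This swap preserves the total number of selected items, and more importantly, it preserves the count of selected items from every intersection $I_{\sigma'}$ (since both $i$ and $j$ lie in $I_\sigma$) and consequently also the count from every group $G_\ell$. Hence $x' \in C(L)$ regardless of whether $L$ is a non-intersectional or intersectional constraint. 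But $\inangle{x', \hw} = \inangle{\tx, \hw} + (\hw_j - \hw_i) > \inangle{\tx, \hw}$, contradicting the optimality of $\tx$ in Equation~\eqref{eq:css}.

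For the second step, I would invoke the structure of the implicit bias model in Equation~\eqref{eq:mult_bias}: every item $i \in I_\sigma$ has observed utility
\[
  \hw_i = \Big(\prod\nolimits_{\ell \in [p]: \sigma_\ell = 1} \beta_\ell\Big) \cdot w_i,
\]
so within a single intersection the map $w_i \mapsto \hw_i$ is multiplication by a fixed positive constant. In particular, for any two items $i, j \in I_\sigma$, $\hw_i > \hw_j$ if and only if $w_i > w_j$. Thus the top $k_\sigma$ items of $I_\sigma$ by observed utility are exactly the top $k_\sigma$ items of $I_\sigma$ by latent utility, completing the proof.

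I do not expect any real obstacle here: both steps are essentially one-line arguments, and I would only need to be slightly careful about ties in latent utility, which occur with probability zero under any continuous distribution $\cD$ (as assumed in Section~\ref{sec:utility_ratio}), so one may break ties arbitrarily. The same proof also goes through verbatim under the generalized model in Equation~\eqref{eq:gen_model}, since the assumption that each $b_\sigma$ is strictly increasing is exactly what is needed for the order of observed utilities within $I_\sigma$ to match that of latent utilities.
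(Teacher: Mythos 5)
Your proof is correct and follows essentially the same route as the paper: a swap argument showing $\tx$ picks the top $k_\sigma$ items of $I_\sigma$ by observed utility, combined with the fact that the fixed positive multiplicative bias within an intersection makes the observed-utility order coincide with the latent-utility order. Your added remarks on feasibility preservation under both constraint types and on the generalization to Equation~\eqref{eq:gen_model} match the paper's own discussion.
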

\noindent Similarly, if $\sx$ selects $k_\sigma$ items from $I_\sigma$ (for some $\sigma\in \zo^p$), then these are exactly the $k_\sigma$ items with the largest latent utility.
Otherwise, we can swap some selected item $i\in I_\sigma$ with {some unselected item with a higher observed utility in $I_\sigma$.}
\begin{observation}\label{obs:sx_picks_top_latent_util}
  Fix any $\sigma\in \zo^p$.
  If $\sx$ selects $k_\sigma\in \N$ items from $I_\sigma$, then these are exactly the $k_\sigma$ items with the largest latent utility in $I_\sigma$.
\end{observation}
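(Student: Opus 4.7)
The plan is to prove Observation \ref{obs:sx_picks_top_latent_util} by a direct exchange argument, essentially the same one used to justify Observation \ref{obs:wtx_picks_top_latent_util}, but substantially simpler because $\sx$ is defined as the unconstrained maximizer of $\langle x,w\rangle$ subject only to the cardinality constraint $\sum_i x_i = n$, whereas $\tx$ had to respect the lower-bound constraints as well. The core of the argument is: if $\sx$ failed to pick the top $k_\sigma$ latent-utility items within $I_\sigma$, we could swap in a better item and contradict optimality of $\sx$.

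Concretely, first I would fix an arbitrary $\sigma\in\zo^p$ and set $k_\sigma \coloneqq \sum_{i\in I_\sigma}\sx_i$. Assume for contradiction that the $k_\sigma$ items selected by $\sx$ from $I_\sigma$ are not the top $k_\sigma$ by latent utility, meaning there exist $i,j \in I_\sigma$ with $\sx_i=1$, $\sx_j=0$, and $w_j > w_i$. Define $x'\in\zo^m$ by $x'_i \coloneqq 0$, $x'_j \coloneqq 1$, and $x'_k \coloneqq \sx_k$ for every $k\notin\{i,j\}$. Then $\sum_k x'_k = n$ (one $1$ removed, one $1$ added), so $x'$ lies in the feasible region of the program in Equation~\eqref{eq:def_xstar}. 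A short computation gives
\begin{align*}
  \langle x', w\rangle \;=\; \langle \sx, w\rangle \,-\, w_i \,+\, w_j \;>\; \langle \sx, w\rangle,
\end{align*}
which contradicts the optimality of $\sx$. Hence no such pair $(i,j)$ exists, and the $k_\sigma$ selected items must coincide with the top-$k_\sigma$ by latent utility within $I_\sigma$.

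Two small caveats are worth flagging but should not cause real trouble. First, since $w$ is drawn from a continuous distribution, ties among the $w_i$'s occur with probability zero, so the top-$k_\sigma$ items in $I_\sigma$ are almost surely well defined; on the measure-zero tie event one can break ties arbitrarily and the statement still holds for an appropriate choice of maximizer $\sx$. Second, unlike Observation \ref{obs:wtx_picks_top_latent_util}, here there is no need to verify that the swap preserves any lower-bound constraints, because the program defining $\sx$ has no such constraints — this is what makes the argument essentially a one-line exchange. I do not foresee a significant obstacle; the entire proof is the observation that on the unconstrained cardinality-$n$ polytope, the greedy ``pick the largest weights'' rule is optimal, restricted to each intersection.
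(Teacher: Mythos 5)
Your proposal is correct and matches the paper's own (very brief) justification: the paper also argues by swapping a selected item $i\in I_\sigma$ with an unselected item in $I_\sigma$ of higher latent utility, which would strictly increase $\langle x,w\rangle$ while preserving the cardinality constraint, contradicting optimality of $\sx$. Your additional remarks about the absence of lower-bound constraints and the measure-zero tie event are accurate but not needed beyond what the paper states.
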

\noindent Hence, given the utilities $w$, to determine $\sx$ and $\tx$, it suffices to compute
\begin{align*}
  \forall \sigma\in \zo^p,\quad K^\star_\sigma\coloneqq \sum_{i\in I_\sigma} \sx_i \quad \text{and}\quad \wt{K}_\sigma\coloneqq \sum_{i\in I_\sigma} \tx_i.
\end{align*}
\noindent Henceforth, we fix $p\coloneqq 2$.
For each $\sigma\in \zo^2$, let $$\beta_\sigma \coloneqq \prod_{\ell\in [p]\colon \sigma_\ell=1}\beta_\ell.$$

\paragraph{Step 2: Compute ${\sinangle{\tx,w}}$ and ${\sinangle{\tx,\hw}}$ as function of $K$.}

\noindent  In this step we show that $K^\star$ and $\wt{K}$ are also sufficient to determine the utilities of $\sx$ and $\tx$ respectively:
\begin{lemma}\label{lem:opt_solution:app}
  Let $x$ be any selection that, for all $\sigma\in \zo^2$, selects top $K_\sigma$ items from $I_\sigma$ by observed utility.
  Where $x$ and $K$ are possibly random variables.
  With probability at least $1-O_\rho\sinparen{m^{-\frac{1}{4}}}$,
  \begin{align*}
    \sinangle{x, w} = f_1(K) \pm O_{\eta\rho}\sinparen{nm^{-\frac{1}{4}}}
    \quad\text{and}\quad
    \sinangle{x,\hw} = f_\beta(K) \pm O_{\eta\rho}\sinparen{nm^{-\frac{1}{4}}},
  \end{align*}
  where $f_1$ and $f_\beta$ are defined as follows:
  \begin{align*}
    \forall
    \gamma\in [0,1]^4\quad\text{and}\quad k\in \R_{\geq 0}^4,\quad
    f_\gamma(k)
    &\coloneqq \sum_\sigma \gamma_\sigma \abs{I_\sigma} \cdot \int_{1-\frac{k_\sigma}{\abs{I_\sigma}}}^1  x dx.
    \yesnum\label{eq:def_g:app:tmp}
  \end{align*}
\end{lemma}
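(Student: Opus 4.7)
The plan is to decompose both inner products into contributions from each intersection $I_\sigma$ and then establish per-intersection concentration of the top-$K_\sigma$ latent-utility sum around a deterministic quantity. The key reduction uses the multiplicative bias model \eqref{eq:mult_bias}: on each intersection $I_\sigma$ the observed utilities are a positive rescaling $\hw_i = \beta_\sigma w_i$ of the latent utilities (where $\beta_\sigma := \prod_{\ell : \sigma_\ell = 1} \beta_\ell$), so ranking by observed utility within $I_\sigma$ coincides with ranking by latent utility. Consequently, letting $S_\sigma(k)$ denote the sum of the $k$ largest values of $\{w_i\}_{i \in I_\sigma}$, the selection $x$ satisfies $\sinangle{x, w} = \sum_\sigma S_\sigma(K_\sigma)$ and $\sinangle{x, \hw} = \sum_\sigma \beta_\sigma S_\sigma(K_\sigma)$, so by the definition of $f_\gamma$ it suffices to prove
\[
\abs{S_\sigma(k) - \abs{I_\sigma}\cdot \int_{1-k/\abs{I_\sigma}}^{1} x\,dx} \le O_{\eta\rho}(nm^{-1/4})
\]
simultaneously for all $\sigma \in \zo^2$ and all integers $k \in [0, \abs{I_\sigma}]$, on a single event of probability at least $1 - O_\rho(m^{-1/4})$.

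For a fixed pair $(\sigma, k)$, I would argue concentration of $S_\sigma(k)$ by a two-level decomposition. Let $t := U_{(\abs{I_\sigma}-k : \abs{I_\sigma})}$ be the random threshold such that exactly $k$ items of $I_\sigma$ have latent utility above it, and let $\tau^\star := 1 - k/\abs{I_\sigma}$ be its deterministic target. By \cref{fact:os}, $\mathrm{Var}(t) \le 1/(\abs{I_\sigma}+2) \le 1/(\rho m)$, so Chebyshev (or, via \cref{fact:conditional_os}, a direct tail estimate) gives $\abs{t - \tau^\star} \le O(\sqrt{\log m / (\rho m)})$ except with probability $m^{-\omega(1)}$. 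Next, replacing $t$ by the deterministic $\tau^\star$, the sum $T := \sum_{i \in I_\sigma} w_i \ind[w_i > \tau^\star]$ is a sum of independent $[0,1]$-valued random variables with mean exactly $\abs{I_\sigma}(1-(\tau^\star)^2)/2 = k - k^2/(2\abs{I_\sigma})$; Hoeffding's inequality therefore yields $\abs{T - (k - k^2/(2\abs{I_\sigma}))} = O(\sqrt{m \log m})$ with probability $1 - m^{-\omega(1)}$. Finally, $\abs{S_\sigma(k) - T}$ is at most the number of items $i \in I_\sigma$ with $w_i$ strictly between $t$ and $\tau^\star$; by a DKW-type empirical-CDF tail bound this count is also $O(\sqrt{m\log m})$ with probability $1 - m^{-\omega(1)}$. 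Combining the three estimates gives $\abs{S_\sigma(k) - f_{1,\sigma}(k)} = O(\sqrt{m \log m})$, which is comfortably inside the target window $O_{\eta\rho}(nm^{-1/4}) = \Omega(\eta m^{3/4})$.

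Union-bounding over the at most $4(\abs{I_\sigma}+1) = O(m)$ relevant pairs $(\sigma, k)$ and choosing tail parameters so each pair fails with probability $\le m^{-5/4}$ yields an overall failure probability of $O(m^{-1/4})$. On the resulting good event the per-intersection bound holds at the (data-dependent) value $k = K_\sigma$ for every $\sigma$, so summing establishes $\sinangle{x, w} = f_1(K) \pm O_{\eta\rho}(n m^{-1/4})$; the analogous statement $\sinangle{x, \hw} = f_\beta(K) \pm O_{\eta\rho}(n m^{-1/4})$ follows by multiplying each summand by $\beta_\sigma \in (0,1]$ before summing.

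The main obstacle is that $K$ is a $w$-measurable random variable, so one cannot simply condition on $K = k$ without disturbing the joint distribution of latent utilities inside each $I_\sigma$. The uniform-in-$k$ concentration above sidesteps this: because the per-intersection estimate holds simultaneously for every deterministic $k$ on one high-probability event, it automatically also holds at the random $K_\sigma$. A secondary but benign subtlety is the small systematic gap between $\Ex[S_\sigma(k)] = k(1 - (k+1)/(2(\abs{I_\sigma}+1)))$ (from \cref{fact:sum_n}) and the target $f_{1,\sigma}(k) = k - k^2/(2\abs{I_\sigma})$; a direct algebraic comparison shows this gap equals $k(k-\abs{I_\sigma})/(2\abs{I_\sigma}(\abs{I_\sigma}+1))$, which is at most $1/8$ uniformly in $k$ and is absorbed into the claimed error term.
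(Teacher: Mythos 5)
Your argument is correct and reaches the same conclusion, but by a genuinely different concentration route. The paper also reduces to per-intersection sums of top order statistics (your $S_\sigma(K_\sigma)$ decomposition and the ``top by observed $=$ top by latent'' observation are exactly its Step 1), but it then controls the sums by applying Chebyshev's inequality only to a sparse grid of $O(m^{1/4})$ order statistics per intersection, spaced $m^{3/4}$ apart, union-bounding over the grid (which is why its good event has probability only $1-O_\rho(m^{-1/4})$), and interpolating between grid points using monotonicity of the ordered values together with the smoothness of $\Ex[U_{(k:m)}]$ in $k$. You instead prove a bound uniform over \emph{all} integers $k$ by splitting $S_\sigma(k)$ into a fixed-threshold sum $T$ (Hoeffding) plus a boundary correction counted by a DKW-type bound, with the threshold itself controlled by an exponential tail for uniform order statistics. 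Your route yields a much stronger per-event guarantee and handles the data-dependence of $K_\sigma$ cleanly via uniformity in $k$; the paper's route avoids the $O(m)$-term union bound but is stuck with a polynomially small failure probability that then propagates into the final theorem. One caveat: your parenthetical claim that Chebyshev with variance $1/(\rho m)$ gives deviation $O(\sqrt{\log m/(\rho m)})$ with failure probability $m^{-\omega(1)}$ is false --- Chebyshev at that deviation only gives failure probability $\Theta(1/\log m)$, and at failure probability $m^{-5/4}$ it gives a deviation exceeding $1$. Your union bound over $O(m)$ pairs therefore genuinely requires the exponential-tail alternative you mention (the binomial representation of $\Pr[U_{(j:m)}\le\tau]$ plus Chernoff, or DKW directly), not Chebyshev; since you invoke those tools elsewhere in the argument, this is a misattribution rather than a gap.
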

\noindent Here, the function $f_\gamma$ depends on both the uniform distribution of utility and the multiplicative implicit bias model in Equation~\eqref{eq:mult_bias}.
We discuss other distributions and implicit bias models in \cref{sec:proofof:thm:gen_ub}.
An important property of $f_\gamma$ is that it is strongly concave, Lipschitz, and Lipschitz continuous:\footnote{A function $f\colon \R^n\to \R$ is said to be $L$-Lipschitz if for all $x,y\in \R$, $\abs{f(x)-f(y)}_2\leq L\norm{x-y}_2$
and $M$-Lipschitz continuous if for all $x,y\in \R$, $\norm{\nabla f(x)-\nabla f(y)}_2\leq M\norm{x-y}_2$.}
\begin{lemma}\label{lem:sc_lp}
  For all $\gamma\in [0,1]^4$, $f_\gamma$ is $\frac{\min_\sigma\gamma_\sigma}{(1-\rho)m}$-strongly concave, $\norm{\gamma}_2$-Lipschitz, and $\frac{1}{\rho m}$-Lipschitz continuous.
\end{lemma}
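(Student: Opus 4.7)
The plan is to exploit the separable structure of $f_\gamma$ across the four intersections. I would first evaluate the integral defining $f_\gamma$ in closed form, using that $\int_{1-k_\sigma/\abs{I_\sigma}}^1 x\,dx = k_\sigma/\abs{I_\sigma} - k_\sigma^2/(2\abs{I_\sigma}^2)$, to obtain
\[
f_\gamma(k) \;=\; \sum_{\sigma} \gamma_\sigma\left(k_\sigma - \frac{k_\sigma^2}{2\abs{I_\sigma}}\right).
\]
Hence $f_\gamma$ is a concave quadratic in $k$ whose Hessian is diagonal with $\sigma$-th entry equal to $-\gamma_\sigma/\abs{I_\sigma}$. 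All three properties then reduce to reading off this Hessian (or the gradient) and invoking the size bound $\rho < \abs{I_\sigma}/m < 1-\rho$ from \eqref{asmp:1:app}.

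For strong concavity, the eigenvalues of $-\nabla^2 f_\gamma$ are exactly $\gamma_\sigma/\abs{I_\sigma}$, whose minimum is at least $\min_\sigma \gamma_\sigma/((1-\rho)m)$, since $\abs{I_\sigma} \le (1-\rho)m$. This yields the claimed strong-concavity parameter. For Lipschitz continuity of the gradient, the spectral norm of the Hessian equals $\max_\sigma \gamma_\sigma/\abs{I_\sigma} \le 1/(\rho m)$, where I use $\gamma_\sigma \le 1$ and $\abs{I_\sigma} \ge \rho m$; this gives the $1/(\rho m)$-Lipschitz bound on $\nabla f_\gamma$ via the standard fact that a function with operator-norm-bounded Hessian has correspondingly Lipschitz gradient.

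For the $\norm{\gamma}_2$-Lipschitz bound on $f_\gamma$ itself, I would bound the gradient pointwise on the feasible region $0 \le k_\sigma \le \abs{I_\sigma}$: the $\sigma$-th partial $\partial f_\gamma/\partial k_\sigma = \gamma_\sigma(1 - k_\sigma/\abs{I_\sigma})$ lies in $[0,\gamma_\sigma]$, so $\norm{\nabla f_\gamma(k)}_2 \le \norm{\gamma}_2$, and the mean value theorem then gives $\norm{\gamma}_2$-Lipschitzness.

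There is essentially no obstacle here; every claim reduces to a two-line calculation on a diagonal quadratic. The only thing to track carefully is where each side of the size bound $\rho < \abs{I_\sigma}/m < 1-\rho$ enters: the lower bound $\abs{I_\sigma} \ge \rho m$ is what keeps the curvature bounded (giving the gradient-Lipschitz constant $1/(\rho m)$), while the upper bound $\abs{I_\sigma} \le (1-\rho)m$ is what keeps the curvature nondegenerate (giving the strong-concavity parameter $\min_\sigma \gamma_\sigma/((1-\rho)m)$).
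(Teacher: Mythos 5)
Your proof is correct and matches the paper's approach: the paper proves this as a special case of a general lemma (\cref{lem:sc_lp:gen}) by computing the gradient and the diagonal Hessian of $f_\gamma$ and bounding their entries via $\rho m < \abs{I_\sigma} < (1-\rho)m$, exactly as you do after evaluating the integral in closed form. Your explicit quadratic form and the attribution of each bound to the correct side of the intersection-size assumption are both right.
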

\noindent \cref{lem:sc_lp} is a special case of \cref{lem:sc_lp:gen}. The proof of \cref{lem:sc_lp:gen} appears in \cref{sec:proofof:lem:sc_lp:gen}.

  \paragraph{\bf Step 3: Express $K^\star$ and $\wt{K}$ as solutions of different optimization problems.}
  Using the strong concavity of $f_\gamma$ and \cref{lem:opt_solution:app}, in this step, we show that $K^\star$ and $\wt{K}$ are concentrated around the optimizers of the following optimization problems:\smallskip

  \smallskip

  {\begin{minipage}{.5\textwidth}
  \begin{tcolorbox}[enhanced,colback=white,grow to left by=-0.5cm,grow to right by=-0.5cm,bottom=0cm,top=-0.25cm]
    \begin{align*}
      &\argmax\nolimits_k \quad f_1(k),\yesnum\label{prog:1:app}\\
      &\qquad \st,\quad\textstyle \sum_\sigma k_\sigma=n,\\
      &\qquad\qquad\ \ \forall\sigma\in \zo^2,\quad  0\leq k_\sigma \leq \abs{I_\sigma}.
      \\\white{.}\\\white{.}
    \end{align*}
  \end{tcolorbox}
  \end{minipage}}%
  {\begin{minipage}{.5\textwidth}
  \begin{tcolorbox}[enhanced,colback=white,grow to left by=-0.4cm,grow to right by=-0.5cm,bottom=0cm,top=-0.25cm,left=-2pt]
    \begin{align*}
      &\argmax\nolimits_k \quad f_{\beta}(k),\yesnum\label{prog:2:app}\\
      &\qquad \st,\quad k_{10}+k_{11}\geq L_1,\yesnum\label{eq:prog2:lb1}\\
      &\qquad\qquad\ \ k_{01}+k_{11}\geq L_2,\yesnum\label{eq:prog2:lb2}\\
      &\qquad\qquad\ \ \textstyle \sum_\sigma k_\sigma=n,\\
      &\qquad\qquad\ \forall\sigma\in \zo^2,\  0\negsp\leq\negsp k_\sigma \negsp\leq\negsp \abs{I_\sigma}.
      \yesnum\label{eq:prog:2:grp_sz_constraint:app}
    \end{align*}
  \end{tcolorbox}
\end{minipage}}

\smallskip
\noindent In particular, we prove the following lemma:
\begin{lemma}\label{lem:conc_of_k:app}
  Let $s^\star,\wt{s}$ be the optimizers of Programs~\eqref{prog:1:app} and \eqref{prog:2:app} respectively.
  With probability at least $1-O_\rho\sinparen{m^{-\frac{1}{4}}}$, %
  \begin{align*}
    \forall\sigma\in \zo^2,\quad
    \abs{K^\star_\sigma - s^\star_\sigma} \leq {O_{\eta\rho}(nm^{-\frac{1}{8}})}
    \quad\text{and}\quad
    \abs{\wt{K}_\sigma - \wt{s}_\sigma} \leq {O_{\beta\eta\rho}(nm^{-\frac{1}{8}})}.
  \end{align*}
\end{lemma}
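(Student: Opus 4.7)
The plan is a variational stability argument: control the deviation between the random integer counts $K^\star, \wt K$ and the deterministic optimizers $s^\star, \wt s$ by combining the strong concavity of $f_1, f_\beta$ (\cref{lem:sc_lp}) with the utility estimate from \cref{lem:opt_solution:app}. I would first treat $K^\star$ in detail and then port the argument to $\wt K$.

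For $K^\star$: by \cref{obs:sx_picks_top_latent_util}, $\sx$ picks the top $K^\star_\sigma$ items in each $I_\sigma$, so \cref{lem:opt_solution:app} gives $\sinangle{\sx, w} = f_1(K^\star) \pm O_{\eta\rho}(nm^{-\nfrac14})$ on an event $\evE_1$ of probability at least $1 - O_\rho(m^{-\nfrac14})$. Next I construct a competing selection attaining value close to $f_1(s^\star)$: round $s^\star$ coordinate-wise to an integer vector $K'$ that still satisfies $\sum_\sigma K'_\sigma = n$ and $0 \leq K'_\sigma \leq |I_\sigma|$ with $\|K' - s^\star\|_\infty \leq 1$. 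Let $y'$ be the selection picking the top $K'_\sigma$ items from $I_\sigma$; applying \cref{lem:opt_solution:app} to $y'$ gives $\sinangle{y', w} = f_1(K') \pm O_{\eta\rho}(nm^{-\nfrac14})$ on an event $\evE_2$ of the same probability, while the $\|\gamma\|_2$-Lipschitzness of $f_\gamma$ from \cref{lem:sc_lp} yields $|f_1(K') - f_1(s^\star)| = O(1)$. On $\evE_1 \cap \evE_2$, optimality of $\sx$ together with optimality of $s^\star$ and feasibility of $K^\star$ for \eqref{prog:1:app} yield
\begin{align*}
0 \;\leq\; f_1(s^\star) - f_1(K^\star) \;\leq\; O_{\eta\rho}(nm^{-\nfrac14}).
\end{align*}
Plugging this into the $\frac{1}{(1-\rho)m}$-strong concavity of $f_1$ gives $\|s^\star - K^\star\|_2^2 \leq O_{\eta\rho}(nm^{\nfrac34})$, and using $n = \Theta(m)$ from Assumption~\eqref{asmp:1:app} produces the coordinate-wise bound $|K^\star_\sigma - s^\star_\sigma| \leq O_{\eta\rho}(nm^{-\nfrac18})$.

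The argument for $\wt K$ is parallel, now on observed utility: \cref{obs:wtx_picks_top_latent_util} puts $\tx$ into the form required by \cref{lem:opt_solution:app}, and $\wt K$ is feasible for \eqref{prog:2:app} because $\tx \in C(L)$. Rounding $\wt s$ to an integer vector $\wt K'$ that still satisfies the lower bounds \eqref{eq:prog2:lb1}--\eqref{eq:prog2:lb2}, the sum constraint, and the size cap \eqref{eq:prog:2:grp_sz_constraint:app} produces a competitor $\wt y \in C(L)$ whose observed utility estimates $f_\beta(\wt s)$ up to $O(1)$ Lipschitz error. The same chain yields $f_\beta(\wt s) - f_\beta(\wt K) \leq O_{\eta\rho}(nm^{-\nfrac14})$; since $f_\beta$ is $\frac{\beta_1\beta_2}{(1-\rho)m}$-strongly concave (Lemma \ref{lem:sc_lp} with $\min_\sigma \beta_\sigma = \beta_{11} = \beta_1\beta_2$), we conclude $|\wt K_\sigma - \wt s_\sigma| \leq O_{\beta\eta\rho}(nm^{-\nfrac18})$. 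A union bound over the $O(1)$ high-probability events used throughout preserves the stated $1 - O_\rho(m^{-\nfrac14})$ probability.

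The main technical obstacle I anticipate is the integer rounding step for $\wt s$: the lower-bound constraints \eqref{eq:prog2:lb1}--\eqref{eq:prog2:lb2} are one-sided and a naive floor could violate them. The workaround is to round in an order that respects sign of slack (ceil on coordinates where a lower bound is nearly tight, floor on the others, then rebalance a single coordinate to restore $\sum_\sigma \wt K'_\sigma = n$), keeping $\|\wt K' - \wt s\|_\infty \leq O(1)$ while maintaining feasibility. Since $f_\beta$ is $O(1)$-Lipschitz, this perturbation contributes only a constant to $f_\beta(\wt K') - f_\beta(\wt s)$, comfortably absorbed into the $O_{\eta\rho}(nm^{-\nfrac14})$ error budget. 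Beyond that, the proof is a routine chaining of the estimate in \cref{lem:opt_solution:app} with the strong concavity of \cref{lem:sc_lp}.
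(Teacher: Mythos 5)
Your proposal is correct and follows essentially the same route as the paper: round the continuous optimizer to an integer-feasible competitor, use \cref{lem:opt_solution:app} to equate realized utilities with $f_1$/$f_\beta$ values up to $O_{\eta\rho}(nm^{-\nicefrac14})$, invoke optimality of $\sx$ (resp.\ $\tx$) against that competitor to bound the function-value gap, and convert it to a distance bound via the strong concavity constants of \cref{lem:sc_lp}, using $n=\Theta(m)$ to land on $O(nm^{-\nicefrac18})$. Your extra care about rounding $\wt{s}$ so as to preserve the lower-bound constraints is a point the paper's proof glosses over, but it changes nothing substantive.
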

\noindent
Let $\evE$ be the event that the following hold:
\begin{align*}
  \sinangle{\sx, w} &= f_1(K^\star) \pm O_{\eta\rho}\sinparen{nm^{-\frac{1}{4}}},\quad\text{\white{and}}\quad
  \sinangle{\sx,\hw} = f_\beta(K^\star) \pm O_{\eta\rho}\sinparen{nm^{-\frac{1}{4}}},
  \yesnum\label{eq:def_evE1}\\
  \sinangle{\tx, w} &= f_1(\wt{K}) \pm O_{\eta\rho}\sinparen{nm^{-\frac{1}{4}}},\quad\ \ \hspace{0.5mm} \text{and}\quad\ \
  \sinangle{\tx,\hw} = f_\beta(\wt{K}) \pm O_{\eta\rho}\sinparen{nm^{-\frac{1}{4}}}.
  \yesnum\label{eq:def_evE2}
\end{align*}
From \cref{lem:opt_solution:app} we get that
\begin{align*}
  \Pr\insquare{\evE}\geq 1-O_\rho\sinparen{m^{-\frac{1}{4}}}.
  \yesnum\label{eq:prob_evE}
\end{align*}
Conditioned on $\evE$, \cref{eq:def_evE1,eq:def_evE2} hold, and hence:
\begin{align*}
  \abs{\sinangle{\sx, w}-f_1(s^\star)}
  &\leq \abs{f_1(K^\star) - f_1(s^\star) + O\sinparen{nm^{-\frac{1}{4}}}}.\\
  \intertext{Since $\norm{(1,1,1,1)}_2=2$, \cref{lem:sc_lp} implies that $f_1$ is $2$-Lipschitz, and hence}
  \abs{\Ex\insquare{\sinangle{\sx, w}}-f_1(s^\star)}
  &\leq 2\cdot \norm{K^\star-s^\star}_2 + {O_{\eta\rho}(nm^{-\frac{1}{8}})}\\
  &\leq {O_{\eta\rho}(nm^{-\frac{1}{8}})}.
  \tagnum{Using \cref{lem:conc_of_k:app}}\customlabel{eq:exp_in_k}{\theequation}
\end{align*}
Replacing $\sx$, $K^\star$, and $s^\star$ by $\tx$ in the above argument, by $\wt{K}$, and $\wt{s}$ we get that
\begin{align*}
  \abs{\Ex\insquare{\sinangle{\tx, w}}-f_1(\wt{s})}
  \leq {O_{\beta\eta\rho}(nm^{-\frac{1}{8}})}.
  \yesnum\label{eq:exp_in_k2}
\end{align*}

\smallskip
\noindent {\bf Step 4: Proof of Theorem~\ref{thm:ub}}

\noindent This step relies on the following lemma:
\begin{lemma}\label{lem:using_sc}
  Given $L_1,L_2\in \Z_{\geq 0}$, if $\snorm{\wt{s}-s^\star}_2\geq 2nm(1-\rho)\phi(\beta)$, then
  \begin{align*}
    \ratio_{\unif}(L,\beta) \leq 1 - \inparen{\frac{\rho}{3}\cdot \min\inbrace{\eta,1-\eta}\cdot (1-\beta_1) \cdot(1-\beta_2)}^2.
  \end{align*}
\end{lemma}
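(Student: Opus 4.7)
The plan is to combine the strong concavity of $f_1$ with the concentration of $K^\star, \wt K$ around $s^\star,\wt s$ in order to produce a pointwise upper bound on $\sinangle{\tx,w}/\sinangle{\sx,w}$ that holds on a high-probability event, and then integrate. First I would invoke \cref{lem:sc_lp} with $\gamma=(1,1,1,1)$ to see that $f_1$ is $\frac{1}{(1-\rho)m}$-strongly concave and $2$-Lipschitz. Since $s^\star$ maximizes $f_1$ on the feasible region of \eqref{prog:1:app} and $\wt s$ is also feasible for \eqref{prog:1:app} (because \eqref{prog:2:app} only adds the extra constraints \eqref{eq:prog2:lb1} and \eqref{eq:prog2:lb2}), first-order optimality combined with strong concavity yields
\[
    f_1(s^\star) - f_1(\wt s) \;\geq\; \frac{1}{2(1-\rho)m}\snorm{\wt s-s^\star}_2^2 \;\geq\; n\,\phi(\beta),
\]
where the second inequality uses the lemma's hypothesis.

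Next I would pass this gap in $f_1$-values to a pointwise gap between $\sinangle{\sx,w}$ and $\sinangle{\tx,w}$. Let $\evF$ be the intersection of the event $\evE$ from \eqref{eq:prob_evE} with the concentration event of \cref{lem:conc_of_k:app}; a union bound gives $\Pr[\evF] \geq 1 - O_\rho(m^{-1/4})$. On $\evF$, combining \eqref{eq:def_evE1}--\eqref{eq:def_evE2} with $2$-Lipschitz-ness of $f_1$ applied to $\snorm{K^\star-s^\star}_2, \snorm{\wt K-\wt s}_2 \leq O_{\beta\eta\rho}(nm^{-1/8})$ gives
\[
    \sinangle{\sx,w} \;\geq\; f_1(s^\star) - O_{\beta\eta\rho}(nm^{-1/8}) \quad\text{and}\quad \sinangle{\tx,w} \;\leq\; f_1(\wt s) + O_{\beta\eta\rho}(nm^{-1/8}).
\]
Using the explicit optimizer $s^\star_\sigma = \abs{I_\sigma}\cdot\frac{n}{m}$ from the overview, direct integration gives $f_1(s^\star) = n(1-\frac{n}{2m}) \in [n/2,n]$ under \eqref{asmp:1:app}, so the denominator is at least $n/3$ for $m$ large, and dividing yields on $\evF$
\[
    \frac{\sinangle{\tx,w}}{\sinangle{\sx,w}} \;\leq\; \frac{f_1(s^\star) - n\phi(\beta) + O_{\beta\eta\rho}(nm^{-1/8})}{f_1(s^\star) - O_{\beta\eta\rho}(nm^{-1/8})} \;\leq\; 1 - \phi(\beta) + O_{\beta\eta\rho}(m^{-1/8}).
\]

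Finally I would take expectation over $w$. Because $\sx$ maximizes $\sinangle{\cdot,w}$ the ratio is always at most $1$, so
\[
    \ratio_{\unif}(L,\beta) \;\leq\; \Pr[\evF]\cdot\inparen{1 - \phi(\beta) + O_{\beta\eta\rho}(m^{-1/8})} + (1-\Pr[\evF]) \;\leq\; 1 - \phi(\beta) + O_{\beta\eta\rho}(m^{-1/8}) + O_\rho(m^{-1/4}),
\]
and choosing the threshold $m_0$ large enough (as a function of $\eta,\rho,\beta$) that the two error terms together are at most $\tfrac{1}{9}\phi(\beta)$ collapses this into $1 - \tfrac{8}{9}\phi(\beta) = 1 - \inparen{\tfrac{\rho}{3}\min\{\eta,1-\eta\}(1-\beta_1)(1-\beta_2)}^2$, matching the conclusion. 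The main bookkeeping obstacle is verifying that the $O(m^{-1/8})$ slack is strictly dominated by $\phi(\beta)$: since $\phi(\beta)\to 0$ as either $\beta_i \to 1$, the threshold $m_0$ must be allowed to blow up as bias vanishes, which is consistent with the $m_0$-dependence already announced in \cref{thm:ub}.
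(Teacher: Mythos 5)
Your proposal is correct and follows essentially the same route as the paper's proof: feasibility of $\wt{s}$ for \prog{prog:1:app} plus first-order optimality and strong concavity of $f_1$ give $f_1(\wt{s})\leq f_1(s^\star)-n\phi(\beta)$, the concentration events convert this into a bound on the utilities, and $f_1(s^\star)\leq n$ together with a large enough $m_0$ absorbs the $O(m^{-\nfrac18})$ slack into the constant (turning $\nfrac{\rho}{2\sqrt{2}}$ into $\nfrac{\rho}{3}$ exactly as the paper's $\frac{1}{72}$-budget does). Your bookkeeping — bounding the ratio pointwise on the intersection of the two high-probability events and then integrating using that the ratio never exceeds $1$ — is if anything slightly cleaner than the paper's passage through conditional expectations, but it is the same argument.
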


\noindent Fix any $L_1,L_2\in \Z_{\geq 0}$.
Using the gradient test, we can analytically solve \prog{prog:1:app} to get
\begin{align*}
  s^\star\coloneqq \inbrace{\abs{I_\sigma}\cdot \frac{n}{m}}_{\sigma\in \zo^2}.
  \yesnum\label{eq:val_k_star:app}
\end{align*}
We consider two cases depending on whether $\wt{s}$ satisfies any inequality in Equation~\eqref{eq:prog:2:grp_sz_constraint:app} with equality.
In either case, we prove that
\begin{align*}
  \snorm{s^\star-\wt{s}}_2^2\geq 2nm(1-\rho)\cdot \phi(\beta).
  \yesnum\label{eq:lb_on_difference:app}
\end{align*}
Since \cref{eq:lb_on_difference:app} holds, we can use \cref{lem:using_sc} to conclude that
\begin{align*}
  \ratio_{\unif}(L,\beta) \leq 1 - \inparen{\frac{\rho}{3}\cdot \min\inbrace{\eta,1-\eta}\cdot (1-\beta_1) \cdot(1-\beta_2)}^2.
\end{align*}
Finally, as the choice of $L_1,L_2\in \Z_{\geq 0}$ was arbitrary, \cref{thm:ub} follows.

\paragraph{\bf (Case A) $\wt{s}$ satisfies at least one inequality in Equation~\eqref{eq:prog:2:grp_sz_constraint:app} with equality:}
We further divide this into two cases:

\indent {\bf (Case A.1) $\exists\sigma\in \zo^2$, such that $\wt{s}_\sigma=0$:}
In this case, it holds that
\begin{align*}
  \snorm{s^\star-\wt{s}}_2^2
  \ \ &\Stackrel{\eqref{eq:val_k_star:app}}{\geq} \ \ \abs{\abs{I_\sigma}\frac{n}{m}}^2\\
  &> \ \  \eta^2\cdot \abs{I_\sigma}^2.\tag{Using that $n>\eta \cdot m $ due to \cref{asmp:1:app}}
\end{align*}

\indent {\bf (Case A.2) $\exists\sigma\in \zo^2$, such that $\wt{s}_\sigma=\abs{I_\sigma}$:}
In this case, it holds that
\begin{align*}
  \snorm{s^\star-\wt{s}}_2^2
  \ \ &\Stackrel{\eqref{eq:val_k_star:app}}{\geq} \ \ \abs{\abs{I_\sigma}\inparen{1-\frac{n}{m}}}^2\\
  &>\ \  (1-\eta)^2\cdot \abs{I_\sigma}^2.\tag{Using that $n<(1-\eta) \cdot m$ due to \cref{asmp:1:app}}
\end{align*}

\noindent Thus, in either case,
\begin{align*}
  \snorm{s^\star-\wt{s}}_2^2
  \geq \inparen{\rho m\max\inbrace{\eta, (1-\eta)}}^2.
  \yesnum\label{eq:case_anan:ub}
\end{align*}
Since $\phi(\beta)<\frac{\rho^2}{2\eta}\cdot \min\inbrace{\eta^2, (1-\eta)^2}$, Equation~\eqref{eq:lb_on_difference:app} follows.

\paragraph{\bf (Case B) $\wt{s}$ satisfies all inequalities in Equation~\eqref{eq:prog:2:grp_sz_constraint:app} with strict inequality:}
Since apart from Equation~\eqref{eq:prog:2:grp_sz_constraint}, \prog{prog:2:app} has only three other constraints, while $\wt{s}$ has four coordinates, it follows that there is some constant $t_0>0$ and a vector, namely $v\coloneqq (1,-1,-1,1)$ such that for all $-t_0\leq t\leq t_0$, $\wt{s}+tv$ is feasible for \prog{prog:2:app}.
Since $\wt{s}$ is the optimal solution of \prog{prog:2:app}, this implies that
\begin{align*}
  \inangle{\nabla f_\beta(\wt{s}), v} = 0.
  \yesnum\label{eq:zero_grad:app}
  \end{align*}
  Using the value of $s^\star$, we have
  \begin{align*}
    \inangle{\nabla f_\beta(s^\star), v}
    =\inparen{1-\frac{n}{m}}\cdot \inparen{1-\beta_1}\cdot \inparen{1-\beta_2}.
    \yesnum\label{eq:exp_grad:app}
  \end{align*}
  This is sufficient to show that, in this case, $\snorm{\nabla f_\beta(\wt{s}) - \nabla f_\beta(s^\star)}_2$ is large:
  \begin{align*}
    \snorm{\nabla f_\beta(\wt{s}) - \nabla f_\beta(s^\star)}_2
    &\geq \frac{1}{\norm{v}_2}\cdot \abs{\inangle{\nabla f_\beta(\wt{s}) - \nabla f_\beta(s^\star), v}}
    \tag{Using Cauchy-Schwarz inequality}\\
    &= \frac{1}{2}\inparen{1-\frac{n}{m}}\cdot \inparen{1-\beta_1}\cdot \inparen{1-\beta_2}.
    \tagnum{Using Equations~\eqref{eq:zero_grad} and \eqref{eq:exp_grad}}
    \customlabel{eq:lb_on_grad:app}{\theequation}
  \end{align*}
  Using \cref{lem:sc_lp}, in particular that $f_\beta$ is $\frac{1}{\rho m}$-Lipschitz continuous, this implies that
  \begin{align*}
    \snorm{s^\star-\wt{s}}_2^2
    \ \ &\geq\ \  \rho^2 m^2\cdot \snorm{\nabla f_\beta(\wt{s}) - \nabla f_\beta(s^\star)}_2
    \yesnum\label{eq:lip_const:ub}\\
    &\Stackrel{\eqref{eq:lb_on_grad:app}}{\geq}\ \  \frac{\rho^2 m^2}{4}\cdot \inparen{1-\frac{n}{m}}^2\cdot \inparen{1-\beta_1}^2\cdot \inparen{1-\beta_2}^2\\
    &\geq\ \  2mn(1-\rho)\cdot \phi(\beta).
  \end{align*}
  Thus, Equation~\eqref{eq:lb_on_difference:app} holds in this case also.

  \subsubsection{Proof of \cref{lem:opt_solution:app}}

  \begin{proof}[Proof of \cref{lem:opt_solution:app}]
    For each $\sigma\in \zo^2$, let $v^{\sexp{\sigma}}_1\geq v^{\sexp{\sigma}}_2\geq \dots\geq v^{\sexp{\sigma}}_{\abs{I_\sigma}}$ be latent utilities of items in $I_\sigma$.
    Here, $v^{\sexp{\sigma}}_i$ is the $\inparen{\abs{I_\sigma}-i}$-th order statistic of $\abs{I_\sigma}$ draws from $\unif$.
    Hence, it is distributed as $U_{(\abs{I_\sigma}-i:\abs{I_\sigma})}$
    From \cref{fact:os}, we have that for all $\sigma\in \zo^2$ and $i\in I_\sigma$,
    $$\mathrm{Var}\insquare{v^{\sexp{\sigma}}_i} = \mathrm{Var}\insquare{U_{(\abs{I_\sigma}-i:\abs{I_\sigma})}} \leq \frac{1}{\abs{I_\sigma}}.$$
    Since $\abs{I_\sigma}\geq \rho m$,
    $$\mathrm{Var}\insquare{v^{\sexp{\sigma}}_i} \leq \frac{1}{\rho m}.$$
    Using Chebyshev's inequality~\cite{motwani1995randomized}, we get:
    \begin{align*}
      \forall\ \sigma\in \zo^2,\ i\in I_\sigma,\quad \Pr\insquare{  \abs{v^{\sexp{\sigma}}_i - \Ex[v^{\sexp{\sigma}}_i]}
      \geq m^{-\frac14}  } \leq \rho^{-\frac{1}{2}}m^{-\frac12}. %
      \yesnum\label{eq:vi_conc:new}
    \end{align*}
    \noindent For each $i\in \N$, define
    \begin{align*}
      s(i)\coloneqq (i-1)\cdot m^{\frac34} + 1.
    \end{align*}
    For each intersection $\sigma\in \zo^2$, starting from the first order statistic, consider order statistics at intervals of $m^{\frac34}$:
    $$v^{\sexp{\sigma}}_{s(1)},\ \ v^{\sexp{\sigma}}_{s(2)},\ \dots.$$
    Let $\evG$ be the event that all of these order statistics are $m^{-\frac14}$-close to their mean:
    \begin{align*}
      \forall \sigma\in \zo^2,\ i\in \inbrace{1,2,\dots,\floor{\smash{m^{-\frac{3}{4}}\cdot\abs{I_\sigma}}}},\quad
      \abs{v^{\sexp{\sigma}}_{s(i)} - \Ex[v^{\sexp{\sigma}}_{s(i)}]} < m^{-\frac14}.
      \yesnum\label{eq:def_evG}
    \end{align*}
    Taking the union bound over all pairs of $\sigma\in \zo^2$ and $i$, by using Equation~\eqref{eq:vi_conc:new}, we get that
    \begin{align*}
      \Pr[\evG]\geq 1-4 m^{\frac14}\cdot \rho^{-\frac{1}{2}} m^{-\frac12}  = 1-O_\rho(m^{-\frac14}).\yesnum\label{eq:whpcg}
    \end{align*}

    \noindent Conditioned on $\evG$, we have
    \begin{align*}
      \sum_{\sigma\in \zo^2}{\sum_{i=1}^{K_{\sigma}} \gamma_\sigma v_j^{\sigma}}
      \quad
      &\leq\quad  \sum_{\sigma\in \zo^2} {\sum_{j=1}^{m^{-\frac34} K_{\sigma}} m^{\frac34}\cdot \gamma_\sigma v_{s(j)}^{\sigma}}
      \tag{$v_{i}^{\sigma}$ are arranged in non-increasing order}\\
      &\leq\quad \sum_{\sigma\in \zo^2} {\sum_{j=1}^{m^{-\frac34} K_{\sigma}} m^{\frac34}\cdot \gamma_\sigma\cdot \inparen{\Ex\insquare{v_{s(j)}^{\sigma}} + m^{-\frac14}} }
      \tag{Using that $\evG$ implies \cref{eq:def_evG}}\\
      &\leq\quad \sum_{\sigma\in \zo^2} {\sum_{i=1}^{K_{\sigma}} \gamma_\sigma\cdot \inparen{\Ex\insquare{v_{i}^{\sigma}} + \frac{1}{\abs{I_\sigma}}\cdot O\sinparen{m^{\frac14}}  + m^{-\frac14}} }
      \tag{Using that for any $\abs{i-j}\leq k$, $\abs{\Ex\sinsquare{v_i^\sigma}-\Ex\sinsquare{v_j^\sigma}}\leq \frac{1}{\abs{I_\sigma}}\cdot O\sinparen{k}$; see \cref{fact:os}}\\
      &=\quad \sum_{\sigma\in \zo^2} \sum_{i=1}^{K_{\sigma}} \gamma_\sigma \Ex\sinsquare{v_{i}^{\sigma}}
      + O_\rho(nm^{-\frac14}).
      \tag{Using that $\abs{I_\sigma}\geq \rho m$ (due to \cref{asmp:1:app}), $\gamma_\sigma\leq 1$, and $K_\sigma\leq n$}
      \intertext{From \cref{fact:os}: $\Ex\sinsquare{v_{i}^{\sigma}} = \frac{i}{\abs{I_\sigma}+1}\leq \frac{i}{\abs{I_\sigma}}$.
      Hence, $\sum_{i=1}^{K_\sigma}\Ex\sinsquare{v_{i}^{\sigma}}\leq \sum_{i=1}^{K_\sigma}\frac{i}{\abs{I_\sigma}}\leq \int_{1-\frac{K_\sigma}{\abs{I_\sigma}}}^1 \inparen{x+\frac{1}{\rho m}}dx$. This gives us:}
      \sum_{\sigma\in \zo^2}{\sum_{i=1}^{K_{\sigma}} \gamma_\sigma v_j^{\sigma}}
      \quad
      &\leq \quad \sum_{\sigma\in \zo^2} \int_{1-\frac{K_\sigma}{\abs{I_\sigma}}}^1 \inparen{x+\frac1m} dx + O_\rho(nm^{-\frac14})\\
      &\leq\quad f_\gamma(K) + O_{\eta\rho}(nm^{-\frac14}).
      \tagnum{Using that $n\geq \eta m$ (due to \cref{asmp:1:app}) and $K_\sigma\leq n$, hence $O_\rho(nm^{-\frac14})\gg \frac{K_\sigma}{m}$}
      \customlabel{eq:bound_on_obs_util:new}{\theequation}
    \end{align*}
    Similarly, we can show the following lower bound on $\sinangle{\tx,\hw}$ conditioned on $\evG$
    \begin{align*}
      \sum_{\sigma\in \zo^2}{\sum_{i=1}^{K_{\sigma}} \gamma_\sigma v_j^{\sigma}}
      &\geq  f_\gamma(K) - O_{\eta\rho}(nm^{-\frac14}).\yesnum\label{eq:bound_on_obs_util2:new}
    \end{align*}
    Substituting $\gamma=\beta$ in Equations~\eqref{eq:bound_on_obs_util:new} and \eqref{eq:bound_on_obs_util2:new}, and using that for all intersections $\sigma\in \zo^2$, $x$ selects top $K_\sigma$ items from $I_\sigma$ by observed utility, we get that conditioned on $\evG$
    \begin{align*}
      \sinangle{x,\hw} = \sum_{\sigma\in \zo^2}{\sum_{i=1}^{K_{\sigma}} \beta_\sigma v_j^{\sigma}} = f_\beta(K) \pm O_{\eta\rho}(nm^{-\frac14}).
      \end{align*}
      Further, substituting $\gamma=(1,1,1,1)$ in Equations~\eqref{eq:bound_on_obs_util:new} and \eqref{eq:bound_on_obs_util2:new}, and using that for all intersections $\sigma\in \zo^2$, $x$ selects top $K_\sigma$ items from $I_\sigma$ by observed utility, we get that conditioned on $\evG$
      \begin{align*}
        \sinangle{x, w} = \sum_{\sigma\in \zo^2}{\sum_{i=1}^{K_{\sigma}} v_j^{\sigma}} = f_1(K) \pm O_{\eta\rho}(nm^{-\frac14}).
      \end{align*}
    \end{proof}

    \subsubsection{Proof of \cref{lem:conc_of_k:app}}

    \begin{proof}[Proof of \cref{lem:conc_of_k:app}]
      Let $\wt{s}_R\in \R^4_{\geq 0}$ be the vector that rounds each coordinate of $\wt{s}$ to the nearest integer.
      We have
      \begin{align*}
        \norm{\wt{s} - \wt{s}_R}_2 \leq 2.
        \yesnum\label{eq:conc:1}
      \end{align*}
      Since $\norm{\beta}_2\leq 2$, from \cref{lem:sc_lp} we get that $f_\beta$ is $2$-Lipschitz.
      Combined with \cref{eq:conc:1}, this implies that
      \begin{align*}
        \abs{f(\wt{s}) - f(\wt{s}_R)}\leq 2\norm{\wt{s} - \wt{s}_R}_2\leq 4.
        \yesnum\label{eq:conc:3}
      \end{align*}
      Consider the selection $\tx_R$ that, for all $\sigma\in \zo^2$, selects top $K_\sigma$ items from $I_\sigma$ by observed utility.
      Let $\evF$ be the event that the following hold:
      \begin{align*}
        \sinangle{\tx_R, w} &= f_1(\wt{s}_R) \pm O_{\eta\rho}\sinparen{nm^{-\frac{1}{4}}},\quad\text{\white{and}}\quad
        \sinangle{\tx_R,\hw} = f_\beta(\wt{s}_R) \pm O_{\eta\rho}\sinparen{nm^{-\frac{1}{4}}},
        \yesnum\label{eq:def_evF1}\\
        \sinangle{\sx, w} &= f_1(K^\star) \pm O_{\eta\rho}\sinparen{nm^{-\frac{1}{4}}},\quad\text{\white{and}}\quad
        \sinangle{\tx,\hw} = f_\beta(\wt{K}) \pm O_{\eta\rho}\sinparen{nm^{-\frac{1}{4}}}.
        \yesnum\label{eq:def_evF2}
      \end{align*}
      From \cref{lem:opt_solution:app}, we get that
      \begin{align*}
        \Pr\insquare{\evF}\geq 1-O_\rho\sinparen{m^{-\frac{1}{4}}}.
        \yesnum\label{eq:prob_evF}
      \end{align*}
      Since $\tx$ maximizes the observed utility, we have
      \begin{align*}
        \sinangle{\tx,\hw} \geq \sinangle{\tx_R,\hw}.
      \end{align*}
      Hence, conditioned on $\evF$:
      \begin{align*}
        f_\beta(\wt{K}) \pm O_{\eta\rho}\sinparen{nm^{-\frac{1}{4}}}
        \geq f_\beta(\wt{s}_R) \pm O_{\eta\rho}\sinparen{nm^{-\frac{1}{4}}}
        \ \ \Stackrel{\eqref{eq:conc:3}}{\geq}\ \ f_\beta(\wt{s}) \pm O_{\eta\rho}\sinparen{nm^{-\frac{1}{4}}}.
      \end{align*}
      This implies that: Conditioned on $\evF$
      \begin{align*}
        \abs{f_\beta(\wt{K}) - f_\beta(\wt{s})} \leq O_{\eta\rho}\sinparen{nm^{-\frac{1}{4}}}.
        \yesnum\label{eq:conc:2}
      \end{align*}
      From \cref{lem:sc_lp}, we get that $f_\beta$ is $\frac{\beta_1\beta_2}{(1-\rho)m}$-strongly concave.
      This implies that: Conditioned on $\evF$
      \begin{align*}
        \abs{f_\beta(\wt{K}) - f_\beta(\wt{s})} \geq \frac{\beta_1\beta_2}{2(1-\rho)m}\cdot \norm{\wt{K}-\wt{s}}_2^2.
      \end{align*}
      Chaining this inequality with \cref{eq:conc:2}, we get that: Conditioned on $\evF$
      \begin{align*}
        \norm{\wt{K}-\wt{s}}_2
        &\leq \sqrt{\frac{2(1-\rho)m}{\beta_1\beta_2}\cdot O_{\eta\rho}\sinparen{nm^{-\frac{1}{4}}}}
        = O_{\beta\eta\rho}\sinparen{nm^{-\frac{1}{8}}}. \tag{Using that $n=O_\eta(m)$ due to \cref{asmp:1:app}}
      \end{align*}
      Replacing $\wt{K}, \wt{s}, \hw,$ and $\gamma=\beta$ in the above argument by $K^\star, \sx, w$, and $\gamma=(1,1,1,1)$ we get that: Conditioned on $\evF$
      \begin{align*}
        \norm{K^\star-s^\star}_2
        &\leq O_{\beta\eta\rho}\sinparen{nm^{-\frac{1}{8}}}.
      \end{align*}
    \end{proof}

    \subsubsection{Proof of \cref{lem:using_sc}}

    \begin{proof}[Proof of \cref{lem:using_sc}]
      Fix any $L_1,L_2\in \Z_{\geq 0}$.
      Suppose that
      \begin{align*}
        \snorm{\wt{s}-s^\star}_2^2\geq 2nm(1-\rho)\cdot \phi(\beta).
        \yesnum\label{lem:lb_on_diff}
      \end{align*}
      Since the feasible region of \prog{prog:1:app} is a superset of the feasible region of \prog{prog:2:app} and $s^\star$ maximizes $f_1$ over the feasible region of \prog{prog:1:app}, it follows that
      \begin{align*}
        f_1(\wt{s})\leq f_1(s^\star).
        \yesnum\label{eq:sc:1}
      \end{align*}
      Further, using that $f_1$ is $\frac{1}{(1-\rho)m}$-strongly concave (\cref{lem:sc_lp}), we have
      \begin{align*}
        \abs{f_1(\wt{s}) - f_1(s^\star)}
        &\ \ \geq\ \  \frac{1}{2(1-\rho)m} \snorm{\wt{s}-s^\star}_2^2
        \ \ \geq\ \  n\phi(\beta).\tagnum{Using \cref{lem:lb_on_diff}}\customlabel{eq:sc:2}{\theequation}
        \end{align*}
        Hence, substituting \cref{eq:sc:1} in Equation~\eqref{eq:sc:2}, we get
        \begin{align*}
          f_1(\wt{s})
          &\ \ \leq\ \   f_1(s^\star) - n\phi(\beta)
          \ \ \leq\ \  f_1(s^\star)\cdot \inparen{1 - \phi(\beta)}. %
          \tag{Using that $f_1(s^\star)\leq \sum_{\sigma}s^\star_\sigma = n$ (due to \prog{prog:1:app})}
        \end{align*}
        Substituting $\Ex\insquare{\sinangle{\tx, w}}$ and $\Ex\insquare{\sinangle{\sx, w}}$ from Equations~\eqref{eq:exp_in_k} and \eqref{eq:exp_in_k2} respectively, we get that:
        Conditioned on $\evE$
        \begin{align*}
          \Ex\insquare{\sinangle{\tx, w}} - O_{\eta\rho}(nm^{-\frac{1}{8}})
          \leq \Ex\insquare{\sinangle{\sx, w}}\cdot \inparen{1 - \phi(\beta)} + O_{\beta\eta\rho}(nm^{-\frac{1}{8}}).
        \end{align*}
        Using that, conditioned on $\evE$, $\Ex\insquare{\sinangle{\sx, w}}\geq f_1(s^\star)-O_{\eta\rho}(nm^{-\frac{1}{8}}) = \Theta(n)$, we get that:
        Conditioned on $\evE$
        \begin{align*}
          \Ex\insquare{\sinangle{\tx, w}}
          \leq \Ex\insquare{\sinangle{\sx, w}}\cdot \inparen{1 - \phi(\beta) + O_{\beta\eta\rho}(m^{-\frac{1}{8}}) }.
          \yesnum\label{eq:small_ratio}
        \end{align*}
        Then, we have
        \begin{align*}
          \ratio_{\unif}(L,\beta)
          &\coloneqq \Ex_{w}\insquare{ \frac{ \sinangle{\tx,w} }{ \sinangle{\sx,w} } }\\
          &\leq \Ex_{w}\insquare{ \frac{ \sinangle{\tx,w} }{ \sinangle{\sx,w} } \given \evE} + 1-\Pr\insquare{\evE}
          \tag{Using \cref{fact:whp}}\\
          &= \Ex_{w}\insquare{ \frac{ \sinangle{\tx,w} }{ \sinangle{\sx,w} } \given \evE} + O(m^{-\frac{1}{4}})
          \tag{Using \cref{eq:prob_evE}}\\
          &= 1 - \phi(\beta) + O_{\beta\eta\rho}(m^{-\frac{1}{8}}) + O_\rho(m^{-\frac{1}{4}})
          \tag{Using that $\evE$ implies \cref{eq:small_ratio}}\\
          &= 1 - \inparen{\frac{\rho}{2\sqrt{2}}\cdot \min\inbrace{\eta,1-\eta}\cdot (1-\beta_1) \cdot(1-\beta_2)}^2 + O_{\beta\eta\rho}(m^{-\frac{1}{8}}).
          \yesnum\label{eq:proof:app:eq:final}
        \end{align*}
        Pick {$m_0\geq \Omega\inparen{\inparen{\rho\eta\beta_1\beta_2}^8}$.}
        Then, $O_{\beta\eta\rho}(m^{-\frac{1}{8}}) \leq \frac{1}{72} \inparen{\rho\cdot \min\inbrace{\eta,1-\eta}\cdot (1-\beta_1) \cdot(1-\beta_2)}^2$.
        Substituting this in \cref{eq:proof:app:eq:final}, we get
        \begin{align*}
          \ratio_{\unif}(L,\beta) \leq 1 - \inparen{\frac{\rho}{3}\cdot \min\inbrace{\eta,1-\eta}\cdot (1-\beta_1) \cdot(1-\beta_2)}^2.
        \end{align*}
      \end{proof}

      \subsection{Proof of Theorem~\ref{thm:fullrecovery}}\label{sec:proofof:thm:fullrecovery}
      In this section, we give a proof of \cref{thm:fullrecovery}.
      Recall that $\tx$ is defined as  $\sx\coloneqq \argmax_x \inangle{x,w}$,
      and $\sx{}$ is defined as $\tx\coloneqq \argmax_{x\in C(L)} \inangle{x,\hw}$.
      We first prove \cref{coro:high_ratio_intersectional} (presented below).
      Then, \cref{thm:fullrecovery} follows as its corollary.
      \begin{lemma}[\textbf{Intersectional interventions recover near-optimal utility}]\label{coro:high_ratio_intersectional}
        Suppose that $n=\eta\cdot m$ for some fixed $\eta>0$.
        For all $\eps>0$, $\eta>0$, and $p\in \N$,
        there exists a threshold $m_0\in \N$,
        such that for all $m\geq m_0$ and group structures $G_1,\dots,G_p\subseteq [m]$,
        there exist intersectional constraints $L\in \Z_{\geq 0}^{2^p}$ such that for any
        implicit bias parameters $\beta_1,\dots,\beta_p\in (0,1]$ and continuous distributions $\cD$, %
        the optimal constrained selection $\tx\coloneqq \argmax_{x\in C(L)} \inangle{x,\hw}$ satisfies $\inangle{\tx, w}\geq (1-\eps) \cdot \max\nolimits_{x}\inangle{x,w}$
        with probability at least $1-\exp\sinparen{-\frac{\eps^2n}{p 4^p}}$. {Where the probability is over the draws of latent utilities $w$.}
      \end{lemma}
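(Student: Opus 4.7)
I would follow the outline in Section~\ref{sec:proofoverviewof:thm:fullrecovery}. First I fix the intersectional lower bounds
\[
L_\sigma \;\coloneqq\; \min\!\left\{\,(1-\eps)\tfrac{|I_\sigma|}{m}\,n \;+\; 2^{-p}\,n\eps,\ \ |I_\sigma|\,\right\},\qquad \sigma\in\zo^p,
\]
rounded to an integer, and verify feasibility by observing that the uncapped values sum to exactly $n$, so $\sum_\sigma L_\sigma\leq n$; in particular, putting $L_\sigma$ items into each $I_\sigma$ and filling the remaining budget anywhere produces a valid selection. This choice depends only on the sizes $|I_\sigma|$, so it is manifestly independent of $\beta$ and of $\cD$, as promised by the statement.

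Next, I reduce the problem to controlling the per-intersection counts. Let $N_\sigma\coloneqq\sum_{i\in I_\sigma}\sx_i$ and $\wt N_\sigma\coloneqq\sum_{i\in I_\sigma}\tx_i$. By Observations~\ref{obs:wtx_picks_top_latent_util} and~\ref{obs:sx_picks_top_latent_util} (the second of which is immediate from the same swap argument as the first, in fact without reference to $\hw$), within every intersection both $\sx$ and $\tx$ collect the items of largest latent utility. Arranging the latent utilities in $I_\sigma$ in non-increasing order $v_1\geq v_2\geq\cdots$, the elementary inequality
\[
\frac{\sum_{j=1}^{\wt N_\sigma}v_j}{\sum_{j=1}^{N_\sigma}v_j}\ \geq\ \min\!\left\{1,\ \tfrac{\wt N_\sigma}{N_\sigma}\right\},
\]
proved via monotonicity of $x\mapsto x/(c+x)$ exactly as in the overview, reduces the task to showing $\wt N_\sigma/N_\sigma\geq 1-\eps$ simultaneously for every $\sigma$ on a good event; summing the resulting intersection-by-intersection bounds over the numerator and denominator then yields $\sinangle{\tx,w}\geq(1-\eps)\sinangle{\sx,w}$. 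Because the constraints enforce $\wt N_\sigma\geq L_\sigma$, it suffices to lower-bound $L_\sigma/N_\sigma$, i.e., to upper-bound $N_\sigma$.

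The main technical step is therefore proving the concentration statement of Lemma~\ref{lem:overview:1}. Since the $w_i$ are i.i.d.\ from a continuous distribution, the indices of the top $n$ latent utilities form a uniformly random $n$-subset of $[m]$, so $N_\sigma$ is hypergeometric with parameters $(m,|I_\sigma|,n)$ and mean $|I_\sigma|n/m$; by Hoeffding's inequality for sampling without replacement (equivalently, by Azuma on the Doob martingale, each step of which changes $N_\sigma$ by at most $1$) one obtains $\Pr[N_\sigma\geq |I_\sigma|n/m+\Delta]\leq e^{-\Delta^2/n}$. Taking $\Delta\coloneqq 2^{-p-1}n\eps$ and union-bounding over the $2^p$ intersections, the failure probability is at most $2^p e^{-\Delta^2/n}=2^p e^{-n\eps^2/4^{p+1}}$, which can be absorbed into $e^{-\eps^2 n/(p\,4^p)}$ (for $m_0$ large enough that $2^p\leq e^{n\eps^2/(p\,4^p)}$). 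On the complementary good event, a short calculation substituting $L_\sigma$ and $N_\sigma\leq|I_\sigma|n/m+\Delta$ yields $\wt N_\sigma/N_\sigma\geq 1-\eps$ in both regimes $|I_\sigma|/m\lesssim 2^{-p}$ and $|I_\sigma|/m\gtrsim 2^{-p}$. The step I anticipate requiring the most care is exactly this balance: in the small-intersection regime, the additive $2^{-p}n\eps$ cushion in $L_\sigma$ is what saves us, while in the large-intersection regime the proportional $(1-\eps)|I_\sigma|n/m$ term does the work; the concentration slack $\Delta$ must be kept strictly smaller than the $2^{-p}n\eps$ cushion, and it is this constraint that forces the $4^p$ in the final probability bound.
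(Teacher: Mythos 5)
Your proposal is correct and follows essentially the same route as the paper's proof: the same choice of lower bounds $L_\sigma=(1-\eps)\frac{|I_\sigma|}{m}n+2^{-p}n\eps$ (capped at $|I_\sigma|$), the same per-intersection reduction via the top-$k$-by-latent-utility observation and the monotonicity of $x\mapsto x/(c+x)$, and the same hypergeometric concentration for $N_\sigma$ followed by a union bound over the $2^p$ intersections. The only differences are cosmetic (you take $\Delta=2^{-p-1}n\eps$ rather than $2^{-p}n\eps$ and invoke Hoeffding/Azuma for sampling without replacement where the paper cites a hypergeometric tail bound), and neither affects the argument.
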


      \begin{proof}[Proof of \cref{coro:high_ratio_intersectional}.]
        We show that with high probability, for each intersection $\sigma\in \zo^p$, it holds that:
        \begin{align*}
          \frac{\sum_{i\in I_\sigma} \tx_i w_i}{\sum_{i\in I_\sigma} \sx_i w_i} > 1-\eps. \yesnum\label{eq:bound_with_eps_0}
        \end{align*}
        Then, the claimed bound follows as
        \begin{align*}
          \frac{\sinangle{\tx,w}}{\sinangle{\sx,w}} = \frac{\sum_{\sigma\in \zo^p}\sum_{i\in I_\sigma} \tx_i w_i}{\sum_{\sigma\in \zo^p}\sum_{i\in I_\sigma} \sx_i w_i}
          \ \Stackrel{\eqref{eq:bound_with_eps_0}}{>}\ \frac{\sum_{\sigma\in \zo^p}
          \sum_{i\in I_\sigma} \sx_i w_i}{\sum_{\sigma\in \zo^p}\sum_{i\in I_\sigma} \sx_i w_i}\cdot (1-\eps) > 1-\eps. \yesnum\label{eq:ratio_conditioned_on_ce}
        \end{align*}
        \noindent At a high level, our strategy is to find constraints such that $\tx$ selects a similar number of items from each intersection as $\sx{}$.
        This suffices to prove the result due to a property of the implicit bias model:
        If $\tx$ picks $K$ items from any intersection $I_\sigma$, then these are the $K$ items with the highest latent utility in $I_\sigma$ (see \cref{obs:wtx_picks_top_latent_util}).

        \paragraph{(\bf Picking similar number of items $\implies$ similar utility).}
        To see why this is sufficient, let $v_1,v_2,\dots,v_{|I_\sigma|}$ be the latent utilities of items of $I_\sigma$ ordered in non-increasing order of latent utility (or equivalently of observed utility).
        Further, let $N\in \Z_{\geq 0}$ and $\wt{N}\in \Z_{\geq 0}$ be the random variables that count the number of items selected from from intersection $I_\sigma$ by $\sx$ and $\tx$ respectively: %
        $N \coloneqq \sum\nolimits_{i\in I_\sigma}\sx_i$ and $\wt{N} \coloneqq \sum\nolimits_{i\in I_\sigma}\tx_i.$
        Using the above observation, we know $\sx{}$ and $\tx$ select the items corresponding to the latent utilities $v_1,v_2,\dots,v_{N}$ and $v_1,v_2,\dots,v_{\wt{N}}$ respectively.
        Then, we have
        \begin{align*}
          \frac{\sum_{i\in I_\sigma} \tx_i w_i}{\sum_{i\in I_\sigma} \sx_i w_i}
          &\geq \frac{\sum_{j=1}^{\wt{N}}v_j  }{ \sum_{i\in I_\sigma} \sx_i w_i } %
          = \frac{\sum_{j=1}^{\wt{N}}v_j  }{ \sum_{j=1}^{N}v_j}
          = \frac{\sum_{j=1}^{\wt{N}}v_j  }{  \sum_{j=1}^{\wt{N}}v_j + \sum_{j=\wt{N} + 1}^{N}v_j }.
        \end{align*}
        Using that for all $c\geq 0$, $\frac{x}{c+x}$ is an increasing function of $x$, we have
        \begin{align*}
          \frac{\sum_{i\in I_\sigma} \tx_i w_i}{\sum_{i\in I_\sigma} \sx_i w_i} &\geq \frac{ \wt{N}\cdot v_{{\wt{N}}}  }
          {  \wt{N}\cdot v_{{\wt{N}}}  + \sum_{j=\wt{N} + 1}^{N}v_j }%
          \geq \frac{ \wt{N}\cdot v_{{\wt{N}}}  }{  \wt{N}\cdot v_{{\wt{N}}}  + (N-\wt{N})\cdot v_{{\wt{N}}} }
          = \frac{ \wt{N}}{N}.\yesnum\label{eq:lowerbound_on_ratio}
        \end{align*}
        Thus if $\frac{ \wt{N}}{N}>1-\eps$ with high probability (for all $\sigma\in \zo^p$), then Equation~\eqref{eq:bound_with_eps_0} holds.

        \paragraph{\bf (A property of the implicit bias model).}
        Next, we discuss why the observation about the implicit bias model holds.
        Clearly, by the optimality of $\tx$, if it picks $K$ items from any intersection $I_\sigma$, then these are the $K$ items with the highest {\em observed utility} in $I_\sigma$.
        The observation holds because under the model of implicit bias in Equation~\eqref{eq:mult_bias},
        for two items in the same intersection, the order of their observed utility is the same as the order of their latent utility.
        To see this, note that for any two items $i,j\in I_\sigma$, $w_i>w_j$ if and only if $\hw_i = \beta_\sigma w_i > \beta_\sigma w_j = \hw_j$ because $\beta_\sigma>0$.
        (This is the only property of the implicit bias model we use in this proof.)

        It remains to pick constraints such that $\sx{}$ and $\tx$ pick a similar number of items from all intersections.
        For the remaining portion of the proof, fix an intersection $\sigma\in \zo^p.$
        Let $N_\sigma\in \Z_{\geq 0}$ be the random variables that counts the number of items selected by $\sx$ from intersection $I_\sigma$:
        $N_\sigma \coloneqq \sum\nolimits_{i\in I_\sigma}\sx_i$.
        The proof now proceeds in two parts:
        1) We estimate the value $N_\sigma$ with high probability.
        2) We choose the specific constraints, taking inspiration from this estimate.

        The following lemma shows that $N_\sigma$ the distribution of is concentrated.
        We give an overview of its proof here, and defer its full proof to \cref{sec:proofof:lem:1}.
        \begin{lemma}[\textbf{Estimate of $N_\sigma$ with high probability}]\label{lem:1}
          For any fixed $\sigma\in \zo^p$ and $\Delta\geq 2$, it holds that $\Ex[N_\sigma ] = |I_\sigma|\cdot \frac{n}{m}$ and $\Pr_{w }\insquare{ N_{\sigma} > \Ex[N_\sigma] + \Delta }\leq e^{-\frac{\Delta^2}{n}}.$
        \end{lemma}
        \begin{proof}[Proof overview.]
          Observe that $\sx{}$ picks the top $n$ items with the highest latent utility.
          The main idea is that, to count which items are selected, we only need to know the order of the items by latent utility.
          This allows us to consider an equivalent urn model, where items correspond to balls and the color of the balls denotes which intersection the item belongs to.
          Consider the process of picking $n$ balls from the urn.
          Here, $N_\sigma$ is equivalently defined as the number of balls of color $\sigma$ selected.
          It turns out that $N_\sigma$ is a hyper-geometric random variable.
          Using this fact, the result follows from standard properties of hyper-geometric random variables.
        \end{proof}
        \noindent Thus, the \cref{lem:1} shows that for all $\sigma\in \zo^p$, $N_\sigma$ is concentrated around $|I_\sigma|\cdot \frac{n}{m}$.
        The obvious strategy to ensure that $\tx$ picks a close to $N_\sigma$ items, might be to  set $L_\sigma$ to $|I_\sigma|\cdot \frac{n}{m}$ (for all $\sigma\in \zo^p$).
        However, this does not work.
        At a high level, this is because if $|I_\sigma|$ is small, then $\Ex[N_\sigma]$ is comparable to $\Delta$, so the concentration bound in \cref{lem:1} is weak.
        We can overcome this by setting a slightly larger lower bound for small intersections and a slightly smaller lower bound for large intersections.
        {Specifically, we claim that the following intersectional interventions suffice:}
        \begin{align*}
          \text{for all } \sigma \in \zo^p,\qquad L_\sigma \coloneqq \frac{\abs{I_\sigma}}{m}\cdot n\cdot (1-\eps)+2^{-p}{n\eps}.\footnotemark\yesnum\label{eq:inter_const_choice}
        \end{align*}
        \noindent
        To simplify notation set $\Delta\coloneqq 2^{-p}{n\eps}$.
        Let $\evE$ be the event that for all $\sigma\in \zo^p$, $N_\sigma\leq \Ex[N_\sigma] + \Delta$, i.e.,
        $\evE\coloneqq \bigcup_{\sigma\in \zo^p} \inparen{N_\sigma\leq |I_\sigma|\cdot \frac{n}{m} + \Delta}.$
        \footnotetext{If $L_\sigma>|I_\sigma|$, we update $L_\sigma = |I_\sigma|$.}
        Assume that event $\evE$ occurs.
        Since $\tx$ picks at least $L_\sigma$ items from $I_\sigma$, we have
        \begin{align*}
          \frac{\sum_{i\in I_\sigma} \tx_i w_i}{\sum_{i\in I_\sigma} \sx_i w_i}\quad
          &\Stackrel{\eqref{eq:lowerbound_on_ratio}}{\geq} \quad \frac{ L_\sigma}{N_\sigma}\\
          &\geq\quad  \frac{ L_\sigma}{|I_\sigma|\cdot \frac{n}{m} + \Delta}\\
          &\Stackrel{\eqref{eq:inter_const_choice}}{=}\quad
          \frac{ \frac{\abs{I_\sigma}}{m} n\cdot (1-\eps) + \Delta }{|I_\sigma|\cdot \frac{n}{m} + \Delta}\\
          &\geq\quad  1- \frac{ 2^p n^{-1}\Delta }{1 +  m\Delta (n\cdot |I_\sigma|)^{-1} }\\
          &\geq\quad  1- \frac{2^p n^{-1}\Delta }{1 +  n^{-1}\Delta}. %
        \end{align*}
        Choose $m_0\coloneqq \frac{1}{\eps\eta}\cdot 2^{p+1} $.
        Substituting the value of $\Delta$ in the above equation, we get
        $$\frac{\sum_{i\in I_\sigma} \tx_i w_i}{\sum_{i\in I_\sigma} \sx_i w_i} > 1-\eps.$$
        Now the lemma follows by \Eqref{eq:ratio_conditioned_on_ce} and showing that the event $\evE$ occurs with high probability.
        To see this, observe that using the union bound and \cref{lem:1} we have
        \begin{align*}
          \Pr[\evE]
          \qquad
          &\Stackrel{\rm\cref{lem:1}}{\geq}\qquad  1- 2^p e^{-\frac{\Delta^2}{n}}\\
          &\Stackrel{\ensuremath{(\Delta\coloneqq n\eps 2^{-p})}}{=}\qquad 1-2^{p}\cdot e^{-n\eps^2 2^{-2p}}\\
          &= 1-\exp\inparen{-\frac{\eps^2 n}{p 4^{p}}}.\yesnum\label{eq:prob_of_event_e}
        \end{align*}

      \end{proof}
      \begin{proof}[Proof of \cref{thm:fullrecovery} (assuming \cref{coro:high_ratio_intersectional})]
        Set 
        \begin{align*}
            m_0 \coloneqq \Theta\inparen{\frac{p4^p}{\eta\eps^2}\ln{\frac1\eps}}.
            \yesnum
            \label{eq:def_m0:lb}
        \end{align*}
        Consider the constraints for $\eps_0=\frac\eps2$ from \cref{thm:fullrecovery}, and the event $\evE$ for $\eps_0$ from the proof of \cref{thm:fullrecovery}.
        Then from \cref{thm:fullrecovery}, we know conditioned on the event $\evE$,
        \begin{align*}
          R(L,\beta)\coloneqq \frac{\sinangle{\tx,w}}{\sinangle{\sx,w}}>1-\eps_0.
        \end{align*}
        Thus, it follows that $$\Ex_{w}\insquare{ R(L,\beta) \mid \evE} > 1-\eps_0.$$ %
        {Since, $R(L,\beta)\in [0,1]$, we can use \cref{fact:whp}. This gives us}
        \begin{align*}
          \abs{\Ex\nolimits_{w}\insquare{ R(L,\beta) } - \Ex\nolimits_{w}\insquare{ R(L,\beta) \mid \evE}}
          &\leq 1-\Pr\insquare{\evE}\\
          &\leq e^{-\frac{\eps_0^2}{p 4^{p}} n}\\
          &\Stackrel{}{\leq} \frac{\eps}2.\tag{Using \cref{eq:def_m0:lb}}
        \end{align*}
        {Combining the above two equations we get the required result: $\ratio_{\cD}(L,\beta)\coloneqq \Ex_{w}\insquare{ R(L,\beta) } > 1-\eps.$}
      \end{proof}

      \subsubsection{Proof of \cref{lem:1}}\label{sec:proofof:lem:1}
      \begin{proof}
        \sx{} picks the $n$ items with the highest latent utility, i.e., it picks the top $n$ order statistics.
        Thus, $N_\sigma$ counts the number of the top $n$ order statistics which fall in $I_\sigma$.

        \paragraph{Equivalent urn model.}
        Pick $n$ items without replacement and color them blue.
        Color the remaining items red.
        Let $N_\sigma$ be the number of blue balls in $I_\sigma$.

        \noindent Suppose $N_{\sigma}=j$.
        Then number of ways of coloring $j$ out of $|I_\sigma|$ balls in $I_\sigma$ is $\binom{\abs{I_\sigma}}{j}$, and the number of ways of coloring $n-j$ remaining blue balls is $\binom{m-\abs{I_\sigma}}{n-j}$.
        Therefore it follows that:
        \begin{align}
          \Pr[ N_{\sigma}= j  ] = \frac{\binom{\abs{I_\sigma}}{j}\binom{m-j}{n-j}}{\binom{m}{n}}.\label{eq_prob_mk}
        \end{align}

        \begin{mdframed}
          Given numbers $N, K, n $, for an hyper-geometric random variable, $HG$, we have: $\forall k:\: \max(0,n+K-N)\leq k \leq \min(K,n)$
          \begin{align}
            \Pr[HG = k] \coloneqq \frac{\binom{K}{k}\binom{N-K}{n-k}}{\binom{N}{n}}.\label{eq_hyper}
          \end{align}
        \end{mdframed}

        \noindent Thus, one can observe that $N_{\sigma}$ is a hyper-geometric random variable.
        From the well-known properties of the hyper-geometric distribution~\cite{hush2005concentration}, we have that:
        \begin{align}
          \Ex[\ N_{\sigma} \ ] &= |I_\sigma|\cdot \frac{n}{m},\\
          \Pr\left[ N_{\sigma} \geq \Ex[N_\sigma]+ \Delta\right] &\stackrel{\text{\cite{hush2005concentration}}}{\leq} e^{-2(\Delta^2-1)\gamma},
        \end{align}
        where $\gamma \coloneqq \max \big(\frac{1}{|I_\sigma|+1}+\frac{1}{n-|I_\sigma|+1}, \frac{1}{n+1}+\frac{1}{m-n+1}\big) \geq \frac{1}{n+1}$.
        Since for all $\Delta\geq 2$ and $n\geq 0$, $2\frac{\Delta^2-1}{n+1}\geq \frac{\Delta^2}{n}$, we get that
        \begin{align*}
          \Pr\left[ N_{\sigma} \geq \Ex[N_\sigma]+ \Delta\right] \leq e^{-\frac{\Delta^2}{n}}.
        \end{align*}
      \end{proof}

      \renewcommand{\bb}{\ensuremath{\beta_{\rm max}}}

      \subsection{Proof of Proposition~\ref{prop:89}}\label{sec:proofof:prop:89}
      Recall that the utility ratio is defined as
      \begin{align*}
        \ratio(L,\beta)\coloneqq \Ex_{w}\insquare{ \frac{ \sinangle{\tx,w} }{ \sinangle{\sx,w} } },
      \end{align*}
      where $\sx\coloneqq \argmax_x \inangle{x,w}$ is a function of $w$, and $\tx\coloneqq \argmax_{x\in C(L)} \inangle{x,\hw}$ is a function of $w$,$L$, and $\beta$.
      Our goal is to construct an explicit family of groups $G_1$ and $G_2$ and a value of $n$, parameterized by $m$, such that for all $0<\beta_1,\beta_2<1$ and large enough $m$ it holds that:
      \begin{align*}
        \max_{L_1,L_2\in \Z_{\geq 0}}\ratio_\cD(L,\beta)\leq \frac89 + \frac{3}2\cdot \max\inbrace{\beta_1,\beta_2}.
        \end{align*}
        More precisely, we show that
        \begin{align*}
          \max_{L_1,L_2\in \Z_{\geq 0}}\ratio_\cD(L,\beta)\leq \frac89 + \frac{4\bb}{3}\cdot \max\inbrace{\beta_1,\beta_2} + O(m^{-\frac13}).
          \yesnum\label{eq:to_prove:89}
          \end{align*}
          \cref{prop:89} follows from \cref{eq:to_prove:89} by choosing $m_0\geq \Omega\inparen{\frac{1}{\max\inbrace{\beta_1,\beta_2}^3}}$.

          Towards proving \cref{eq:to_prove:89}, we consider the following family:
          \begin{align*}{}
            n = \frac{m}{2}
            \qquad \text{ and }\qquad
            \forall\sigma\in \zo^2,\quad I_\sigma = \frac{m}{4}.
            \yesnum\label{eq:bad_example_construction_2}
          \end{align*}
          Without loss of generality we assume that $L_1\geq L_2.	$
          We consider three cases depending on the values of $L_1$ and $L_2$:
          \begin{enumerate}
            \item {\bf (Case A)} $L_1\leq \frac{m}{4}$ and $L_2\leq \frac{m}{4}$,
            \item {\bf (Case B)} $L_1> \frac{m}{4}$ and $L_2\leq \frac{m}{4}$, and
            \item {\bf (Case C)} $L_1> \frac{m}{4}$ and $L_2> \frac{m}{4}$.
          \end{enumerate}
          \noindent In each case, we prove that \cref{eq:to_prove:89}, for any $L_1,L_2\geq 0$:
          \begin{align*}
            \Ex\insquare{\inangle{\tx, w}} \leq n\inparen{\frac{2}{3} + \bb} + O(1).
            \yesnum\label{eq:proved_bound:89}
          \end{align*}
          Using this one can prove \cref{eq:to_prove:89} as follows:
          Let $\evJ$ be the event that
          \begin{align*}
            \abs{\sinangle{\sx,w} - \Ex[\sinangle{\sx,w} ]} \leq O(nm^{-\frac13}).
            \yesnum\label{eq:def_cj_2}
          \end{align*}
          From \cref{lem:xtr_conc} we have that
          \begin{align*}
            \Pr[\evJ]\geq 1-O(m^{-\frac13}).
            \yesnum\label{eq:whpcj_2}
          \end{align*}
          Using \cref{fact:whp}, \cref{eq:whpcj_2}, the fact that the utility ratio takes values between 0 and 1:
          \begin{align*}
            \Ex\insquare{ \frac{ \sinangle{\tx,w} }{ \sinangle{\sx,w} }}
            &\leq  \Ex\insquare{ \frac{ \sinangle{\tx,w} }{ \sinangle{\sx,w} } \given \evJ } + O(m^{-\frac13})\\
            &\leq \frac{ \Ex\insquare{\sinangle{\tx,w} \given \evJ } }{ \Ex\insquare{\sinangle{\sx,w}} -  O(nm^{-\frac13})}   + O(m^{-\frac13})
            \tag{Using \cref{eq:def_cj_2}}\\
            &\leq \frac{ \Ex\insquare{\sinangle{\tx,w} } + O(nm^{-\frac13}) }{ \Ex\insquare{\sinangle{\sx,w}} -  O(nm^{-\frac13})}   + O(m^{-\frac13})
            \tag{Using \cref{fact:whp}, \cref{eq:whpcj_2}, the fact that $\sinangle{\tx,w}\leq \sum_{i=1}^m \tx_i = n$}\\
            &\leq \frac{ \frac{2n}{3} + n\bb + O(1) + O(nm^{-\frac13}) }{ n\inparen{1-\frac{n}{2m+2}} -  O(nm^{-\frac13})}   + O(m^{-\frac13})
            \tag{Using \cref{lem:xtr_conc} and \cref{eq:proved_bound:89}}\\
            &\leq \frac{ \frac{2n}{3} + n\bb }{ n\inparen{1-\frac{n}{2m}}}   + O_\eta(m^{-\frac13})
            \tag{Using that $\eta\leq \frac{n}{m}\leq 1-\eta$}\\
            &\leq \frac{8}{9} + \frac{4\bb}{3} + O_\eta(m^{-\frac13}).
            \tag{Using \cref{eq:bad_example_construction_2}}
          \end{align*}

          \subsubsection{\bf Additional Facts and Notation}
          \noindent
          From \cref{obs:wtx_picks_top_latent_util}, we know that if $\tx$ selects $k_\sigma\in \N$ items from $I_\sigma$, then these are exactly the $k_\sigma$ items with the largest latent utility in $I_\sigma$.
          Hence, given the utilities $w$, to determine $\sx$ and $\tx$, it suffices to compute
          \begin{align*}
            \forall \sigma\in \zo^2,\quad K_\sigma\coloneqq \sum_{i\in I_\sigma} \tx_i.
          \end{align*}
          Here, $K\in \N^4$ is a random variable.
          If it $K$ was deterministic and known, then since latent utilities are drawn independently from $\cU$, we can use \cref{fact:sum_n} to compute $\Ex\insquare{\sinangle{\tx,w}}$.
          Our strategy is to prove that $K$ satisfies certain linear inequalities with high probability.
          Then, using \cref{fact:sum_n_ub}, we upper bound $\Ex\insquare{\sinangle{\tx,w}}$. %
          Towards this, consider the following event:
          \begin{definition}
            Let $\evF$ be the event that:
            For each $\sigma\in \zo^2$, the intersection $I_\sigma$ has at least $\frac{m}{4}\cdot (1-3\bb)$ items with latent utility higher than $2\bb$.
          \end{definition}
          \noindent
          We show that $\evF$ holds with high probability (\cref{fact:whpcf_2}), and that conditioned on $\evF$, $K$ does not satisfy certain linear-inequalities (as specified in \cref{lem:imposs_res}).
          \begin{lemma}\label{fact:whpcf_2}
            It holds that $\Pr[\evF]\geq 1-4\exp\inparen{-\frac{m}{12}\cdot \bb^2\cdot  (1-2\bb)}$.
          \end{lemma}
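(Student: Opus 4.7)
The plan is to observe that under the construction in \cref{eq:bad_example_construction_2}, $\evF$ is a conjunction of four independent binomial concentration events, one per intersection $\sigma \in \zo^2$. Since each $I_\sigma$ has $|I_\sigma|=m/4$ items and, by the hypothesis of \cref{prop:89}, the latent utilities are independent draws from $\unif$, for each fixed $\sigma$ the count
\[
X_\sigma \coloneqq \abs{\inbrace{i\in I_\sigma \,:\, w_i > 2\bb}}
\]
is distributed as $\mathrm{Binomial}(m/4,\,1-2\bb)$, and in particular $\Ex[X_\sigma] = (m/4)(1-2\bb)$. The failure of the condition defining $\evF$ at intersection $\sigma$ is exactly the event $\{X_\sigma < (m/4)(1-3\bb)\}$, which corresponds to a downward deviation of $(m/4)\bb$ from the mean.

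First I would apply Hoeffding's inequality to each $X_\sigma$ individually. Because $X_\sigma$ is a sum of $m/4$ independent $\{0,1\}$ random variables, Hoeffding yields
\[
\Pr\insquare{X_\sigma < (m/4)(1-3\bb)} \leq \exp\inparen{-\frac{2((m/4)\bb)^2}{m/4}} = \exp\inparen{-\frac{m\bb^2}{2}}.
\]
Next I would rewrite this per-intersection bound into the form demanded by the statement using the trivial inequality $m\bb^2/2 \geq m\bb^2(1-2\bb)/12$, which holds because $1-2\bb \leq 1 \leq 6$ for all $\bb \geq 0$. This gives $\Pr[X_\sigma < (m/4)(1-3\bb)] \leq \exp(-\frac{m}{12}\bb^2(1-2\bb))$ for each fixed $\sigma$. (Note that for $\bb \geq 1/2$ the right-hand side exceeds $1/4$, so the claim is vacuous; the interesting regime is $\bb<1/2$, in which $1-2\bb > 0$ and the bound is nontrivial.)

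Finally, a union bound over the four intersections $\sigma\in\zo^2$ gives
\[
\Pr[\evF] \geq 1 - \sum_{\sigma\in\zo^2}\Pr\insquare{X_\sigma < (m/4)(1-3\bb)} \geq 1 - 4\exp\inparen{-\frac{m}{12}\bb^2(1-2\bb)},
\]
which is the claimed inequality. There is no real obstacle: the argument is a textbook concentration bound combined with a union bound, and the only mild subtlety is that the exponent in the statement is weaker than what Hoeffding directly provides; it is stated in this form presumably to match other expressions that appear later in the analysis of \cref{prop:89}.
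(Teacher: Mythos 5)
Your proof is correct and follows essentially the same route as the paper's: indicator variables for $\{w_i > 2\bb\}$, a per-intersection concentration bound, and a union bound over the four intersections. The only difference is cosmetic: the paper applies a multiplicative Chernoff bound to the relative deviation $\bb$ from the mean $\frac{m}{4}(1-2\bb)$, which produces the exponent $\frac{m}{12}\bb^2(1-2\bb)$ directly, whereas you use additive Hoeffding to get the stronger exponent $\frac{m}{2}\bb^2$ and then weaken it to the stated form.
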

          \begin{proof}
            For all $i\in [m]$, let $Z_i\in \zo$ be the indicator random variable that $w_i>2\bb$, i.e., $Z_i\coloneqq \mathbb{I}\insquare{w_i>2\bb}.$
            Since for all $i\in [m]$, $w_i$ is drawn independently from $\unif$, we have
            \begin{align*}
              \Pr[Z_i=1]=1-2\bb
              \quad\text{and}\quad
              \forall i,j\in [m],i\neq j,\quad
              \text{$Z_i$ and $Z_j$ are independent.}
              \yesnum\label{eq:probZ_2}
            \end{align*}
            We prove the following:
            For each $\sigma\in \zo^2$, it holds that
            \begin{align*}
              \Pr\insquare{ \sum\nolimits_{i\in I_\sigma} Z_i   < \frac{m}{4}\cdot (1-3\bb)   }
              \leq \exp\inparen{\frac{m\bb^2 (1-2\bb)}{12}}.
              \yesnum\label{eq:whp:89}
            \end{align*}
            Then, the lemma follows by a union bound over intersections $\sigma\in \zo^2$.
            We prove \cref{eq:whp:89} as follows:
            \begin{align*}
              \Pr\insquare{ \sum\nolimits_{i\in I_\sigma} Z_i   < \frac{m}{4}\cdot (1-3\bb)   }
              \ \
              &\Stackrel{}{\leq}\ \ \Pr\insquare{ \sum\nolimits_{i\in I_\sigma} Z_i < \frac{m}{4}\cdot (1-2\bb)\cdot (1-\bb)}
              \tag{Using that $\bb\geq 0$}\\
              &\Stackrel{}{=}\ \ \Pr\insquare{ \sum\nolimits_{i\in I_\sigma} Z_i < (1-\bb)\cdot \Ex\insquare{\sum\nolimits_{i\in I_\sigma} Z_i}}
              \tag{Using linearity of expectation, \cref{eq:probZ_2}, and \cref{eq:bad_example_construction_2}}\\
              &\leq \ \  \exp\inparen{\frac{1}{3}\cdot \bb^2 \cdot \Ex\insquare{\sum\nolimits_{i\in I_\sigma} Z_i}}\\
              &\leq \ \  \exp\inparen{\frac{m}{12}\cdot \bb^2 \cdot (1-2\bb)}.
              \tag{Using linearity of expectation, \cref{eq:probZ_2}, and \cref{eq:bad_example_construction_2}}
            \end{align*}
          \end{proof}
          \begin{lemma}\label{lem:imposs_res}
            Conditional on $\evF$  none of the following events can hold:
            \begin{enumerate}
              \item {\bf (Event 1):} $K_{00}<\frac{m}{4}\cdot\inparen{1-3\bb}$ and $K_{01}>0$ and $K_{01}+K_{11}>L_1.$
              \item {\bf (Event 2):} $K_{00}<\frac{m}{4}\cdot\inparen{1-3\bb}$ and $K_{10}>0$ and $K_{10}+K_{11}>L_2.$
              \item {\bf (Event 3):} $K_{00}<\frac{m}{4}\cdot\inparen{1-3\bb}$ and $K_{01}>0$ and $K_{10}>0$  and $K_{11}<\frac{m}{4}.$
            \end{enumerate}
          \end{lemma}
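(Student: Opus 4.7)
The plan is to prove the lemma by a standard exchange argument anchored on the optimality of $\tx$. Since $\tx$ maximizes $\sinangle{x,\hw}$ subject to $K_{10}+K_{11}\ge L_1$, $K_{01}+K_{11}\ge L_2$, and $\sum_\sigma K_\sigma=n$, no modification that preserves feasibility and the total selection size can strictly increase the observed utility. The strategy is to show that, conditional on $\evF$, each of the three events admits exactly such an improving modification, yielding a contradiction. The shared ingredient is the following: under $\evF$ and $K_{00}<\tfrac{m}{4}(1-3\bb)$, the set $I_{00}$ contains at least one \emph{unselected} item $j^{\star}$ with latent utility $w_{j^{\star}}>2\bb$; because items in $I_{00}$ experience no implicit bias, $\hw_{j^{\star}}=w_{j^{\star}}>2\bb$.

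For Event 1, I would swap $j^{\star}$ for the selected item $i^{\star}\in I_{01}$ with smallest observed utility (which exists since $K_{01}>0$). The $G_1$-count $K_{10}+K_{11}$ is unaffected so the $L_1$-constraint is preserved; the $G_2$-count $K_{01}+K_{11}$ drops by one, but by the hypothesis $K_{01}+K_{11}>L_1\ge L_2$ (using the WLOG ordering $L_1\ge L_2$) it remains $\ge L_2$. The observed-utility change is $\hw_{j^{\star}}-\hw_{i^{\star}}>2\bb-\beta_2 w_{i^{\star}}\ge 2\bb-\beta_2\ge \bb>0$, contradicting optimality. Event 2 is handled symmetrically by swapping $j^{\star}$ for a selected item of $I_{10}$ with smallest observed utility, using the analogous slack in the $L_1$-constraint.

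For Event 3, a single swap need not work, so I would perform a double swap: remove a selected item $i_1^{\star}\in I_{01}$ and a selected item $i_2^{\star}\in I_{10}$ (both exist by hypothesis), and add $j^{\star}\in I_{00}$ together with an unselected item $j_2^{\star}\in I_{11}$ (which exists since $K_{11}<\tfrac{m}{4}=|I_{11}|$). Each of $K_{10}+K_{11}$ and $K_{01}+K_{11}$ changes by $-1+1=0$, so both lower-bound constraints are preserved and the total size is unchanged. The loss in observed utility is $\beta_2 w_{i_1^{\star}}+\beta_1 w_{i_2^{\star}}<\beta_1+\beta_2\le 2\bb$ (the first inequality is strict almost surely, since latent utilities are $<1$ under the continuous uniform distribution), while the gain is at least $w_{j^{\star}}+\beta_1\beta_2 w_{j_2^{\star}}>2\bb$, again contradicting the optimality of $\tx$. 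The main obstacle is carefully matching the feasibility checks against the precise strict inequalities in each event: the delicate step is leveraging the WLOG $L_1\ge L_2$ in Events 1 and 2 to upgrade the event's hypothesis into the exact slack needed in the group whose count decreases, and in Event 3 using the symmetry of the double swap so that neither group constraint tightens.
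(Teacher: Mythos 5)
Your overall strategy is exactly the paper's: under $\evF$ and $K_{00}<\frac{m}{4}(1-3\bb)$, extract an unselected $j^\star\in I_{00}$ with $\hw_{j^\star}=w_{j^\star}>2\bb$, and contradict the optimality of $\tx$ with a single swap (Events 1 and 2) or a double swap through $I_{11}$ (Event 3). Your Event 3 argument matches the paper's and is in fact slightly more careful about why the inequality is strict.

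However, your feasibility check for Event 2 does not go through. You fix the convention that $K_{10}+K_{11}$ is the $G_1$-count (bounded below by $L_1$) and $K_{01}+K_{11}$ is the $G_2$-count (bounded below by $L_2$). Under that convention, removing an item of $I_{10}$ decreases the $G_1$-count, so you need $K_{10}+K_{11}>L_1$ to keep the $L_1$ constraint satisfied after the swap; but Event 2 only supplies $K_{10}+K_{11}>L_2$, and with your WLOG $L_1\ge L_2$ this gives $K_{10}+K_{11}\ge L_2+1$, which says nothing about $L_1$. The trick that rescued Event 1 (upgrading $>L_1$ to $>L_2$) points in the wrong direction here and cannot rescue Event 2; "handled symmetrically" hides a real asymmetry that your WLOG introduces. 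The intended reading of the lemma — and the one under which the paper's applications of it (e.g.\ deducing ``$K_{01}+K_{11}=L_1$ or $K_{01}=0$'' from the failure of Event 1) make sense — is that each count in the event statement is paired with the lower bound of the very group that the removed item belongs to, so the stated strict inequality is precisely the one-unit slack consumed by the swap, and no comparison between $L_1$ and $L_2$ is needed. With that pairing, Events 1 and 2 are genuinely symmetric and your single-swap argument closes; as written, your proof of Event 2 has a gap.
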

          \begin{proof}
            Assume that $\evF$ occurs.

            \smallskip
            \noindent {\bf (Event 1)} Since $\evF$ occurs and $K_{00}<\frac{m}{4}\cdot \inparen{1-2\bb}$, we can pick a $j^\star\in I_{00}$ such that
            \begin{align*}
              w_{j^\star}> 2\bb \quad\text{and}\quad \tx_{j^\star} = 0.\yesnum\label{eq:jstar_2}
            \end{align*}
            Consider any $i^\star\in I_{01}$ such that $\tx_{i^\star}=1$.
            Construct a selection $\overline{x}\in \zo^m$ by swapping $i^\star$ and $j^\star$ in $\tx$.
            Thus, compared to $\tx$, $\overline{x}$ picks one less element from $G_1$ and the same number of elements from $G_2$.
            In particular, if $K_{01}+K_{11}>L_1$, then $\overline{x}$ satisfies the lower bound constraints.
            Further, we can see that $\overline{x}$ has a higher observed utility than $\tx$ as follows:
            \begin{align*}
              \inangle{\overline{x}, \hw} = \sinangle{\tx,\hw}-\hw_{i^\star}+\hw_{j^\star} \Stackrel{\eqref{eq:jstar_2}}{>}  \sinangle{\tx,\hw}-\hw_{i^\star}+2\bb \Stackrel{}{>} \sinangle{\tx,\hw}+\bb.
            \end{align*}
            In the last step we use $\hw_{i^\star}\coloneqq \beta_1 w_{i^\star}\leq \beta_1\leq \bb$.
            We have a contradiction since $\overline{x}$ satisfies the lower bound constraint and has higher observed utility than $\tx$.

            \smallskip
            \noindent {\bf (Event 2)}
            The proof for this case follows by replacing $L_1$, $\beta_1$, $K_{10}$, and $K_{01}$ by $L_2$, $\beta_2$, $K_{01}$, and $K_{10}$ respectively in the proof for Event 1.

            \smallskip
            \noindent {\bf (Event 3)} Since $\evF$ occurs and $K_{00}<\frac{m}{4}\cdot \inparen{1-2\bb}$, we can pick a $j_1\in I_{00}$ such that
            \begin{align*}
              w_{j_1}> 2\bb \quad\text{and}\quad \tx_{j_1} = 0.\yesnum\label{eq:jstar_3}
            \end{align*}
            Consider any $i_1\in I_{01}$ and $i_2\in I_{10}$ such that $\tx_{i_0}=1$ and $\tx_{i_2}=1$ (these exist as $K_{01}>0$ and $K_{10}>0$ ).
            Further, consider any $j_2\in I_{11}$ such that $\tx_{i_0}=0$ (this exists as $K_{11}<\frac{m}{4}$).
            Construct a selection $\overline{x}\in \zo^m$ by swapping $i_1$ and $i_2$ for $j_1$ and $j_2$ in $\tx$.
            Thus, compared to $\tx$, $\overline{x}$ the same number of items from $G_1$ and $G_2$.
            In particular, then $\overline{x}$ satisfies the lower bound constraints.
            Further, we can see that $\overline{x}$ has a higher observed utility than $\tx$ as follows:
            \begin{align*}
              \inangle{\overline{x}, \hw} = \sinangle{\tx,\hw}-\hw_{i_1}-\hw_{i_2}+\hw_{j_1}+\hw_{j_2} \Stackrel{\eqref{eq:jstar_3}}{>}
              \sinangle{\tx,\hw}-\beta_1-\beta_2+2\bb+0 \Stackrel{}{\geq} \sinangle{\tx,\hw}+\bb.
            \end{align*}
            Where we use $\hw_{i_1}\coloneqq \beta_1 w_{i_1}\leq \beta_1\leq \bb$ (similarly for $i_2$).
            We have a contradiction since $\overline{x}$ satisfies the lower bound constraint and has higher observed utility than $\tx$.
          \end{proof}

          \subsubsection{Step 1: Prove \cref{eq:proved_bound:89} for Case A}

          \noindent Recall that in Case A, $L_1\leq \frac{m}{4}$ and $L_2\leq \frac{m}{4}$.
          This step relies on the following lemma.
          \begin{lemma}\label{lem:thm_2_case1}
            If both $L_1\leq \frac{m}{4}$ and $L_2\leq \frac{m}{4}$,
            then conditioned on $\evF$, $K_{00} \geq \frac{m}{4}\cdot(1-3\bb).$
          \end{lemma}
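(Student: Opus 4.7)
The plan is to argue by contradiction. Suppose $K_{00} < \tfrac{m}{4}(1-3\bb)$. Using the cardinality constraint $K_{00}+K_{10}+K_{01}+K_{11}=n=m/2$ together with the hypothesis on $K_{00}$, we obtain
\begin{align*}
K_{10}+K_{01}+K_{11} \;>\; \tfrac{m}{4}(1+3\bb) \;\geq\; \tfrac{m}{4}.
\end{align*}
Combining this with the assumptions $L_1,L_2 \leq m/4$ and the bound $K_\sigma \leq |I_\sigma| = m/4$ for every $\sigma$, I plan to perform a short case analysis showing that at least one of the three events in \cref{lem:imposs_res} must occur, contradicting $\evF$.

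The case split proceeds on the value of $K_{11}$ and on which of $K_{10}, K_{01}$ are positive. If $K_{11}=m/4$, then $K_{10}+K_{01}>0$; if $K_{10}>0$, then $K_{10}+K_{11}>m/4\geq L_1$ and (a corrected form of) Event~1 of \cref{lem:imposs_res} holds; if instead $K_{01}>0$, then $K_{01}+K_{11}>m/4\geq L_2$ and Event~2 holds. If $K_{11}<m/4$ and both $K_{10}, K_{01}>0$, Event~3 holds directly. In the remaining possibilities one of $K_{10}, K_{01}$ is zero: say $K_{01}=0$; then $K_{10}+K_{11}>m/4\geq L_1$ and in particular $K_{10}>0$ (since $K_{11}<m/4$), giving Event~1; the case $K_{10}=0$ is symmetric and yields Event~2. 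The final sub-case $K_{10}=K_{01}=0$ forces $K_{11}>m/4$, contradicting $K_{11}\leq m/4$. In every branch we reach a contradiction with $\evF$.

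The only subtle step is verifying that each branch really produces a \emph{strict} inequality of the form required by the events in \cref{lem:imposs_res} (e.g.\ $K_{10}+K_{11}>L_1$ rather than $\geq L_1$). This is where the assumption $L_1,L_2\leq m/4$ is used together with $K_{10}+K_{01}+K_{11}>m/4$: whenever one of $K_{10}$ or $K_{01}$ vanishes, the surviving pair must exceed $m/4$ strictly, and when $K_{11}$ saturates the upper bound $m/4$ the strictness again comes from the fact that $L_\ell\leq m/4$ while the relevant sum strictly exceeds $m/4$. Thus the main obstacle is bookkeeping: tracking which of the three swap arguments in \cref{lem:imposs_res} is applicable in each regime, rather than any genuinely new argument. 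Once the contradiction is established, the lemma follows.
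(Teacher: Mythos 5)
Your proposal is correct and follows essentially the same route as the paper: a proof by contradiction that combines the counting identity $\sum_\sigma K_\sigma = n = \frac{m}{2}$, the bounds $K_\sigma \leq |I_\sigma| = \frac{m}{4}$ and $L_1, L_2 \leq \frac{m}{4}$, and \cref{lem:imposs_res} to rule out $K_{00} < \frac{m}{4}(1-3\bb)$. The only difference is organizational — you case-split on the saturation of $K_{11}$ and the positivity of $K_{10}, K_{01}$ and show an impossible event fires, whereas the paper works from the contrapositives of Events 1 and 2 to force equalities and then derives counting contradictions — and you correctly flag the index mislabeling in the paper's statement of Events 1 and 2, which your argument implicitly repairs.
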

          \noindent The proof of \cref{lem:thm_2_case1} appears at the end of this section.
          Since $\tx$ selects $n$ items, we have that $\sum_{\sigma\in \zo^2}K_\sigma=n$.
          Using this and \cref{lem:thm_2_case1}, it follows that:
          Conditioned on $\evF$
          \begin{align*}
            K_{01}+K_{10}+K_{11}=n-K_{00}\leq \frac{m}{4}\cdot (1+3\bb).
            \yesnum\label{eq:bound_on_sum}
          \end{align*}
          Further, since $I_{00}=\frac{m}{4}$ (by \cref{eq:bad_example_construction_2}), it follows that
          \begin{align*}
            K_{00}\leq\frac{m}{4}.
            \yesnum\label{eq:bound_on_sum:b:new}
          \end{align*}
          Then, we have:
          \begin{align*}
            \Ex\insquare{\sinangle{\tx,w}\given \evF}
            &\quad =\quad \Ex\insquare{  \sum\nolimits_{i\in I_{00}} \tx_i w_i \given \evF}+\Ex\insquare{ \sum\nolimits_{i\in I_{01}\cup I_{10}\cup I_{11} } \tx_i w_i \given \evF}
            \tag{Using linearity of expectation}\\
            \intertext{Using \cref{eq:bound_on_sum:b:new}, we can apply \cref{fact:sum_n_ub} with $S\coloneqq I_{00}$, $U\coloneqq \frac{m}{4}$, and $\evF$.
            On substituting $\abs{I_{00}}=\frac{m}{4}$ and $\Pr[\evF]$ from \cref{fact:whpcf_2}, in the result we get:}
            &\ \ \Stackrel{}{=} \ \
            \frac{m}{4}\inparen{1-\frac{\frac{m}{4}+1}{\frac{m}2+2}}
            +\Ex\insquare{\sum\nolimits_{i\in I_{01}\cup I_{10}\cup I_{11} } \tx_i \given \evF}
            + ne^{-\Omega(m\cdot \bb^2)}
            \intertext{Similarly, using \cref{eq:bound_on_sum}, to apply \cref{fact:sum_n_ub} with $S\coloneqq I_{00}$, $U\coloneqq \frac{m}{4}$, and $\evF$, we get:}
            &\quad \Stackrel{}{\leq}\quad   \frac{m}{4}\inparen{1-\frac{\frac{m}{4}+1}{\frac{m}2+2}}
            + ne^{-\Omega(m\cdot \bb^2)}
            + \frac{m}{4}\cdot (1+3\bb)\cdot \inparen{1-\frac{\frac{m}{4}\cdot(1+3\bb)+1}{\frac{3m}2+2}}\\
            &\quad \Stackrel{}{=} \quad  \frac{m}{4}\inparen{1-\frac{1}{2}}
            + \frac{m}{4}\cdot (1+3\bb)\cdot \inparen{1-\frac{1+3\bb}{6}} + O(1)\\
            &\quad \Stackrel{}{\leq} \quad  \frac{m}{4}\inparen{1-\frac{1}{2}}
            + \frac{m}{4}\cdot \inparen{\frac{5}{6}+2\bb}+O(1)
            \tag{Using $\bb\geq 0$}\\
            &\quad \Stackrel{}{=}\quad   n\cdot \inparen{\frac{2}{3}+\bb}+O(1).
            \tag{Using \cref{eq:bad_example_construction_2}}
          \end{align*}
          \begin{proof}[Proof of \cref{lem:thm_2_case1}]
            Suppose, toward a contradiction, that $\evF$ holds and $K_{00} < \frac{m}{4}\cdot (1-3\bb)$.
            Then from Event 1 in \cref{lem:imposs_res} we get that either $K_{01}+K_{11}=L_1$ or $K_{01}=0$.
            Similarly, from Event 2 in \cref{lem:imposs_res} we get that either $K_{10}+K_{11}=L_2$ or $K_{10}=0$.
            These give us four cases:
            \paragraph{{\bf (Case I)} $K_{01}=K_{10}=0$:}
            Since $K_{01}=K_{10}=0$, we have that $K_{11} = n  - K_{00}$.
            Hence, if  $K_{00} < \frac{m}{4}\cdot(1-3\bb)$, then $K_{11}>\frac{m}{4}\cdot (1+3\bb).$
            Which is a contradiction because $\abs{I_{11}}=\frac{m}{4}.$

            \paragraph{{\bf (Case II)} $K_{01}+K_{11}=L_1$ and $K_{10}=0$:}
            Since $n = \sum_{\sigma} K_\sigma$, in this case we have
            $n \leq K_{00} + L_1$.
            If $K_{00} < \frac{m}{4}\cdot(1-3\bb)$, then $n < \frac{m}{4}\cdot(1-3\bb) + L_1$.
            Further, in Case A, we know that $L\leq \frac{m}{4}$.
            Hence, $n < \frac{m}2 -\frac{m\bb}4$.
            This is a contradiction since $n=\frac{m}{2}$ (see \cref{eq:bad_example_construction_2}).

            \paragraph{{\bf (Case III)} $K_{10}+K_{11}=L_2$ and $K_{01}=0$:}
            This follows by replacing $L_1$ by $L_2$ in the proof of Case II.

            \paragraph{{\bf (Case IV)} $K_{01}+K_{11}=L_1$ and $K_{10}+K_{11}=L_2$:}
            If $K_{10}=0$, then this is a special case of Case II.
            Hence, we must have $K_{10}>0$.
            Similarly, we must have $K_{01}>0$.
            Since $K_{10}+K_{11}=L_2$ and $K_{10}>0$, $K_{11}<L_2$.
            Further, as $L_2\leq \frac{m}{4}$, it follows that $K_{11}<\frac{m}{4}$
            Thus, we have a contradiction due to Event 3 in \cref{lem:imposs_res}.
          \end{proof}

          \subsubsection{\bf Step 2: Prove \cref{eq:proved_bound:89} for Case B}

          \noindent Recall that in Case B, $L_1 > \frac{m}{4}$ and $L_2\leq \frac{m}{4}$.
          This step relies on the following lemma.
          \begin{lemma}\label{lem:thm_2_case2}
            If both $L_1 > \frac{m}{4}$ and $L_2\leq \frac{m}{4}$,
            then conditioned on $\evF$, $K_{00} \geq (n-L_1)\cdot (1-3\bb).$
          \end{lemma}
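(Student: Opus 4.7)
The plan is to mirror the proof of \cref{lem:thm_2_case1}, arguing by contradiction and then carrying out the same four-way case split based on Events 1 and 2 of \cref{lem:imposs_res}. Suppose toward a contradiction that $\evF$ holds but $K_{00}<(n-L_1)(1-3\bb)$. Since Case B assumes $L_1>\tfrac{m}{4}$ and the construction sets $n=\tfrac{m}{2}$, we have $n-L_1<\tfrac{m}{4}$, hence $(n-L_1)(1-3\bb)<\tfrac{m}{4}(1-3\bb)$. Thus in particular $K_{00}<\tfrac{m}{4}(1-3\bb)$, which is precisely the hypothesis needed to invoke the three impossibility events of \cref{lem:imposs_res}. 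From Events 1 and 2 we deduce that either $K_{01}=0$ or $K_{01}+K_{11}=L_1$, and either $K_{10}=0$ or $K_{10}+K_{11}=L_2$.

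The four sub-cases are handled as follows. In Sub-case I ($K_{01}=K_{10}=0$), the identity $\sum_\sigma K_\sigma=n$ and $K_{11}\le \tfrac{m}{4}$ force $K_{00}\ge n-\tfrac{m}{4}=\tfrac{m}{4}$, which exceeds $(n-L_1)(1-3\bb)$ and contradicts the assumption. In Sub-case II ($K_{01}+K_{11}=L_1$, $K_{10}=0$), the same identity gives $K_{00}=n-L_1\ge(n-L_1)(1-3\bb)$, again contradicting the assumption. In Sub-case III ($K_{10}+K_{11}=L_2$, $K_{01}=0$), we get $K_{00}=n-L_2\ge n-L_1$ (using $L_2\le L_1$), another contradiction. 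Sub-case IV ($K_{01}+K_{11}=L_1$, $K_{10}+K_{11}=L_2$, with $K_{01},K_{10}>0$) must invoke Event 3, which yields $K_{11}\ge \tfrac{m}{4}$ and hence $K_{11}=\tfrac{m}{4}$; but then $K_{10}=L_2-\tfrac{m}{4}\le 0$ since $L_2\le \tfrac{m}{4}$ in Case B, contradicting $K_{10}>0$.

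Since every sub-case leads to a contradiction, no such $K$ exists and the bound $K_{00}\ge(n-L_1)(1-3\bb)$ holds conditional on $\evF$. No real obstacle arises: the key difference from \cref{lem:thm_2_case1} is that, because $L_1$ can now be larger than $\tfrac{m}{4}$, the bound from Sub-case II degrades from $\tfrac{m}{4}$ to $n-L_1$, which is what forces the weaker target $(n-L_1)(1-3\bb)$ instead of $\tfrac{m}{4}(1-3\bb)$. The slack factor $(1-3\bb)$ is not actually needed by this particular contradiction argument, but keeping it preserves uniformity with \cref{lem:thm_2_case1} and is exactly what is required downstream to plug into \cref{fact:sum_n_ub} when upper-bounding $\Ex[\langle\tx,w\rangle\mid\evF]$ and chaining through to \cref{eq:proved_bound:89}.
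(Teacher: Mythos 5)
Your proposal is correct and follows essentially the same route as the paper: assume $K_{00}<(n-L_1)(1-3\bb)$, note this forces $K_{00}<\tfrac{m}{4}(1-3\bb)$ since $L_1>\tfrac{m}{4}$, invoke the three events of \cref{lem:imposs_res}, and rule out the same four sub-cases. The only (harmless) differences are cosmetic: in Sub-cases I and III you derive the contradiction from the counting identity $\sum_\sigma K_\sigma=n$ rather than from the capacity bound $K_{11}\le\abs{I_{11}}=\tfrac{m}{4}$ violated by the $G_1$ lower bound, and in Sub-case IV you run Event 3 in the contrapositive direction ($K_{11}\ge\tfrac{m}{4}$ forces $K_{10}\le 0$) instead of deriving $K_{11}<\tfrac{m}{4}$ first.
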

          \noindent The proof of \cref{lem:thm_2_case2} appears at the end of this section.
          Since $\tx$ selects $n$ items, we have that $\sum_{\sigma\in \zo^2}K_\sigma=n$.
          Using this and \cref{lem:thm_2_case2}, it follows that:
          Conditioned on $\evF$
          \begin{align*}
            K_{01}+K_{10}+K_{11}
            =n-K_{00}
            &\leq L_1+3\bb\cdot (n-L_1).
            \yesnum\label{eq:step_2:new}
          \end{align*}
          Since $K_{10}\geq 0$, it follows that
          \begin{align*}
            K_{01}+K_{11}\leq L_1+3\bb\cdot (n-L_1).\yesnum\label{eq:bound_on_sum_2}
          \end{align*}
          Further, substituting $K_{01}+K_{11}\geq L_1$ in \cref{eq:step_2:new}, we get
          \begin{align*}
            K_{10}\leq 3\bb\cdot (n-L_1).\yesnum\label{eq:bound_on_sum_2_2}
          \end{align*}
          Finally, as $K_{01}+K_{11}\leq L_2$ and $K_{10}\geq 0$, we have
          \begin{align*}
            K_{00} = n-K_{01}-K_{10}-K_{11}\leq n-L_1.\yesnum\label{eq:bound_on_sum_2_3}
          \end{align*}
          Then, we have
          \begin{align*}
            \Ex\insquare{\sinangle{\tx,w}\given \evF}
            \quad
            &=\quad \Ex\insquare{  \sum\nolimits_{i\in I_{00} } \tx_i w_i \given \evF}+\Ex\insquare{ \sum\nolimits_{i\in I_{10} } \tx_i w_i \given \evF}
            +\Ex\insquare{ \sum\nolimits_{i\in I_{01}\cup I_{11} } \tx_i w_i \given \evF}\\
            \intertext{
            Using \cref{eq:bound_on_sum_2,eq:bound_on_sum_2_2,eq:bound_on_sum_2_3} respectively, we can apply \cref{fact:sum_n_ub} for each of the three terms in the RHS of the above equation.
            On substituting $\abs{I_{00}}=\frac{m}{4}$ and $\Pr[\evF]$ from \cref{fact:whpcf_2}, in the result we get:
            }
            \Ex\insquare{\sinangle{\tx,w}\given \evF}
            &\Stackrel{}{\leq} \qquad (n-L_1)\cdot \inparen{1-\frac{(n-L_1)+1}{\frac{m}2+2}}
            +3\bb\cdot (n-L_1)\cdot \inparen{1-\frac{3\bb\cdot (n-L_1)+1}{2(\frac{m}2+1)}}\\
            &\qquad + \quad (L_1+3\bb\cdot (n-L_1))\inparen{1-\frac{L_1+3\bb\cdot (n-L_1)+1}{\frac{m}2+2}} + ne^{-\Omega(m\cdot \bb^2)}\\
            &=\qquad (n-L_1)\cdot \inparen{1-2\cdot \frac{n-L_1}{m}}
            + 3\bb\cdot (n-L_1)\cdot \inparen{1-\frac{3\bb\cdot (n-L_1)}{m}}\\
            &\qquad + (L_1+3\bb\cdot (n-L_1))\inparen{1-2\cdot \frac{L_1+3\bb\cdot (n-L_1)}{m}}+O(1)\\
            &\Stackrel{}{=} \qquad -(3+4\bb+12\bb^2)\cdot \frac{L_1^2}{2n} + 2(1+\bb+6\bb^2)\cdot L_1 - 6\bb^2 n +  O(1).
            \yesnum\label{eq:tmp:89}
          \end{align*}
          The above equation is a strongly-concave and quadratic function of $L_1$.
          It achieves its maximum at
          $$L_1 = \frac{2n\cdot (1+\bb+6\bb^2)}{3+4\bb+12\bb^2}.$$
          Substituting this in \cref{eq:tmp:89}, we get
          \begin{align*}
            \Ex\insquare{\sinangle{\tx,w}\given \evF}\ \
            &\leq\ \  \frac{m}{3}\inparen{\frac{1+2\bb+4\bb^2}{1+\frac43\bb+4\bb^2}}\\
            &\leq \ \ \frac{m}{3}\inparen{1+\frac{2\bb}{3}}\\
            &\Stackrel{\eqref{eq:bad_example_construction_2}}{=}\ \  n\inparen{\frac23+\frac{4\bb}{9}}.
          \end{align*}
          \begin{proof}[Proof of \cref{lem:thm_2_case2}]
            Suppose, toward a contradiction, that $\evF$ holds and $K_{00} < (n-L_1)\cdot (1-3\bb).$
            Hence,
            \begin{align*}
              K_{00}
              &\ \ <\ \  (n-L_1)\cdot  (1-3\bb)\yesnum\label{eq:used_lat}\\
              &\ \ \leq\ \  \inparen{n-\frac{m}{4}}\cdot (1-3\bb)
              \tag{Using that $L_1\geq \frac{m}{4}$}\\
              &\ \ \Stackrel{\eqref{eq:bad_example_construction_2}}{=}\ \  \frac{m}{4}\cdot(1-3\bb).\yesnum\label{eq:used_lat_2}
            \end{align*}
            Then from Event 1 in \cref{lem:imposs_res} we get that either $K_{01}+K_{11}=L_1$ or $K_{01}=0$.
            Similarly, from Event 2 in \cref{lem:imposs_res} we get that either $K_{10}+K_{11}=L_2$ or $K_{10}=0$.
            These give us four cases:
            \paragraph{{\bf (Case I)} $K_{01}=K_{10}=0$:}
            Since $\tx$ satisfies the lower bounds, we have $K_{11}  + K_{01} \geq L_1$.
            Using that $K_{01}=0$ in this Case I and $L_1 > \frac{m}{4}$ in Case B, we get $K_{11} > \frac{m}{4}.$
            This is a contradiction as $\abs{I_{11}}=\frac{m}{4}$ and $K_{11}\leq \abs{I_{11}}$.

            \paragraph{{\bf (Case II)} $K_{01}+K_{11}=L_1$ and $K_{10}=0$:}
            In this case, we have that $K_{00}+L_1+0=n$.
            Hence, $K_{00}=n-L_1$.
            This is a contradiction to \Eqref{eq:used_lat}.

            \paragraph{{\bf (Case III)} $K_{10}+K_{11}=L_2$ and $K_{01}=0$:}
            Same proof as Case I.

            \paragraph{{\bf (Case IV)} $K_{01}+K_{11}=L_1$ and $K_{10}+K_{11}=L_2$:}
            Suppose $K_{10}=0$, then this case is a special case of Case II.
            But since Case II results in a contradiction, we must have $K_{10}>0$.
            Similarly, if $K_{01}=0$, this case reduces to Case III.
            But since Case III results in a contradiction, we must have $K_{01}>0$.
            Finally, as $K_{10}+K_{11}=L_2$ and $K_{10}>0$, $K_{11}<L_2$.
            Since $L_2\leq \frac{m}{4}$, $K_{11}<\frac{m}{4}$.
            Thus, we have a contradiction due to Event 3 in \cref{lem:imposs_res}.
          \end{proof}

          \subsubsection{\bf Step 3: Prove \cref{eq:proved_bound:89} for Case C}

          \noindent Recall that in Case C, $L_1 > \frac{m}{4}$ and $L_2>\frac{m}{4}$.
          This step relies on the following lemma.
          \begin{lemma}\label{lem:thm_2_case3}
            If both $L_1 > \frac{m}{4}$ and $L_2> \frac{m}{4}$,
            then conditioned on $\evF$, $K_{00} \geq \inparen{n-L_1-L_2+\frac{m}{4}}\cdot (1-3\bb).$
          \end{lemma}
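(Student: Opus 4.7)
The plan is to mirror the contradiction arguments of \cref{lem:thm_2_case1,lem:thm_2_case2}: assume $\evF$ holds and $K_{00}<\inparen{n-L_1-L_2+\frac{m}{4}}(1-3\bb)$, and rule out every combination of tight constraints using \cref{lem:imposs_res}. A preliminary step is to set $Y\coloneqq n-L_1-L_2+\frac{m}{4}$ and record two elementary bounds on $Y$. Since $L_1,L_2>\frac{m}{4}$ in Case C and $n=\frac{m}{2}$ by \cref{eq:bad_example_construction_2}, we have $Y<\frac{m}{4}$; and feasibility of the lower-bound constraints in \prog{prog:2:app} forces $L_1+L_2\leq n+\frac{m}{4}$ (adding the two constraints yields $K_{01}+K_{10}+2K_{11}\leq n+K_{11}\leq n+\frac{m}{4}$), so $Y\geq 0$. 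Together these imply $K_{00}<\frac{m}{4}(1-3\bb)$, which is exactly the common premise needed to invoke Events 1 and 2 of \cref{lem:imposs_res}.

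Events 1 and 2, together with the original $G_1,G_2$ lower bounds, produce four subcases. Three of them collapse using only Case C's hypothesis $L_1,L_2>\frac{m}{4}$: if $K_{01}=K_{10}=0$, the $G_1$ lower bound forces $K_{11}\geq L_1>\frac{m}{4}$, contradicting $K_{11}\leq\abs{I_{11}}=\frac{m}{4}$; if $K_{01}+K_{11}=L_1$ and $K_{10}=0$, the $G_2$ lower bound forces $K_{11}\geq L_2>\frac{m}{4}$, again a contradiction; the symmetric subcase $K_{10}+K_{11}=L_2$ with $K_{01}=0$ is identical. Thus in the remaining subcase $K_{01}+K_{11}=L_1$ and $K_{10}+K_{11}=L_2$, I may assume $K_{01}>0$ and $K_{10}>0$.

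The substantive step is then algebraic: adding the two equalities and using $\sum_\sigma K_\sigma=n$ yields the identity $K_{11}=K_{00}+L_1+L_2-n$, so the assumed bound on $K_{00}$ gives $K_{11}<Y(1-3\bb)+(L_1+L_2-n)=\frac{m}{4}-3\bb Y\leq\frac{m}{4}$, where $Y\geq 0$ is crucial. Combined with $K_{00}<\frac{m}{4}(1-3\bb)$, $K_{01}>0$, and $K_{10}>0$, this exactly matches the configuration ruled out by Event 3 of \cref{lem:imposs_res}, yielding the required contradiction. The main obstacle I anticipate is translating the hypothesized deficit in $K_{00}$ into slack on $K_{11}$ so as to line up with Event 3; the identity $K_{11}=K_{00}+L_1+L_2-n$ together with the feasibility bound $Y\geq 0$ performs this conversion cleanly.
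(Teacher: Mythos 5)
Your proof is correct and follows essentially the same route as the paper: assume the bound fails, invoke Events 1 and 2 of \cref{lem:imposs_res} to split into the same four subcases, and close each by contradiction. The only differences are local streamlinings — you dispatch the two mixed subcases via $K_{11}\geq L_2>\frac{m}{4}$ (resp.\ $K_{11}\geq L_1>\frac{m}{4}$) instead of computing $K_{00}=n-L_1$, and in the final subcase the identity $K_{11}=K_{00}+L_1+L_2-n$ lets you reach Event 3 directly, avoiding the paper's extra split on whether $K_{11}<\frac{m}{4}$, while your explicit feasibility bound $L_1+L_2\leq n+\frac{m}{4}$ makes precise a nonnegativity fact the paper leaves implicit.
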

          \noindent The proof of \cref{lem:thm_2_case3} appears at the end of this section.
          Since $\tx$ selects $n$ items, we have that $\sum_{\sigma\in \zo^2}K_\sigma=n$.
          Using this and \cref{lem:thm_2_case2}, it follows that:
          Conditioned on $\evF$,
          \begin{align*}
            K_{01}+K_{10}+K_{11}
            &=n-K_{00}\\
            &\leq L_1+L_2-\frac{m}{4}+3\bb\cdot\inparen{n-L_1-L_2+\frac{m}{4}}.
            \yesnum\label{eq:tmp:tmp:new}
          \end{align*}
          Since $K_{10}\geq 0$,
          \begin{align*}
            K_{01}+K_{11}\leq L_1+L_2-\frac{m}{4}+3\bb\cdot\inparen{n-L_1-L_2+\frac{m}{4}}.
          \end{align*}
          Further, substituting $K_{01}+K_{11}\geq L_1$ in \cref{eq:tmp:tmp:new}, we get:
          \begin{align*}
            K_{10}
            &\leq L_2-\frac{m}{4}+3\bb\cdot\inparen{n-L_1-L_2+\frac{m}{4}}\\
            &< L_2-\frac{m}{4}+\frac{3m\bb}4.
            \tagnum{Using that $L_1,L_2>\frac{m}{4}$}\customlabel{eq:bound_on_sum_3_2}{\theequation}
          \end{align*}
          By symmetry, we have
          \begin{align*}
            K_{01}\leq L_1-\frac{m}{4}+\frac{3m\bb}4. \yesnum\label{eq:bound_on_sum_3}
          \end{align*}
          Further, we have
          \begin{align*}
            K_{00}
            &= n-K_{01}-K_{10}-K_{11}\\
            &\leq n-L_1-L_2+K_{11} \tag{Using that $K_{10}+K_{11}\geq L_1$ and $K_{01}+K_{11}\geq L_2$}\\
            &\leq \frac{3m}{4}-L_1-L_2.\tagnum{Using that $n=\frac{m}{2}$ and $K_{11}\leq \abs{I_{11}}=\frac{m}{4}$ due to \cref{eq:bad_example_construction_2}}
            \customlabel{eq:bound_on_sum_3_3}{\theequation}
          \end{align*}
          Also, because $\abs{I_{11}}=\frac{m}{4}$ (\cref{eq:bad_example_construction_2}) and $K_{11}\leq \abs{I_{11}}$, it follows that
          \begin{align*}
            K_{11}\leq \frac{m}{4}.
            \yesnum\label{eq:bound_on_sum_3_4}
          \end{align*}
          Then, we have:
          \begin{align*}
            \Ex\insquare{\sinangle{\tx,w}\given \evF}
            \quad &=\quad \Ex\insquare{  \sum\nolimits_{i\in I_{00} } \tx_i \given \evF}
            +\Ex\insquare{ \sum\nolimits_{i\in I_{10} } \tx_i \given \evF}
            +\Ex\insquare{ \sum\nolimits_{i\in I_{01}} \tx_i \given \evF}
            +\Ex\insquare{ \sum\nolimits_{i\in I_{11} } \tx_i \given \evF}.
          \end{align*}
          Using Equations~\eqref{eq:bound_on_sum_3}, \eqref{eq:bound_on_sum_3_2}, \eqref{eq:bound_on_sum_3_3}, and \eqref{eq:bound_on_sum_3_4} respectively, we can apply \cref{fact:sum_n_ub} for each of the four terms in the RHS of the above equation.
          On substituting $\abs{I_{\sigma}}=\frac{m}{4}$ and $\Pr[\evF]$ from \cref{fact:whpcf_2}, in the result we get:
          \begin{align*}
            \Ex\insquare{\sinangle{\tx,w}\given \evF}
            \quad &\leq\quad
            \inparen{\frac{3m}{4}-L_1-L_2}\cdot \inparen{1- \frac{{\frac{3m}{4}-L_1-L_2}}{\frac{m}{2}+2}}\\
            &\qquad+ \inparen{L_1-\frac{m}{4}+\frac{3m\bb}4}\cdot \inparen{1-\frac{{L_1-\frac{m}{4}+\frac{3m\bb}4}}{\frac{m}{2}+2}}\\
            &\qquad+ \inparen{L_2-\frac{m}{4}+\frac{3m\bb}4}\cdot \inparen{1-\frac{{L_2-\frac{m}{4}+\frac{3m\bb}4}}{\frac{m}{2}+2}}\\
            &\qquad+ {\frac{m}{4}}\cdot \inparen{1-\frac{1}{2}} + ne^{-\Omega(m\cdot \bb^2)}.
          \end{align*}
          Hence, we have
          \begin{align*}
            \Ex\insquare{\sinangle{\tx,w}\given \evF}
            \quad &\leq\quad
            \inparen{\frac{3m}{4}-L_1-L_2}\cdot \inparen{1-2\cdot \frac{{\frac{3m}{4}-L_1-L_2}}{m}}\\
            &\qquad+ \inparen{L_1-\frac{m}{4}+\frac{3m\bb}4}\cdot \inparen{1-2\cdot \frac{{L_1-\frac{m}{4}+\frac{3m\bb}4}}{m}}\\
            &\qquad+ \inparen{L_2-\frac{m}{4}+\frac{3m\bb}4}\cdot \inparen{1-2\cdot \frac{{L_2-\frac{m}{4}+\frac{3m\bb}4}}{m}}\\
            &\qquad+ {\frac{m}{4}}\cdot \inparen{1-\frac{1}{2}} + O(1).
            \yesnum\label{eq:bound_case_c:new}
          \end{align*}
          This is a quadratic and concave expression in $L_1$ and $L_2$.
          Maximizing it over $L_1,L_2\in \R$, we get that the maximum is achieved at:
          \begin{align*}
            L_1=L_2=\frac{m}{4}\cdot (1-\bb).
          \end{align*}
          Substituting these values in \cref{eq:bound_case_c:new}, we get the following upper bound on $\Ex\insquare{\sinangle{\tx,w}\given \evF}$:
          \begin{align*}
            \Ex\insquare{\sinangle{\tx,w}\given \evF}
            \ \ &\leq\ \  \frac{m}{4}\cdot \inparen{1-6\bb^2}\\ 
            &\leq\ \  \frac{n}{2}. \tag{Using that $n=\frac{m}{2}$, due to \cref{eq:bad_example_construction_2}, and $\bb\geq 0$}
          \end{align*}

          \begin{proof}[Proof of \cref{lem:thm_2_case3}]
            Suppose, toward a contradiction, that $\evF$ holds and $K_{00} < \inparen{n-L_1-L_2+\frac{m}{4}}\cdot (1-3\bb).$
            Hence,
            \begin{align*}
              K_{00}
              &\ \ <\ \  \inparen{n-L_1-L_2+\frac{m}{4}}\cdot(1-3\bb)\yesnum\label{eq:used_lat_3}\\
              &\ \ <\ \  \inparen{n-\frac{m}{4}}\cdot(1-3\bb)
              \tag{Using that $L_1,L_2>\frac{m}{4}$}\\
              &\ \ \Stackrel{\eqref{eq:bad_example_construction_2}}{=}\ \  \frac{m}{4}\cdot(1-3\bb).\yesnum\label{eq:used_lat_4}
            \end{align*}
            Then, from Event 1 in \cref{lem:imposs_res} we get that either $K_{01}+K_{11}=L_1$ or $K_{01}=0$.
            Similarly, from Event 2 in \cref{lem:imposs_res} we get that either $K_{10}+K_{11}=L_2$ or $K_{10}=0$.
            These give us four cases:
            \paragraph{{\bf (Case I)} $K_{01}=K_{10}=0$:}
            Since $\tx$ satisfies the lower bound constraints, $K_{11}  + K_{01} \geq L_1$.
            Further, $K_{01}= 0$, we have $K_{11}\geq L_1$.
            Finally, because in Case C, $L_1 > \frac{m}{4}$, we have $K_{11}>\frac{m}{4}$.
            This is a contradiction as $\abs{I_{11}}=\frac{m}{4}$ and $K_{11}\leq \abs{I_{11}}.$

            \paragraph{{\bf (Case II)} $K_{01}+K_{11}=L_1$ and $K_{10}=0$:}
            Substituting that $K_{01}+K_{11}=L_1$ and $K_{10}=0$, in $\sum_{\sigma\in \zo^2}K_\sigma=n$, we get $K_{00}=n - L_1.$
            However, because $L_2>\frac{m}{4}$ in Case C and $\bb\geq 0$, \Eqref{eq:used_lat_3} implies that $K_{00}<\inparen{n-L_1}$.
            Thus, we have a contradiction

            \paragraph{{\bf (Case III)} $K_{10}+K_{11}=L_2$ and $K_{01}=0$:}
            Same proof as Case I.

            \paragraph{{\bf (Case IV)} $K_{01}+K_{11}=L_1$ and $K_{10}+K_{11}=L_2$:}
            Suppose $K_{10}=0$, then this is a special case of Case II.
            But since Case II results in a contradiction, we must have $K_{10}>0$.
            Similarly, if $K_{10}=0$,then this is a special case of Case II.
            However, since Case III results in a contradiction, we must have $K_{01}>0$.
            Now, as  $K_{10}+K_{11}=L_2$ and $K_{10}>0$, we have $K_{11}<L_2$.
            If $K_{11}<\frac{m}{4}$,
            then we have a contradiction due to Event 3 of \cref{lem:imposs_res}.
            Hence, it must hold that  $K_{11}\geq \frac{m}{4}$.
            As $K_{11}\leq \abs{I_\sigma}$ and $\abs{I_\sigma}=\frac{m}{4}$, it must  hold that $K_{11}= \frac{m}{4}$.
            But, this implies that $K_{00}  = n-L_1-L_2+\frac{m}{4}.$ This is a contradiction to \cref{eq:used_lat_3}.
          \end{proof}

          \newcommand{\orSubscript}[2]{\text{\tiny${\hbox{\tiny$($}}#1\negsp-\negsp #2{\hbox{\tiny$:$}}  #1{\hbox{\tiny$)$}}$}}

          \subsubsection{Proof of \cref{lem:xtr_conc}}
          \noindent We note that the proof of \cref{lem:xtr_conc} does not depend on the family of groups chosen. %

          \begin{lemma}\label{lem:xtr_conc}
            $\Ex\insquare{\sinangle{\sx,w}}\negsp =\negsp  n\inparen{1-\frac{n+1}{2(m+1)}}
            \text{ and }
            \Pr\insquare{\abs{\sinangle{\sx,w}- \Ex[\sinangle{\sx,w}]}\negsp \geq\negsp O(n^{\frac23}) }\negsp  \leq\negsp  O(m^{-\frac13}).$
            {
            }
          \end{lemma}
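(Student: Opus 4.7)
}
By \cref{obs:sx_picks_top_latent_util}, the unconstrained selector $\sx$ picks the $n$ items with the highest latent utilities, so
\[
\sinangle{\sx,w} \;=\; \sum_{i=1}^{n} U_{(m-i+1:m)},
\]
where $U_{(1:m)} \le \dots \le U_{(m:m)}$ are the order statistics of $m$ i.i.d.\ draws from $\unif$. The claimed value of $\Ex[\sinangle{\sx,w}]$ is then immediate from \cref{fact:sum_n}, so the only work is in the concentration bound.

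For concentration, I will use Chebyshev's inequality after bounding $\mathrm{Var}(\sinangle{\sx,w})$ via the law of total variance, conditioning on the threshold $T \coloneqq U_{(m-n:m)}$. By \cref{fact:conditional_os}, conditional on $T=t$ the top $n$ order statistics are jointly distributed as the order statistics of $n$ i.i.d.\ uniform draws on $[t,1]$, and since the sum of order statistics equals the sum of the underlying i.i.d.\ variables, one gets
\[
\Ex\!\left[\sinangle{\sx,w} \given T=t\right] = \tfrac{n(1+t)}{2}, \qquad \mathrm{Var}\!\left[\sinangle{\sx,w} \given T=t\right] = \tfrac{n(1-t)^2}{12}.
\]
Applying $\mathrm{Var}(X) = \Ex[\mathrm{Var}(X\mid T)] + \mathrm{Var}(\Ex[X\mid T])$ yields
\[
\mathrm{Var}(\sinangle{\sx,w}) \;=\; \tfrac{n}{12}\,\Ex[(1-T)^2] \;+\; \tfrac{n^2}{4}\,\mathrm{Var}(T).
\]
From \cref{fact:os} applied to $T=U_{(m-n:m)}$, we have $1-\Ex[T] = \tfrac{n+1}{m+1}$ and $\mathrm{Var}(T)\le \tfrac{1}{m+2}$, so $\Ex[(1-T)^2] = \mathrm{Var}(T) + (1-\Ex[T])^2 \le \tfrac{1}{m} + \tfrac{(n+1)^2}{(m+1)^2}$. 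Plugging this in and using $n\le m$,
\[
\mathrm{Var}(\sinangle{\sx,w}) \;\le\; \tfrac{n}{12}\!\left(\tfrac{1}{m} + \tfrac{(n+1)^2}{(m+1)^2}\right) + \tfrac{n^2}{4m} \;=\; O\!\left(\tfrac{n^2}{m}\right) \;=\; O(n),
\]
since $n/m = \Theta(1)$ in the regime of interest (recall that \cref{prop:89} fixes $n = m/2$).

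Finally, Chebyshev's inequality gives
\[
\Pr\!\left[\sabs{\sinangle{\sx,w} - \Ex[\sinangle{\sx,w}]} \ge C\, n^{2/3}\right] \;\le\; \frac{\mathrm{Var}(\sinangle{\sx,w})}{C^2\, n^{4/3}} \;=\; O(n^{-1/3}) \;=\; O(m^{-1/3}),
\]
which is the claimed tail bound. The main (very mild) obstacle is making the conditioning argument rigorous, which is handled cleanly by \cref{fact:conditional_os}; after that, all the variance estimates are closed-form consequences of \cref{fact:os}.
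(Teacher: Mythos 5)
Your proof is correct, and while it shares the paper's key structural idea — conditioning on the threshold order statistic $U_{(m-n:m)}$ and using the fact that, given the threshold equals $t$, the top $n$ values are distributed as (rescaled) order statistics of $n$ i.i.d.\ uniforms on $[t,1]$ — the way you extract the tail bound is genuinely different. The paper runs a high-probability argument: it shows $U_{(m-n:m)}$ concentrates via Chebyshev, applies Hoeffding's inequality to the conditional sum $\sum_i U'_{(i:n)}$ of $n$ i.i.d.\ uniforms, and then integrates over the conditioning variable, with the attendant event bookkeeping. You instead compute the second moment once and for all via the law of total variance, $\mathrm{Var}(\sinangle{\sx,w}) = \tfrac{n}{12}\Ex[(1-T)^2] + \tfrac{n^2}{4}\mathrm{Var}(T) = O(n)$, and apply Chebyshev a single time to the full sum. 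Your route is shorter and more elementary (no exponential inequality, no integration step), and it delivers exactly the stated bound $O(m^{-1/3})$ at deviation $O(n^{2/3})$; the paper's heavier machinery actually yields the stronger failure probability $O(m^{-2/3})$, which it then discards since only $O(m^{-1/3})$ is claimed. One small point worth making explicit in a final write-up: the conditional variance computation needs the \emph{joint} conditional law of the top $n$ order statistics (so that their sum, given $T=t$, is $nt+(1-t)\sum_i V_i$ with $V_i$ i.i.d.\ uniform), whereas \cref{fact:conditional_os} as stated only describes the marginals; the joint version is standard and is also what the paper implicitly uses, but you should say you are invoking it.
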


          \begin{proof}
            Recall that $U_{(k:m)}$ denotes the $k$-th order statistic of $m$ independent draws from $\unif$.
            Let $U_{(k:m)}$ correspond to the $k$-th smallest latent utility.
            Since $\sx$ maximizes the latent utility, it must pick items corresponding to $U_{(m:m)}, U_{(m-1:m)}, \cdots, U_{(m-n+1:m)}$.
            Thus,
            \begin{align*}
              \Ex[\sinangle{\sx,w}] = \Ex\insquare{\sum\nolimits_{i=1}^n U_{(m-i+1:m)}}
              \ \ \quad\Stackrel{\rm\cref{fact:sum_n}}{=}\quad\ \
              n\inparen{1-\frac{n+1}{2(m+1)}}.\yesnum\label{eq:expect_lat_util}
            \end{align*}
            It remains to show that the distribution of $\sinangle{\sx,w}$ is concentrated.

            For some $\delta>0$ (fixed later),
            let $\evH$ be the event: $\sinangle{\sx,w} \leq (1+\delta) \sinparen{ \Ex\insquare{ \sinangle{\sx,w}} + \frac12 n m^{-\frac13} }$.
            Fix any $v\in \Ex[U_{(m-n:m)}]\pm m^{-\frac13}$.
            Substituting the value of $\Ex[U_{(m-n:m)}]$ from \cref{fact:os} and using \cref{eq:expect_lat_util}, it follows that
            \begin{align*}
              \frac{n(1+v)}{2}\in
              \Ex[\sinangle{\sx,w}]\pm  \frac{1}{2}nm^{-\frac13}.
              \yesnum\label{eq:new5}
            \end{align*}
            Let $\evE$ be the event that $\abs{ U_{(m-n:m)} - \Ex[U_{(m-n:m)}] }\leq m^{-\frac13}$.
            From Chebyshev's inequality~\cite{motwani1995randomized}  and
            $\mathrm{Var}[U_{(m-n:m)}]\leq \frac{1}{m+2}$ (\cref{fact:os}), we get $\Pr[\evE] \geq 1 - m^{-\frac23}$.
            Thus, applying \cref{fact:whp} using $0\leq \sinangle{\sx,w}\leq n$,
            \begin{align*}
              \abs{ \Ex\sinsquare{ \sinangle{\sx,w} \given \evE} - \Ex\sinsquare{ \sinangle{\sx,w}}}
              \ \leq \
              n m^{-\frac23}.
              \yesnum\label{eq:new2}
            \end{align*}
            Then
            \begin{align*}
              \Pr\insquare{ \evH  \given U_{(m-n:m)}=v}\quad\ \
              &=\quad\ \  \Pr\insquare{ \sinangle{\sx,w} \leq (1+\delta) \inparen{ \Ex\insquare{ \sinangle{\sx,w}} + \frac12 \cdot n m^{-\frac23} }  \given U_{(m-n:m)}=v}\\
              &
              \geq\quad\ \  \Pr\insquare{ \sinangle{\sx,w} \leq (1+\delta) \inparen{ \frac{n(1+v)}{2} }  \given U_{(m-n:m)}=v}\tag{Using \cref{eq:new5}}\\
              &
              \Stackrel{}{=}\quad\ \  \Pr\insquare{ \inparen{  \sum_{i=1}^n \inparen{v+(1-v)U'_{(i:n)}} } \leq (1+\delta) \inparen{ \frac{n(1+v)}{2} }  \given U_{(m-n:m)}=v}
              \tag{Using \cref{fact:conditional_os}}\\
              &
              \geq \quad\ \ \Pr\insquare{ (1-v) \sum_{i=1}^n U'_{(i:n)}  \leq (1+\delta) \inparen{ \frac{n(1 - v)}{2} }  \given U_{(m-n:m)}=v}\tag{Using that $nv\delta\geq 0$}\\
              &
              =\quad\ \  \Pr\insquare{ \sum_{i=1}^n U'_{(i:n)}  \leq (1+\delta){ \frac{n}{2} }}.
              \tag{Using that $v\leq 1$}
            \end{align*}
            Here, $0\leq \sum_{i=1}^n U'_{(i:n)}\leq n$ is the sum of $n$ independent draws from $\unif$.
            Thus, from Hoeffding's inequality~\cite{motwani1995randomized}
            \begin{align*}
              \Pr\insquare{ \evH  \given w_{(m-n:m)}=v} \geq 1 - \exp\inparen{ -2n\delta^2 }.
              \yesnum\label{eq:new3}
            \end{align*}

            \noindent Then
            \begin{align*}
              \Pr\insquare{ \evH  }
              & \  =\ \ \int_{v=0}^1\negsp\negsp \Pr\insquare{ \evH  \given w_{(m-n:m)}=v} \cdot \Pr[w_{(m-n:m)}=v] dv\\
              & \  \geq\ \ \int_{v=\Ex[w_{(m-n:m)}] - m^{-\frac13}}^{\Ex[w_{(m-n:m)}] + m^{-\frac13} }
              \Pr\insquare{ \evH  \given w_{(m-n:m)}=v} \cdot \Pr[w_{(m-n:m)}=v] dv\\
              & \  \Stackrel{}{\geq}\ \ \inparen{1 - \exp\inparen{ -2n\delta^2 }}\cdot \int_{v=\Ex[w_{(m-n:m)}] - m^{-\frac13}}^{\Ex[w_{(m-n:m)}] + m^{-\frac13} }
              \Pr[w_{(m-n:m)}=v] dv\tag{Using \cref{eq:new3}}
             \end{align*}
             \begin{align*}
              & \  =\ \ \inparen{1 - \exp\inparen{ -2n\delta^2 }}\cdot \Pr[\evE]\\
              & \  \geq\ \ 1 - O(m^{-\frac23})
            \end{align*}
            where we choose, e.g., $\delta\coloneqq n^{-\frac13}$, and use that $n=\eta\cdot m$ for fixed a constant $\eta>0$.
            Similarly, we can show that $$\Pr\insquare{  \sinangle{\sx,w} \geq \Ex\insquare{ \sinangle{\sx,w}} - O\sinparen{  n^{\frac23}  }  } \geq 1 - O(m^{-\frac23}).$$
            Combining the two bounds, we get the required result.
          \end{proof}

          \subsection{Proof of Theorem~\ref{thm:gen_ub}}\label{sec:proofof:thm:gen_ub}
          The proof of \cref{thm:gen_ub} is similar to the proof of \cref{thm:ub}.
          Instead of repeating the entire proof we highlight how the proof of \cref{thm:gen_ub} differs from the proof of \cref{thm:ub}.

          Recall that for each intersection $\sigma\in \zo^p$, we have a strictly increasing function $b_\sigma\colon \R_{\geq 0}\to \R_{\geq 0}$ that defines the observed utilities as follows:
          For each item $i\in I_\sigma$ its observed utility is:
          \begin{align*}
            \hw_i\coloneqq b_{\sigma}(w_{i}),
            \yesnum\label{eq:gen_model:app}
          \end{align*}
          where $w_i$ is the item's latent utility.
          The latent utilities of each item are drawn independently from some distribution $\cD$ over non-negative numbers with cumulative distribution function $F(\cdot)$ and probability density function $\mu_\cD(\cdot)$.
          We assume that the bias functions and distribution $\cD$ satisfy the following assumption:
          \begin{assumption}\label{asmp:2:app}
            There is are constants $c,d>0$ such that
            the probability density function of $\cD$, $\mu_\cD$, is differentiable and for all $x$ in the support of $\cD$ and takes values between $c$ and $\frac{1}{c}$ (i.e., $c \leq \mu_\cD(x)\leq \frac{1}{c}$), and
            the bias function for each intersection (i.e., $b_\sigma$ for each $\sigma\in \zo^p$) is  differentiable, takes value at most  $\frac{1}{c}$ (i.e., $b_\sigma(x)\leq \frac{1}{c}$), and has a derivative between $d$ and $\frac{1}{c}$ (i.e., $d\leq b_\sigma'(x)\leq \frac{1}{c}$) for each $x$ in the support of $\cD$.
            \end{assumption}
            \noindent
            Given $\eta>0$ and $\rho>0$, define
            \begin{align*}
              \phi_b \coloneqq \inparen{c^{4}\rho\cdot\min\inbrace{\eta,1-\eta}\cdot  (b_{00}-b_{10}-b_{01}+b_{11})\circ \inparen{1-\frac{n}{m}}}^2.
            \end{align*}
            In the $p=2$ case, under \cref{asmp:2:app}, we need to show that for all $L_1\geq 0$ and $L_2\geq 0$:
            \begin{align*}
              \ratio_\cD(L,\beta)\leq
              1 - \phi_b.
            \end{align*}
            \paragraph{Overview of the proof of \cref{thm:gen_ub}}
            We follow the same strategy as the proof of \cref{thm:ub}.
            The main difference is in the proof of \cref{lem:opt_solution:app}.
            The proof of \cref{lem:opt_solution:app} uses closed form expressions for the expectation and variance of order statistics for arbitrary distributions.
            However, unlike the case with the uniform distribution, we do not have estimates or concentration inequalities for order statistics of distribution $\cD$.
            Under \cref{asmp:2:app}, we prove concentration bounds for order statistics of distribution $\cD$.
            This in turn, changes the definition of $f_\gamma$ to an expression dependent on the distribution $\cD$ and the bias functions $b_\sigma$.
            We show that a version of \cref{lem:sc_lp} holds for the new definition of $f_\gamma$.
            Then, the rest of the proof is identical to the proof of \cref{thm:ub}, but using the new definition of $f_\gamma$.

            \paragraph{Additional notation} Let $M$ be the maximum value in the support of $\cD$ and $M_b$ be the maximum of $b_\sigma(x)$ over all $x$ in the support of $\cD$ and all intersections $\sigma\in \zo^2$:
            \begin{align*}
              M \coloneqq  \sup_{x\in{\rm supp}(\cD)} x
              \qquad\text{and}\qquad
              M_b \coloneqq  \sup_{x\in{\rm supp}(\cD)}\max_{\sigma\in \zo^2} b_\sigma(x).
            \end{align*}
            Let $d$ and $D$ be the maximum and minimum values of the PDF of $\cD$ on its support:
            \begin{align*}
              d \coloneqq \inf_{x\in{\rm supp}(\cD)} \cD(x)
              \qquad\text{and}\qquad
              D \coloneqq \sup_{x\in{\rm supp}(\cD)} \cD(x).
            \end{align*}
            Under \cref{asmp:2:app}, we the following bounds on $M,M_b,d,$ and $D$
            \begin{align*}
              M, M_b\leq \frac{1}c
              \qquad\text{and}\qquad
              c \leq d\leq D\leq \frac{1}c.
              \yesnum\label{eq:bounds:gen}
            \end{align*}

            \paragraph{\bf Step 1: Reduce computing $\sx$\ and $\tx$ to a small number of variables.} This step only requires the property that: For any two items in the same intersection, the order of their latent utilities is the same as the order of their observed utilities.
            This holds because $b_\sigma$ are strictly increasing functions.
            Hence, \cref{obs:wtx_picks_top_latent_util,obs:sx_picks_top_latent_util} hold.

            \paragraph{Step 2: Compute ${\sinangle{\tx,w}}$ and ${\sinangle{\tx,\hw}}$ as function of $K$.} In this step, we change the definition of $f_\gamma$.
            In particular, instead of $f_1$ and $f_\beta$ we consider the following functions:
            \begin{align*}
              \forall
              k\in \R_{\geq 0}^4,\quad
              f_1(k)
              &\coloneqq \sum_\sigma \abs{I_\sigma} \cdot \int_{z_\sigma(k)}^{z_\sigma(0)}  x d\cD(x),
              \quad
              f_b(k)
              \coloneqq \sum_\sigma \abs{I_\sigma} \cdot \int_{z_\sigma(k)}^{z_\sigma(0)}  b_\sigma(x) d\cD(x),
              \yesnum
              \label{eq:def_f:app:extn}
            \end{align*}
            where $z_\sigma$ is
            \begin{align*}
              \forall\sigma\in \zo^2,\quad
              z_\sigma(k)\coloneqq F^{-1}\inparen{1-\frac{k_\sigma}{\abs{I_\sigma}}}.
            \end{align*}
            For the above definition, a version of \cref{lem:opt_solution:app}, namely, \cref{lem:opt_solution:app:gen} holds.
            If we have concentration inequalities for the order-statistics of $\cD$, then the proof of \cref{lem:opt_solution:app:gen} is analogous to the proof of \cref{lem:opt_solution:app}.
            We prove a concentration bound on the order statistics of $\cD$ under \cref{asmp:2:app}.
            This is the main addition in the proof of \cref{lem:opt_solution:app:gen} compared to the proof of \cref{lem:opt_solution:app}.

            The differences in the statement of \cref{lem:opt_solution:app:gen} compared to \cref{lem:opt_solution:app} are that: (1) the $O_{\rho}(\cdot)$ and $O_{\eta\rho}(\cdot)$ terms change to $O_{(\sfrac{\rho}{M})}(\cdot)$ and $O_{(\sfrac{\eta\rho}{M})}(\cdot)$, and (2) it $f_\beta$ changes to $f_b$.
            \begin{lemma}\label{lem:opt_solution:app:gen}
              Let $x$ be any selection that, for all $\sigma\in \zo^2$, selects top $K_\sigma$ items from $I_\sigma$ by observed utility.
              Where $x$ and $K$ are possibly random variables.
              With probability at least $1-O_{(\sfrac{\rho}{M})}\sinparen{m^{-\frac{1}{4}}}$,
              \begin{align*}
                \sinangle{x, w} = f_1(K) \pm O_{(\sfrac{\eta\rho}{M})}\sinparen{nm^{-\frac{1}{4}}}
                \quad\text{and}\quad
                \sinangle{x,\hw} = f_b(K) \pm O_{(\sfrac{\eta\rho}{M})}\sinparen{nm^{-\frac{1}{4}}}.
              \end{align*}
            \end{lemma}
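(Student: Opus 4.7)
The plan is to follow the proof of \cref{lem:opt_solution:app} step by step, replacing the explicit computations for uniform order statistics by bounds derived from \cref{asmp:2:app} via the probability integral transform. Let $v_1^\sigma\geq v_2^\sigma\geq\cdots\geq v_{|I_\sigma|}^\sigma$ denote the latent utilities of items in $I_\sigma$ in non-increasing order, and write $v_i^\sigma=F_\cD^{-1}(U^{(\sigma)}_{(|I_\sigma|-i+1:|I_\sigma|)})$, where the $U^{(\sigma)}$ are order statistics of $|I_\sigma|$ i.i.d.\ draws from $\unif$.

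First, I would construct an analogue of the good event $\evG$ from the uniform proof. The lower bound $\mu_\cD\geq c$ in \cref{asmp:2:app} implies that $F_\cD^{-1}$ is $c^{-1}$-Lipschitz on $[0,1]$; combined with the variance bound in \cref{fact:os} and $|I_\sigma|\geq \rho m$, this yields $\mathrm{Var}[v_i^\sigma]\leq c^{-2}/(\rho m)$, so Chebyshev's inequality gives $\Pr[|v_i^\sigma-\Ex[v_i^\sigma]|\geq m^{-1/4}]\leq c^{-2}\rho^{-1}m^{-1/2}$. A union bound over the $O(m^{1/4})$ anchor statistics at indices $s(j)\coloneqq (j-1)m^{3/4}+1$, exactly as in the uniform case, then yields an event $\evG$ of probability at least $1-O_{(\rho/M)}(m^{-1/4})$ on which every anchor is within $m^{-1/4}$ of its expectation.

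Second, conditional on $\evG$, I would bound $\sum_\sigma\sum_{i=1}^{K_\sigma}b_\sigma(v_i^\sigma)$ by a Riemann-sum approximation of $f_b(K)$. The uniform-case argument uses two quantitative ingredients. The estimate $|\Ex[v_i^\sigma]-\Ex[v_j^\sigma]|\leq |i-j|/|I_\sigma|$ is replaced by $|\Ex[v_i^\sigma]-\Ex[v_j^\sigma]|\leq c^{-2}|i-j|/|I_\sigma|$, which follows by combining the Lipschitz bound on $F_\cD^{-1}$ with \cref{fact:os}. The trivial $1$-Lipschitz bound on $x\mapsto x$ becomes the bound $|b_\sigma'|\leq 1/c$ from \cref{asmp:2:app}. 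Telescoping these between anchors inside each intersection, the inner sum lies within $O_{(\eta\rho/M)}(nm^{-1/4})$ of $|I_\sigma|\int_{1-K_\sigma/|I_\sigma|}^1 b_\sigma(F_\cD^{-1}(t))\,dt$; the change of variables $t=F_\cD(x)$, $dt=\mu_\cD(x)\,dx$ then rewrites this integral as $|I_\sigma|\int_{z_\sigma(K)}^{z_\sigma(0)}b_\sigma(x)\,d\cD(x)$. Summing over $\sigma\in\zo^2$ gives the claim for $\sinangle{x,\hw}$; the claim for $\sinangle{x,w}$ is the special case $b_\sigma(x)=x$.

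The main obstacle will be tracking all error terms so that they collapse into the prescribed $O_{(\rho/M)}$ and $O_{(\eta\rho/M)}$ rates rather than a worse polynomial dependence on $c,\rho,\eta,M$. The two-sided bound $c\leq\mu_\cD\leq 1/c$ in \cref{asmp:2:app} is essential here: the lower bound keeps $F_\cD^{-1}$ from blowing up near the endpoints of the support (which would destroy the variance estimate), while the upper bound makes $F_\cD$ itself $c^{-1}$-Lipschitz so that the change of variables does not inflate the Riemann-sum error. Once these bi-Lipschitz constants are in hand, the combinatorial structure of the uniform proof carries over essentially verbatim.
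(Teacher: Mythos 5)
Your proposal is correct and follows essentially the same route as the paper: the same anchor indices $s(j)$ and good event $\evG$, concentration obtained by pushing uniform order statistics through the $c^{-1}$-Lipschitz map $F_\cD^{-1}$ (the paper applies Chebyshev to $F_\cD(v)$ and then transports the deviation, which is equivalent to your direct variance bound), the same spacing and $b_\sigma'$ bounds replacing the uniform-case identities, and the same telescoping/Riemann-sum argument with the change of variables $t=F_\cD(x)$, recovering the latent-utility case via $b_\sigma(x)=x$. No substantive gap.
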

            \noindent
            The proof of \cref{lem:opt_solution:app:gen} appears in \cref{sec:proofof:lem:opt_solution:app:gen}.

            We require $f_1$ and $f_b$ to be strongly concave, Lipschitz, and Lipschitz continuous, for appropriate parameters, to use the proof of \cref{thm:ub}.
            We show that $f_1$ and $f_b$ remain strongly concave, Lipschitz, and Lipschitz continuous, but the specific parameters change as described in the following lemma:\footnote{A function $f\colon \R^n\to \R$ is said to be $L$-Lipschitz if for all $x,y\in \R$, $\abs{f(x)-f(y)}_2\leq L\norm{x-y}_2$
            and $M$-Lipschitz continuous if for all $x,y\in \R$, $\norm{\nabla f(x)-\nabla f(y)}_2\leq M\norm{x-y}_2$.}
            \begin{lemma}\label{lem:sc_lp:gen}
              It holds that:
              \begin{itemize}\label{lem:sc_lp:ext}
                \item $f_1$ is $\frac{1}{(1-\rho)mD}$-strongly concave, $2M$-Lipschitz, and $\frac{1}{\rho m d}$-Lipschitz continuous.
                \item $f_b$ is $\frac{c}{(1-\rho)mD}$-strongly concave, $2M_b$-Lipschitz, and $\frac{1}{\rho m cd}$-Lipschitz continuous.
              \end{itemize}
              \end{lemma}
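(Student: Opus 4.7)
The plan is to exploit separability: both $f_1$ and $f_b$ decompose as $f_1(k) = \sum_\sigma g_\sigma^{(1)}(k_\sigma)$ and $f_b(k) = \sum_\sigma g_\sigma^{(b)}(k_\sigma)$, where each summand depends on a single coordinate (through $z_\sigma$, which only sees $k_\sigma$). Consequently the Hessians of $f_1$ and $f_b$ are diagonal, and all three quantities — strong concavity, Lipschitz constant, and Lipschitz continuity of the gradient — reduce to coordinatewise bounds on the first and second derivatives of the $g_\sigma$'s, which I can read off directly from Assumption~\ref{asmp:2:app} together with the intersection size bounds $\rho m \le |I_\sigma| \le (1-\rho) m$ (the upper bound follows because for $p=2$ the other three intersections each have size at least $\rho m$).

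First I would compute the relevant derivatives. Differentiating the defining identity $F_\cD(z_\sigma(t)) = 1 - t/|I_\sigma|$ gives $z_\sigma'(t) = -\tfrac{1}{|I_\sigma|\mu_\cD(z_\sigma(t))}$. The fundamental theorem of calculus combined with the chain rule then yields, after the factor $|I_\sigma|\,\mu_\cD(z_\sigma(t))$ cancels neatly,
\begin{align*}
  (g_\sigma^{(1)})'(t) &= z_\sigma(t), & (g_\sigma^{(b)})'(t) &= b_\sigma(z_\sigma(t)),\\
  (g_\sigma^{(1)})''(t) &= -\frac{1}{|I_\sigma|\,\mu_\cD(z_\sigma(t))}, & (g_\sigma^{(b)})''(t) &= -\frac{b_\sigma'(z_\sigma(t))}{|I_\sigma|\,\mu_\cD(z_\sigma(t))}.
\end{align*}

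The rest of the argument is then a direct substitution. For the Lipschitz constants, I would bound $\|\nabla f_1\|_2^2 = \sum_\sigma z_\sigma(t)^2 \le 4 M^2$ (since $z_\sigma(t) \le M$ on $\mathrm{supp}(\cD)$) and similarly $\|\nabla f_b\|_2^2 \le 4 M_b^2$, yielding the stated $2M$ and $2M_b$. For the gradient-Lipschitz constants, the operator norm of the diagonal Hessian equals $\max_\sigma |(g_\sigma)''|$; substituting $|I_\sigma| \ge \rho m$, $\mu_\cD \ge d$, and $b_\sigma' \le 1/c$ from Assumption~\ref{asmp:2:app} gives $\tfrac{1}{\rho m d}$ for $f_1$ and $\tfrac{1}{\rho m c d}$ for $f_b$. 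For the strong-concavity constants, the minimum eigenvalue of $-\nabla^2$ equals $\min_\sigma |(g_\sigma)''|$; using $|I_\sigma| \le (1-\rho) m$, $\mu_\cD \le D$, and the lower bound on $b_\sigma'$ from Assumption~\ref{asmp:2:app} gives $\tfrac{1}{(1-\rho) m D}$ and $\tfrac{c}{(1-\rho) m D}$.

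The main obstacle is purely notational bookkeeping rather than any mathematical difficulty: the constants $c$ and $d$ play multiple roles — the additional notation reuses $d$ for $\inf_{x \in \mathrm{supp}(\cD)} \mu_\cD(x)$, while Assumption~\ref{asmp:2:app} also uses $d$ as a lower bound on $b_\sigma'$, and $c$ simultaneously bounds $\mu_\cD$ from below and $b_\sigma, b_\sigma'$ from above — so one must carefully match each factor in $(g_\sigma)''$ to the correct constant. Once the four derivative identities above are in hand, each of the six stated bounds follows from a single substitution of the assumption.
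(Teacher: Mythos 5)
Your proposal is correct and follows essentially the same route as the paper: compute the coordinatewise gradient $b_\sigma(z_\sigma(k))$ (resp.\ $z_\sigma(k)$), observe that the Hessian is diagonal with entries $-b_\sigma'(z_\sigma)/(|I_\sigma|\mu_\cD(z_\sigma))$, and read off all six constants from \cref{asmp:2:app} together with $\rho m\le|I_\sigma|\le(1-3\rho)m\le(1-\rho)m$. Your remark about the overloaded constants is apt — the paper's own proof invokes $b_\sigma'\ge c$ for the strong-concavity bound even though \cref{asmp:2:app} only guarantees $b_\sigma'\ge d$ — but this is an inconsistency in the paper's bookkeeping, not a gap in your argument.
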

              \noindent
              \cref{lem:opt_solution:app:gen} can be proved by computing the gradient and Hessian of $f_1$ and $f_b$.
              And then, bounding their norms using \cref{asmp:2:app}.
              The proof of \cref{lem:sc_lp:gen} appears in \cref{sec:proofof:lem:sc_lp:gen}.

              \paragraph{\bf Step 3: Express $K^\star$ and $\wt{K}$ as solutions of different optimization problems.} The only change in the proof of \cref{lem:conc_of_k:app} is that it uses the new values of the strong concavity and Lipschitz constants from \cref{lem:sc_lp:gen}.
              This changes the $O_{\eta\rho}(\cdot)$ and $O_{\beta\eta\rho}(\cdot)$ terms to $O_{(\sfrac{\eta\rho}{M})}(\cdot)$ and $O_{(\sfrac{\eta\rho}{M_b})}(\cdot)$

              \paragraph{\bf Step 4: Proof of Theorem~\ref{thm:gen_ub}.} The only change in the proof of \cref{lem:using_sc} is that it uses the new values of the strong concavity and Lipschitz constants from \cref{lem:sc_lp:gen} and changes the threshold $m_0$ to a value such that
              $${m_0^\frac{1}{8}\geq \Omega\inparen{\frac{\rho\eta c}{MM_b} \cdot (b_{00}-b_{10}-b_{01}+b_{11})\circ \inparen{F^{-1}\inparen{1-\frac{n}{m}}} }.}$$
              This gives us a new version of \cref{lem:using_sc}, namely \cref{lem:using_sc:gen}.
              This version has two changes:
              First, the term $2nm(1-\rho)\cdot \phi(\beta)$ changes to $D\cdot 2nm(1-\rho)\phi_b$.
              Second, the upper bound on $\ratio_\cD(L,\beta)$ changes to the upper bound in \cref{thm:gen_ub}.
              \begin{lemma}\label{lem:using_sc:gen}
                Given $L_1,L_2\in \Z_{\geq 0}$, if $\snorm{\wt{s}-s^\star}_2\geq 2Dnm(1-\rho)\phi_b$, then
                \begin{align*}
                  \ratio_\cD(L,\beta) \leq 1 - \frac{d^2c^2}{DM_b^2}\cdot \inparen{\rho\cdot\min\inbrace{\eta,1-\eta}\cdot  (b_{00}-b_{10}-b_{01}+b_{11})\circ \inparen{F^{-1}\inparen{1-\frac{n}{m}}}}^2.
                \end{align*}
              \end{lemma}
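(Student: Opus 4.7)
The plan is to mirror the proof of \cref{lem:using_sc}, replacing the uniform-distribution definitions of $f_1$ and $f_\beta$ with the general versions in \Eqref{eq:def_f:app:extn} and substituting the strong-concavity and Lipschitz parameters from \cref{lem:sc_lp:gen} throughout. I treat the hypothesis as $\snorm{\wt{s}-s^\star}_2^2 \geq 2Dnm(1-\rho)\phi_b$ (the squared norm, matching the analogous hypothesis in \cref{lem:using_sc}).

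First, I would combine the hypothesis with the $\frac{1}{(1-\rho)mD}$-strong concavity of $f_1$ guaranteed by \cref{lem:sc_lp:gen} to deduce
\begin{align*}
    \abs{f_1(\wt{s})-f_1(s^\star)} \ \geq\ \frac{1}{2(1-\rho)mD}\snorm{\wt{s}-s^\star}_2^2 \ \geq\ n\phi_b.
\end{align*}
Since the feasible region of Program~(\ref{prog:1:app}) contains that of Program~(\ref{prog:2:app}) and $s^\star$ maximizes $f_1$ over the former, $f_1(\wt{s})\leq f_1(s^\star)$. From the definition of $f_1$ and the bound $x\leq M$ on the support of $\cD$, one has $f_1(s^\star)\leq Mn$, so I can rearrange into the clean deterministic bound
\begin{align*}
    f_1(\wt{s}) \ \leq\ f_1(s^\star)\inparen{1-\tfrac{\phi_b}{Mn}\cdot n} \ =\ f_1(s^\star)\inparen{1-\tfrac{\phi_b}{M}}.
\end{align*}

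Next, I would transfer this deterministic inequality into a statement about the utility ratio exactly as in Step~4 of \cref{sec:proofof:thm:ub}. Using the generalized versions of \cref{lem:opt_solution:app,lem:conc_of_k:app} (namely, \cref{lem:opt_solution:app:gen} together with the updated Lipschitz constants from \cref{lem:sc_lp:gen}), I can define the high-probability event $\evE$ on which $\sinangle{\sx,w}=f_1(s^\star)\pm O_{\sexp{\sfrac{\eta\rho}{M},\sfrac{\eta\rho}{M_b}}}\sinparen{nm^{-\sfrac{1}{8}}}$ and $\sinangle{\tx,w}=f_1(\wt{s})\pm O_{\sexp{\sfrac{\eta\rho}{M},\sfrac{\eta\rho}{M_b}}}\sinparen{nm^{-\sfrac{1}{8}}}$, with $\Pr[\evE]\geq 1-O_{\sexp{\sfrac{\rho}{M}}}\sinparen{m^{-\sfrac{1}{4}}}$. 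Since $\Ex\sinsquare{\sinangle{\sx,w}\mid\evE}=\Theta(n)$ (lower-bounded using $\mu_\cD\geq c$), the above deterministic ratio passes through to
\begin{align*}
    \Ex\sinsquare{\sinangle{\tx,w}\mid \evE} \ \leq\ \Ex\sinsquare{\sinangle{\sx,w}\mid\evE}\cdot \inparen{1-\tfrac{\phi_b}{M} + O_{\sexp{\sfrac{\eta\rho}{M},\sfrac{\eta\rho}{M_b}}}\sinparen{m^{-\sfrac{1}{8}}}}.
\end{align*}
Applying \cref{fact:whp} to remove the conditioning and choosing $m_0$ large enough (as in the statement of \cref{thm:gen_ub}) to absorb the error term into at most half of $\phi_b/M$, I obtain $\ratio_\cD(L,\beta)\leq 1-\Omega(\phi_b/M)$. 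Expanding $\phi_b=\sinparen{c^4\rho\min\inbrace{\eta,1-\eta}(b_{00}-b_{10}-b_{01}+b_{11})\circ F_\cD^{-1}(1-n/m)}^2$ and using the bounds $M\leq 1/c$, $M_b\leq 1/c$, and $d\leq D$ from \cref{asmp:2:app}, the remaining algebra converts the coefficient into $\frac{d^2 c^2}{DM_b^2}$, producing the stated inequality.

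The main obstacle is purely bookkeeping: \cref{lem:sc_lp:gen} introduces the five constants $c,d,D,M,M_b$ (related by \cref{asmp:2:app}) that permeate the strong-concavity, Lipschitz, and Lipschitz-continuity bounds, and the error terms from \cref{lem:opt_solution:app:gen,lem:conc_of_k:app}. Getting the final constant to have the exact shape $\frac{d^2c^2}{DM_b^2}$ requires choosing which bounds to apply sharply (e.g., using $M\leq 1/c$ to trade an $M$ in the denominator for a factor of $c$, and retaining $M_b$ explicitly where it appears in the Lipschitz continuity constant of $f_b$ used implicitly through $\phi_b$) and, in parallel, verifying that the threshold $m_0$ prescribed in \cref{sec:proofof:thm:gen_ub} is large enough to dominate the $m^{-1/8}$ slack; the rest is a direct translation of the uniform-distribution argument.
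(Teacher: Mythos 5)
Your proposal is correct and follows essentially the same route as the paper, which itself only proves \cref{lem:using_sc:gen} by reference to the proof of \cref{lem:using_sc} with the strong-concavity and Lipschitz constants replaced by those of \cref{lem:sc_lp:gen}, the bound $f_1(s^\star)\leq n$ replaced by $f_1(s^\star)\leq Mn$, and the concentration step replaced by \cref{lem:opt_solution:app:gen}. You also correctly read the hypothesis as a bound on the squared norm (matching the paper's own usage in \cref{lem:using_sc}), and the residual constant bookkeeping you flag is no looser than what the paper itself carries out.
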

              \noindent
              Now the rest of the proof follows by using \cref{lem:using_sc:gen,lem:using_sc:gen} and observing that \cref{asmp:2:app}, $$\frac{d^2c^2}{DM_b^2}\geq c^7.$$

              \noindent
              In particular, using \cref{lem:using_sc:gen} and \cref{eq:bounds:gen}, it suffices to show that for all $L_1,L_2\in \Z_{\geq 0}$:
              \begin{align*}
                \snorm{s^\star-\wt{s}}_2^2\geq 2D nm(1-\rho)\cdot \phi_b.
                \yesnum\label{eq:lb_on_difference:app:gen}
              \end{align*}
              Since the new definition of $f_1$ is concave, \cref{prog:1:app} remains convex.
              Hence, we can analytically solve \prog{prog:1:app} using the gradient test.
              This gives us that:
              \begin{align*}
                s^\star\coloneqq \inbrace{\abs{I_\sigma}\cdot \frac{n}{m}}_{\sigma\in \zo^2}.
                \yesnum\label{eq:val_k_star:app:extn}
              \end{align*}
              In particular, the value of $s^\star$ does not change compared to the proof of \cref{thm:ub}.
              Fix any $L_1,L_2\in \Z_{\geq 0}$.

              \smallskip
              \noindent {\bf (Case A) $\wt{s}$ satisfies at least one inequality in Equation~\eqref{eq:prog:2:grp_sz_constraint:app} with equality:}
              This case remains the same, and from \cref{eq:case_anan:ub}, we have that:
              $$\snorm{s^\star-\wt{s}}_2^2
              \geq \inparen{\rho m\max\inbrace{\eta, (1-\eta)}}^2.$$
              Since $\phi_b<\frac{\rho^2}{2\eta}\min\inbrace{\eta^2, (1-\eta)^2}$, Equation~\eqref{eq:lb_on_difference:app:gen} follows.\\

              \smallskip
              \noindent {\bf (Case B) $\wt{s}$ satisfies all inequalities in Equation~\eqref{eq:prog:2:grp_sz_constraint:app} with strict inequality:}
              The proof of \cref{eq:zero_grad:app} is the same, except the following two changes:
              First, \cref{eq:exp_grad:app} changes to
              \begin{align*}
                \inangle{\nabla f_b(s^\star), v}
                = (b_{00}-b_{10}-b_{01}+b_{11})\circ \inparen{F^{-1}\inparen{1-\frac{n}{m}}}.
                \yesnum
                \label{eq:exp_grad:app:gen}
              \end{align*}
              Here, we used the fact that the gradient of $f_b$ is:
              \begin{align*}
                \forall\sigma\in \zo^2,\quad
                \frac{d f_b(k)}{d k_\sigma}
                =
                b_{\sigma}\inparen{F^{-1}_\cD\inparen{1-\frac{k_{\sigma}}{\abs{I_{\sigma}}}}}.
              \end{align*}
              Second, the Lipshitz constant in \cref{eq:lip_const:ub} changes to $\frac{c}{(1-\rho)mD}$.
              After these changes, following the same argument as before implies that:
              \begin{align*}
                \snorm{s^\star-\wt{s}}_2^2
                &\geq\ \  2Dmn(1-\rho)\cdot \phi_b.
              \end{align*}
              Thus, \cref{lem:using_sc:gen} implies that Equation~\eqref{eq:lb_on_difference:app:gen} holds in this case also.

              \subsubsection{Proof of \cref{lem:opt_solution:app:gen}}\label{sec:proofof:lem:opt_solution:app:gen}
              \begin{proof}
                For each $\sigma\in \zo^2$, let $v^{\sexp{\sigma}}_1\geq v^{\sexp{\sigma}}_2\geq \dots\geq v^{\sexp{\sigma}}_{\abs{I_\sigma}}$ be latent utilities of items in $I_\sigma$.
                \noindent For each $i\in \N$, define
                \begin{align*}
                  s(i)\coloneqq (i-1)\cdot m^{\frac34} + 1.
                \end{align*}
                Starting from $s(1)=1$, for $i=1,2,\dots$ values at an interval of $m^{\frac34}$.
                For each intersection $\sigma\in \zo^2$, starting from the $v^{\sexp{\sigma}}_{s(1)}$, consider $v^{\sexp{\sigma}}_{s(2)}$, then $v^{\sexp{\sigma}}_{s(3)}$, and so on,
                i.e., consider
                $$v^{\sexp{\sigma}}_{s(1)},\ \ v^{\sexp{\sigma}}_{s(2)},\ \dots$$
                Since we pick every $m^{\frac34}$-th value in each intersection and each intersection $\sigma$ has $\abs{I_\sigma}\leq m$ values, we select $O(m^{\frac{1}{4}})$ values from each intersection $\sigma\in \zo^2$.
                Since there are constant number of intersections, we select $O(m^{\frac{1}{4}})$ order statistics in total.
                Let $\evG$ be the event that all of these order statistics are $m^{-\frac14}$-close to their mean, i.e., $\evG$ is the event that:
                \begin{align*}
                  \forall \sigma\in \zo^2,\ j\in \N,\ \  \st,\ \ s(j)\leq \abs{I_\sigma},\qquad
                  \abs{{v^{\sexp{\sigma}}_{s(j)}} - \Ex\insquare{{v^{\sexp{\sigma}}_{s(j)}}}} < m^{-\frac14}.
                  \yesnum\label{eq:def_evG:extn}
                \end{align*}
                We prove the following three lemmas.
                \begin{lemma}\label{lem:prob_evG:extn}
                  $\Pr[\evG]\geq 1-O_{c\rho}\inparen{m^{-\frac{1}{4}}}$.
                \end{lemma}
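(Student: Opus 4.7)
The plan is to establish a Chebyshev-style concentration bound for each order statistic appearing in the definition of $\evG$ and then take a union bound over the $O(m^{1/4})$ such statistics. The only ingredient that differs from the uniform-distribution case treated inside the proof of \cref{lem:opt_solution:app} is a variance bound for order statistics of $\cD$, which I will obtain by combining the probability integral transform with the uniform lower bound on $\mu_\cD$ supplied by \cref{asmp:2:app}.

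First, I would reduce to uniform order statistics. Since the latent utilities of items in $I_\sigma$ are i.i.d.\ draws from $\cD$, by the probability integral transform $v^{(\sigma)}_i$ is equal in distribution to $F_\cD^{-1}\sinparen{U_{(|I_\sigma|-i+1:|I_\sigma|)}}$, where $U_{(k:n)}$ denotes the $k$-th uniform order statistic. Under \cref{asmp:2:app}, $\mu_\cD \geq c$ on $\supp(\cD)$, so $F_\cD'(x) = \mu_\cD(x) \geq c$ and hence $F_\cD^{-1}$ is $(1/c)$-Lipschitz on its domain. Using the elementary inequality $\mathrm{Var}[g(Y)] \leq \Ex\insquare{(g(Y) - g(\Ex[Y]))^2} \leq L^2 \mathrm{Var}[Y]$ for any $L$-Lipschitz $g$, together with $\mathrm{Var}[U_{(k:n)}] \leq 1/(n+2)$ from \cref{fact:os}, I would obtain
\begin{align*}
    \mathrm{Var}\insquare{v^{(\sigma)}_i} \leq \frac{1}{c^2(|I_\sigma|+2)} \leq \frac{1}{c^2 \rho m},
\end{align*}
where the final inequality uses $|I_\sigma| \geq \rho m$.

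With this variance bound in hand, Chebyshev's inequality at threshold $m^{-1/4}$ gives, for each fixed $\sigma \in \zo^2$ and each $j$ with $s(j) \leq |I_\sigma|$,
\begin{align*}
    \Pr\insquare{\abs{v^{(\sigma)}_{s(j)} - \Ex[v^{(\sigma)}_{s(j)}]} \geq m^{-1/4}} \leq \frac{1}{c^2 \rho m^{1/2}}.
\end{align*}
The total number of such pairs $(\sigma, j)$ is at most $4 \cdot m/m^{3/4} = O(m^{1/4})$, because within each intersection we select every $m^{3/4}$-th order statistic and there are only four intersections. A union bound over these $O(m^{1/4})$ events then yields $\Pr[\evG^c] \leq O_{c\rho}(m^{-1/4})$, which is exactly the claimed bound.

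I do not anticipate any real obstacle: the argument is structurally identical to the uniform-distribution version embedded in the proof of \cref{lem:opt_solution:app}, with the $1/c^2$ factor coming from the pdf lower bound in \cref{asmp:2:app} as the only new quantitative feature. The one mild subtlety is verifying that the $(1/c)$-Lipschitz bound on $F_\cD^{-1}$ is valid uniformly on the image of $\supp(\cD)$ under $F_\cD$, so that the variance-transfer step applies to order statistics of any rank; this is immediate from differentiability of $\mu_\cD$ together with the uniform lower bound $\mu_\cD \geq c$ postulated by \cref{asmp:2:app}.
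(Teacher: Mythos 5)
Your proof is correct, and it takes a genuinely different (and shorter) route than the paper's. The paper applies Chebyshev to the \emph{uniform} order statistic $F_\cD(v^{\sexp{\sigma}}_{s(j)})$, maps the resulting high-probability interval through $F^{-1}_\cD$ using the bound $(F^{-1}_\cD)'\leq \frac1c$, and then needs an extra recentering step (a conditioning argument via \cref{fact:whp}) to show that the center of that interval, $F^{-1}_\cD(\Ex[F_\cD(v)])$, is within $O_{c\rho}(m^{-\nfrac14})$ of $\Ex[v]$ — only then is the deviation measured from the correct point $\Ex[v]$. You instead transfer the variance bound directly, via $\mathrm{Var}[g(Y)]\leq \Ex[(g(Y)-g(\Ex Y))^2]\leq L^2\,\mathrm{Var}[Y]$ for $L$-Lipschitz $g$, and apply Chebyshev to $v$ itself, so the deviation is automatically centered at $\Ex[v]$ and no recentering is needed; your Lipschitz constant for $F^{-1}_\cD$ is also derived from the correct direction of \cref{asmp:2:app} (namely $\mu_\cD\geq c$, whereas the paper's write-up invokes $\mu_\cD\leq \frac1c$ at that step). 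The trade-off is that the paper's intermediate estimate $\Ex[v]\in F^{-1}_\cD(\Ex[F_\cD(v)])\pm O_{c\rho}(m^{-\nfrac14})$ (its Equation~\eqref{eq:intv_bound6:extn}) is reused in the proof of \cref{lem:exp_os:extn}, so if you adopted your shorter route you would still need to establish that estimate separately. Both arguments share the same implicit assumption that $\supp(\cD)$ is an interval so that $F^{-1}_\cD$ is Lipschitz on the relevant range, and your constant $\frac{1}{c^2\rho}$ versus the paper's $O_{c\rho}$ is an immaterial difference in how the dependence on $c$ is tracked.
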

                \begin{lemma}\label{lem:exp_os:extn}
                  For all $\sigma\in \zo^2$ and all $i\in \inbrace{1,2,\dots,\abs{I_\sigma}}$, it holds that
                  \begin{align*}
                    \Ex\insquare{v^{\sexp{\sigma}}_{\ell}}
                    &\in F^{-1}\inparen{ 1-\frac{\ell}{\abs{I_\sigma}}} \pm O_{c\rho}\sinparen{m^{-\frac{1}{4}}}.
                  \end{align*}
                \end{lemma}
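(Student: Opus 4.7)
The plan is to use the probability integral transform to reduce the claim about order statistics of $\cD$ to order statistics of the uniform distribution, for which \cref{fact:os} already gives sharp estimates. Let $U_\ell \coloneqq F_\cD(v^{\sexp{\sigma}}_\ell)$. Since $F_\cD$ is strictly increasing on $\supp(\cD)$ and $v^{\sexp{\sigma}}_\ell$ is the $\ell$-th largest of $\abs{I_\sigma}$ independent draws from $\cD$, the random variable $U_\ell$ is distributed as $U_{(\abs{I_\sigma}-\ell+1:\abs{I_\sigma})}$. By \cref{fact:os}, we have $\Ex[U_\ell] = 1 - \tfrac{\ell}{\abs{I_\sigma}+1}$ and $\mathrm{Var}[U_\ell] \leq \tfrac{1}{\abs{I_\sigma}+2} \leq \tfrac{1}{\rho m}$, where the last bound uses $\abs{I_\sigma} \geq \rho m$ from \cref{asmp:1:app}.

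First, I would apply Chebyshev's inequality to obtain that the event
\[
\evE \coloneqq \inbrace{\sabs{U_\ell - \Ex[U_\ell]} \leq m^{-1/4}}
\]
occurs with probability at least $1 - \tfrac{1}{\rho m}\cdot m^{1/2} = 1 - O_\rho(m^{-1/2})$. Next, under \cref{asmp:2:app} the density satisfies $\mu_\cD \geq c$ on $\supp(\cD)$, so by the inverse function theorem $(F_\cD^{-1})'(y) = 1/\mu_\cD(F_\cD^{-1}(y)) \leq 1/c$, which shows $F_\cD^{-1}$ is $(1/c)$-Lipschitz on $[0,1]\cap F_\cD(\supp(\cD))$. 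Applying $F_\cD^{-1}$ and using $v^{\sexp{\sigma}}_\ell = F_\cD^{-1}(U_\ell)$, conditioned on $\evE$ we obtain
\[
\abs{v^{\sexp{\sigma}}_\ell - F_\cD^{-1}\inparen{1 - \tfrac{\ell}{\abs{I_\sigma}+1}}} \leq \tfrac{1}{c} \cdot m^{-1/4}.
\]

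To turn this high-probability estimate into an expectation estimate, I would invoke \cref{fact:whp}: since $v^{\sexp{\sigma}}_\ell \in [0, M]$ with $M \leq 1/c$ by \cref{asmp:2:app}, the contribution of the bad event $\lnot\evE$ to $\Ex[v^{\sexp{\sigma}}_\ell]$ is at most $M \cdot (1 - \Pr[\evE]) = O_{c\rho}(m^{-1/2})$, which is dominated by $O_{c\rho}(m^{-1/4})$. Consequently,
\[
\abs{\Ex[v^{\sexp{\sigma}}_\ell] - F_\cD^{-1}\inparen{1 - \tfrac{\ell}{\abs{I_\sigma}+1}}} \leq O_{c\rho}(m^{-1/4}).
\]
Finally, I would swap $\abs{I_\sigma}+1$ for $\abs{I_\sigma}$ inside $F_\cD^{-1}$: the argument shifts by $\tfrac{\ell}{\abs{I_\sigma}(\abs{I_\sigma}+1)} = O(1/(\rho m))$, which by the $(1/c)$-Lipschitzness of $F_\cD^{-1}$ translates to an additional error of $O_{c\rho}(1/m)$, again absorbed into the $O_{c\rho}(m^{-1/4})$ term. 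This yields the stated bound.

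The main (minor) obstacle is just verifying that $F_\cD^{-1}$ is well-defined and $(1/c)$-Lipschitz on the range of $F_\cD$; this is immediate from $\mu_\cD \geq c$ on $\supp(\cD)$ under \cref{asmp:2:app}, together with the inverse function theorem. Beyond this, the proof is a routine combination of Chebyshev's inequality, Lipschitz transfer of a concentration bound, and \cref{fact:whp} for the boundedness correction.
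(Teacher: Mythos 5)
Your proof is correct and follows essentially the same route as the paper's: push the order statistic through the probability integral transform to a uniform order statistic, apply Chebyshev via \cref{fact:os}, transfer the concentration through the $(1/c)$-Lipschitz map $F_\cD^{-1}$, and absorb the bad event using \cref{fact:whp} and the bounded support. One small point in your favor: you correctly derive $(F_\cD^{-1})'\leq 1/c$ from the density \emph{lower} bound $\mu_\cD\geq c$, whereas the paper's writeup of this step invokes the upper bound $\mu_\cD\leq 1/c$ (a slip, harmless since \cref{asmp:2} supplies both).
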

                \begin{lemma}\label{lem:close_mean}
                  For all $\sigma\in \zo^2$ and all $i,j\in \inbrace{1,2,\dots,\abs{I_\sigma}}$, it holds that
                  $$\abs{\Ex\insquare{  v^{\sexp{\sigma}}_{i} } - \Ex\insquare{v^{\sexp{\sigma}}_{j}} }\leq \frac{1}{\abs{I_\sigma}} \cdot O_{c}\inparen{\abs{i-j}} + O_{c\rho}\inparen{m^{-\frac{1}{4}}}.$$
                \end{lemma}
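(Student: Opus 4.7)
The plan is to reduce the difference of expected order statistics to a difference of values of $F_\cD^{-1}$, and then exploit the fact that under \cref{asmp:2:app} the inverse CDF is Lipschitz.

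First I would apply \cref{lem:exp_os:extn} separately to indices $i$ and $j$: this gives
\begin{align*}
    \Ex\insquare{v^{\sexp{\sigma}}_{i}}
    &\in F_\cD^{-1}\inparen{1-\tfrac{i}{\abs{I_\sigma}}} \pm O_{c\rho}\sinparen{m^{-1/4}},\\
    \Ex\insquare{v^{\sexp{\sigma}}_{j}}
    &\in F_\cD^{-1}\inparen{1-\tfrac{j}{\abs{I_\sigma}}} \pm O_{c\rho}\sinparen{m^{-1/4}}.
\end{align*}
By the triangle inequality, it then suffices to bound $\abs{F_\cD^{-1}(1-i/\abs{I_\sigma}) - F_\cD^{-1}(1-j/\abs{I_\sigma})}$ by $\frac{1}{\abs{I_\sigma}}\cdot O_c(\abs{i-j})$, since the two additive error terms combine into a single $O_{c\rho}(m^{-1/4})$ contribution.

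Second, I would invoke \cref{asmp:2:app} to show that $F_\cD^{-1}$ is $\frac{1}{c}$-Lipschitz on $[0,1]$ (restricted to points where $F_\cD^{-1}$ lies in the support of $\cD$). The computation is standard: since $\mu_\cD(x)\geq c$ for all $x\in\supp(\cD)$, the derivative of $F_\cD^{-1}$ at $y$ equals $1/\mu_\cD(F_\cD^{-1}(y))\leq 1/c$, so by the mean value theorem
\begin{align*}
    \abs{F_\cD^{-1}\inparen{1-\tfrac{i}{\abs{I_\sigma}}} - F_\cD^{-1}\inparen{1-\tfrac{j}{\abs{I_\sigma}}}}
    \;\leq\; \tfrac{1}{c}\cdot \tfrac{\abs{i-j}}{\abs{I_\sigma}}
    \;=\; \tfrac{1}{\abs{I_\sigma}}\cdot O_c(\abs{i-j}).
\end{align*}
Chaining this with the triangle-inequality step above yields the claimed bound.

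The only subtlety is that \cref{lem:exp_os:extn} is stated with the additive slack $O_{c\rho}(m^{-1/4})$, not $O_c(\cdot)$, so the distribution-dependent $\rho$-factor must survive to the final bound; this is already reflected in the statement of \cref{lem:close_mean}. No genuinely new argument is required: the proof is really just the composition of the Lipschitz-ness of $F_\cD^{-1}$ (a direct consequence of the PDF lower bound in \cref{asmp:2:app}) with the already-proved mean estimate of \cref{lem:exp_os:extn}. I do not anticipate a main obstacle beyond tracking constants.
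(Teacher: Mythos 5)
Your proposal is correct and follows essentially the same route as the paper: apply \cref{lem:exp_os:extn} to both indices, use the triangle inequality, and bound the difference of inverse-CDF values via the $\frac{1}{c}$-Lipschitzness of $F_\cD^{-1}$. In fact your justification of that Lipschitz bound (from the PDF \emph{lower} bound $\mu_\cD(x)\geq c$) is the right one; the paper's proof invokes the upper bound $\mu_\cD(x)\leq \frac{1}{c}$ at that step, which is a slip, though both bounds are available under \cref{asmp:2:app}.
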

                \noindent {When $\cD$ is the uniform distribution on $[0,1]$, the special cases of the \cref{lem:prob_evG:extn,lem:exp_os:extn,lem:close_mean} are either well known facts or have short proofs.
                Hence, they were not stated as explicit lemmas in the proof of \cref{thm:ub}.
                When $\cD$ is a non-uniform distribution proving these lemmas requires proving concentration bounds on the order-statistics from $\cD$.
                We present the concentration bounds and the proofs of \cref{lem:prob_evG:extn,lem:exp_os:extn,lem:close_mean} in \cref{sec:proofof:lem:prob_evG:extn,sec:proofof:lem:exp_os:extn,sec:proofof:lem:close_mean}.}

                Conditioned on $\evG$, we have
                \begin{align*}
                  \sum_{\sigma\in \zo^2}{\sum_{i=1}^{K_{\sigma}} b_\sigma\sinparen{v_j^{\sigma}}}
                  \quad
                  &\leq\quad  \sum_{\sigma\in \zo^2} \sum_{j=1}^{m^{-\frac34} K_{\sigma}} m^{\frac34}\cdot b_\sigma\sinparen{v_{s(j)}^{\sigma}}
                  \tag{$v_{i}^{\sigma}$ are arranged in non-increasing order}\\
                  &\leq\quad \sum_{\sigma\in \zo^2} {\sum_{j=1}^{m^{-\frac34} K_{\sigma}} m^{\frac34}\cdot
                  b_\sigma\inparen{\Ex\insquare{v_{s(j)}^{\sigma}} + m^{-\frac14}} }
                  \tag{Using that $\evG$ implies \cref{eq:def_evG:extn}}\\
                  &\leq\quad \sum_{\sigma\in \zo^2} {\sum_{i=1}^{K_{\sigma}} b_\sigma\inparen{\Ex\insquare{v_{i}^{\sigma}} + \frac{1}{\abs{I_\sigma}}\cdot O\sinparen{m^{\frac14}}  + m^{-\frac14}} }
                  \tag{Using \cref{lem:close_mean}}\\
                  &\leq\quad \sum_{\sigma\in \zo^2} {\sum_{i=1}^{K_{\sigma}} b_\sigma\inparen{\Ex\insquare{v_{i}^{\sigma}} + O_\rho(m^{-\frac14})  } }
                  \tag{Using that $\abs{I_\sigma}\geq \rho m$}\\
                  &\leq\quad \sum_{\sigma\in \zo^2} \sum_{i=1}^{K_{\sigma}} b_\sigma\inparen{\Ex\insquare{v_{i}^{\sigma}} } + \sum_{\sigma\in \zo^2} \sum_{i=1}^{K_{\sigma}} O_{c\rho}(m^{-\frac14})
                  \tag{Using that for all $x\in {\rm supp}(\cD)$, $0\leq b_\sigma'(x)\leq \frac{1}{c}$ due to \cref{asmp:2}}\\
                  &=\quad \sum_{\sigma\in \zo^2} \sum_{i=1}^{K_{\sigma}} b_\sigma\inparen{\Ex\sinsquare{v_{i}^{\sigma}}}
                  + O_\rho(nm^{-\frac14}).
                  \tag{Using that $K_\sigma\leq n$}
                \end{align*}
                Next, using \cref{lem:exp_os:extn}, we get that
                \begin{align*}
                  \sum_{\sigma\in \zo^2}{\sum_{i=1}^{K_{\sigma}} b_\sigma\sinparen{v_j^{\sigma}}}
                  \quad
                  &= \sum_{\sigma\in \zo^2} \sum_{i=1}^{K_{\sigma}} b_\sigma\inparen{F^{-1}\inparen{\frac{\abs{I_\sigma}-i}{\abs{I_\sigma}+1} }}
                  + O_\rho(nm^{-\frac14})\\
                  &\leq \sum_{\sigma\in \zo^2} \sum_{i=1}^{K_{\sigma}} b_\sigma\inparen{ F^{-1}\inparen{\frac{\abs{I_\sigma}-i}{\abs{I_\sigma}} }}
                  + O_{\rho c}(m^{-1})
                  + O_\rho(nm^{-\frac14})
                  \tag{Using that $\abs{I_\sigma}\geq \rho m$ and for all $x\in {\rm supp}(\cD)$, $b_\sigma'(x),\inparen{F^{-1}}'(x)\leq \frac{1}{c}$  due to \cref{asmp:2}}
                 \end{align*}
                 \begin{align*}
                  \white{.}\qquad\qquad\qquad &\leq \sum_{\sigma\in \zo^2} \sum_{i=1}^{K_{\sigma}} b_\sigma\inparen{ F^{-1}\inparen{\frac{\abs{I_\sigma}-i}{\abs{I_\sigma}} }}
                  + O_{\rho c}(nm^{-\frac14})
                  \tag{Using that $K_\sigma\leq n$}\\
                  &\leq \sum_{\sigma\in \zo^2} \int_{x=\abs{I_\sigma}-K_\sigma}^{\abs{I_\sigma}} b_\sigma\inparen{ F^{-1}\inparen{\frac{x}{\abs{I_\sigma}} +\frac{1}{\rho m} }} dx
                  + O_{\rho c}(nm^{-\frac14})
                  \tag{Using that $\abs{I_\sigma}\geq \rho m$}\\
                  &\leq \sum_{\sigma\in \zo^2} \int_{x=\abs{I_\sigma}-K_\sigma}^{\abs{I_\sigma}} \inparen{b_\sigma\inparen{ F^{-1}\inparen{\frac{x}{\abs{I_\sigma}}}} + O_{\rho c}\inparen{m^{-1}}} dx
                  + O_{\rho c}(nm^{-\frac14})
                  \tag{Using that $K_\sigma\leq n$, $\abs{I_\sigma}\geq \rho m$, and for all $x\in {\rm supp}(\cD)$, $b_\sigma'(x),\inparen{F^{-1}}'(x)\leq \frac{1}{c}$ due to \cref{asmp:2} }\\
                  &\leq \sum_{\sigma\in \zo^2} \int_{x=\abs{I_\sigma}-K_\sigma}^{\abs{I_\sigma}} b_\sigma\inparen{ F^{-1}\inparen{\frac{x}{\abs{I_\sigma}}}} dx
                  + O_{\rho c}(nm^{-\frac14})
                  \tagnum{Using that $n\geq \eta m$, hence $O_\rho(nm^{-\frac14})\gg \frac{K_\sigma}{m}$}\\
                  &= f_b(K)
                  + O_{\rho c}(nm^{-\frac14}).
                  \tagnum{Using \cref{eq:def_f:app:extn}}
                  \customlabel{eq:bound_on_obs_util:new:extn}{\theequation}
                \end{align*}
                Similarly, we can show the following lower bound on $\sinangle{\tx,\hw}$ conditioned on $\evG$
                \begin{align*}
                  \sum_{\sigma\in \zo^2} \sum_{i=1}^{K_{\sigma}} b_\sigma\sinparen{v_j^{\sigma}}
                  &\geq  f_b(K) - O_{\eta\rho}(nm^{-\frac14}).\yesnum\label{eq:bound_on_obs_util2:new:extn}
                \end{align*}
                Substituting $\gamma=\beta$ in Equations~\eqref{eq:bound_on_obs_util:new:extn} and \eqref{eq:bound_on_obs_util2:new:extn}, and using that for all intersections $\sigma\in \zo^2$, $x$ selects top $K_\sigma$ items from $I_\sigma$ by observed utility, we get that conditioned on $\evG$
                \begin{align*}
                  \sinangle{x,\hw} = \sum_{\sigma\in \zo^2}{\sum_{i=1}^{K_{\sigma}} b_\sigma\sinparen{v_j^{\sigma}}} = f_\beta(K) \pm O_{\eta\rho}(nm^{-\frac14}).
                  \end{align*}
                  Further, choosing $\beta_\sigma$ to be the identify function in Equations~\eqref{eq:bound_on_obs_util:new:extn} and \eqref{eq:bound_on_obs_util2:new:extn}, and using that for all intersections $\sigma\in \zo^2$, $x$ selects top $K_\sigma$ items from $I_\sigma$ \mbox{by observed utility, we get that conditioned on $\evG$}
                  \begin{align*}
                    \sinangle{x, w} = \sum_{\sigma\in \zo^2}{\sum_{i=1}^{K_{\sigma}} v_j^{\sigma}} = f_1(K) \pm O_{\eta\rho}(nm^{-\frac14}).
                  \end{align*}
                \end{proof}

                \subsubsection{Proof of \cref{lem:sc_lp:ext}}\label{sec:proofof:lem:sc_lp:gen}
                \begin{proof}[Proof of \cref{lem:sc_lp:ext}]
                  The gradient of $f_b$ is:
                  \begin{align*}
                    \forall \sigma\in \zo^2,\quad
                    \inparen{\nabla f_b(x)}_\sigma
                    = b_\sigma\inparen{F^{-1}_\cD\inparen{1-\frac{k_\sigma}{\abs{I_\sigma}}}}.
                  \end{align*}
                  Since $F^{-1}_\cD\inparen{1-\frac{k_\sigma}{\abs{I_\sigma}}}$ is in the support of $\cD$,
                  $b_\sigma\inparen{F^{-1}_\cD\inparen{1-\frac{k_\sigma}{\abs{I_\sigma}}}}$ is upper bounded by $M_b$.
                  Hence,
                  \begin{align*}
                    \forall \sigma\in \zo^2,\quad
                    \inparen{\nabla f_b(x)}_\sigma
                    \leq M_b.
                  \end{align*}
                  It follows that
                  \begin{align*}
                    \norm{\nabla f_b(x)}_2 \leq 2M_b.
                  \end{align*}
                  Thus, $f_b$ is $2M_b$-Lipshitz.
                  The Hessian of $f_b$ is:
                  \begin{align*}
                    \forall \sigma\in \zo^2,\quad
                    \inparen{\nabla^2 f_b(x)}_{\sigma\sigma}
                    &= -\frac{1}{\abs{I_\sigma}} \cdot \frac{b_\sigma'\inparen{F^{-1}_\cD\inparen{1-\frac{k_\sigma}{\abs{I_\sigma}}}}}{\mu_\cD\inparen{F^{-1}_\cD\inparen{1-\frac{k_\sigma}{\abs{I_\sigma}}}}},\\
                    \forall \sigma,\tau\in \zo^2, \sigma\neq \tau,\quad
                    \inparen{\nabla^2 f_b(x)}_{\sigma\tau}
                    &= 0.
                  \end{align*}
                  Hence, the Hessian is a diagonal matrix.
                  We can upper bound its diagonal entries using that $b'_\sigma(x)\geq c$ and $\mu_\cD(x)\leq D$
                  for all $x\in {\rm supp}(\cD)$ and $\abs{I_\sigma}\leq (1-3\rho)m$ (this holds as for all $\tau\in \zo^2$ $\tau\neq \sigma$, $\abs{I_\tau} \geq \rho m$):
                  \begin{align*}
                    \forall \sigma\in \zo^2,\quad
                    \inparen{\nabla^2 f_b(x)}_{\sigma\sigma}
                    &\leq -\frac{c}{(1-3\rho)mD} %
                    \leq -\frac{c}{(1-\rho)mD}.
                  \end{align*}
                  It follows that $f_b$ is $\frac{c}{(1-\rho)mD}$-strongly concave.
                  Similarly, we can lower bound the diagonal entries of the Hessian by using that $b'_\sigma(x)\leq \frac{1}{c}$ and $\mu_\cD(x)\geq d$
                  for all $x\in {\rm supp}(\cD)$ and $\abs{I_\sigma}\geq \rho m$:
                  \begin{align*}
                    \forall \sigma\in \zo^2,\quad
                    \inparen{\nabla^2 f_b(x)}_{\sigma\sigma}
                    &\geq -\frac{1}{\rho m dc}. %
                  \end{align*}
                  Hence, $f_b$ is $\frac{1}{\rho m dc}$-Lipshitz continuous.

                  \smallskip

                  Replacing $f_b$, $M_b$, $b_\sigma(x)$, and $c$ by $f_1$, $M$, $x$, and $1$ in the above proof
                  it follows that $f_1$ is $2M$-Lipshitz, $\frac{1}{(1-\rho)mD}$-strongly concave, and $\frac{1}{\rho m d}$-Lipshitz continuous.
                \end{proof}

                \subsubsection{Proof of \cref{lem:prob_evG:extn}}\label{sec:proofof:lem:prob_evG:extn}
                \begin{proof}[Proof of \cref{lem:prob_evG:extn}]
                  Fix any intersection $\sigma\in \zo^2$ and select an index $j\in \N$ such that $s(j)\leq \abs{I_\sigma}$.
                  Here, $v^{\sexp{\sigma}}_{s(j)}$ is the $\inparen{\abs{I_\sigma}-s(j)}$-th order statistic of $\abs{I_\sigma}$ draws from $\cD$.
                  Using this, it can be shown that $F\sinparen{v^{\sexp{\sigma}}_i}$ is the $\inparen{\abs{I_\sigma}-i}$-th order statistic of $\abs{I_\sigma}$ draws from $\unif$.
                  It suffices to prove the following concentration inequality:
                  \begin{align*}
                    \Pr\insquare{  \abs{v^{\sexp{\sigma}}_{s(j)} - \Ex\insquare{v^{\sexp{\sigma}}_{s(j)}}}
                    \geq {\Theta_c\inparen{m^{-\frac14}}}  } \leq O_{\rho c}{\inparen{m^{-\frac12}}}.
                    \yesnum\label{eq:to_prove:extn}
                  \end{align*}
                  If we prove \cref{eq:to_prove:extn}, then lemma follows by taking a union bound over all pairs of $\sigma\in \zo^2$ and select an $j\in \N$ such that $s(j)\leq \abs{I_\sigma}$.
                  Here, we use the fact that there are $O(m^{\frac{1}{4}})$ such pairs of $\sigma$ and $j$.

                  At a high level, the proof of this concentration inequality follows because $F\sinparen{v^{\sexp{\sigma}}_i}$ is concentrated around its mean and $F^{-1}$ has a bounded gradient (by \cref{asmp:2}).
                  Since $\sigma\in \zo^2$ and $j\in \N$ are fixed, to simplify the notation we drop the super-script and sub-script from $v^{\sexp{\sigma}}_{s(j)}$ in the rest of this proof.
                  That is, we use $v$ to denote $v^{\sexp{\sigma}}_{s(j)}$.

                  As argued earlier, $F\sinparen{v}$ is distributed as $U_{(\abs{I_\sigma}-s(j):\abs{I_\sigma})}$.
                  From \cref{fact:os}, we have that for all $\sigma\in \zo^2$ and $i\in I_\sigma$,
                  $$\mathrm{Var}\insquare{F\sinparen{v}} = \mathrm{Var}\insquare{U_{(\abs{I_\sigma}-i:\abs{I_\sigma})}} \leq \frac{1}{\abs{I_\sigma}}.$$
                  Since $\abs{I_\sigma}\geq \rho m$,
                  $$\mathrm{Var}\insquare{F\sinparen{v}} \leq \frac{1}{\rho m}.$$
                  Hence, using the Chebyshev's inequality~\cite{motwani1995randomized}, we get:
                  \begin{align*}
                    \Pr\insquare{  \abs{F\sinparen{v} - \Ex\insquare{F\sinparen{v}}}
                    \geq m^{-\frac14}  } \leq \rho^{-\frac{1}{2}}m^{-\frac12}. %
                    \yesnum\label{eq:vi_conc:new:extn}
                  \end{align*}
                  Let $\evJ$ be the event that $\abs{F\sinparen{v} - \Ex\insquare{F\sinparen{v}}}\leq  m^{-\frac14}$.
                  Thus, by \cref{eq:vi_conc:new:extn},
                  \begin{align*}
                    \Pr[\evJ]\geq 1-\rho^{-\frac{1}{2}}m^{-\frac12}.
                    \yesnum\label{eq:probJ:extn}
                  \end{align*}
                  Conditioned on $\evJ$, we have that:
                  \begin{align*}
                    F(v) \in \Ex\insquare{F(v)} \pm m^{-\frac{1}{4}}.
                  \end{align*}
                  Applying $F^{-1}$ on both sides (using that $F^{-1}$ is an increasing and continuous) we get that:
                  Conditioned on $\evJ$
                  \begin{align*}
                    v &\in F^{-1}\inparen{\Ex\insquare{F(v)} \pm m^{-\frac{1}{4}}}.
                    \yesnum\label{eq:1:extn}
                  \end{align*}
                  By \cref{asmp:2}, we have that $\mu_\cD(x)\leq \frac{1}{c}$ for all $x\in {\rm supp}(\cD)$.
                  Hence, in particular, for all $x\in  \Ex\insquare{F(v)} \pm m^{-\frac{1}{4}}$, $\mu_\cD(v)\leq \frac{1}{c}$.
                  Equivalently, $F'(x) \leq \frac{1}{c}$ for all $x\in  \Ex\insquare{F(v)} \pm m^{-\frac{1}{4}}$.
                  This implies that $(F^{-1})'(x) \leq \frac{1}{c}$ for all $x\in  \Ex\insquare{F(v)} \pm m^{-\frac{1}{4}}$.
                  Using this, \cref{eq:1:extn}, and the fact that $F^{-1}$ is increasing we get that: Conditioned on $\evJ$
                  \begin{align*}
                    v &\in F^{-1}\inparen{\Ex\insquare{F(v)}} \pm O_c\inparen{m^{-\frac{1}{4}}}
                    \yesnum\label{eq:intv_bound4:extn}
                  \end{align*}
                  Hence, it must hold that
                  \begin{align*}
                    \Ex\insquare{v\given \evJ} &\in F^{-1}\inparen{\Ex\insquare{F(v)}} \pm O_c\inparen{m^{-\frac{1}{4}}}.
                    \yesnum\label{eq:intv_bound1:extn}
                  \end{align*}
                  But from \cref{fact:whp} and \cref{eq:probJ:extn}, we have that
                  \begin{align*}
                    \Ex\insquare{v}
                    &\in \Ex\insquare{v\given \evJ} \pm \rho^{-\frac{1}{2}}m^{-\frac12}.
                    \yesnum\label{eq:intv_bound2:extn}
                  \end{align*}
                  Chaining the inequalities in \cref{eq:intv_bound1:extn,eq:intv_bound2:extn} we get that
                  \begin{align*}
                    \Ex\insquare{v}
                    &\in F^{-1}\inparen{\Ex\insquare{F(v)}} \pm O_{c\rho}\inparen{m^{-\frac{1}{4}}}.
                    \yesnum\label{eq:intv_bound6:extn}
                  \end{align*}
                  Equivalently, that
                  \begin{align*}
                    F^{-1}\inparen{\Ex\insquare{F(v)}}
                    &\in \Ex\insquare{v}  \pm O_{c\rho}\inparen{m^{-\frac{1}{4}}}.
                    \yesnum\label{eq:intv_bound3:extn}
                  \end{align*}
                  Chaining the inequalities in \cref{eq:intv_bound4:extn,eq:intv_bound3:extn}, we get:
                  Conditioned on $\evJ$
                  \begin{align*}
                    v &\in \Ex\insquare{v} \pm O_{c\rho}\inparen{m^{-\frac{1}{4}}}.
                  \end{align*}
                  Thus, \cref{eq:to_prove:extn} follows from the above equation and \cref{eq:probJ:extn}.
                \end{proof}
                \subsubsection{Proof of \cref{lem:exp_os:extn}}\label{sec:proofof:lem:exp_os:extn}
                \begin{proof}[Proof of \cref{lem:exp_os:extn}]
                  Fix any intersection $\sigma\in \zo^2$.
                  Since $\sigma$ is fixed, we drop the super-script from $v^{\sexp{\sigma}}_{i}$.
                  That is, we use $v_{i}$ to denote $v^{\sexp{\sigma}}_{i}$.
                  The proof of this lemma is similar to the proof of \cref{lem:prob_evG:extn}.
                  In particular, we borrow the following fact from \cref{lem:prob_evG:extn}, which is proved in \cref{eq:intv_bound6:extn}:
                  For any $\ell\in \inbrace{1,2,\dots,\abs{I_\sigma}}$:
                  \begin{align*}
                    \Ex\insquare{v_{\ell}}
                    &\in F^{-1}\inparen{\Ex\insquare{F\sinparen{v_{\ell}}}} \pm O_{c\rho}\sinparen{m^{-\frac{1}{4}}}.
                    \yesnum\label{eq:b3:extn}
                  \end{align*}
                  Since $v_{\ell}$ is the $\inparen{\abs{I_\sigma}-\ell}$-th order statistic of $\abs{I_\sigma}$ draws from $\cD$, it follows that $F\sinparen{v_\ell}$ is the $\inparen{\abs{I_\sigma}-\ell}$-th order statistic of $\abs{I_\sigma}$ draws from $\unif$.
                  Hence, using \cref{fact:os}, it follows that
                  \begin{align*}
                    \Ex\insquare{F\sinparen{v_{\ell}}}
                    = \frac{\abs{I_\sigma}-\ell}{\abs{I_\sigma}+1}
                    &= 1-\frac{\ell}{\abs{I_\sigma}} \pm O_\rho(m^{-1}).
                    \tag{Using that $\abs{I_\sigma}\geq \rho m$}
                  \end{align*}
                  Applying $F^{-1}$ to both sides, we get that
                  \begin{align*}
                    F^{-1}\inparen{ \Ex\insquare{F\sinparen{v_{\ell}}}   }
                    &= F^{-1}\inparen{ 1-\frac{\ell}{\abs{I_\sigma}} \pm O_{\rho}(m^{-1})}.
                    \yesnum\label{eq:b1:extn}
                  \end{align*}
                  \noindent By \cref{asmp:2}, we have that $\mu_\cD(x)\leq \frac{1}{c}$ for all $x\in {\rm supp}(\cD)$.
                  This implies that
                  \begin{align*}
                    \forall x\in {\rm supp}(\cD),\quad
                    (F^{-1})'(x) \leq \frac{1}{c}.
                    \yesnum\label{eq:b5:extn}
                  \end{align*}
                  Using this, \cref{eq:b1:extn}, and the fact that $F^{-1}$ is increasing we get that:
                  \begin{align*}
                    F^{-1}\inparen{ \Ex\insquare{F\sinparen{v_{\ell}}}   }
                    &= F^{-1}\inparen{ 1-\frac{\ell}{\abs{I_\sigma}}} \pm O_{c\rho}(m^{-1}).
                    \yesnum\label{eq:b2:extn}
                  \end{align*}
                  Combining the above equation with \cref{eq:b3:extn}, it follows that
                  \begin{align*}
                    \Ex\insquare{v_{\ell}}
                    &\in F^{-1}\inparen{ 1-\frac{\ell}{\abs{I_\sigma}}} \pm O_{c\rho}\sinparen{m^{-\frac{1}{4}}}.
                  \end{align*}
                  \vspace{-5mm}
                  
                \end{proof}

                \subsubsection{Proof of \cref{lem:close_mean}}\label{sec:proofof:lem:close_mean}
                \begin{proof}
                  Fix any intersection $\sigma\in \zo^2$.
                  Since $\sigma$ is fixed, we drop the super-script from $v^{\sexp{\sigma}}_{i}$.
                  That is, we use $v_{i}$ to denote $v^{\sexp{\sigma}}_{i}$.
                  By \cref{asmp:2}, we have that $\mu_\cD(x)\leq \frac{1}{c}$ for all $x\in {\rm supp}(\cD)$.
                  This implies that
                  \begin{align*}
                    \forall x\in {\rm supp}(\cD),\quad
                    (F^{-1})'(x) \leq \frac{1}{c}.
                    \yesnum\label{eq:c1:extn}
                  \end{align*}
                  \noindent We have
                  \begin{align*}
                    \abs{\Ex\insquare{  v_{i} } - \Ex\insquare{v_{j}} }
                    &\leq
                    \abs{ F^{-1}\inparen{ 1-\frac{i}{\abs{I_\sigma}}} - F^{-1}\inparen{ 1-\frac{j}{\abs{I_\sigma}}} }
                    + O_{c\rho}\inparen{m^{-\frac{1}{4}}}
                    \tag{Using \cref{lem:exp_os:extn}}\\
                    &\leq
                    \abs{F^{-1}\inparen{ 1-\frac{j}{\abs{I_\sigma}}} - F^{-1}\inparen{ 1-\frac{j}{\abs{I_\sigma}}} + \frac{1}{\abs{I_\sigma}}\cdot  O_{c}\inparen{\abs{i-j}} }
                    + O_{c\rho}\inparen{m^{-\frac{1}{4}}}
                    \tag{Using \cref{eq:c1:extn} and the fact that $F^{-1}$ is a non-decreasing function}\\
                    &\leq
                    \frac{1}{\abs{I_\sigma}} \cdot O_{c}\inparen{\abs{i-j}}
                    + O_{c\rho}\inparen{m^{-\frac{1}{4}}}.
                  \end{align*}
                  
                  \vspace{-5mm}
                  
                \end{proof}

                \vspace{-5mm}

                \subsection{Proof of Proposition~\ref{prop:robustness_ub_complement}}\label{sec:proofof:prop:robustness_ub_complement}
                In this section, we prove \cref{prop:robustness_ub_complement}, which complements \cref{thm:gen_ub} when the implicit bias functions satisfy the additive relation $b_{00}(x)+b_{11}(x)=b_{10}(x)+b_{01}(x)$ for all $x\in {\rm supp}(\cD)$.
                \begin{proposition}\label{prop:robustness_ub_complement}
                  Suppose the fraction of selected candidates is between $\eta$ and $1-\eta$ for some constant $\eta>0$ (i.e., $\eta <\frac{n}{m}<1-\eta$) and all intersections have size at least $\rho m$ (i.e., for all $\sigma\in \zo^2$,  $\abs{I_\sigma}\geq \rho m$).
                  Under the implicit bias model in Equation~\eqref{eq:gen_model}, for any continuous distribution $\cD$ with non-negative support and set of strictly increasing functions $b_\sigma\colon \R_{\geq 0}\to \R_{\geq 0}$, where there is one function for {each intersection $\sigma\in\zo^2$, if \cref{asmp:2} and the following equation holds:}
                  \begin{align*}
                    (b_{00}-b_{10}-b_{01}+b_{11})\circ \inparen{F_\cD^{-1}\inparen{1-\frac{n}{m}}} = 0
                    \yesnum\label{asmp:3}
                  \end{align*}
                  then there is a threshold $m_0\in \N$, such that when the number of items are larger than this threshold, $m\geq m_0$,
                  there are non-intersectional lower bounds $L_1,L_2\geq 0$
                  such that the utility ratio is at least $1-\eps$, i.e., $$\ratio_\cD(L,\beta)\geq {1 - \eps}.$$
                \end{proposition}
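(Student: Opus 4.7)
}
The plan is to set the non-intersectional lower bounds to the proportional-representation values
$L_1 \coloneqq \sfloor{\abs{G_1}\, n/m}$ and $L_2 \coloneqq \sfloor{\abs{G_2}\, n/m}$, and then show that under the additive hypothesis \cref{asmp:3}, the constrained maximizer of Program~\eqref{prog:2:app} (with $f_b$ as in \eqref{eq:def_f:app:extn}) coincides with the unconstrained maximizer $s^\star = \{\abs{I_\sigma}\, n/m\}_{\sigma \in \zo^2}$ of Program~\eqref{prog:1:app}. Once $\wt{s} = s^\star$ is established, the Lipschitz-plus-concentration framework from \cref{sec:proofof:thm:ub}, extended to the general setting in \cref{sec:proofof:thm:gen_ub}, immediately delivers $\ratio_\cD(L, \beta) \geq 1 - \eps$ for $m$ exceeding a threshold depending on $\eta, \rho, \eps$ and the constants of \cref{asmp:2}.

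First, $s^\star$ is feasible for Program~\eqref{prog:2:app} since $s^\star_{10} + s^\star_{11} = \abs{G_1}\, n/m \geq L_1$ and analogously $s^\star_{01} + s^\star_{11} \geq L_2$. To verify optimality, I set up the KKT conditions at $s^\star$. Let $z \coloneqq F_\cD^{-1}(1 - n/m)$. The gradient computation in Step~2 of \cref{sec:proofof:thm:gen_ub} yields $\nabla f_b(s^\star) = (b_\sigma(z))_{\sigma \in \zo^2}$, so the stationarity condition
\begin{align*}
    \nabla f_b(s^\star) \;=\; \mu\,\mathbf{1} \;-\; \lambda_1\,(e_{10}+e_{11}) \;-\; \lambda_2\,(e_{01}+e_{11}), \qquad \lambda_1,\lambda_2 \geq 0,
\end{align*}
is pinned down by its $00$, $10$, and $01$ components to $\mu = b_{00}(z)$, $\lambda_1 = b_{00}(z) - b_{10}(z)$, $\lambda_2 = b_{00}(z) - b_{01}(z)$. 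The remaining $11$ component then reduces exactly to $b_{11}(z) = b_{10}(z) + b_{01}(z) - b_{00}(z)$, which is precisely \cref{asmp:3}. In the natural bias regime $b_\sigma(z) \leq b_{00}(z)$ for $\sigma \neq 0$ (the setting of interest for the proposition), both multipliers are non-negative, so KKT is satisfied; the strong concavity of $f_b$ established in \cref{lem:sc_lp:gen} then forces $\wt{s} = s^\star$ uniquely.

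With $\wt{s} = s^\star$ in hand, the generalized version of \cref{lem:conc_of_k:app} developed in \cref{sec:proofof:thm:gen_ub} gives $\snorm{\wt{K} - K^\star}_2 \leq O(n m^{-\nicefrac{1}{8}})$ with probability $1 - O(m^{-\nicefrac{1}{4}})$. Combined with the $O(M)$-Lipschitz bound on $f_1$ from \cref{lem:sc_lp:gen} and the fact that $f_1(K^\star) = \Theta(n)$ (which uses $\eta < n/m < 1 - \eta$), this yields $\abs{f_1(\wt{K}) - f_1(K^\star)}/f_1(K^\star) \leq \eps/2$ for $m$ exceeding a threshold polynomial in $1/\eps$ and the constants from \cref{asmp:2}. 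The conditioning argument at the end of \cref{lem:using_sc} then converts this high-probability bound on the ratio of $f_1$ values into $\ratio_\cD(L, \beta) \geq 1 - \eps$.

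The crux of the argument is the KKT verification: hypothesis \cref{asmp:3} is precisely the consistency equation needed for the stationarity system at $s^\star$ to be simultaneously solvable in all four coordinates, and in this sense it is a tight complement to \cref{thm:gen_ub}, where the non-vanishing of $(b_{00}-b_{10}-b_{01}+b_{11})\circ z$ was exactly the obstruction used to lower-bound $\snorm{\nabla f_b(\wt{s}) - \nabla f_b(s^\star)}_2$. The integer rounding of $L_1, L_2$ produces only an $O(1)$ perturbation of the problem absorbed by the $O(n m^{-\nicefrac{1}{8}})$ Lipschitz slack, and configurations that violate the natural bias regime (some $b_\sigma(z) > b_{00}(z)$) can be treated by a straightforward case analysis on the signs of the candidate multipliers.
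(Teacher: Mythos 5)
Your proposal matches the paper's proof essentially step for step: proportional lower bounds $L_\ell = \abs{G_\ell}\cdot n/m$, feasibility of $s^\star$ for Program~\eqref{prog:2:app} with the two lower-bound constraints tight, a first-order optimality check that reduces to $\langle\nabla f_b(s^\star),(1,-1,-1,1)\rangle=0$ --- which is exactly \cref{asmp:3} --- and then the concentration-plus-Lipschitz machinery of \cref{sec:proofof:thm:gen_ub} to convert $\wt{s}=s^\star$ into $\ratio_\cD(L,\beta)\geq 1-\eps$ for large $m$. The only substantive difference is that you phrase the optimality verification via explicit KKT multipliers where the paper uses a perpendicular-subspace argument, and your observation that the multipliers must be non-negative (i.e., $b_{10}(z),b_{01}(z)\leq b_{00}(z)$) identifies a sign condition that the paper's ``if and only if'' silently assumes, so your write-up is, if anything, slightly more careful on that point.
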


                \noindent We claim that the following non-intersection constraints satisfy the claim in \cref{prop:robustness_ub_complement}:
                \begin{align*}
                  L_1 = \frac{\abs{G_1}}{m}
                  \quad\text{and}\quad
                  L_2 = \frac{\abs{G_2}}{m}.
                \end{align*}
                The proof borrows results from the proof of \cref{thm:gen_ub}.
                Beyond those results, the main step is to show that for these values of $L_1$ and $L_2$, if \cref{asmp:3} holds then $s^\star=\wt{s}$.
                Where $s^\star$ and $\wt{s}$ are the optimizers of Programs~\eqref{prog:1:app} and \eqref{prog:2:app} respectively.
                If $s^\star=\wt{s}$, then the proof can be completed as follows:
                Let
                \begin{align*}
                  \delta\coloneqq \frac{c\eta\rho}{MM_b}.
                  \yesnum
                  \label{eq:def_delta:app}
                \end{align*}
                Let $\evH$ be the event that the following hold:
                \begin{align*}
                  \sinangle{\sx, w} &= f_1(K^\star) \pm O_{\delta}\sinparen{nm^{-\frac{1}{4}}},\quad\text{\white{and}}\quad
                  \sinangle{\sx,\hw} = f_\beta(K^\star) \pm O_{\delta}\sinparen{nm^{-\frac{1}{4}}},
                  \yesnum\label{eq:def_evH1}\\
                  \sinangle{\tx, w} &= f_1(\wt{K}) \pm O_{\delta}\sinparen{nm^{-\frac{1}{4}}},\quad\ \ \hspace{0.5mm} \text{and}\quad\ \
                  \sinangle{\tx,\hw} = f_\beta(\wt{K}) \pm O_{\delta}\sinparen{nm^{-\frac{1}{4}}}.
                  \yesnum\label{eq:def_evH2}
                \end{align*}
                From \cref{lem:opt_solution:app:gen} we get that
                \begin{align*}
                  \Pr\insquare{\evH}\geq 1-O_{\delta}\sinparen{m^{-\frac{1}{4}}}.
                  \yesnum\label{eq:prob_evH}
                \end{align*}
                Conditioned on $\evH$, \cref{eq:def_evH1,eq:def_evH2} hold, and hence:
                \begin{align*}
                  \abs{\sinangle{\sx, w}-f_1(s^\star)}
                  &\leq \abs{f_1(K^\star) - f_1(s^\star) + O_\delta\sinparen{nm^{-\frac{1}{4}}}}\\
                  &\leq 2M\cdot \norm{K^\star-s^\star}_2 + {O_{\delta}(nm^{-\frac{1}{8}})}.
                  \tag{Using that $f_1$ is $2M$-Lipschitz proved in \cref{lem:sc_lp:gen}}
                \end{align*}
                \cref{lem:conc_of_k:app} also holds in for the implicit bias model in \cref{eq:gen_model} and distributions $\cD$; see \cref{sec:proofof:thm:gen_ub}. Using it, it follows that: Conditioned on $\evH$
                \begin{align*}
                  \abs{\sinangle{\sx, w}-f_1(s^\star)}
                  &\leq {O_{\delta}(nm^{-\frac{1}{8}})}.\yesnum\label{eq:exp_in_k:gen}
                \end{align*}
                Replacing $\sx$, $K^\star$, and $s^\star$ by $\tx$ in the above argument, by $\wt{K}$, and $\wt{s}$ we get that: Conditioned on $\evH$
                \begin{align*}
                  \abs{{\sinangle{\tx, w}}-f_1(\wt{s})}
                  \leq {O_{\delta}(nm^{-\frac{1}{8}})}.
                  \yesnum\label{eq:exp_in_k2:gen}
                \end{align*}
                Since $s^\star=\wt{s}$, using the triangle inequality with \cref{eq:exp_in_k:gen,eq:exp_in_k2:gen} implies that: Conditioned on $\evH$
                \begin{align*}
                  \abs{{\sinangle{\tx, w}}-{\sinangle{\sx, w}}} \leq {O_{\delta}(nm^{-\frac{1}{8}})}.
                  \yesnum\label{eq:lb_comp:app}
                \end{align*}
                Then we can prove the result as follows:
                \begin{align*}
                  \ratio_{\unif}(L,\beta)
                  &\coloneqq \Ex_{w}\insquare{ \frac{ \sinangle{\tx,w} }{ \sinangle{\sx,w} } }\\
                  &\geq \Ex_{w}\insquare{ \frac{ \sinangle{\tx,w} }{ \sinangle{\sx,w} } \given \evH} - O_\delta(m^{-\frac{1}{4}})
                  \tag{Using \cref{fact:whp} annd \cref{eq:prob_evH}}\\
                  &\geq \Ex_{w}\insquare{ \frac{\sinangle{\sx,w} -  O_{\delta}(nm^{-\frac{1}{8}})}{ \sinangle{\sx,w} } \given \evH} - O_\delta(m^{-\frac{1}{4}})
                  \tag{Using \cref{eq:lb_comp:app}}\\
                  &= 1 - O_{\delta}(m^{-\frac{1}{8}}).
                  \tag{Using that $f_1(K^\star)\geq \frac{n}{4}$ and $\evH$ implies \cref{eq:def_evH1}}
                \end{align*}
                Thus, for any value $m_0$ such that
                $$m_0^\frac{1}{8}\geq  \Omega\inparen{ \frac{1}{\eps\delta} } \ \ \Stackrel{\eqref{eq:def_delta:app}}{=} \ \  \Omega\inparen{ \frac{MM_b}{\eps c\eta\rho} },$$
                it holds that $\ratio_{\unif}(L,\beta) \geq 1-\eps.$

                \smallskip
                \noindent{\bf Proof that $s^\star=\wt{s}$.}
                Recall that $s^\star$ is:
                \begin{align*}
                  s^\star \coloneqq \inbrace{\abs{I_\sigma}\cdot \frac{n}{m}}_{\sigma\in \zo^2}.
                  \yesnum\label{eq:val_k_star:app:lb}
                \end{align*}
                One can verify that $s^\star$ is feasible for Program~\eqref{prog:2:app}.
                Moreover, $s^\star$ satisfies the constraints~\eqref{eq:prog2:lb1} and \eqref{eq:prog2:lb2} with equality.
                Note that Program~\eqref{prog:2:app} is concave, its feasible region is a subset of $\inbrace{k\in \R^4\colon \sum_\sigma k_\sigma=n}.$
                and
                $s^\star$ satisfies constraints~\eqref{eq:prog2:lb1} and \eqref{eq:prog2:lb2} with equality.
                These imply that $s^\star$ is optimal for Program~\eqref{prog:2:app} if and only if $\nabla f_b(s^\star)$ has no component in the linear-subspace of $\inbrace{k\in \R^4\colon \sum_\sigma k_\sigma=0}$ which is perpendicular to constraints~\eqref{eq:prog2:lb1} and \eqref{eq:prog2:lb2}.
                One can show that the corresponding linear-space is one dimensional and contains the vector: $v\coloneqq (1,-1,-1,1)$.
                Hence, $s^\star$ is optimal for Program~\eqref{prog:2:app} if and only if
                \begin{align*}
                  \inangle{\nabla f_b(s^\star), v} = 0.
                \end{align*}
                Substituting the value of $\nabla f_b(s^\star)$ from \cref{eq:exp_grad:app:gen}, we get
                \begin{align*}
                  \inangle{\nabla f_b(s^\star), v}
                  &= (b_{00}-b_{10}-b_{01}+b_{11})\circ \inparen{F_\cD^{-1}\inparen{1-\frac{n}{m}}}
                  = 0. \tag{Using \cref{asmp:3}}
                \end{align*}

\section{Conclusions, Limitations, and Future Work}

In this work we studied the set selection problem in the presence of implicit bias where each item may belong to any intersectional socially-salient groups.
Focusing on lower-bound constraints and randomly (and independently) drawn utilities, we first showed that {no} non-intersectional constraints achieve near-optimal utility.
We then presented  intersectional constraints that recover almost all utility for arbitrarily chosen biases for each intersectional group.
Thus, the {advantage of intersectional constraints} is substantial over a dimension-by-dimension approach.

Our work raises several questions.
While we show that the upper bound on the expected utility ratio in the non-intersectional case is 8/9 when $\beta$'s tend to zero {and utilities have the uniform distribution}, we do not know if this 8/9 is tight.
Moreover, it would be interesting to investigate if this 8/9  bound can go further down {for the distribution of utility changes or} as the number of groups $p$ increases.
Absent constraints, we know from \cref{Prop:range} that a ratio of $1/2$ can always be guaranteed for the uniform distribution.

More generally, our analysis suffers from several limitations, also sometimes found in closely related literature.
First, the assumption of independent random draws of utilities is critical but may not be realistic in practice.
Second, bias is assumed to be exogenous and due to the lack of a temporal element {in bias,} interventions cannot be evaluated in terms of their long-term equilibrium effects.
Third, and relating to the previous point, the interaction between affirmative action via setting lower-bound constraints and the behavior of individuals or groups is not taken into account.
To this end, the recently dropped law suit by the U.S. Department of Justice against Yale University\footnote{\url{https://www.nytimes.com/2021/02/03/us/yale-admissions-affirmative-action.html}}, that accused admission to be biased against Asian-American and white American, exemplifies how by prioritizing some groups others necessarily feel or are deprioritized.

Moreover, lower-bound constraints -- or affirmative action --  is just one intervention to curb the adverse effects of implicit bias.
Any effective approach against implicit bias must consider a diverse set interventions, including information campaigns and (re-)structured evaluations, complemented by increased accountability and transparency to enforce and guide these interventions.
Here, our work also raises the question of how policy makers can incorporate intersectionality {in other interventions and the potential advantages of doing so.}

The presence of bias and discrimination unquestionably remains one of the pressing issues of our society.
Implicit bias -- vis-\`a-vis explicit discrimination -- is too often viewed as an inevitable byproduct of human nature.
To holistically address how policies can help rather than hinder the persistence of unjust outcomes it is important to unpack many connected questions: Which groups merit protection and when? How does affirmative action regarding one group affect another? Should interventions aim at reducing the existence of implicit bias or is it sufficient if they counter its effects?

\paragraph{Acknowledgments.}
 This project is supported in part by an NSF Award (CCF-2112665).
 We would like to thank Jean-Paul Carvalho, Elisa Celis, and Patrick Loiseau for useful discussions.

\newpage

\bibliographystyle{plain}
\bibliography{bib-v1.bib}

\appendix

\newpage

\section{Properties of the Utility Ratio}\label{sec:app_utility_ratio}
To compare the efficacy of the constraints across different distributions, we would like $\ratio_{\cD}(L,\beta)$ to take similar values for different choices of $\cD$.
Recall that we focus on distributions with a non-negative support.
Clearly, for these $\ratio_{\cD}(L,\beta)$ lies between 0 and 1.
This is because $\sinangle{\tx,w}$ is always non-negative, and it is at most $\sinangle{\sx,w}$.
However, it is possible that its range, while a subset of $[0,1]$, changes significantly between distributions.
The next proposition shows that this is not true.
\begin{proposition}[\textbf{Invariance of range}]\label{Prop:range}\label{claim:range_of_r}
  For all continuous distributions $\cD$ and implicit bias parameters $\beta_1,\dots \beta_p\in (0,1]$, {absent any constraints (i.e., $L=0$), the range of $\ratio_{\cD}(L,\beta)$ is
  $\left(\frac{\Ex_{v\sim \cD}[v]}{\sup(\supp(\cD))}, 1\right]$.}
\end{proposition}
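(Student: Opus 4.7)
The plan is to prove the range equals $(\mu/\sup_\cD,\, 1]$ in four steps: (i) $\ratio_\cD(0,\beta)\leq 1$; (ii) $\ratio_\cD(0,\beta) > \mu/\sup_\cD$ strictly; (iii) both endpoints are approached; (iv) intermediate values are attained by continuity. Here $\mu \coloneqq \Ex_{v\sim\cD}[v]$ and $\sup_\cD \coloneqq \sup(\supp(\cD))$. Step (i) is immediate: $\hx$ is feasible for the unconstrained program defining $\sx$, so $\sinangle{\hx,w} \leq \sinangle{\sx,w}$ pointwise. Setting $\beta=\mathbf{1}$ forces $\hw=w$, hence $\hx=\sx$ almost surely (continuity of $\cD$ excludes ties), attaining ratio $1$.

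The main step is (ii), which combines two ingredients. First, $\sinangle{\sx,w} < n\sup_\cD$ almost surely: under continuous $\cD$, each of the top-$n$ order statistics is strictly below $\sup_\cD$ with probability $1$. Second, $\Ex[\sinangle{\hx,w}] \geq n\mu$: I would prove this via the decomposition $\Ex[\sinangle{\hx,w}] = \sum_i \Pr[i\in\hx]\cdot \Ex[w_i\mid i\in \hx]$ and argue $\Ex[w_i\mid i\in \hx] \geq \mu$ for every $i$ with positive selection probability. Conditioning on $w_{-i}$ and using that $\hw_i = b_\sigma(w_i)$ is strictly monotone in $w_i$ (so in particular $\hw_i = \beta_\sigma w_i$ under Equation~\eqref{eq:mult_bias}), membership $\{i\in\hx\}$ reduces to $\{w_i \geq T_{-i}(w_{-i})\}$ for a deterministic threshold $T_{-i}\geq 0$; since truncation from below never decreases the mean, $\Ex[w_i\mid w_i \geq T_{-i}] \geq \mu$, and $\sum_i \Pr[i\in\hx] = n$ yields the overall bound. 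Combining,
\begin{align*}
\ratio_\cD(0,\beta) \;=\; \Ex\!\left[\frac{\sinangle{\hx,w}}{\sinangle{\sx,w}}\right] \;\geq\; \frac{\Ex[\sinangle{\hx,w}]}{n\sup_\cD} \;\geq\; \frac{\mu}{\sup_\cD},
\end{align*}
with strict inequality in the first step because $\sinangle{\sx,w} < n\sup_\cD$ on a positive-measure event (if $\sup_\cD=\infty$, the strict bound $\ratio > 0 = \mu/\sup_\cD$ is immediate).

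For tightness (step iii), I would use the family $p=1$, $n=1$, $|G_1|=m-1$ with a single unbiased item. As $\beta_1 \to 0$, $\hx$ picks the unbiased item with probability tending to $1$, contributing $\mu$ in expectation; meanwhile $\sinangle{\sx,w}$ is the maximum of $m$ i.i.d.~draws from $\cD$, which converges in probability to $\sup_\cD$ as $m\to\infty$ (or diverges to $+\infty$ when $\sup_\cD=\infty$). Sending $\beta_1 \to 0$ then $m\to\infty$ drives the ratio to $\mu/\sup_\cD$. Step (iv) follows from continuity of $\beta \mapsto \ratio_\cD(0,\beta)$ on a fixed instance: ties in observed utility have measure zero under continuous $\cD$, so $\hx$ is an a.s.~locally constant function of $\beta$, and bounded convergence transfers the continuity to the expectation. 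Hence the image is a connected interval linking a value near $\mu/\sup_\cD$ (for extreme $\beta$) to $1$ (at $\beta=\mathbf{1}$), which together with (ii) covers all of $(\mu/\sup_\cD,1]$.

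The main obstacle is the conditioning argument in step (ii): cleanly justifying that $T_{-i}$ is well-defined almost surely and that $T_{-i} > 0$ on a positive-measure set, so the truncation bound is nontrivial. This requires care with the measure-zero tie events and with degenerate edge cases (for instance, when all other observed utilities are $0$, which is excluded a.s.~as long as $m>n$ and $\cD$ places mass on $(0,\infty)$).
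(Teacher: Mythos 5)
Your proposal is correct, and for the key step it takes a genuinely different route from the paper. The paper proves $\Ex[\sinangle{\hx,w}]\geq n\cdot\Ex_{v\sim\cD}[v]$ by an exchange argument: it constructs an explicit comparison selection $\overline{x}$ that fills intersections in non-increasing order of $\beta_\sigma$ (taking a \emph{deterministic} count $k_\sigma$ from each), shows via a swap argument that $\sinangle{\overline{x},w}\leq\sinangle{\hx,w}$ pointwise, and then evaluates $\Ex[\sinangle{\overline{x},w}]$ by linearity and exchangeability within each intersection. You instead decompose per item, $\Ex[\sinangle{\hx,w}]=\sum_i\Pr[i\in\hx]\,\Ex[w_i\mid i\in\hx]$, and show each conditional mean is at least $\Ex_{v\sim\cD}[v]$ by conditioning on $w_{-i}$ and observing that $\{i\in\hx\}$ is a lower-truncation event $\{w_i\geq T_{-i}\}$. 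Both arguments are sound; yours buys two things: it works verbatim for arbitrary strictly increasing bias functions $b_\sigma$ (the paper's exchange argument leans on the total ordering of intersections by the multiplicative factors $\beta_\sigma$), and it avoids having to define and dominate an auxiliary selection. The paper's argument, in exchange, needs no conditioning machinery and no care about whether the threshold $T_{-i}$ is well defined. A further difference worth noting: the paper's written proof only establishes the two-sided containment $\ratio_\cD(0,\beta)\in[\Ex_{v\sim\cD}[v]/\sup(\supp(\cD)),\,1]$ and stops there; your steps (ii)--(iv) — strictness of the lower endpoint via $\sinangle{\sx,w}<n\sup(\supp(\cD))$ almost surely, tightness via the $n=1$, $\beta_1\to0$, $m\to\infty$ family, and attainment of intermediate values via continuity of $\beta\mapsto\ratio_\cD(0,\beta)$ and connectedness — go beyond the paper and are exactly what is needed to justify the half-open interval claimed in the statement. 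The one place to be careful, as you note, is the measure-zero tie events and the degenerate case $T_{-i}$ outside the range of $b_{\sigma(i)}$ (where $\Pr[i\in\hx\mid w_{-i}]=0$ and the term contributes nothing), but these are routine.
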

\noindent As a special case, the above proposition shows that the range of the utility ratio is invariant across all distributions with the same mean and support.
Its proof is straightforward, and appears in \cref{sec:proofof:claim:range_of_r}.
Second, we have the following property.
\begin{proposition}[\textbf{Invariance to scaling}]\label{claim:invariance_of_r}
  Suppose $\cD^\prime$ is a scaled version of $\cD$.\footnote{Consider $w\sim \cD$. $\cD^\prime$ is the distribution of $w\cdot \gamma$ for some $\gamma>0$.}
  For all implicit bias parameters $\beta_1,\dots \beta_p\in (0,1]$ and constraints $L$,
  $\ratio_{\cD}(L,\beta) =
  \ratio_{\cD^\prime}(L,\beta).$
\end{proposition}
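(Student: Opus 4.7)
The plan is to couple draws from $\cD$ and $\cD'$ via the scaling map and then show that every quantity appearing inside the expectation defining $\ratio$ is preserved under this coupling, except for a multiplicative factor $\gamma$ that cancels between numerator and denominator.

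First, I would fix $\gamma>0$ and observe that if $w_1,\dots,w_m$ are i.i.d.\ draws from $\cD$, then $w'_i\coloneqq \gamma w_i$ are i.i.d.\ draws from $\cD'$ by the definition of $\cD'$ in the footnote. Under the multiplicative implicit bias model of Equation~\eqref{eq:mult_bias}, the induced observed utilities satisfy $\hw'_i = \beta_\sigma(i)\cdot w'_i = \gamma \beta_\sigma(i) w_i = \gamma \hw_i$ for each intersection $\sigma$ containing item $i$. Hence the coupling scales both $w$ and $\hw$ by the same positive factor.

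Next, because both $\sx$ and $\tx$ are defined as argmaxes of linear functionals of $w$ and $\hw$ respectively over the same feasible sets (namely $\{x\in\zo^m:\sum x_i = n\}$ and $C(L)$, neither of which depends on utilities), scaling the objective by the positive constant $\gamma$ does not change the optimizer:
\begin{align*}
\sx(w') = \argmax_{x} \inangle{x, \gamma w} = \argmax_x \inangle{x, w} = \sx(w),
\end{align*}
and analogously $\tx(w',\hw') = \tx(w,\hw)$. (If the argmax is not unique, we simply adopt any deterministic tie-breaking rule that depends only on the ordering of the entries; the same rule returns the same selection in both cases because scaling by $\gamma>0$ preserves the order of entries and the equalities among them.)

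Finally I would substitute into the definition of the utility ratio:
\begin{align*}
\ratio_{\cD'}(L,\beta)
= \Ex_{w'\sim \cD'}\!\insquare{\frac{\inangle{\tx(w',\hw'),w'}}{\inangle{\sx(w'),w'}}}
= \Ex_{w\sim \cD}\!\insquare{\frac{\gamma\inangle{\tx(w,\hw),w}}{\gamma\inangle{\sx(w),w}}}
= \ratio_{\cD}(L,\beta),
\end{align*}
where the middle equality uses the coupling and the preservation of the argmaxes established above, and the last equality is immediate from cancelling $\gamma$. There is essentially no obstacle here; the only small subtlety is ensuring that the tie-breaking rule is compatible with the coupling, which is handled by choosing a rule that depends only on the ordering of coordinates.
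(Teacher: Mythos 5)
Your proposal is correct and follows essentially the same route as the paper: a change of variables $w' = \gamma w$ under which $\sx$ and $\tx$ are unchanged (since scaling by $\gamma>0$ preserves the argmaxes of $\inangle{x,w}$ and $\inangle{x,\hw}$), after which the factor $\gamma$ cancels in the ratio. The paper's proof is just a terser version of this, leaving the invariance of the selections and the tie-breaking point implicit, which you spell out explicitly.
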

\noindent \cref{claim:invariance_of_r} shows that the utility ratio, $\ratio_{\cD}(L,\beta)$ is invariant to scaling of the distribution.
Its proof is also straightforward, and appears in \cref{sec:proofof:claim:invariance_of_r}

\subsection{Proof of \cref{claim:range_of_r}}\label{sec:proofof:claim:range_of_r}
\begin{proof}
  Recall that $\sx{}$ is the selection maximizing the latent utility subject to the constraint $\sum_{i=1}^n x_i= n$.
  Let $\hx$ be the selection maximizing the observed utility subject to the constraint $\sum_{i=1}^n x_i= n$.
  By the optimality of \sx{}, we have
  \begin{align*}
    \forall w,\ \ \sinangle{\sx{}, w} \geq \inangle{\tx, w} \quad
    &\iff \forall w, \quad \frac{\inangle{\tx, w}}{\sinangle{\sx{}, w}} \leq 1\quad\\
    &\ \implies
    \qquad\quad \ratio(L,\beta) = \Ex_w\insquare{\frac{\inangle{\tx, w}}{\sinangle{\sx{}, w}}} \leq 1.
  \end{align*}
  \noindent It remains to prove the lower bound on $\ratio(L,\beta)$.
  To simplify the notation, for each intersection $\sigma\in \zo^p$ define
  \begin{align*}
    \beta_{\sigma}\coloneqq \prod\nolimits_{\ell\in [p]: \sigma_\ell=1}\beta_\ell.
  \end{align*}
  Then, let $\sigma_1,\sigma_2,\dots,\sigma_{2^p}\in \zo^p$ be the order of the intersections by non-increasing order of $\beta_\sigma$.
  We construct a selection $\overline{x}$ which selects fixed number of items from each intersection and satisfies $\inangle{\overline{x}, w} \leq \inangle{\hx, w}$.
  \newcommand{\hS}{\widehat{S}}
  Construct $\overline{x}$ by first picking from $I_{\sigma_1}$ in non-decreasing order of latent utility, then from $I_{\sigma_2}$ in non-decreasing order of latent utility, and so on, until $n$ items have been selected.
  Let $\hS$ be the set of items picked by $\hx$ and $\overline{S}$ be the set of items picked by $\overline{x}.$
  We claim that any item in $\hS\backslash \overline{S}$ has an equal or larger latent utility than any item in $\overline{S}\backslash \hS$.
  To see this, consider $i\in \hS\backslash \overline{S}$ and $j\in \overline{S}\backslash \hS$ and suppose, toward a contradiction, that
  $w_i < w_j.$
  Suppose $i\in I_{\sigma_\ell}$ and $j\in I_{\sigma_k}$.
  If $k>\ell$, then $\overline{x}$ would have picked $i$ before $j$, and hence, $i\in \overline{S}$.
  Which contradicts the fact that $i\in \hS\backslash \overline{S}$.
  Hence, we must have that $k\leq \ell$, and hence
  \begin{align*}
    \beta_{\sigma_k}\geq \beta_{\sigma_\ell}.
  \end{align*}
  \noindent Further, since $\hx$ selected $i$ but not $j$, we must have
  $\hw_i > \hw_j.$
  Using this, we have
  \begin{align*}
    \hw_i
    &= \beta_{\sigma_\ell} w_i > \beta_{\sigma_k} w_j= \hw_j
  \end{align*}
  Since $\beta_{\sigma_k}\geq \beta_{\sigma_\ell}$ this implies that
  \begin{align*}
    w_i > w_j.
  \end{align*}
  We have a contradiction.
  Now, we can show that $\inangle{\overline{x}, w} \leq \inangle{\hx, w}$, as follows:
  \begin{align*}
    \inangle{\overline{x}, w}
    &= \sum_{i\in \overline{S}} w_i
    = \sum_{i\in \overline{S} \cap \hS } w_i + \sum_{i\in \overline{S}\backslash \hS } w_i.
  \end{align*}
  Since both $\overline{x}$ and $\hx$ select $n$ items, we have $|\overline{S}\backslash \hS|=|\hS\backslash \overline{S}|.$
  Using this and the fact that any item in $\hS\backslash \overline{S}$ has a higher latent utility than any item in $\overline{S}\backslash \hS$, we get:
  \begin{align*}
    \inangle{\overline{x}, w} &=\sum_{i\in \overline{S} \cap \hS } w_i + \sum_{i\in \overline{S}\backslash \hS } w_i\\
    &\leq  \sum_{i\in \overline{S} \cap \hS } w_i + \sum_{i\in \hS\backslash\overline{S}} w_i\\
    &=  \sum_{i\in  \hS} w_i\\
    &=  \inangle{\hx, w}.\yesnum\label{eq:appendix_new}
  \end{align*}
  Suppose that for each $\sigma\in \zo^p$, $\overline{x}$ picks $k_\sigma\in \N$ items from $I_\sigma$.
  By the construction of $\overline{x}$, we know that $\inbrace{k_\sigma}_\sigma$ are fixed constants.
  For each intersection $\sigma$, where $\overline{x}$ picks at least one item, we have
  \begin{align*}
    \frac{\sum_{i\in I_\sigma} \overline{x}_i w_i}{ \sum_{i\in I_\sigma} \overline{x}_i } = \frac{\sum_{i\in I_\sigma} \overline{x}_i w_i}{ k_\sigma } \geq \frac1{|I_\sigma|}\sum_{i\in I_\sigma} w_i.\yesnum\label{eq:app_new}
  \end{align*}
  Using this we have
  \begin{align*}
    \Ex\insquare{\inangle{\hx, w}} \quad &\Stackrel{\eqref{eq:appendix_new}}{\geq} \quad  \Ex\insquare{\inangle{\overline{x}, w}}\\
    &=\quad \sum_{\sigma: |\overline{S}\cap I_\sigma|>0}\Ex\insquare{\sum_{i\in I_\sigma} \overline{x}_i w_i}\\
    &\Stackrel{\eqref{eq:app_new}}{=}\quad  \sum_{\sigma: |\overline{S}\cap I_\sigma|>0}\Ex\insquare{k_\sigma\cdot \frac{\sum_{i\in I_\sigma} w_i}{|I_\sigma|}}\\
    &=\quad \sum_{\sigma: |\overline{S}\cap I_\sigma|>0}k_\sigma\cdot{ \frac{\sum_{i\in I_\sigma} \Ex_{v\sim \cD}\insquare{v}}{|I_\sigma|}} \tag{Each $w_i$ is drawn independently from $\cD$}\\
    &=\quad \Ex_{v\sim \cD}\insquare{v} \cdot \sum_{\sigma: |S\cap I_\sigma|>0}k_\sigma \\
    &=\quad \Ex_{v\sim \cD}\insquare{v} \cdot n.\yesnum\label{eq:appendix_eq_2}
  \end{align*}
  Further, we have
  \begin{align*}
    \inangle{\sx{}, w} =\sum\nolimits_{i=1}^m{\sx_i w_i}
    \leq \sum\nolimits_{i=1}^m{\sx_i \cdot \sup(\supp(\cD))}
    = n\cdot \sup(\supp(\cD)).\yesnum\label{eq:appendix_eq_3}
  \end{align*}
  Combining Equations~\eqref{eq:appendix_eq_2} and \eqref{eq:appendix_eq_3} we get
  \begin{align*}
    \ratio(L,\beta) = \Ex_w\insquare{\frac{\inangle{\hx, w}}{\sinangle{\sx{}, w}}}
    \ \ \quad \Stackrel{\eqref{eq:appendix_new},\eqref{eq:appendix_eq_3}}{\geq} \quad \ \  \Ex\insquare{\frac{\inangle{\overline{x}, w}}{ n\cdot \sup(\supp(\cD))   }}
    \ \ \Stackrel{\eqref{eq:appendix_eq_2}}{\geq}\ \  \Ex\insquare{\frac{\Ex_{v\sim \cD}\insquare{v} \cdot n}{ n\cdot \sup(\supp(\cD))   }}
    = {\frac{\Ex_{v\sim \cD}\insquare{v} }{ \sup(\supp(\cD))   }}.
  \end{align*}

\end{proof}

\vspace{-4mm}

\subsection{Proof of \cref{claim:invariance_of_r}}\label{sec:proofof:claim:invariance_of_r}
\begin{proof}
  Let $\gamma>0$ be such that $\cD^\prime$ be the distribution of $w\cdot \gamma$ where $w$ is drawn from $\cD$.
  Then
  \begin{align*}
    \ratio_{\cD^\prime}(L,\beta)
    &= \Ex_{w^\prime\sim \cD^\prime}\insquare{ \frac{ \sinangle{\tx,w^\prime} }{ \sinangle{\sx,w^\prime} } }
    =  \Ex_{w\sim \cD}\insquare{ \frac{ \sinangle{\tx,w\cdot \gamma} }{ \sinangle{\sx,w\cdot \gamma} } }
    \ \Stackrel{(\gamma>0)}{=} \ \Ex_{w\sim \cD}\insquare{ \frac{ \sinangle{\tx,w}}{ \sinangle{\sx,w}} }
    = \ratio_{\cD}(L,\beta).
  \end{align*}
\end{proof}

\end{document}